\documentclass[10pt]{article}

\usepackage{xspace}
\usepackage{url}
\usepackage{mathtools}
\usepackage{amssymb}
\usepackage{amsthm}
\usepackage{empheq}
\usepackage{latexsym}
\usepackage{enumitem}
\usepackage{eurosym}
\usepackage{dsfont}
\usepackage{appendix}
\usepackage{color} 
\usepackage[unicode,hypertexnames=false]{hyperref}
\usepackage{frcursive}
\usepackage[utf8]{inputenc}
\usepackage[T1]{fontenc}
\usepackage{geometry}
\usepackage{multirow}
\usepackage{lmodern}
\usepackage{anyfontsize}
\usepackage{stmaryrd}
\usepackage{natbib}
\usepackage{cleveref}
\usepackage[english]{babel}
\usepackage[english=british]{csquotes}
\usepackage{accents}
\usepackage{bm}
\usepackage[most]{tcolorbox}
\usepackage{graphicx}
\usepackage{todonotes}
\graphicspath{{figures/}}
\usepackage{float}
\usepackage{aliascnt}
\usepackage{microtype}

\setcitestyle{numbers,open={[},close={]}}

\definecolor{green}{rgb}{0,0.5,0}
\definecolor{blue}{rgb}{0,0,0.7}
\definecolor{purple}{rgb}{0.5,0,0.5}
\hypersetup{colorlinks, linkcolor={blue}, citecolor={green}, urlcolor={blue}}
			
\makeatletter
\@addtoreset{equation}{section}

\makeatother

\newtheorem{theorem}{Theorem}[section]

\newaliascnt{assumption}{theorem}
\newtheorem{assumption}[assumption]{Assumption}
\aliascntresetthe{assumption}

\newaliascnt{proposition}{theorem}
\newtheorem{proposition}[proposition]{Proposition}
\aliascntresetthe{proposition}

\newaliascnt{definition}{theorem}
\newtheorem{definition}[definition]{Definition}
\aliascntresetthe{definition}

\newaliascnt{lemma}{theorem}
\newtheorem{lemma}[lemma]{Lemma}
\aliascntresetthe{lemma}

\newaliascnt{example}{theorem}

\aliascntresetthe{example}

\newaliascnt{corollary}{theorem}
\newtheorem{corollary}[corollary]{Corollary}
\aliascntresetthe{corollary}

\newaliascnt{remark}{theorem}
\newtheorem{remark}[remark]{Remark}
\aliascntresetthe{remark}

\newaliascnt{condition}{theorem}

\aliascntresetthe{condition}

\crefname{theorem}{theorem}{theorems}
\Crefname{theorem}{Theorem}{Theorems}

\crefname{assumption}{assumption}{assumptions}
\Crefname{assumption}{Assumption}{Assumptions}

\crefname{proposition}{proposition}{propositions}
\Crefname{proposition}{Proposition}{Propositions}

\crefname{definition}{definition}{definitions}
\Crefname{definition}{Definition}{Definitions}

\crefname{lemma}{lemma}{lemmas}
\Crefname{lemma}{Lemma}{Lemmas}

\crefname{example}{example}{examples}
\Crefname{example}{Example}{Examples}

\crefname{corollary}{corollary}{corollaries}
\Crefname{corollary}{Corollary}{Corollaries}

\crefname{remark}{remark}{remarks}
\Crefname{remark}{Remark}{Remarks}

\crefname{equation}{equation}{equations}
\Crefname{equation}{Equation}{Equations}

\crefname{condition}{condition}{conditions}
\Crefname{condition}{Condition}{Conditions}

\DeclareUnicodeCharacter{014D}{\=o}
\newcommand\bg{\mathbf{g}}

\def\Fc{\mathcal{F}}

\newcommand{\smallertext}[1]{\text{\fontsize{6}{6}\selectfont$#1$}}
\newcommand{\smalltext}[1]{\text{\fontsize{4}{4}\selectfont$#1$}}

\begin{document}

\title{Equilibrium prices under hidden Markov fundamentals}
\author{Henri \textsc{Pag\`es}\footnote{Consultant at Banque de France, henripages@sfr.fr.} \and Dylan \textsc{Possama\"i}\footnote{ETH Z\"urich, Mathematics department, Switzerland, dylan.possamai@math.ethz.ch. This author gratefully acknowledges partial support by the SNF project MINT 205121-219818.}\and Mateo \textsc{Rodriguez Polo}\footnote{ETH Z\"urich, Mathematics department, Switzerland, mateo.rodriguezpolo@math.ethz.ch. This author gratefully acknowledges partial support by the SNF project MINT 205121-219818.}}

\date{18 September 2026}
\maketitle

\begin{abstract}
We study a representative-agent Epstein--Zin economy with geometric dividends and a hidden finite-state Markov drift. We allow the price--dividend ratio to contain an additional positive, absolutely continuous valuation factor and, within the class $\mathfrak C$ defined below and under the regularity, admissibility, and positivity conditions of our main theorem, equilibrium forces this factor to be constant, yielding belief--Markovian prices. In the two-state case, under the stated positivity condition and strictly positive transition intensities, we prove existence, uniqueness, endpoint smoothness, interior analyticity, and uniform bounds for the positive solution of the pricing equation. For $0<\theta\leq1$, this solution supports an equilibrium under any continuous short rate satisfying the model's one-sided portfolio condition. Finally, in the two-state subregion $\eta>0$, we derive belief-dependent stock volatility, a European option-pricing PDE, and a leading short-maturity conditional risk-neutral log-return skewness expansion. 
\end{abstract}

\section{Introduction}

Asset prices depend not only on current cash flows, but also on what investors believe about economic fundamentals that they cannot observe directly. When investors learn over time and have recursive preferences, it is not obvious whether current dividends and beliefs are enough to determine equilibrium prices or whether other price components may also arise. We study this question in a continuous-time model. Aggregate dividends satisfy
\[
  \frac{\mathrm dD_t}{D_t}=g_t\mathrm dt+\sigma\mathrm dW_t,
\]
where $W$ is a Brownian motion, $\sigma>0$ is constant, and $g$ is an unobserved finite-state Markov chain driving the different states the economy can be in. The agent learns about the current growth regime from dividend observations. Investors observe only dividends, and thus we write $\mathbb F^\smallertext{D}=(\mathcal F_t^{\smallertext{D}})_{t\geq 0}$ for the completion of the natural filtration of $D$. The representative agent has infinite-horizon Epstein--Zin preferences over consumption. That is, for a consumption process $c$, the corresponding continuation utility $V^c$ satisfies the recursive equation
\[
V_t^c = \frac{\delta\theta}{1-\gamma} \mathbb E\bigg[\int_t^{\smallertext{+}\infty}\mathrm e^{-\delta\theta(s-t)}c_s^{1-1/\psi}\big((1-\gamma)V_s^c\big)^{1-1/\theta}\mathrm ds\bigg|\mathcal F_t^{\smallertext{D}}\bigg],\; t\geq0,
\]
where $\delta>0$ is the discount rate, $\gamma$ is the relative risk aversion, and $\psi$ is the intertemporal elasticity of substitution. Their interaction is summarised by
\[
\theta\coloneqq\frac{1-\gamma}{1-\psi^{-1}},
\]
a parameter which will turn out to be qualitatively crucial for our results.

\medskip
Our paper first relates to the literature on learning and incomplete information in asset pricing. \citeauthor*{detemple1986asset} \cite{detemple1986asset} provides an early general-equilibrium analysis in which investors filter an unobserved state, and \citeauthor*{pastor2009learning} \cite{pastor2009learning} survey the broader learning literature. Our closest time-additive antecedents are \citeauthor*{veronesi1999overreaction} \cite{veronesi1999overreaction,veronesi2000information}, and \citeauthor*{david2002option} \cite{david2002option}. The closest recursive hidden-regime benchmarks are \citeauthor*{brandt2004equilibrium} \cite{brandt2004equilibrium} and \citeauthor*{brevik2010information} \cite{brevik2010information}.

\medskip
\citeauthor*{veronesi1999overreaction} \cite{veronesi1999overreaction} studies two unobserved dividend-growth regimes with additive dividends and CARA preferences. The equilibrium stock price is increasing and convex in the posterior probability of the high-growth state. Consequently, bad news in good times has a larger price effect than comparable good news in bad times, while learning also generates belief-dependent return volatility. This belief-driven asymmetry motivates
our analysis of the shape of the price--dividend ratio. A closer time-additive benchmark for our pure-exchange and geometric-dividend setting is \citeauthor*{veronesi2000information} \cite{veronesi2000information}. That paper considers finitely many hidden growth states under time-additive isoelastic preferences and derives an affine price--dividend ratio in posterior beliefs. Its main question is how the quality of investors' information affects risk premia and stock-return moments. \citeauthor*{david2002option} \cite{david2002option} provide the closest pure-diffusion option-pricing benchmark. They consider lognormal fundamentals and finitely many hidden regimes under time-additive isoelastic preferences. Their price--dividend ratio is affine in posterior beliefs, and
they show how the common innovation to fundamentals and beliefs generates belief-dependent stochastic volatility, a changing return--volatility covariance, a market price of belief risk, and time variation in option-implied distributions. For two states, they also develop a Fourier option-pricing method.

\medskip
Learning-generated option skews also precede our analysis. \citeauthor*{guidolin2003option} \cite{guidolin2003option} derive closed-form European option prices in an equilibrium model in which the probabilities governing dividend news are unknown and updated recursively. They show that Bayesian learning can generate asymmetric implied-volatility skews and systematic patterns across maturities. These learning, volatility, and option-pricing channels are therefore not new to our paper.

\medskip
The second ingredient we use is recursive utility. Epstein--Zin preferences originate with \citeauthor*{epstein1989substitution} \cite{epstein1989substitution}, while
\citeauthor*{duffie1992asset} \cite{duffie1992asset} provide the continuous-time stochastic differential utility formulation used here. \citeauthor*{weil1989equity} \cite{weil1989equity} and \citeauthor*{epstein1991substitution} \cite{epstein1991substitution} study the asset-pricing implications of separating risk aversion from intertemporal substitution. Epstein--Zin preferences have already been combined with learning about hidden regimes. \citeauthor*{brandt2004equilibrium}
\cite{brandt2004equilibrium} study a discrete-time Lucas-tree economy with a hidden finite-state endowment regime, Bayesian and distorted learning rules,
and Epstein--Zin preferences. Their equilibrium price--dividend ratio satisfies a generally nonlinear expectation equation and is computed as a function of posterior beliefs. \citeauthor*{brevik2010information} \cite{brevik2010information} consider a hidden two-state economy with an additional noisy signal and focus on how information quality affects expected returns. They obtain the price--consumption ratio
\[
  w(p)=\big(p w_b^\kappa+(1-p)w_r^\kappa\big)^{1/\kappa},
\]
which is also the price--dividend ratio in their endowment economy. Their parameter
\[
  \kappa=\frac{1-\gamma}{1-\psi^{-1}},
\]
is the counterpart of our $\theta$. They show that this ratio is strictly convex in the belief when $\kappa<1$ and strictly concave when $\kappa>1$. Related work by \citeauthor*{ju2012ambiguity} \cite{ju2012ambiguity} combines learning about hidden Markov growth regimes with generalised recursive smooth-ambiguity preferences. 

\medskip
Relative to \citeauthor*{veronesi1999overreaction} \cite{veronesi1999overreaction}, \citeauthor*{david2002option} \cite{david2002option}, \citeauthor*{brandt2004equilibrium} \cite{brandt2004equilibrium}, and \citeauthor*{brevik2010information} \cite{brevik2010information}, our contribution goes in several directions. The continuous-time Epstein--Zin aggregator used here separates risk aversion from intertemporal substitution and leads, away from $\theta=1$, to a generally nonlinear pricing equation on the belief simplex. The nonlinearity itself is not new in view of the preceding recursive-utility models. Our first contribution is instead to study whether the belief--Markovian representation used in those models can be obtained as an equilibrium conclusion when a further valuation state is allowed at the outset.

\medskip
More specifically, the earlier papers work with belief-Markovian price representations, whereas we begin with a broader class in which the price--dividend ratio may contain an additional finite-variation valuation state and may therefore retain information about the history of beliefs. We consider a class $\mathfrak C$ of candidate systems which, for every initial belief $p$, generate a stock price of the form
\[
  S_t^p=D_t^pH_t^p{\varphi(P_t^p)},\; t\geq0,\; H_{\smallertext{0}}^p=h(p),
\]
where $P^p$ is the posterior-belief process, $\varphi$ is a positive function on the simplex, and $H^p$ is a positive, absolutely continuous valuation factor. The term $\varphi(P_t^p)$ captures the effect of current beliefs, whereas $H_t^p$ may be path-dependent.

\medskip
The class $\mathfrak C$ is broader than a belief--Markovian specification, but it does not contain arbitrary adapted path-dependent prices. The process $H^p$ must be the unique global positive solution of an ordinary differential equation driven by $(P^p,H^p)$, and the Borel maps generating $H^p$ and the short rate are common to all initial-belief model copies. We call a candidate a $\mathfrak C$-equilibrium when, for every $p$, the representative agent holds the outstanding stock, holds no risk-free asset in net supply, and consumes the dividend, as in \citeauthor*{veronesi1999overreaction} \cite{veronesi1999overreaction} and \citeauthor*{david2002option} \cite{david2002option}.

\medskip
Our main result shows that, under the regularity, admissibility, and positivity conditions of \Cref{thm:main}, including a classical $C^2$ state representation of the agent's value function, equilibrium eliminates the apparent additional freedom. Even when $h$ is assumed only Borel measurable, there is a constant $h_{\smallertext{0}}>0$ such that
\[
  h(p)=h_{\smallertext{0}}, \; H_t^p=h_{\smallertext{0}},\; t\geq0,
\]
for every initial belief. Hence, the stock price is Markovian in dividends and posterior beliefs. The belief--Markovian representation used in the earlier literature is therefore an equilibrium conclusion within $\mathfrak C$, not an assumption. This conclusion is deliberately stated within $\mathfrak C$: it does not assert that every conceivable adapted equilibrium price, or every solution of an infinite-horizon recursive valuation equation, must be belief--Markovian.

\medskip
This question is related to the literature on existence and uniqueness in infinite-horizon recursive models. More broadly, \citeauthor*{santos1997rational} \cite{santos1997rational} give conditions that rule out rational asset-pricing bubbles in infinite-horizon economies. \citeauthor*{borovicka2020necessary}
\cite{borovicka2020necessary,borovicka2021stability} derive necessary and sufficient conditions for existence and uniqueness in classes of discrete-time recursive-utility and equilibrium asset-pricing problems. \citeauthor*{pohl2024existence} \cite{pohl2024existence} study recursive wealth--consumption ratios, including applications with learning, and select the limit of finite-horizon solutions, which they characterise as the unique homogeneous solution without a bubble. These papers show why the existence of a stationary pricing equation does not, by itself, settle the economic selection problem. Our result addresses a different and more specific
question: equilibrium and admissibility eliminate the additional absolutely continuous state admitted by $\mathfrak C$ in the present continuous-time hidden-regime economy.

\medskip
Second, once the belief-Markovian representation is established, we derive the pricing equation for any finite number of hidden regimes and analyse it in detail in the two-state case. The resulting boundary-value problem is degenerate at the endpoints of the belief interval. Under the stated positivity condition and strictly positive transition intensities, we prove existence and uniqueness of its positive classical solution for every $\theta>0$. We further prove that the solution is uniformly bounded above and away from zero, smooth on the closed belief interval, and real analytic in its interior.

\medskip
For $0<\theta\leq1$, the two-state solution supports a $\mathfrak C$-equilibrium under any continuous short-rate specification satisfying the one-sided portfolio bound. This condition does not select a unique short rate. The direction and curvature of the belief response must also be distinguished. Denoting by $g^{\smallertext{1}}$ and $g^{\smallertext{2}}$ the high- and low-growth states, respectively, monotonicity is governed by
\[
\eta\coloneqq(1-\gamma)\big(g^{\smallertext{1}}-g^{\smallertext{2}}\big),
\]
the price--dividend ratio is increasing when $\eta>0$ and decreasing when $\eta<0$. Curvature depends on $\theta$ and on an additional sufficient condition. When the latter holds, the ratio is convex for $0<\theta<1$ and concave for $\theta>1$, while for $\theta=1$ it is affine without this additional condition. The increasing--convex case echoes the qualitative news asymmetry of \citeauthor*{veronesi1999overreaction} \cite{veronesi1999overreaction}, while $\theta=1$ yields an affine belief dependence analogous to that in \citeauthor*{david2002option} \cite{david2002option}. As explained above, \citeauthor*{brevik2010information} \cite{brevik2010information} already identify the corresponding preference-dependent change in curvature in discrete time; our contribution is the global existence, uniqueness, endpoint regularity, and shape analysis of the continuous-time degenerate boundary-value problem, together with its connection to the equilibrium-selection result.

\medskip
The maintained equilibrium analysis covers $0<\theta\leq1$. We do not analyse $\theta<0$, which includes the often-used parameter combination $\gamma>1$ and $\psi>1$. We study $\theta>1$ at the level of the positive solution of the Markovian pricing equation, but do not provide the additional recursive-utility selection rule required to give those results an equilibrium interpretation.

\medskip
A further related literature combines recursive preferences and option pricing. \citeauthor*{garcia2003empirical} \cite{garcia2003empirical} develop and estimate an Epstein--Zin option-pricing model with Markov latent variables, stochastic volatility, and time-varying skewness. The state history enters the information set on
which their Euler equations are conditioned, so their latent-variable specification is not the same Bayesian filtering problem as ours. \citeauthor*{benzoni2011explaining} \cite{benzoni2011explaining} study a continuous-time Epstein--Zin equilibrium in which investors update beliefs about a hidden regime governing rare-jump risk. Their mechanism generates changes in stock valuations and in the implied-volatility smirk. 

\medskip
Finally, in the two-state subregion $\eta>0$ (equivalently, $\gamma<1$), with the short rate set to its marginal upper bound and under the stated martingale and regularity conditions, we obtain the risk-neutral stock and belief dynamics and the European option-pricing equation. Although belief-dependent volatility and the basic option-pricing mechanism are already dealt with in the literature, we derive an explicit leading expansion for conditional risk-neutral log-return skewness. Under the short-time remainder condition in \Cref{prop:short-maturity-skew-sign}, conditional on $p_t=p$, and writing $\sigma_\smallertext{S}(p)$ for stock volatility and $\chi(p)$ for the
diffusion coefficient of the posterior, we show that
\[
 \operatorname{Skew}^{\mathbb Q}_t(\tau) =\frac{3\chi(p)\sigma_\smallertext{S}^{\prime}(p)}{\sigma_\smallertext{S}(p)}\sqrt{\tau}+o(\sqrt{\tau}).
\]
Thus the direction in which a belief revision changes stock volatility determines the leading sign of short-maturity conditional return skewness. When stock volatility is unimodal in the belief, the leading coefficient changes sign at its maximum. This result is an expansion for return skewness; it does not by itself establish the slope of implied volatility. The numerical exercise uses the equilibrium price--dividend ratio and marginal short rate to examine the corresponding finite-maturity implied-volatility curves.

\medskip
The rest of the paper is organised as follows. \Cref{sec:model-equilibrium} presents the economy, learning, preferences, financial markets, and the class $\mathfrak C$. \Cref{sec:equilibrium-characterisation} characterises equilibrium prices, derives the pricing equation on the belief simplex, and records the resulting uniform bounds. \Cref{sec:markovian-shape} studies the two-state pricing equation, while \Cref{sec:option-pricing} derives the physical and risk-neutral stock and belief dynamics and studies European options. \Cref{app:proof-main-equilibrium,app:proof-two-state-analysis} contain the proofs of the equilibrium characterisation and the two-state results, and \Cref{app:characteristic-functions} gives the Fourier pricing representation.

\bigskip
\begingroup
\footnotesize
\noindent\textbf{Notation:}
We write $\mathbb N\coloneqq\{0,1,2,\ldots\}$,
$\mathbb N^\star\coloneqq\{1,2,\ldots\}$,
$\mathbb{R}_\smallertext{+}\coloneqq[0,+\infty)$, and use $\mathbb C$ for the
complex plane. For $x\in\mathbb R$, we set
$x^{\smallertext{+}}\coloneqq\max\{x,0\}$. The symbol $|\cdot|$ denotes absolute value,
the Euclidean norm, or set cardinality according to context;
$\mathbf1_A$ denotes the indicator of a set $A$, $\mathbf 1_n$ is the
vector in $\mathbb R^n$ whose entries are all one, $e_i$ is the $i$-th
canonical basis vector of $\mathbb R^n$, and ${}^\top$ denotes transposition.

\medskip
For a compact set $K$, we write $C(K)$ for the Banach space of continuous
real-valued functions on $K$, endowed with the supremum norm
$\|\cdot\|_\infty$. Let $E$ be a finite-dimensional normed space, let
$O\subset\mathbb R^d$ be open, and let $k\in\mathbb N$.
$C^k(O;E)$ consists of the maps $f:O\longrightarrow E$ that are $k$ times
continuously differentiable. The space $C^k_{\mathrm{loc}}(O;E)$ consists
of the maps that are $k$ times continuously differentiable on every
relatively compact open subset of $O$. We omit $E$ for real-valued
functions. For real numbers $a<b$, $C^k([a,b];E)$ consists of the maps in
$C^k((a,b);E)$ whose derivatives up to order $k$ extend continuously to
$[a,b]$. If $I\subset\mathbb R$ and $O\subset\mathbb R^d$ are open, then
$C^{1,2}(I\times O;E)$ denotes the space of functions that are once
continuously differentiable in the first variable and twice continuously
differentiable jointly in the variables in $O$.

\medskip
For $n\in\mathbb N^\star$, let us define the probability simplex on $n$
states and its relative interior
\[
  \Delta^n
  \coloneqq
  \big\{p\in[0,1]^n:\mathbf 1_n^\top p=1\big\},
  \;
  \Delta_\circ^n
  \coloneqq
  \operatorname{ri}(\Delta^n)
  =\big\{p\in\Delta^n:p^i>0\text{ for every }i\in\{1,\dots,n\}\big\},
\]
 and its tangent space
\[
  T\Delta^n
  \coloneqq
  \big\{v\in\mathbb R^n:\mathbf 1_n^\top v=0\big\}.
\]

For a finite-dimensional normed space $E$, $f\in C^k(\Delta^n;E)$ means
that $f$ is the restriction to $\Delta^n$ of a $C^k$ function $\widetilde f$ defined on
an open neighbourhood of $\Delta^n$ in its affine hull. If
$O\subset\mathbb R^d$ is open, $f\in C^k(\Delta^n\times O;E)$ means that
$f$ is the restriction to $\Delta^n\times O$ of a $C^k$ function defined on
$U\times O$, where $U$ is an open neighbourhood of $\Delta^n$ in its affine
hull. For a real-valued $f\in C^2(\Delta^n)$,
$\partial_p f(p)\in T\Delta^n$ denotes the gradient along the affine hull,
and $\partial_{pp}f(p)$ denotes the Hessian on $T\Delta^n$. In other words, for
$p\in\Delta^n$ and $(v,w)\in T\Delta^n\times T\Delta^n$
\[
  v^\top\partial_p f(p)
  =
  \frac{\mathrm d}{\mathrm dt}\widetilde f(p+tv)\bigg|_{t=0},
  \;
  v^\top\partial_{pp}f(p)w
  =
  \frac{\partial^2}{\partial s\partial t}
  \widetilde f(p+sv+tw)\bigg|_{s=t=0},
\]
for any such extension $\widetilde f$. Note that these quantities do not depend on
the chosen extension.

\medskip
Let $(\Omega,\mathcal F,\mathbb G,\mathbb Q)$ be a filtered probability
space, and let $\mathbb E^{\mathbb Q}$ denote expectation under
$\mathbb Q$. The space $\mathbb L^1(\mathbb Q)$ consists of the
real-valued $\mathcal F$-measurable random variables $\xi$ such that
$\mathbb E^{\mathbb Q}[|\xi|]<+\infty$. We write
$\mathbb H^1(\mathbb G,\mathbb Q)$ for the space of real-valued
$\mathbb G$-predictable processes $Z$ such that
\[
  \mathbb E^{\mathbb Q}\Bigg[
    \Bigg(\int_{\smallertext{0}}^{\smallertext{+}\infty} |Z_t|^2\mathrm dt\Bigg)^{1/2}
  \Bigg]<+\infty.
\]
When the filtration is unambiguous, we abbreviate this space to
$\mathbb H^1(\mathbb Q)$. A real-valued, $\mathbb G$--progressively
measurable process $R$ is called pathwise locally Lebesgue-integrable under
$\mathbb Q$ if  

\[
  \int_{\smallertext{0}}^{\smallertext{T}}|R_t|\mathrm dt<+\infty,
  \; \mathbb Q\text{--a.s.},\; \text{for every }T<+\infty.
\]

We write $\mathrm dt\otimes\mathbb Q$ for the product of Lebesgue measure
and $\mathbb Q$. Finally, for a continuous local martingale $M$ with
$M_{\smallertext{0}}=0$,
$\mathcal E(M)$ denotes its stochastic exponential,
$\mathcal E(M)_t=\exp(M_t-\frac12\langle M\rangle_t)$, $t\geq 0$.
\endgroup

\section{Model and equilibrium problem}\label{sec:model-equilibrium}

\subsection{The model}

Fix an integer $n\geq1$ representing the number of states of the economy,
and use the simplex and differential conventions fixed above.

\medskip

For every initial belief $p\in\Delta^n$, we work on a complete filtered
probability space
$(\Omega^p,\mathcal F^p,\mathbb F^p,\mathbb P^p)$ carrying a Brownian
motion $W^p$ and an independent right-continuous Markov chain $g^p$. The
chain has initial distribution $p$ under $\mathbb P^p$ and takes values in the finite set
$G\coloneqq\{g^{\smallertext{1}},\dots,g^n\}$, where the regime-value vector
$\bg\coloneqq(g^1,\dots,g^n)^\top\in\mathbb R^n$ has pairwise distinct
entries. Its
infinitesimal generator is the matrix
\begin{equation*}
\Lambda\coloneqq \big( \lambda^{ij}\big)_{(i,j)\in\{1,\dots,n\}^\smalltext{2}} , 
\end{equation*}
where, for $(i,j)\in\{1,\dots,n\}^2$ with $i\neq j$, $\lambda^{ij}\geq0$ is the transition intensity
from state $g^i$ to state $g^j$, while
\[
  \lambda^{ii}=-\sum_{j\in\{ 1,\dots,n\}\setminus\{i\}}\lambda^{ij}.
\]

We write $\mathbb E^p$ for expectation under $\mathbb P^p$. Every process
and market indexed by $p$ is defined on this $p$-model copy, and all
almost-sure, integrability, admissibility, and optimality statements about it
are understood under $\mathbb P^p$. 

\medskip
We recall the model of \cite{david2002option}. For each $p\in\Delta^n$, define the
dividend process as the unique solution $(D_t^p)_{t\geq 0}$ of the following
stochastic differential equation 
\begin{equation*}
D_t^p=D_{\smallertext{0}}+\int_{\smallertext{0}}^tD_s^pg_{s}^p\mathrm{d}s
+\sigma \int_{\smallertext{0}}^tD_s^p\mathrm{d}W_{s}^p,\; t\geq 0,\;
\mathbb P^p\text{\rm--a.s.},
\end{equation*}
where $\sigma>0$ is a constant volatility parameter, and $D_{\smallertext{0}}\in(0,+\infty)$.
We denote by $\mathbb F^{\smallertext{D},p}$ the completed,
right-continuous natural filtration of $D^p$. We have
\begin{equation*}
D_t^p=D_{\smallertext{0}}\exp\bigg(\int_{\smallertext{0}}^t\Big(g_s^p-\frac{\sigma^2}{2}\Big)\mathrm{d}s
+\sigma W_t^p\bigg),\; t\geq 0,\; \mathbb P^p\text{\rm--a.s.}
\end{equation*}

Investors observe only dividends, so their information on the $p$-model
copy is represented by $\mathbb F^{\smallertext{D},p}$. They use dividend
innovations to update their beliefs about the hidden growth state. For a
fixed $p$, we denote by $\mathcal C^p$ the set of positive,
$\mathbb F^{\smallertext{D},p}$-adapted and measurable processes. For $n=2$, this is the hidden-regime learning environment used by
\citeauthor*{veronesi1999overreaction} \cite{veronesi1999overreaction} and
\citeauthor*{david2002option} \cite{david2002option}: investors observe
dividends, infer the current growth regime, and prices depend on the
posterior probability of the high-growth state. The formulation above keeps
the same economic mechanism but allows for an arbitrary finite number of
hidden growth regimes.

\subsection{The dynamics of investors' beliefs}

The maximisation problem is a control problem with partial observations
because investors do not observe the current growth state $g^p$ directly. For
each $p\in\Delta^n$, let
$P^p\coloneqq(P^{p,i})_{i\in\{1,\dots,n\}}$ denote the posterior associated
with the law $\mathbb P^p$. Put differently,
\begin{equation*}
P_t^{p,i}\coloneqq \mathbb P^p\big[g_t^p=g^i\big|
\mathcal F_t^{\smallertext{D},p}\big],\; t\geq 0,\;
i\in\{1,\dots,n\}.
\end{equation*}
In particular, $P_{\smallertext{0}}^p=p$ under $\mathbb P^p$.
 The assumptions of \citeauthor*{lipster1977statistics} \cite[Theorems
 9.1 and 9.2]{lipster1977statistics} hold in our setting, and the posterior
 vector is the unique solution of the Wonham filtering
 equation, that is, the finite-state Kushner--Stratonovich equation; see
 \citeauthor*{wonham1964applications} \cite{wonham1964applications}.

\begin{proposition}
Under $\mathbb P^p$, the posterior probabilities
$(P^{p,i})_{i\in\{1,\dots,n\}}$ satisfy the system
of equations, for $i\in\{1,\dots,n\}$ 
\begin{equation*}
P _{t}^{p,i}=p^i+\int_{\smallertext{0}}^t(\Lambda^\top P_s^p)^{i}\mathrm{d}s-\int_{\smallertext{0}}^t\sigma
^{-1}P_{s}^{p,i}\big(\overline{g}(P_s^p)-g^{i}\big)\mathrm{d}\overline{W}_{s}^p,\; t\geq 0,\; 
\mathbb P^p\text{\rm--a.s.}, 
\end{equation*}
where the map $\overline g:\mathbb{R}^n\longrightarrow\mathbb{R}$ is defined
by 
\begin{equation*}
\overline g(y)\coloneqq \sum_{i=1}^ny^ig^i,\; y\in\mathbb{R}^n,
\end{equation*}
and where $\overline{W}^p$ is an
$(\mathbb{F}^{\smallertext{D},p},\mathbb P^p)$--Brownian motion, representing
the innovation process, and defined by 
\begin{equation*}
\overline{W}_{t}^p
=\sigma^{-1}\bigg(
\int_{\smallertext{0}}^t\frac{\mathrm{d}D_s^p}{D_s^p}
-\int_{\smallertext{0}}^t\overline{g}(P_s^p)\mathrm{d}s
\bigg),\; t\geq 0, \; \mathbb P^p\text{\rm--a.s.} 
\end{equation*}
\end{proposition}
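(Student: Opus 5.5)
The plan is to derive the system from the general Kushner--Stratonovich (Fujisaki--Kallianpur--Kunita) filtering equation, specialised to a finite-state signal observed in additive Gaussian noise.

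\textbf{Observation process.} The first step is to make the observation process explicit. Define $Y_t\coloneqq\int_0^t \mathrm dD_s^p/D_s^p$. By the explicit solution for $D^p$,
\[
Y_t=\log(D_t^p/D_0)+\sigma^2t/2 .
\]
Hence $Y$ generates the same completed filtration as $D^p$, namely $\mathbb F^{D,p}$. Moreover $Y_t=\int_0^t g_s^p\,\mathrm ds+\sigma W_t^p$. This is the standard setting of \cite[Theorems 9.1 and 9.2]{lipster1977statistics}:
\begin{itemize}
\item the signal $g^p$ is bounded, because it takes finitely many values;
\item the signal is independent of $W^p$;
\item the noise coefficient $\sigma>0$ is constant.
\end{itemize}

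\textbf{Innovation process.} Next we show that $\overline W^p$ is an $(\mathbb F^{D,p},\mathbb P^p)$--Brownian motion. We have
\[
\sigma\overline W_t=Y_t-\int_0^t\overline g(P_s^p)\,\mathrm ds,
\qquad
\overline g(P_s^p)=\mathbb E^p\big[g_s^p\mid\mathcal F_s^{D,p}\big].
\]
\begin{itemize}
\item $\overline W$ is an $\mathbb F^{D,p}$-martingale. Conditioning on $\mathcal F_s^{D,p}$, the drift difference $g_u-\mathbb E[g_u\mid\mathcal F^D_u]$ integrates to zero by the tower property, and $W^p$ remains a martingale after projection onto the smaller filtration.
\item $\overline W$ is continuous with quadratic variation $t$, since it differs from $W^p$ only by a finite-variation term.
\end{itemize}
Lévy's characterisation then gives the claim.

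\textbf{Filter for indicators.} Apply the filtering equation to $f(g)=\mathbf 1_{\{g=g^i\}}$. For the chain, the generator acts as
\[
\mathcal Lf(g^k)=\lambda^{ki}, \qquad\text{so}\qquad \mathbb E\big[\mathcal Lf(g_s)\mid\mathcal F^D_s\big]=\sum_k P_s^{k}\lambda^{ki}=(\Lambda^\top P_s)^i .
\]
The general equation reads
\[
\pi_t(f)=\pi_0(f)+\int_0^t\pi_s(\mathcal Lf)\,\mathrm ds+\int_0^t\sigma^{-1}\big(\pi_s(fg)-\pi_s(f)\pi_s(g)\big)\,\mathrm d\overline W_s .
\]
Here there is no correlation term because $g^p$ and $W^p$ are independent. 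Substituting:
\begin{itemize}
\item $\pi_s(fg)=g^iP_s^i$;
\item $\pi_s(f)\pi_s(g)=P_s^i\,\overline g(P_s)$.
\end{itemize}
The diffusion coefficient is therefore $\sigma^{-1}P^i_s\big(g^i-\overline g(P_s)\big)$, which matches the stated system including its sign. The initial condition $\pi_0(f)=p^i$ follows because $g^p_0$ has law $p$ and $\mathcal F_0^{D,p}$ is trivial up to null sets.

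\textbf{Main obstacle.} The only step needing real care is verifying the hypotheses of the cited theorems, in particular integrability and the martingale representation with respect to $\overline W$. Boundedness of $g$ makes these routine, with $\mathbb E\int_0^T g_s^2\,\mathrm ds<\infty$ trivially satisfied. If a self-contained route were preferred, I would instead use the reference-measure (Zakai) approach. Girsanov with density $\mathcal E\big(-\int\sigma^{-1}g\,\mathrm dW\big)$, which is a true martingale by Novikov since $g$ is bounded, makes $Y/\sigma$ a Brownian motion independent of $g$. The unnormalised filter then solves a linear SDE, and Itô's formula for the normalised ratio yields the same system.

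Uniqueness of the solution comes from the locally Lipschitz (polynomial) coefficients. The solution stays in $\Delta^n$: summing the equations gives $\mathrm d(\mathbf 1^\top P)=0$, since $\Lambda\mathbf 1_n=0$ and the diffusion terms sum to zero. Hence any solution is bounded and pathwise uniqueness holds.
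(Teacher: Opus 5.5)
Your derivation is correct and is essentially the paper's route: the paper gives no argument beyond noting that the hypotheses of Liptser--Shiryaev's Theorems 9.1--9.2 hold and invoking the Wonham (finite-state Kushner--Stratonovich) filter, which is exactly the specialisation to indicator functions that you carry out, with the innovation process and the sign of the diffusion coefficient checked correctly. One caveat concerns your closing uniqueness remark, which the proposition does not actually assert: summing the equations at best preserves $\mathbf 1_n^\top P=1$, not nonnegativity, so it does not bound an arbitrary solution; localisation with the locally Lipschitz coefficients, however, already yields pathwise uniqueness among global continuous solutions.
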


Although posterior beliefs are not directly observed, investors compute them
from dividend history. The model therefore has an equivalent fully observed formulation with $D^p$ and $P^p$ as state variables
\begin{equation*}
\begin{cases}
\displaystyle D_t^p=D_{\smallertext{0}}+\int_{\smallertext{0}}^tD_s^p\overline{g}(P_{s}^p)\mathrm{d}s+\sigma
\int_{\smallertext{0}}^tD_s^p\mathrm{d}\overline W_{s}^p,\; t\geq 0,\; \mathbb P^p\text{\rm--a.s.}, \\[0.8em] 
\displaystyle P_{t}^{p,i}=p^i+\int_{\smallertext{0}}^t(\Lambda^\top
P_s^p)^{i}\mathrm{d}s-\int_{\smallertext{0}}^t\sigma ^{-1}P_{s}^{p,i}\big(\overline{g}(P_s^p)-g^{i}\big)\mathrm{d}
\overline{W}_{s}^p,\; t\geq 0,\; \mathbb P^p\text{\rm--a.s.},\; i\in\{1,\dots,n\}.
\end{cases}
\end{equation*}

\subsection{Epstein--Zin preferences in continuous time}

We fix $p\in\Delta^n$ and consider investors with recursive preferences of Epstein--Zin type. The
following assumption collects the parameter restrictions maintained in the
equilibrium analysis.

\begin{assumption}[Preferences]
\label{assum:pref} Investors discount the future at rate $
\delta>0 $. They have the recursive preferences described below, with
relative risk aversion $\gamma\in(0,+\infty)
\setminus\{1\} $, and an intertemporal elasticity of substitution $
\psi\in(0,+\infty)\setminus\{1\} $. The preference parameter $\theta $ for
early resolution of uncertainty is defined by
\begin{equation*}
\theta\coloneqq \frac{1-\gamma}{1-\psi^{-1}}.
\end{equation*}
Unless explicitly stated otherwise, we assume that
$\theta\in(0,1]$. Equivalently, under the preceding restrictions on
$\gamma$ and $\psi$, this means
\[
  \gamma\in[1/\psi\wedge 1,1/\psi\vee 1].
\]
\end{assumption}

\medskip

For an infinite-horizon consumption strategy $c\in\mathcal C^p$, the utility
process $V^{p,c}$ should be an
$(\mathbb{F}^{\smallertext{D},p},\mathbb P^p)$--semi-martingale satisfying
\begin{equation*}
V_t^{p,c}=\frac{\delta\theta}{1-\gamma}\mathbb{E}^p\bigg[\int_t^{\smallertext{+}\infty}\mathrm{e}^{-
\delta\theta (s-t)}(c_s)^{1-\frac1\psi}((1-\gamma)V_s^{p,c})^{1-\frac1\theta}\mathrm{d}s
\bigg|\mathcal{F}^{\smallertext{D},p}_t\bigg],\; t\geq0,\; \mathbb P^p\text{\rm--a.s.}
\end{equation*}
The well-posedness of $V^{p,c}$ is not immediate. Formally, it is related to
the following infinite-horizon backward stochastic differential equation
(BSDE)
\begin{equation} \label{eq:bsde}
Y_t^{p,c}=\int_t^{\smallertext{+}\infty}F_s\big(c_s,Y_s^{p,c}\big)\mathrm{d}s-\int_t^{\smallertext{+}\infty}Z_s^{p,c}\mathrm{d}\overline W_s^p,\;
t\geq0,\; \mathbb{P}^p\text{\rm--a.s.},
\end{equation}
where the generator $F:\mathbb{R}_\smallertext{+}\times(0,+\infty)^2
\longrightarrow\mathbb{R}$ is defined by 
\begin{equation*}
F_t(c,y)\coloneqq \delta\theta \mathrm{e}^{-\delta t}c^{1-\frac1\psi}y^{1-\frac1\theta},\;
(t,c,y)\in\mathbb{R}_\smallertext{+}\times(0,+\infty)^2,
\end{equation*}
the formal relationship being given by 
\begin{equation*}
V_t^{p,c}=\frac{\mathrm{e}^{\delta\theta t}}{1-\gamma}Y_t^{p,c},\; t\geq 0,\; \mathbb P^p\text{\rm--a.s.},
\end{equation*}
which follows from It\^o's formula.

\begin{remark}[Scope of the case $\theta>1$]
\label{rem:theta-greater-one-scope}
{\rm\Cref{assum:pref}} restricts the equilibrium analysis to
$0<\theta\leq1$. We nevertheless study $\theta>1$ later at the level of the
Markovian pricing equation. This distinction is necessary because, when
$\theta>1$, the exponent $q=1-1/\theta$ lies in $(0,1)$ and the transformed
\emph{BSDE} generator has a continuous extension to $y=0$ for which the trivial zero solution is also available, so uniqueness of the utility process
is no longer automatic. An economically meaningful utility process must therefore be selected among its non-negative solutions. Proper-solution selection for related---but different---infinite-horizon Epstein--Zin problems has been explored and developed by \emph{\citeauthor*{herdegen2023infiniteI} \cite{herdegen2023infiniteI,herdegen2025proper}}. Since we do not extend their theory to the present hidden-state model, results stated for $\theta>1$ concern only the unique positive solution of the Markovian pricing equation. They acquire an equilibrium interpretation only under an appropriate utility-selection rule, whose construction is beyond the scope of this paper.
\end{remark}

\begin{remark}[{The case $\theta<0$}]
\label{rem:theta-negative-scope}
The restriction to $\theta>0$ is not merely technical. For infinite-horizon
Epstein--Zin stochastic differential utility, the region $\theta<0$,
including the empirically common case $\gamma>1$, $\psi>1$, raises a separate
well-foundedness issue; see \cite[Section~7.3]{herdegen2023infiniteI}.
\citeauthor{shigeta2026negative} \cite{shigeta2026negative} addresses this
issue using a transformed utility index and suitable admissibility
conditions. We do not develop this alternative formulation here and
therefore do not attach an equilibrium interpretation to a formal pricing
equation in that region.
\end{remark}

\medskip

The limiting case $\theta=1$ gives the standard time-separable utility
specification
\begin{equation*}
V_t^{p,c}=\frac{\delta}{1-\gamma}\mathbb{E}^p\bigg[\int_t^{\smallertext{+}\infty}\mathrm{e}^{-\delta
(s-t)}(c_s)^{1-\frac1\psi}\mathrm{d}s\bigg|\mathcal{F}^{\smallertext{D},p}_t\bigg],\; t\geq 0,\; \mathbb P^p\text{\rm--a.s.}
\end{equation*}

This is the preference specification closest to the representative-agent
settings in \citeauthor*{veronesi1999overreaction}
\cite{veronesi1999overreaction} and \citeauthor*{david2002option}
\cite{david2002option}. Away from $\theta=1$, the Epstein--Zin aggregator
is the source of the nonlinear term in the Markovian price--dividend equation
derived below.

\medskip
Simple general conditions for existence and uniqueness of $V^{p,c}$ are not
available; see \citeauthor*{kraft2013consumption}
\cite[Definition~2.1 and Remark]{kraft2013consumption} and
\citeauthor*{schroder1999optimal}
\cite[Theorem~1]{schroder1999optimal} for sufficient conditions. We therefore restrict our attention to
consumption strategies $c$ for which $V^{p,c}$ is well defined.

\begin{definition}
We let $\mathcal C_a^p$ denote the subset of $\mathcal C^p$ consisting of $c$ such that there exists a unique pair $(Y^{p,c},Z^{p,c})$ such that \eqref{eq:bsde} holds, with $Y^{p,c}$ a
positive, continuous, $\mathbb{F}^{\smallertext{D},p}$-adapted process, and
$Z^{p,c}\in\mathbb H^1(\mathbb P^p)$.
\end{definition}

\subsection{The financial market and the investors' problem}

Fix $p\in\Delta^n$ and consider the corresponding model copy. In addition to consuming, the investors in our model have the opportunity to invest in a financial market. For simplicity, we assume that the market
consists of one risky asset, whose value is given by an
$\mathbb{F}^{\smallertext{D},p}$-adapted, positive and continuous
semi-martingale $(S_t^p)_{t\geq 0}$, and a risk-free asset with value
$(S^{o,p}_t)_{t\geq 0}$ given by
\begin{equation*}
S^{o,p}_t\coloneqq \exp\bigg(\int_{\smallertext{0}}^tr_s^p\mathrm{d}s\bigg),\;
t\geq 0,\; \mathbb{P}^p\text{\rm--a.s.},
\end{equation*}
where the process $(r_t^p)_{t\geq 0}$ is
$\mathbb F^{\smallertext{D},p}$-adapted and measurable, and
represents the interest rate in the market. We require
$\int_{\smallertext{0}}^{\smallertext{T}}|r_s^p|\mathrm ds<+\infty$,
$\mathbb P^p$--almost surely, for every $T>0$.

\medskip
An investor can thus choose an investment strategy $(\pi_t)_{t\geq
0}$, that is to say a $[0,1]$-valued,
$\mathbb F^{\smallertext{D},p}$-adapted and measurable process,
representing the fraction of wealth held in the risky asset by the
investor at each time. We let the set of investment strategies be denoted by 
$\mathcal{U}$. Then, the wealth of the investor following an
investment--consumption strategy $(\pi,c)\in\mathcal{U}\times\mathcal{C}_a^p$,
and starting with initial wealth $x>0$, is given by the
process $X^{p,x,\pi,c}$ defined by
\begin{align*}
X^{p,x,\pi,c}_t&= x+\int_{\smallertext{0}}^t\big(r_s^p(1-\pi_s)X^{p,x,\pi,c}_s-c_s\big)
\mathrm{d}s+\int_{\smallertext{0}}^t\pi_sX^{p,x,\pi,c}_s\bigg(
\frac{\mathrm{d}S_s^p}{S_s^p}+\frac{D_s^p}{S_s^p}\mathrm{d}s\bigg)
,\; t\geq 0,\; \mathbb{P}^p\text{\rm--a.s.}
\end{align*}

\begin{definition}[Admissible strategies]
\label{def:admissibility}
For a fixed market $(S^p,r^p)$, write $Q^p\coloneqq S^p/D^p$. Given initial
wealth $x>0$ and $(\pi,c)\in\mathcal U\times\mathcal C^p_a$, suppose the
wealth equation has a strictly positive solution $X^{p,x,\pi,c}$ and define
\begin{equation*}
  \widehat Y_t^{p,x,\pi,c}
  \coloneqq
  \delta^\theta\mathrm e^{-\delta\theta t}
  \big(X_t^{p,x,\pi,c}\big)^{1-\gamma}(Q_t^p)^{\theta/\psi},\;
  t\geq 0,\; \mathbb{P}^p\text{\rm--a.s.}
\end{equation*}
Let $\widehat Z^{p,x,\pi,c}$ denote the predictable integrand in the local
martingale part of $\widehat Y^{p,x,\pi,c}$ with respect to
$\overline W^p$. An investment--consumption strategy is admissible, written
$(\pi,c)\in\mathcal A(x;S^p,r^p)$, if
\begin{enumerate}
\item[$(i)$] $(\pi,c)\in\mathcal U\times\mathcal C_a^p$, the wealth equation
has a solution, and $X_t^{p,x,\pi,c}>0$ for all $t\geq0$,
$\mathbb P^p\text{\rm--a.s.};$

\item[$(ii)$] $\widehat Z^{p,x,\pi,c}
\in\mathbb H^1(\mathbb P^p)$.

\item[$(iii)$]
$\liminf_{t\to\smallertext{+}\infty}\widehat Y_t^{p,x,\pi,c}=0$,
$\mathbb P^p\text{\rm--a.s.}$
\end{enumerate}
\end{definition}

\begin{remark}[The verification integrand]
\label{rem:verification-integrand}
For a fixed initial belief, the Wonham equation has a pathwise-unique strong
solution driven by $\overline W^p$, while the innovation formula recovers
$\overline W^p$ from the dividend filtration. Hence
$\mathbb F^{\smallertext{D},p}$ is the completed Brownian filtration generated by
$\overline W^p$ and has the martingale representation property. The integrand
$\widehat Z^{p,x,\pi,c}$ in {\rm\Cref{def:admissibility}} is therefore well defined
and unique up to $\mathrm dt\otimes\mathbb P^p$--null sets.
\end{remark}

For any fixed $x>0$ and initial belief $p\in\Delta^n$, the investor
maximises over $(\pi,c)\in\mathcal A(x;S^p,r^p)$ the initial utility
$V_{\smallertext{0}}^{p,\pi,c}(x;S^p,r^p)$. Recalling the definition of $\mathcal C_a^p$,
the process $(V_t^{p,\pi,c}(x;S^p,r^p))_{t\geq0}$ is the unique continuous,
$\mathbb F^{\smallertext{D},p}$-adapted process satisfying
\begin{equation*}
V^{p,\pi,c}_t(x;S^p,r^p)
=\frac{\delta\theta}{1-\gamma}\mathbb{E}^p\bigg[
\int_t^{\smallertext{+}\infty}\mathrm{e}^{-\delta\theta(s-t)}
(c_s)^{1-\frac1\psi}
\big((1-\gamma)V^{p,\pi,c}_s(x;S^p,r^p)\big)^{1-\frac1\theta}
\mathrm{d}s
\bigg|\mathcal{F}^{\smallertext{D},p}_t\bigg],
\; t\geq0,
\; \mathbb P^p\text{\rm--a.s.}
\end{equation*}

An investment--consumption strategy
$(\pi^\star,c^{\star})(x;S^p,r^p)\in\mathcal A(x;S^p,r^p)$
is optimal if it attains the supremum, denoted by $V_{\smallertext{0}}^p(x;S^p,r^p)$.

\subsection{Market equilibria}
The aim of this paper, however, is not merely to solve the preceding optimisation problem, but to determine equilibrium prices. In equilibrium, the representative agent holds the single outstanding stock share, the money-market account is in zero net supply, and aggregate consumption equals the dividend endowment.

\begin{definition}[$p$-equilibrium]
\label{def:p-equilibrium}
Fix $p\in\Delta^n$. A $p$-equilibrium is a pair $(S^p,r^p)$ of risky-asset
price and interest rate on the $p$-model copy for which there exists a strategy
$(\pi^\star,c^{\star})$ satisfying the conditions below. Admissibility,
utility, and optimality are evaluated under $\mathbb P^p$.
\begin{itemize}
\item[$(i)$] $(\pi^\star,c^{\star})\in
\mathcal A(S_{\smallertext{0}}^p;S^p,r^p)$ and attains
$V_{\smallertext{0}}^p(S_{\smallertext{0}}^p;S^p,r^p);$

\item[$(ii)$] $\pi_t^\star=1$,
$\mathrm dt\otimes\mathbb P^p\text{\rm--a.e.};$

\medskip

\item[$(iii)$] $c_t^{\star}=D_t^p$,
$\mathrm dt\otimes\mathbb P^p\text{\rm--a.e.}$
\end{itemize}
\end{definition}

We now introduce a class in which the price--dividend ratio depends, as expected, on the posterior beliefs, but may also contain an
additional path-dependent factor.
\begin{definition}[The class $\mathfrak C$]
\label{def:class-C}
The class $\mathfrak C$ consists of specifications
$(\varphi,h,\mathcal H,\mathcal R)$, where
\[
  \varphi\in C^2(\Delta^n;(0,+\infty)),
  \;
  h:\Delta^n\longrightarrow(0,+\infty),
  \;
  \mathcal H:\Delta^n\times(0,+\infty)\longrightarrow\mathbb R,
  \;
  \mathcal R:\Delta^n\times(0,+\infty)\longrightarrow\mathbb R,
\]
are independent of the initial belief. The maps $h$, $\mathcal H$, and
$\mathcal R$ are Borel measurable. Belonging to $\mathfrak C$ requires that, for every $p\in\Delta^n$, on the $p$-model copy the equation
\begin{equation*}
  H_t^p
  =
  h(p)
  +
  \int_{\smallertext{0}}^t
    \mathcal H(P_s^p,H_s^p) \mathrm{d}s,
  \; t\geq0,\; \mathbb P^p\text{\rm--a.s.},
\end{equation*}
admits a unique global, positive,
$\mathbb F^{\smallertext{D},p}$-adapted, absolutely continuous solution.
Given this solution, define
\begin{equation*}
  Q^p
  \coloneqq
  H^p\varphi(P^p),
  \;
  S^p=D^pQ^p,
  \;
  r^p
  =
  \mathcal R(P^p,H^p).
\end{equation*}
We also require, for every $T>0$,
\[
  \int_{\smallertext{0}}^\smallertext{T}|r_t^p|\mathrm dt<+\infty,
  \; \mathbb P^p\text{\rm--a.s.}
\]
Thus one specification generates the family
$((S^p,r^p))_{p\in\Delta^n}$ across the corresponding model copies.
\end{definition}

Finally, we introduce what will be our target in the rest of the paper.
\begin{definition}[$\mathfrak C$-equilibrium]
\label{def:C-equilibrium}
A specification in $\mathfrak C$ is a $\mathfrak C$-equilibrium if, for every
$p\in\Delta^n$, its generated pair $(S^p,r^p)$ under
$\mathbb P^p$ is a $p$-equilibrium in the
sense of {\rm\Cref{def:p-equilibrium}}.
\end{definition}

Within this class, we can write the stock and wealth dynamics explicitly.
For $y\in\Delta^n$, set
\[
  b(y)\coloneqq\Lambda^\top y,
  \;
  A(y)\coloneqq
  \operatorname{diag}(y)\big(\overline g(y)\mathbf 1_n-\bg\big).
\]
Both vectors are tangent to the simplex. Indeed, $\mathbf 1_n^\top b(y)=0,
  \;
  \mathbf 1_n^\top A(y)=0.$ For every specification in $\mathfrak C$ and every $p\in\Delta^n$, under
$\mathbb P^p$ the candidate price $S^p$ is an It\^o process satisfying
$S_{\smallertext{0}}^p=D_{\smallertext{0}}h(p)\varphi(p)$ and
\begin{align*}
\frac{\mathrm dS_t^p}{S_t^p}
&=\bigg(
  \overline g(P_t^p)
  +\frac{\mathcal H(P_t^p,H_t^p)}{H_t^p}
  +\frac{
    \big(b(P_t^p)-A(P_t^p)\big)^\top \partial_p\varphi(P_t^p)
    +\frac{1}{2\sigma^\smalltext{2}}
    A(P_t^p)^\top\partial_{pp}\varphi(P_t^p)A(P_t^p)
   }{\varphi(P_t^p)}
\bigg)\mathrm dt\\
&\quad+\bigg(
  \sigma-
  \frac{A(P_t^p)^\top \partial_p\varphi(P_t^p)}
     {\sigma\varphi(P_t^p)}
 \bigg){\mathrm d\overline W_t^p},
\; t\geq0,
\; \mathbb P^p\text{\rm--a.s.}
\end{align*}

It\^o's formula gives the associated candidate wealth
dynamics
\begin{align*}
X_{t}^{p,x,\pi ,c}&= x+\int_{0}^{t}\bigg(\mathcal{R}
(P_{s}^p,H_{s}^p)(1-\pi _{s})X_{s}^{p,x,\pi ,c}-c_{s}+\pi
_{s}X_{s}^{p,x,\pi ,c}\bigg(\overline{g}(P_{s}^p)+\frac{\mathcal{H}
(P_{s}^p,H_{s}^p)+\varphi(P_s^p)^{-1}}{H_{s}^p}\bigg)\bigg)\mathrm{d}s
 \\
&\quad +\int_{0}^{t}\pi _{s}X_{s}^{p,x,\pi ,c}
\frac{
  \big(b(P_s^p)-A(P_s^p)\big)^\top \partial_p\varphi(P_s^p)
  +\frac{1}{2\sigma^\smalltext{2}}
   A(P_s^p)^\top\partial_{pp}\varphi(P_s^p)A(P_s^p)
}{\varphi(P_s^p)} \mathrm ds\\
&\quad +\int_{0}^{t}\pi _{s}X_{s}^{p,x,\pi ,c}\bigg(
\sigma-
\frac{A(P_s^p)^\top \partial_p\varphi(P_s^p)}
   {\sigma\varphi(P_s^p)}
\bigg){\mathrm d\overline W_s^p},
\; t\geq0,
\; \mathbb P^p\text{\rm--a.s.}
\end{align*}

\subsection{The formal Hamilton--Jacobi--Bellman equation}

For a candidate generated by a specification in $\mathfrak C$, the investor
faces an optimal-control problem
with state variables $X^{p,x,\pi,c}$, $H^p$, and $P^p$, and controls $\pi$
and $c$. We use $u$ for the current value of the auxiliary factor, reserving
$h$ for the initial-value map in the definition of $\mathfrak C$.

\medskip
For $(\pi,c)\in[0,1]\times(0,+\infty)$ and a smooth map
$\vartheta:\Delta^n\times(0,+\infty)^2\longrightarrow\mathbb R$, the infinitesimal
generator is
\begin{align*}
\big(\mathcal L^{\pi,c}\vartheta\big)(p,x,u)
\coloneqq{}&
\bigg(
  \mathcal R(p,u)(1-\pi)x-c
  +\pi x\bigg(\overline g(p)
    +\frac{\mathcal H(p,u)+\varphi(p)^{-1}}{u}\bigg)
\bigg)\partial_x\vartheta(p,x,u)\\
&+\pi x\frac{
   \big(b(p)-A(p)\big)^\top \partial_p\varphi(p)
   +\frac{1}{2\sigma^2}A(p)^\top\partial_{pp}\varphi(p)A(p)
  }{\varphi(p)}\partial_x\vartheta(p,x,u)\\
&+\mathcal H(p,u)\partial_u\vartheta(p,x,u)
+b(p)^\top \partial_p\vartheta(p,x,u)
+\frac{1}{2\sigma^2}
 A(p)^\top\partial_{pp}\vartheta(p,x,u)A(p)\\
&-\frac{\pi x}{\sigma}
 \bigg(\sigma-
   \frac{A(p)^\top \partial_p\varphi(p)}{\sigma\varphi(p)}\bigg)
 A(p)^\top \partial_{xp}\vartheta(p,x,u)+\frac{\pi^2x^2}{2}
 \bigg(\sigma-
   \frac{A(p)^\top \partial_p\varphi(p)}{\sigma\varphi(p)}\bigg)^2
 \partial_{xx}\vartheta(p,x,u).
\end{align*}

\medskip

The Hamilton--Jacobi--Bellman equation is
\begin{align*}
0&=\delta\theta v(p,x,u)
-\sup_{(\pi,c)\in[0,1]\times(0,\smallertext{+}\infty)}
\bigg\{
  \big(\mathcal L^{\pi,c}v\big)(p,x,u)
  +\frac{\delta c^{1-1/\psi}}{1-1/\psi}
   \big((1-\gamma)v(p,x,u)\big)^{1-1/\theta}
\bigg\},
\end{align*}
or, equivalently
\begin{align}\label{eq:hjb}
0&=\delta\theta v(p,x,u)
-\mathcal R(p,u)xv_x(p,x,u)
-{b(p)^\top \partial_p v(p,x,u)}
-\mathcal H(p,u)v_u(p,x,u)
-{\frac{1}{2\sigma^2}
 A(p)^\top \partial_{pp}v(p,x,u)A(p)}\notag\\
&\quad-\mathcal G\big(
  p,x,u,v(p,x,u),v_x(p,x,u),v_{xx}(p,x,u),
  \partial_{xp}v(p,x,u)
\big),
\end{align}
where, for $p\in\Delta^n$, $x,u,z>0$, $y\in\mathbb R$ such that
$(1-\gamma)y>0$, $m\in\mathbb R$, and $k\in T\Delta^n$, we define
\begin{align*}
\mathcal G(p,x,u,y,z,m,k)
&\coloneqq
\sup_{\pi\in[0,1]}
\bigg\{
z\pi x\bigg(
  \overline g(p)-\mathcal R(p,u)
  +\frac{\mathcal H(p,u)+\varphi(p)^{-1}}{u}
\bigg)\\
&\quad+z\pi x\frac{
  \big(b(p)-A(p)\big)^\top \partial_p\varphi(p)
  +\frac{1}{2\sigma^\smalltext{2}}A(p)^\top\partial_{pp}\varphi(p)A(p)
 }{\varphi(p)}\\
&\quad-\frac{\pi x}{\sigma}
   \bigg(\sigma-
    \frac{A(p)^\top \partial_p\varphi(p)}{\sigma\varphi(p)}\bigg)
   A(p)^\top  k+\frac{\pi^2x^2m}{2}
 \bigg(\sigma-
  \frac{A(p)^\top \partial_p\varphi(p)}{\sigma\varphi(p)}\bigg)^2
\bigg\}\\
&\quad+\sup_{c>0}
\bigg\{-cz+
\frac{\delta c^{1-1/\psi}}{1-1/\psi}((1-\gamma)y)^{1-1/\theta}
\bigg\}.
\end{align*}

Notice that the wealth equation and preferences are homogeneous in wealth and
consumption. If $(\pi,c)\in\mathcal A(x;S^p,r^p)$ and $\lambda>0$, then
$(\pi,\lambda c)\in\mathcal A(\lambda x;S^p,r^p)$ and
$X^{p,\lambda x,\pi,\lambda c}=\lambda X^{p,x,\pi,c}$. Hence the investor's
value function is non-decreasing in $x$. It is non-negative when
$\psi>1$ and non-positive when $\psi<1$. In the smooth homothetic class
considered below, the relevant range is $z=v_x>0$, and $(1-\gamma)y$ is
strictly positive. For $z>0$, the consumption supremum is attained at
\[
c^\star(y,z)
\coloneqq
\bigg(\frac{\delta((1-\gamma)y)^{1-1/\theta}}{z}\bigg)^\psi.
\]
When $m<0$ and the risky asset has non-zero return volatility, the portfolio supremum is strictly concave and its maximiser is the projection of the unconstrained optimum onto $[0,1]$. When $m\geq0$, a maximiser lies at an endpoint unless the objective is constant in $\pi$. The proof below uses only the one-sided optimality condition at the clearing portfolio $\pi=1$.

\medskip
In the remainder of the paper, we say that the representative
agent's value function admits a classical $C^2$ state representation if it is
represented by a function
$v\in C^2\bigl(\Delta^n\times(0,+\infty)^2;\mathbb R\bigr)$ that solves
\eqref{eq:hjb} pointwise and whose Hamiltonian is attained by the
market-clearing controls along the equilibrium state process.

\medskip
The next lemma expresses the equilibrium restrictions on the auxiliary drift
and the short rate. To state those restrictions compactly, let
\[
  \kappa
  \coloneqq
  \delta\theta
  +\frac{\gamma(1-\gamma)\sigma^2}{2},
\]
set
\begin{gather}
\label{eq:zeta-definition}
\zeta(p)
\coloneqq
  \frac{\kappa}{\theta}-\frac{1}{2\sigma^2}
  \Bigg(
    \frac{A(p)^\top\partial_{pp}\varphi(p)A(p)}{\varphi(p)}
    +(\theta-1)
    \bigg(
      \frac{A(p)^\top\partial_p\varphi(p)}{\varphi(p)}
    \bigg)^2
  \Bigg)-
  \frac{\big(b(p)-(1-\gamma)A(p)\big)^\top\partial_p\varphi(p)}{\varphi(p)}-
  \bigg(1-\frac1\psi\bigg)\overline g(p),\\
  \notag
\overline r(p)
\coloneqq
  \frac{\overline g(p)}{\psi}
  +(\theta-1)
   \bigg(
    \frac{1}{2\sigma^2}
    \bigg(
      \frac{A(p)^\top \partial_p\varphi(p)}{\varphi(p)}
    \bigg)^2
    -
    \frac{A(p)^\top \partial_p\varphi(p)}{\varphi(p)}
   \bigg)
  +\frac{\kappa}{\theta}
  -\gamma\sigma^2.
\end{gather}

Both $\zeta$ and $\overline r$ depend on the candidate function $\varphi$, though we suppress this dependence to alleviate the notation.

\begin{lemma}[Equilibrium reduction]
\label{lem:equilibrium-reduction}
Let the specification be a $\mathfrak C$-equilibrium for
which the representative agent's value function admits a classical $C^2$
state representation. Then,
for every initial belief $p\in\Delta^n$
\begin{equation}
\label{eq:on-path-equilibrium-reduction}
  \mathcal H(P_t^p,H_t^p)
  =
  H_t^p\zeta(P_t^p)-\varphi(P_t^p)^{-1},\;
  \mathcal R(P_t^p,H_t^p)
  \leq
  \overline r(P_t^p),
\;
\mathrm dt\otimes\mathbb P^p\text{\rm--a.e.}
\end{equation}
\end{lemma}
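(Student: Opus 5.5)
The plan is to exploit homotheticity of the value function together with the first-order conditions of the HJB equation at the market-clearing controls, evaluated along the equilibrium state path.

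\textbf{Step 1: factor out wealth.} By the scaling property noted before the lemma, $X^{p,\lambda x,\pi,\lambda c}=\lambda X^{p,x,\pi,c}$. The aggregator is homogeneous, so utility scales as $\lambda^{1-\gamma}$. Hence the classical representation must satisfy
\[
v(p,x,u)=\frac{x^{1-\gamma}}{1-\gamma}\,w(p,u)
\]
for some positive $w\in C^2$. Rather than proving this for all $(p,x,u)$, I would only use it along the equilibrium path. The $x$-derivatives there are $v_x=x^{-\gamma}w$, $v_{xx}=-\gamma x^{-\gamma-1}w$ and $\partial_{xp}v=x^{-\gamma}\partial_pw$.

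\textbf{Step 2: consumption clearing.} With $c^\star=D$ and $x=S=D u\varphi(p)$, the formula for $c^\star(y,z)$ gives
\[
D=\Big(\delta\big(x^{1-\gamma}w\big)^{1-1/\theta}\big/\big(x^{-\gamma}w\big)\Big)^{\psi}.
\]
Taking the $1/\psi$ power and using $(1-\gamma)(1-1/\theta)=1-1/\psi-\gamma\cdot$(correction), the powers of $x$ collapse; in fact $\theta(1-1/\psi)=1-\gamma$ makes the exponent bookkeeping exact. This yields
\[
w(P_t,H_t)=\delta^{\theta}\big(H_t\varphi(P_t)\big)^{-\theta/\psi}.
\]
This is exactly consistent with $\widehat Y$ in \Cref{def:admissibility}. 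By continuity and the support of $(P,H)$, or directly via Itô along the path, I will treat $w=\delta^\theta(u\varphi)^{-\theta/\psi}$ on the relevant set. This is then used to compute $w_u$, $\partial_pw$ and $\partial_{pp}w$ in terms of $\varphi$.

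\textbf{Step 3: the HJB identity gives $\mathcal H$.} Substitute $\pi=1$, $c=D$ into \eqref{eq:hjb}, where the Hamiltonian is attained. After dividing by $x^{1-\gamma}w$, the $\mathcal R$ terms cancel, because at $\pi=1$ the term $-\mathcal Rxv_x$ is offset by $+\mathcal R\pi xv_x$ inside $\mathcal G$. What remains is linear in $\mathcal H(p,u)$. Its coefficient is
\[
\frac{1}{u}-\frac{\theta}{\psi}\cdot\frac{1}{u}\cdot\frac{1}{1-\gamma}\cdot(1-\gamma)=\frac{1}{u}\Big(1-\frac{\theta}{\psi}\Big)\frac{1}{1-\gamma}(\ldots).
\]
This coefficient comes from the wealth drift $+\mathcal H/u$ and from $w_u$, and should be $(1-\gamma)-\theta/\psi=\theta$ times $1/u$ after normalisation. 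The consumption term $\delta^\psi(\ldots)$ simplifies via Step 2 to $\varphi^{-1}u^{-1}$ times a constant. Solving for $\mathcal H$ should produce $u\zeta(p)-\varphi(p)^{-1}$. The $\kappa$ comes from $\delta\theta$ plus the $-\tfrac12\gamma(1-\gamma)\sigma^2$ of the $v_{xx}$ term. The $(\theta-1)(A^\top\partial_p\varphi/\varphi)^2$ term arises from $\partial_{pp}$ of $\varphi^{-\theta/\psi}$ combined with the cross term $A^\top\partial_{xp}v$ and the squared stock volatility.

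\textbf{Step 4: the portfolio inequality.} Optimality of $\pi=1$ in the $[0,1]$-constrained concave supremum requires the one-sided condition that the $\pi$-derivative of the bracket at $\pi=1$ be $\ge0$. This reads
\[
\text{excess return}\;-\;\gamma\sigma_S^2\;-\;\sigma_S\,\frac{A^\top\partial_pw}{\sigma w}\;\ge0,
\qquad \sigma_S=\sigma-\frac{A^\top\partial_p\varphi}{\sigma\varphi}.
\]
Inserting $\mathcal H$ from Step 3 and $\partial_pw/w=-(\theta/\psi)\partial_p\varphi/\varphi$ rearranges to $\mathcal R\le\overline r$.

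These identities hold at $(P_t,H_t)$ for $\mathrm dt\otimes\mathbb P^p$-a.e. $(t,\omega)$, which is the claim. The \emph{main obstacle} is Step 2: justifying that $w$ equals the clearing expression together with its derivatives, and not merely its value. My first attempt is to obtain the identity for all $(p,u)$ in the closed support of the equilibrium states. If that fails, I will avoid derivatives of $w$ altogether by applying Itô's formula to $v(P,X,H)$, which equals $\widehat Y$-type processes, and matching drift and diffusion coefficients with those of $\widehat Y$ computed directly.
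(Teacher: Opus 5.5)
Your architecture is the paper's: homogeneity in wealth, the consumption first-order condition along the path, drift matching for $\mathcal H$, and the one-sided portfolio condition for $\mathcal R$. Two steps fail as written, however.

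\textbf{Sign error in Step~2.} With $(1-\gamma)y=x^{1-\gamma}w$ and $z=v_x=x^{-\gamma}w$, the maximiser is $c^\star=\delta^\psi x\,w^{-\psi/\theta}$, because the $x$-exponent is $\psi\bigl((1-\gamma)(1-1/\theta)+\gamma\bigr)=1$. Clearing $c^\star=D$ at $x=DQ$ then gives $w(P_t,H_t)=\delta^\theta Q_t^{\theta/\psi}$, with a \emph{positive} exponent. This is exactly the factor $(Q^p)^{\theta/\psi}$ in $\widehat Y$, so your own consistency check contradicts the formula you wrote.

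The error propagates. With $w\propto(u\varphi)^{-\theta/\psi}$, the factor in front of $\mathcal H/u$ and of $\frac{1}{2\sigma^2}A^\top\partial_{pp}\varphi A/\varphi$ in Step~3 becomes $(1-\gamma)-\theta/\psi=\theta(1-2/\psi)$ instead of $\theta$. The correct identity is $1-\gamma+\theta/\psi=\theta$. Meanwhile $\delta\theta$, $\kappa$ and $(1-\gamma)\overline g$ are unaffected, so $\zeta$ does not emerge. In Step~4, writing $d=A^\top\partial_p\varphi/\varphi$, the cross term becomes $+\frac{\theta}{\psi\sigma}d\,\sigma_S$ instead of $-\frac{\theta}{\psi\sigma}d\,\sigma_S$, which does not rearrange to $\overline r$. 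With the correct sign, $A^\top\partial_pw/w=(\theta/\psi)d$ and both computations close.

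\textbf{Derivatives of $w$ off the path.} Your first plan, extending the identity to the closed support of the equilibrium states and differentiating there, cannot work. Nothing makes the states $(P_t^p,H_t^p)$ fill an open set. Indeed, under the hypotheses of \Cref{thm:main} one ends up with $H_t^p\equiv h_0$ for every initial belief. All equilibrium states then lie in $\Delta^n\times\{h_0\}$, and $w_u$ is not determined by the values of $w$ there.

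Your fallback is the right fix, and it is essentially the paper's proof. The derivatives of $w$ enter only through the drift and the Brownian coefficient of $w(P_t,H_t)$ along the path, and It\^o's formula reads these off from the explicit process. Concretely, the paper does the following.
\begin{itemize}
\item It identifies $Y_t^p=(1-\gamma)\mathrm e^{-\delta\theta t}v(P_t^p,X_t^p,H_t^p)=\delta^\theta\mathrm e^{-\delta\theta t}(D_t^p)^{1-\gamma}(Q_t^p)^\theta$.
\item It computes the drift of this process, $\theta(\mathcal H/H-\zeta)Y^p$, and matches it with the BSDE drift $-F_t(D_t^p,Y_t^p)=-\theta Y_t^p/Q_t^p$.
\item For the short rate, it matches only the Brownian terms in $w(P,H)=\delta^\theta Q^{\theta/\psi}$ (its $k$). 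This yields $A^\top\partial_{xp}v/v_x=(\theta/\psi)d$, the only $p$-derivative the portfolio condition needs.
\end{itemize}
The HJB at the clearing controls along the path says exactly that $(1-\gamma)\mathrm e^{-\delta\theta t}v(P_t^p,X_t^p,H_t^p)$ has drift $-F_t$. So your HJB-based version is the same computation.

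Finally, take the homogeneity $v(q,ax,u)=a^{1-\gamma}v(q,x,u)$ on the whole domain, as the scaling argument gives, not only along the path. The identity $\partial_{xp}v=x^{-\gamma}\partial_pw$ needs the factorisation in a neighbourhood in $p$.
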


The proof is given in \Cref{app:proof-main-equilibrium}.

\section{Equilibrium characterisation}
\label{sec:equilibrium-characterisation}

\subsection{Main result}

We use a verification argument to characterise the $\mathfrak C$-equilibria
for which the representative agent's value function admits a classical
$C^2$ state representation. The class $\mathfrak C$ allows candidate
prices to depend on the auxiliary state $H^p$ and therefore
need not be Markovian in dividends and beliefs.
The central result of the paper is that this additional
state cannot survive in such an equilibrium: every such
$\mathfrak C$-equilibrium has a constant auxiliary process
$H^p$.
Consequently,
$Q^p=h_{\smallertext{0}}\varphi(P^p)$ and the stock price is Markovian in
$(D^p,P^p)$. Thus Markovianity is an equilibrium conclusion, not a
restriction imposed on candidate prices. The proof of the next result is given in \Cref{app:proof-main-theorem}.

\begin{theorem}[Equilibrium characterisation]
\label{thm:main}
Let {\rm\Cref{assum:pref}} hold. If a specification is a $\mathfrak C$-equilibrium for which the agent's value function admits a classical
$C^2$ state representation and
$\min_{p\in\Delta^\smalltext{n}}\zeta(p)>0$, then there is a constant $h_{\smallertext{0}}>0$ such that
\begin{equation}\label{eq:equilibrium-identity-short}
  h(p)=h_{\smallertext{0}},
  \; p\in\Delta^n,\; \text{\rm and}\;
  h_{\smallertext{0}}\varphi(p)\zeta(p)=1,
  \; p\in\Delta^n.
\end{equation}
For this $\mathfrak C$-equilibrium, for every
$p\in\Delta^n$, on the $p$-model copy
\begin{equation}
\label{eq:belief-markovian-conclusion}
  H_t^p=h_{\smallertext{0}},
  \;
  Q_t^p
  =h_{\smallertext{0}}\varphi(P_t^p),
  \;
  S_t^p
  =D_t^pQ_t^p,
  \;
  r_t^p
  =\mathcal R(P_t^p,h_{\smallertext{0}}),
  \; t\geq0,
  \;\mathbb P^p\text{\rm--a.s.}
\end{equation}
Conversely, suppose that there is a constant $h_{\smallertext{0}}>0$ satisfying
\eqref{eq:equilibrium-identity-short}, and that the specification in
$\mathfrak C$ satisfies
\begin{equation}
\label{eq:converse-conditions}
  h(p)=h_{\smallertext{0}},
  \;
  \mathcal H(p,h_{\smallertext{0}})=0,
  \;
  \mathcal R(p,h_{\smallertext{0}})\leq\overline r(p),
  \; p\in\Delta^n.
\end{equation}
Then the specification is a $\mathfrak C$-equilibrium and satisfies
\eqref{eq:belief-markovian-conclusion}.
\end{theorem}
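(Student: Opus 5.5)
The forward direction starts from \Cref{lem:equilibrium-reduction}. Along every equilibrium path it gives $\mathcal H(P_t^p,H_t^p)=H_t^p\zeta(P_t^p)-\varphi(P_t^p)^{-1}$, $\mathrm dt\otimes\mathbb P^p$--a.e. Substituting this into the defining equation of $H^p$ in \Cref{def:class-C} shows that $H^p$ solves the linear random ODE
\[
  \dot H_t^p=\zeta(P_t^p)H_t^p-\varphi(P_t^p)^{-1},\; H_0^p=h(p).
\]
Its solution is
\[
  H_t^p=\mathrm e^{\int_0^t\zeta(P_s^p)\mathrm ds}\Big(h(p)-\int_0^t\mathrm e^{-\int_0^s\zeta(P_u^p)\mathrm du}\varphi(P_s^p)^{-1}\mathrm ds\Big).
\]
Put $\underline\zeta\coloneqq\min\zeta>0$. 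The factor $\mathrm e^{\int_0^t\zeta}$ then grows at least like $\mathrm e^{\underline\zeta t}$, and $\varphi^{-1}$ is bounded above and below on the compact simplex. Positivity of $H^p$ for all $t$ therefore forces
\[
  h(p)\geq J^p\coloneqq\int_0^{\infty}\mathrm e^{-\int_0^s\zeta(P_u^p)\mathrm du}\varphi(P_s^p)^{-1}\mathrm ds\quad\text{a.s.},
\]
which is a finite random variable. If the inequality were strict on a set of positive probability, then $H_t^p\mathrm e^{-\int_0^t\zeta}\to h(p)-J^p>0$, so $H^p$ would grow exponentially. I would rule this out with the transversality condition (iii) of \Cref{def:admissibility}. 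At the equilibrium strategy, $X=S=DH\varphi(P)$. Using the value-function representation and the BSDE, $\widehat Y_t=\delta^\theta\mathrm e^{-\delta\theta t}(D_tH_t\varphi)^{1-\gamma}(H_t\varphi)^{\theta/\psi}$ is essentially $\mathrm e^{-\delta\theta t}D_t^{1-\gamma}Q_t^{\theta}$, since $1-\gamma+\theta/\psi=\theta$. Moreover, $\widehat Y$ must coincide with the utility process $Y$, which is expressible as a discounted conditional expectation.

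The cleanest route is to show that $\widehat Y$ plus the integrated generator is a true martingale, which uses $\widehat Z\in\mathbb H^1$. Then $\liminf\widehat Y_t=0$ identifies $\widehat Y_0$ as $\mathbb E[\int_0^\infty F\,\mathrm ds]$. The same representation holds for the discounted version of the process $D_t J_t$, where $J_t$ is the tail integral $\int_t^\infty\mathrm e^{-\int_t^s\zeta}\varphi^{-1}\mathrm ds$. The difference $H_t-J_t=\mathrm e^{\int_0^t\zeta}(h(p)-J^p)$ contributes a term that must vanish. Concretely, I expect that positivity of $\zeta$ and the identity $\kappa/\theta-(1-1/\psi)\overline g=\dots$ in \eqref{eq:zeta-definition} make $\mathrm e^{-\delta\theta t}D_t^{1-\gamma}(\mathrm e^{\int_0^t\zeta})^\theta$ a positive local martingale density up to bounded factors. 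Then $\liminf \widehat Y_t=0$ is incompatible with $h(p)>J^p$ on a non-null set. This yields $H_t^p=J_t^p$, a functional of the future belief path only.

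The key structural step is to show that $J$ is actually constant. Since $J_t^p=h(P_t^p)$ by the Markov property and the initial-belief independence of the maps (restart the model copy at time $t$ from belief $P_t^p$), $H_t^p=\mathcal J(P_t^p)$ with $\mathcal J(q)\coloneqq\mathbb E$-free, pathwise $J$ started at $q$. But $H^p$ is absolutely continuous, while $\mathcal J(P^p)$ has the random tail integral. Pathwise, $J$ depends on the whole future, so the identity $H_t^p=J_t^p$ with $H^p$ adapted forces $J_t^p$ to be $\mathcal F_t$-measurable. Then $J^p_0=h(p)$ is deterministic for each $p$. Applying this at every $t$, $J_t^p=h(P_t^p)$ is deterministic given $P_t^p$. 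So $h(P^p)$ is absolutely continuous, and it satisfies $\mathrm d h(P_t)=(\zeta h-\varphi^{-1})(P_t)\mathrm dt$. Because $P^p$ has non-degenerate martingale part $A(P)/\sigma$ (away from the vertices), a Borel function of a diffusion that is absolutely continuous must have zero quadratic variation. I would argue this via the occupation-time or support properties of the Wonham diffusion, avoiding any smoothness of $h$. The hardest point is that $h$ is only Borel, so Itô's formula is unavailable; I would instead use measurability of $J^p$ with respect to $\mathcal F_t$ against independence of future innovations, since a deterministic $\mathcal F_0$-measurable tail functional has zero conditional variance. This forces $h$ to be constant on the range of beliefs, with constant value $h_0$. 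Then $\dot H=0$, which gives $h_0\zeta\varphi=1$ on $\Delta^n$; use continuity and the fact that the vertices are reachable or are limits. This delivers \eqref{eq:belief-markovian-conclusion}, with $r$ read off from $\mathcal R$.

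For the converse, $H\equiv h_0$ is the unique solution of the ODE. I would construct $v(p,x,u)=\frac{x^{1-\gamma}}{1-\gamma}K\,(h_0\varphi(p))^{\theta/\psi}$, homothetic as in \Cref{def:admissibility}. Using $h_0\varphi\zeta=1$, I would check that it solves \eqref{eq:hjb} with $c=D$, and that $\pi=1$ satisfies the one-sided optimality condition exactly when $\mathcal R\leq\overline r$ (the concave-in-$\pi$ first-order condition). Verification then follows via the admissibility conditions (ii)--(iii), using bounds on $\varphi$ and $\zeta>0$ to show that $\widehat Y_t\to0$.
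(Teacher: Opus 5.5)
\textbf{There is a genuine gap in the necessity part, at two places.}

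\textbf{The terminal-condition step.} You are right that $\mathrm e^{-\delta\theta t}D_t^{1-\gamma}\mathrm e^{\theta\Gamma_t}\varphi(P_t)^\theta$ is, up to a constant, the stochastic exponential $\widehat L$ of $\int\lambda_{\mathrm U}(P_s)\,\mathrm d\overline W_s$. This gives $\widehat Y_t=\widehat Y_0\widehat L_t\bigl(1-J_t/h(p)\bigr)^\theta$ with $J_t=\int_0^t\mathrm e^{-\Gamma_s}\varphi(P_s)^{-1}\mathrm ds$.

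Your claim that $\liminf_t\widehat Y_t=0$ is incompatible with $h(p)>J_\infty$ on a non-null set is false. $\widehat L$ is only a positive local martingale. Since $\lambda_{\mathrm U}=(1-\gamma)\sigma\neq0$ at the vertices, typically $\int_0^\infty\lambda_{\mathrm U}(P_s)^2\mathrm ds=+\infty$, so $\widehat L_t\to0$ a.s. Condition $(iii)$ of \Cref{def:admissibility} then holds whatever $h(p)$ is. This is exactly the situation of the formal candidates in \Cref{rem:formal-unstable-candidates}.

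The discriminating information is the $\mathbb H^1$ condition. Through \Cref{lem:zero-residual} it gives $\mathbb E^p[\widehat Y_t]\to0$. Under the tilted law with density $\widehat L$, $\mathbb E^p[\widehat Y_t]=\widehat Y_0\,\widehat{\mathbb E}^p[(1-J_t/h(p))^\theta]$, so $h(p)=J_\infty^p$ holds only $\widehat{\mathbb P}^p$--a.s. The laws $\widehat{\mathbb P}^p$ and $\mathbb P^p$ are equivalent only on finite horizons. Your Markov restart giving $H_t=h(P_t)$ must therefore be done under the canonical tilted law and transferred back to $\mathbb P^p$ horizon by horizon.

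\textbf{The step making $h$ constant.} This is the more serious gap: it is not actually proved.

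First, with $h$ merely Borel you have no continuity. You need it both to extend $H_t=h(P_t)$ from countably many times to all $t$, and to compare values of $h$ at different beliefs. The paper gets continuity from $h(p)=\mathbb E^{\mathrm c}[\mathsf J^p_\infty]$, the $L^2$-stability of the tilted filter, and the uniform tail bound $\mathrm e^{-\underline\zeta T}/(\underline\varphi\underline\zeta)$. This is where $\min\zeta>0$ is genuinely used, not merely to make $J$ finite.

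Second, your ``zero quadratic variation'' and ``independence of future innovations'' heuristics cannot yield constancy on the range of beliefs. The Wonham diffusion has a single noise direction $A$. Even for smooth $h$, absolute continuity of $h(P)$ only gives $A^\top\partial_ph=0$, i.e.\ invariance of $h$ along integral curves of $A$. For $n\geq3$ that is far from constancy.

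Two ingredients are missing:
\begin{enumerate}
\item[(i)] A way to turn the almost-sure bound $|h(P_t)-h(p)|\leq C_0t$ into a statement about $h$ itself. The paper makes an absolutely continuous Girsanov change adding drift $\sigma N$ on $[0,s/N]$; under it the time-rescaled belief converges to the flow $\dot{\bar p}=A(\bar p)$. Since the a.s.\ bound survives the change of measure, continuity gives $h(\bar p(s))=h(p)$.
\item[(ii)] The explicit flow $\bar p^i(s)\propto p^i\mathrm e^{-g^is}$, which sends every interior belief to the same vertex $e_{i_\star}$ of lowest growth. Continuity then forces $h\equiv h(e_{i_\star})$.
\end{enumerate}

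\textbf{The converse.} Your outline matches the paper's. It also needs the $\mathbb H^1$ estimate for the candidate $\widehat Z$, to place $c^\star$ in $\mathcal C_a^p$. It also needs the one-sided comparison of \Cref{lem:one-sided-comparison}, which is where $\theta\leq1$ enters. The decay of $\widehat Y$ there comes from the upper bound on $\varphi$, not from $\zeta>0$.
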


Before we continue, let us make two short remarks on \Cref{thm:main}.

\begin{remark}[Role of the strict lower bound on $\zeta$]
\label{rem:two-state-endogenous-zeta}
For an arbitrary number of states, the condition
$\min_{p\in\Delta^\smalltext{n}}\zeta(p)>0$ provides the uniform tail estimate used in
the proof. In the two-state model with
$\lambda^{\smallertext{1}\smallertext{2}}>0$ and
$\lambda^{\smallertext{2}\smallertext{1}}>0$, however, this condition
follows from the positivity requirement on $H$ and need not be imposed
separately. Hence the necessity conclusion of {\rm\Cref{thm:main}} remains valid
without it in that case. A formal statement and proof are provided in
{\rm\Cref{prop:two-state-endogenous-zeta}}.
\end{remark} 

\begin{remark}[Initial wealth and scaled equilibria]
By {\rm\Cref{def:p-equilibrium}}, the representative agent's equilibrium problem
is evaluated at $x=S_{\smallertext{0}}^p$. This initial wealth is the value of the single
outstanding stock share. With this wealth, the agent holds that share and
consumes the dividend endowment $D^p$, so the original economy clears.

\medskip

Homogeneity also gives the solution of the investor's problem for any
initial wealth $x>0$. The value is
\begin{equation}
\label{eq:value-short}
  V_{\smallertext{0}}^p(x)
  =
  \delta^\theta
  \frac{x^{1-\gamma}}{1-\gamma}
  \big(h_{\smallertext{0}}\varphi(p)\big)^{\theta/\psi},
\end{equation}
and the optimal controls and associated wealth process are
\begin{equation}
\label{eq:controls-short}
  \pi_t^\star=1,
  \;
  X_t^\star=\frac{x}{S_{\smallertext{0}}^p}S_t^p,
  \;
  c_t^\star=\frac{x}{S_{\smallertext{0}}^p}D_t^p,\; t\geq0,\; \mathbb P^p\text{\rm--a.s.}
\end{equation}

If $x\neq S_{\smallertext{0}}^p$, this strategy and its associated wealth do not clear the original economy, and the
agent's wealth and consumption are scaled by $x/S_{\smallertext{0}}^p$. They can instead be
interpreted as the clearing allocation of an otherwise identical economy in
which the stock price and aggregate dividend endowment are both multiplied
by $x/S_{\smallertext{0}}^p$. 
\end{remark}

\subsection{Equilibrium stability}
\label{subsec:uniform-valuation-bounds}

The auxiliary factor was introduced to allow prices outside the
belief-Markovian class. We now introduce valuation stability. It is satisfied
by the equilibria covered by \Cref{thm:main} and violated by the non-Markovian
formal candidates below.

\begin{definition}[Valuation stability]
\label{def:valuation-stability}
A specification in $\mathfrak C$ is \emph{valuation-stable} if, for every initial
belief $p\in\Delta^n$
\[
  \sup_{t\geq0}\mathbb E^p[Q_t^p]<+\infty,
  \;
  \sup_{t\geq0}\mathbb E^p[(Q_t^p)^{-1}]<+\infty.
\]
\end{definition}

The main theorem immediately implies valuation stability for the equilibria
it covers.

\begin{corollary}[Valuation stability]
\label{cor:C-equilibrium-stability}
Every $\mathfrak C$-equilibrium covered by {\rm\Cref{thm:main}} is
valuation-stable.
\end{corollary}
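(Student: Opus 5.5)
The argument reduces to a deterministic bound. By \Cref{thm:main}, every $\mathfrak C$-equilibrium it covers satisfies $H_t^p=h_{\smallertext{0}}$ and $Q_t^p=h_{\smallertext{0}}\varphi(P_t^p)$ for all $t\geq0$, $\mathbb P^p$--a.s., for every initial belief $p$. The whole proof therefore comes down to showing that $\varphi$ is bounded above and bounded away from zero on $\Delta^n$. Once that is in place, both expectations in \Cref{def:valuation-stability} are controlled by deterministic constants that are uniform in $t$, and in fact uniform in $p$ as well.

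For the bounds, the plan is to use the regularity built into $\mathfrak C$. By \Cref{def:class-C}, $\varphi\in C^2(\Delta^n;(0,+\infty))$, so $\varphi$ is continuous and strictly positive on the compact set $\Delta^n$. It follows that
\[
  0<\underline\varphi\coloneqq\min_{q\in\Delta^\smalltext{n}}\varphi(q)\leq\overline\varphi\coloneqq\max_{q\in\Delta^\smalltext{n}}\varphi(q)<+\infty.
\]
Since $P_t^p\in\Delta^n$ for all $t$ almost surely (it is a vector of conditional probabilities), we get
\[
  h_{\smallertext{0}}\underline\varphi\leq Q_t^p\leq h_{\smallertext{0}}\overline\varphi \quad\text{and}\quad (Q_t^p)^{-1}\leq (h_{\smallertext{0}}\underline\varphi)^{-1}, \quad t\geq0,\ \mathbb P^p\text{--a.s.}
\]
Taking expectations and then the supremum over $t$ gives the claim.

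An alternative route, which does not even need compactness, is the identity $h_{\smallertext{0}}\varphi\zeta=1$ from \eqref{eq:equilibrium-identity-short}. Together with $\min\zeta>0$ it gives $Q_t^p=1/\zeta(P_t^p)\leq 1/\min\zeta$. For the reciprocal, $(Q_t^p)^{-1}=\zeta(P_t^p)\leq\max\zeta$, which is finite because $\zeta$ is continuous on $\Delta^n$ (it is built from $\varphi\in C^2$ and its derivatives, and $\varphi>0$). I would mention this as a remark, since it identifies the explicit bounds $1/\min\zeta$ and $\max\zeta$.

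There is no real obstacle here. The only points needing care are, first, that the conclusion of \Cref{thm:main} holds on a full-measure set simultaneously for all $t$ (it is stated that way), and second, that $P^p$ stays in $\Delta^n$. Both are immediate from the statements already in force.
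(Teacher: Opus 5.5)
Your proposal is correct and follows the paper's own proof. Both use $Q_t^p=h_0\varphi(P_t^p)$ from \Cref{thm:main}, together with positivity and continuity of $\varphi$ on the compact simplex, to obtain deterministic two-sided pathwise bounds on $Q_t^p$ and $(Q_t^p)^{-1}$. Your alternative via $h_0\varphi\zeta=1$ is also valid, but it does not avoid compactness: you need it to conclude $\max\zeta<+\infty$.
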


\begin{proof}
\Cref{thm:main} gives
$Q_t^p=h_{\smallertext{0}}\varphi(P_t^p)$ for every $t\geq0$. Since $\varphi$ is positive and continuous on the
compact simplex
\[
  \underline\varphi
  \coloneqq \min_{y\in\Delta^n}\varphi(y),
  \;
  \overline\varphi
  \coloneqq \max_{y\in\Delta^n}\varphi(y),
  \;
  0<\underline\varphi\leq\overline\varphi<+\infty.
\]
It follows that, for every initial belief $p$ and every $t\geq 0$
\[
  h_{\smallertext{0}}\underline\varphi
  \leq Q_t^p
  \leq h_{\smallertext{0}}\overline\varphi,
  \;
  \frac{1}{h_{\smallertext{0}}\overline\varphi}
  \leq (Q_t^p)^{-1}
  \leq \frac{1}{h_{\smallertext{0}}\underline\varphi},
  \; \mathbb P^p\text{\rm--a.s.}
\]
These pathwise bounds imply both conditions in
\Cref{def:valuation-stability}.
\end{proof}

\begin{remark}[Formal non-Markovian HJB candidates]
\label{rem:formal-unstable-candidates}
The {\rm HJB} equation is a local condition and admits candidates that are not
infinite-horizon equilibria. To see this, let
\[
  \underline\varphi
  \coloneqq\min_{p\in\Delta^\smalltext{n}}\varphi(p),
  \;
  \underline\zeta
  \coloneqq\min_{p\in\Delta^\smalltext{n}}\zeta(p) > 0,
\]
and choose $a>(\underline\varphi\underline\zeta)^{-1}$. Set
\[
  h(p)=a,
  \;
  \mathcal H(p,u)=u\zeta(p)-\varphi(p)^{-1},
  \; p\in\Delta^n.
\]
If $\mathcal R(p,u)\leq\overline r(p)$ for every
$(p,u)\in\Delta^n\times(0,+\infty)$, the function
\[
  w(p,x,u)
  =
  \delta^\theta\frac{x^{1-\gamma}}{1-\gamma}
  \big(u\varphi(p)\big)^{\theta/\psi},\; (p,x,u)\in\Delta^n\times (0,+\infty)^2,
\]
is a classical solution of the formal {\rm HJB} equation \eqref{eq:hjb}, with
pointwise optimal controls
\[
  \pi^\star=1,
  \;
  c^\star=\frac{x}{u\varphi(p)}.
\]
This is the same substitution as in the converse verification in
{\rm\Cref{app:proof-main-theorem}}, now applied at arbitrary $u>0$. At a
market state $x=S=D u\varphi(p)$, the controls give $\pi^\star=1$ and
$c^\star=D$, and hence clear pointwise.

\medskip
The parameter $a$ does not enter the {\rm HJB} calculation, but it determines the
auxiliary process. If
$\Gamma_\cdot^p\coloneqq\int_{\smallertext{0}}^\cdot\zeta(P_s^p)\mathrm ds$, then
\[
  H_t^p
  =
  \mathrm e^{\smallertext{\Gamma}_\smalltext{t}^\smalltext{p}}
  \bigg(
    a-
    \int_{\smallertext{0}}^t
      \mathrm e^{-\smallertext{\Gamma}_\smalltext{s}^\smalltext{p}}
      \varphi(P_s^p)^{-1}\mathrm ds
  \bigg),\; t\geq 0,\; \mathbb{P}^p\text{\rm--a.s.}
\]
Let
$\varepsilon_a\coloneqq
a-(\underline\varphi\underline\zeta)^{-1}>0$. Since for any $s\geq 0$ we have 
$\Gamma_s^p\geq\underline\zeta s$ and
$\varphi(P_s^p)^{-1}\leq\underline\varphi^{-1}$, we deduce
\[
  \int_{\smallertext{0}}^t
    \mathrm e^{-\smallertext{\Gamma}_\smalltext{s}^\smalltext{p}}
    \varphi(P_s^p)^{-1}\mathrm ds
  \leq
  \frac{1}{\underline\varphi}
  \int_{\smallertext{0}}^t\mathrm e^{-\underline\zeta s}\mathrm ds
  \leq
  \frac{1}{\underline\varphi\underline\zeta}, \; t\geq 0,\; \mathbb{P}^p\text{\rm--a.s.}
\]
It follows that
\[
  H_t^p\geq\varepsilon_a\mathrm e^{\underline\zeta t},
  \;
  Q_t^p\geq
  \underline\varphi\varepsilon_a\mathrm e^{\underline\zeta t},
  \; t\geq 0,\; \mathbb{P}^p\text{\rm--a.s.}
\]
In particular,
$\int_{\smallertext{0}}^{\smallertext{+}\infty}
(Q_s^p)^{-1}\mathrm ds<+\infty$. Under {\rm\Cref{def:admissibility}}, the relevant process to check for admissibility is
\[
  \widehat Y_t^p
  \coloneqq
  \delta^\theta\mathrm e^{-\delta\theta t}
  (S_t^p)^{1-\gamma}(Q_t^p)^{\theta/\psi}, \; t\geq 0,\; \mathbb{P}^p\text{\rm--a.s.}
\]
This is exactly the process in {\rm\Cref{def:admissibility}} after
substituting the clearing wealth $X^p=S^p$. Let $\widehat Z^p$ denote the
coefficient of $\mathrm d\overline W^p$ in its It\^o decomposition. A direct
calculation gives
\begin{gather*}
  \lambda_\smallertext{U}(q)
  \coloneqq(1-\gamma)\sigma
  -
  \frac{\theta}{\sigma}
  A(q)^\top\frac{\partial_p\varphi(q)}{\varphi(q)},\\
  \mathrm d\widehat Y_t^p
  =
  -\frac{\theta}{Q_t^p}\widehat Y_t^p\mathrm dt
  +
  \lambda_\smallertext{U}(P_t^p)\widehat Y_t^p
  \mathrm d\overline W_t^p,
  \;
  \widehat Z_t^p
  =
  \lambda_\smallertext{U}(P_t^p)\widehat Y_t^p.
\end{gather*}
Thus $\widehat Z^p$ is determined by the candidate clearing strategy. Let
\[
  \widehat L_t^p
  \coloneqq\mathcal E\bigg(
    \int_{\smallertext{0}}^\cdot
      \lambda_\smallertext{U}(P_s^p)\mathrm d\overline W_s^p
  \bigg)_t, \; t\geq 0,\; \mathbb{P}^p\text{\rm--a.s.}
\]
Since $\lambda_\smallertext{U}$ is bounded, $\widehat L^p$ is a martingale on every finite horizon. Solving the preceding equation gives
\[
  \widehat Y_t^p
  =
  \widehat Y_{\smallertext{0}}^p\widehat L_t^p
  \exp\bigg(
    -\theta\int_{\smallertext{0}}^t(Q_s^p)^{-1}\mathrm ds
  \bigg), \; t\geq 0,\; \mathbb{P}^p\text{\rm--a.s.}
\]
Because the integral is bounded uniformly in $t$
\[
  \inf_{t\geq0}\mathbb E^p\big[\widehat Y_t^p\big]>0.
\]
If both the terminal condition in {\rm\Cref{def:admissibility}} and
$\widehat Z^p\in\mathbb H^1(\mathbb P^p)$ held,
{\rm\Cref{lem:zero-residual}} would instead give
$\widehat Y_t^p\longrightarrow0$ in $\mathbb L^1(\mathbb P^p)$, a
contradiction. Thus these candidates cannot satisfy all the admissibility
conditions. In particular, whenever the terminal condition holds,
$\widehat Z^p\notin\mathbb H^1(\mathbb P^p)$.

\medskip

Independently, they are not valuation-stable. Indeed, the preceding bound directly
gives
\[
  Q_t^p
  \geq
  \underline\varphi\varepsilon_a
  \exp\bigg(
    t\min_{y\in\Delta^\smalltext{n}}\zeta(y)
  \bigg),
  \; t\geq0,\; \mathbb P^p\text{\rm--a.s.},
\]
and therefore $\sup_{t\geq0}\mathbb E^p[Q_t^p]=+\infty$. Thus the {\rm HJB}
equation admits additional formal candidates, but these candidates are neither equilibria nor
valuation-stable. Even under a weakened equilibrium definition, the valuation-stability criterion would select only the Markovian candidates considered here.\end{remark}

A simple two-state simulation illustrates the implications of the
Markovian representation. The simulation sets $\theta=1$ and
uses $g^{\smallertext{1}}=0.25$, $g^{\smallertext{2}}=0$, $\gamma=0.8$,
$\sigma=1.25$,
$\lambda^{\smallertext{1}\smallertext{2}}
=\lambda^{\smallertext{2}\smallertext{1}}=0.04$, and $\delta=0.025$. These
parameters will be used again in {\rm\Cref{subsec:markovian-shape-numerics}}. Under
this calibration, posterior
beliefs are persistent and mean reverting. The resulting equilibrium
price--dividend ratio is bounded, and its conditional mean converges towards
a long-run level; see
\Cref{fig:equilibrium-ratio-mean-reversion}.

\begin{figure}[H]
  \centering
  \includegraphics[width=0.78\textwidth]
    {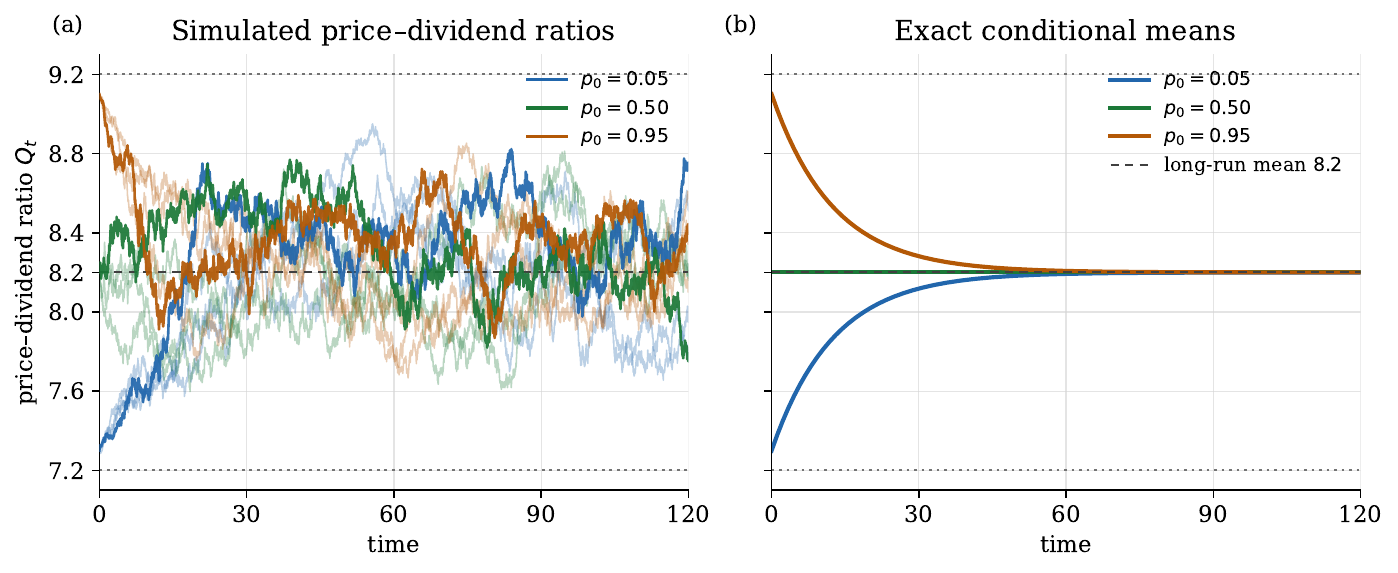}
  \caption{\small Twelve simulated paths of the equilibrium
  ratio in Panel~(a) and the corresponding exact conditional means in
  Panel~(b). The ratio remains in $[7.2,9.2]$.}
  \label{fig:equilibrium-ratio-mean-reversion}
\end{figure}

The qualitative conclusion illustrated in
\Cref{fig:equilibrium-ratio-mean-reversion} is similar to that of the
empirical asset-pricing literature: aggregate valuation ratios display
persistent but mean-reverting variation; see
\citeauthor*{fama1988dividend} \cite{fama1988dividend},
\citeauthor*{campbell1988dividend} \cite{campbell1988dividend}, and
\citeauthor*{cochrane2008dog} \cite{cochrane2008dog}. In our model, this
qualitative pattern arises from learning about the hidden growth state.

\subsection{The Markovian pricing equation on the simplex}

The equilibrium characterisation makes the auxiliary factor constant. Since
$\zeta$ is unchanged when $\varphi$ is multiplied by a positive constant,
without loss of generality we absorb $h_{\smallertext{0}}$ into $\varphi$ in this section and
continue to denote $h_{\smallertext{0}}\varphi$ by $\varphi$. Then
\[
  Q^p=\varphi(P^p),
  \;
  \varphi(p)\zeta(p)=1,
  \; p\in\Delta^n.
\]
We now derive the equation satisfied by this price--dividend ratio. Set
\[
  c(p)
  \coloneqq
  \kappa-(1-\gamma)\overline g(p),
  \; p\in\Delta^n.
\]
Because $\varphi\in C^2(\Delta^n)$, substituting the definition of $\zeta$
into the normalised identity $\varphi\zeta=1$ leads to the following equation in the simplex.

\begin{definition}[Markovian pricing equation]
\label{def:markovian-pricing-equation}
A positive function $\varphi\in C^2(\Delta^n)$ is a solution of the
Markovian pricing equation if, for every $p\in\Delta^n$
\begin{equation}
\label{eq:markovian-simplex-pde}
  -\frac{1}{2\sigma^2}
  \bigg(
    A(p)^\top\partial_{pp}\varphi(p)A(p)
    +(\theta-1)
    \frac{\big(A(p)^\top\partial_p\varphi(p)\big)^2}{\varphi(p)}
  \bigg)
  -\big(b(p)-(1-\gamma)A(p)\big)^\top\partial_p\varphi(p)
  +
  \frac{c(p)}{\theta}\varphi(p)
  =
  1,
  \; p\in\Delta^n.
\end{equation}
\end{definition}

We do not claim well-posedness of this problem for an arbitrary number of
states. However, in \Cref{sec:markovian-shape}, we manage to prove existence and uniqueness in
the two-state case.

\medskip

In viscosity theory, the analogous formulation without separately prescribed
boundary values is called a state-constraint problem; see
\cite[Section~7.C$'$]{crandall1992user}. Here we
use the stronger classical formulation in which the equation holds throughout
the closed simplex. For it to be well defined, it remains to verify that its boundary evaluation involves
only tangent or inward derivatives. This is the content of the next
proposition. Recall
that the boundary of the simplex is
\[
  \partial\Delta^n
  =\bigl\{p\in\Delta^n:\exists j\in\{1,\dots,n\},\; p^j=0\bigr\}.
\]

\begin{proposition}[Boundary derivative directions]
Let $\mu(p)\coloneqq b(p)-(1-\gamma)A(p)$, $p\in\Delta^n$. For every
$p\in\partial\Delta^n$, there exists $\varepsilon>0$ such that
\[
  p+tA(p)\in\Delta^n,\; \text{\rm for}\; |t|<\varepsilon,
  \;
  p+t\mu(p)\in\Delta^n,\;\text{\rm for}\; 0\leq t<\varepsilon.
\]
Thus the $A(p)$-direction is tangent to the smallest face containing $p$,
whereas the $\mu(p)$-direction is tangent or inward.
\end{proposition}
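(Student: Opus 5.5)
The plan is a coordinatewise check: for $p\in\Delta^n$, the point $p+tv$ lies in $\Delta^n$ exactly when $\mathbf 1_n^\top v=0$ and $p^i+tv^i\geq0$ for every $i$. Both $A(p)$ and $\mu(p)$ are tangent, since $\mathbf 1_n^\top b(p)=\mathbf 1_n^\top A(p)=0$, so the affine constraint holds automatically. Only nonnegativity has to be verified, and I will split the coordinates according to the support $I(p)\coloneqq\{i:p^i>0\}$. This support is nonempty, and it is a proper subset when $p\in\partial\Delta^n$.

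\emph{Coordinates with $p^i>0$.} For such $i$, continuity in $t$ gives $p^i+tv^i>0$ whenever $|t|<p^i/(|v^i|+1)$, for any vector $v$. Taking $\varepsilon\coloneqq\min_{i\in I(p)}p^i/(1+|A^i(p)|+|\mu^i(p)|)>0$ therefore handles all positive coordinates simultaneously, for both signs of $t$.

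\emph{Coordinates with $p^j=0$.} First, $A^j(p)=p^j(\overline g(p)-g^j)=0$, so $p^j+tA^j(p)=0$ for every $t\in\mathbb R$. This gives the two-sided statement for $A(p)$ and shows that $A(p)$ is tangent to the face $\{y\in\Delta^n:y^j=0,\ j\notin I(p)\}$. Second, $\mu^j(p)=b^j(p)-(1-\gamma)A^j(p)=b^j(p)=(\Lambda^\top p)^j=\sum_i\lambda^{ij}p^i$. Since $p^j=0$, the diagonal term $\lambda^{jj}p^j$ vanishes, which leaves $\sum_{i\neq j}\lambda^{ij}p^i\geq0$ because the off-diagonal intensities are nonnegative. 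Hence $p^j+t\mu^j(p)\geq0$ for all $t\geq0$, and this coordinate imposes no restriction on one-sided $t$.

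Combining the two cases with the $\varepsilon$ above gives both inclusions. The interpretation follows directly: the $A(p)$ direction keeps the zero set of $p$ unchanged, while the $\mu(p)$ direction has nonnegative components on the vanishing coordinates, so it is tangent when all those components are zero and points inward otherwise. The only step that needs care is the sign of $b^j(p)$ on a zero coordinate. It rests on cancelling the possibly negative diagonal entry $\lambda^{jj}$ against $p^j=0$, and this same cancellation is why only a one-sided interval can be guaranteed for $\mu(p)$.
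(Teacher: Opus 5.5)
Your proof is correct and follows essentially the same route as the paper: on the zero coordinates of $p$ you show $A^j(p)=0$ and $\mu^j(p)=b^j(p)=\sum_{i\neq j}\lambda^{ij}p^i\geq0$, you use $\mathbf 1_n^\top A(p)=\mathbf 1_n^\top\mu(p)=0$ for the affine constraint, and you keep the positive coordinates positive for small $|t|$. The only difference is that you give an explicit $\varepsilon$, which the paper leaves implicit.
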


\begin{proof}
Let $J(p)\coloneqq\{j\in\{1,\dots,n\}:p^j=0\}$. For every $j\in J(p)$
\[
  A^j(p)=0,
  \;
  \mu^j(p)=b^j(p)=\sum_{i\in\{1,\dots,n\}\setminus\{j\}}p^i\lambda^{ij}\geq0.
\]
Moreover, $\mathbf 1_n^\top A(p)=\mathbf 1_n^\top\mu(p)=0$.
Hence the displayed paths have coordinates summing to one, and their
coordinates that are positive at $p$ remain positive for sufficiently small
$|t|$. The result follows.
\end{proof}

\begin{remark}\label{rem:no-inductive-sol}
The boundary restrictions are not generally autonomous equations on the
faces. If, for some $p\in\partial\Delta^n$ and some $j\in\{1,\dots,n\}$, $p^j=0$ but $b^j(p)>0$, the equation contains a derivative from
that face into the interior, which cannot be recovered from the restriction
of $\varphi$ to the face. Thus one cannot in general solve first at the
vertices and then proceed successively through the higher-dimensional faces.
\end{remark}

At a vertex $e_i$, we have
\[
  A(e_i)=0,
  \;
  b(e_i)=\Lambda^\top e_i
  =
  \sum_{j\in\{1,\dots,n\}\setminus\{i\}}\lambda^{ij}(e_j-e_i),\; i\in\{1,\dots,n\}.
\]

The natural vertex relation is therefore
\begin{equation}
\label{eq:simplex-vertex-bc}
  -\big(\Lambda^\top e_i\big)^\top\partial_p\varphi(e_i)
  +
  \frac{\kappa-(1-\gamma)g^i}{\theta}\varphi(e_i)
  =
  1,
  \; i\in\{1,\dots,n\}.
\end{equation}

Thus every derivative appearing in
\eqref{eq:simplex-vertex-bc} is taken along an edge entering the simplex. The relation has a direct financial interpretation. At $e_i$, investors are
certain that the current state is $i$, but they still price future switches.
The derivative term therefore links $\varphi(e_i)$ to nearby mixed-belief
valuations, so the vertex value is not an independently prescribed constant.

\medskip

As an example, suppose that $\varphi$ is linear in beliefs, 
\[
  \varphi(p)=\sum_{k=1}^n a_kp^k.
\]
In the two-state model, this is precisely the form obtained when $\theta=1$;
see \Cref{prop:theta-one-affine-solution} in \Cref{sec:markovian-shape}.
Since $a_k=\varphi(e_k)$, we have
\[
  \big(\Lambda^\top e_i\big)^\top\partial_p\varphi(e_i)
  =
  \sum_{j\in\{1,\dots,n\}\setminus\{i\}}\lambda^{ij}(a_j-a_i).
\]
This is the familiar regime-switching coupling appearing in
\citeauthor*{david2002option}
\cite[Proposition~1(a)]{david2002option}, where the price--dividend ratio
is affine in posterior beliefs and the regime-specific valuations satisfy
a coupled linear system. Likewise, the two-state posterior in
\citeauthor*{veronesi1999overreaction}
\cite[Equation~(3)]{veronesi1999overreaction} has vanishing diffusion but
inward drift at the endpoint beliefs.

\medskip
Finally, the nonlinear Epstein--Zin derivative term in
\eqref{eq:markovian-simplex-pde} is removed by the transformation
\[
  \Phi\coloneqq\varphi^\theta,
  \;
  q\coloneqq1-\frac1\theta,
  \;
  \Psi(x)\coloneqq\theta x^q,
  \; x>0.
\]
Then \eqref{eq:markovian-simplex-pde} becomes
\[
  -\frac{1}{2\sigma^2}
  A(p)^\top\partial_{pp}\Phi(p)A(p)
  -\big(b(p)-(1-\gamma)A(p)\big)^\top\partial_p\Phi(p)
  +c(p)\Phi(p)
  =
  \Psi(\Phi(p)),
  \; p\in\Delta^n.
\]

\section{A closer analysis of the Markovian price--dividend ratio for two states}
\label{sec:markovian-shape}

In the previous section, the Markovian condition reduced the characterisation of the price--dividend ratio to a scalar nonlinear boundary-value problem. For simplicity, we study the qualitative properties of its positive solution for the particular case of $n=2$: existence and uniqueness, monotonicity in the posterior belief, and conditional curvature results indexed by the preference-combination parameter~$\theta$.

\medskip
The proofs of all results in this section are presented in
\Cref{app:proof-two-state-analysis}.

\subsection{Setup and notation}
\label{subsec:markovian-shape-setup}

We identify $[0,1]$ with $\Delta^2$ through
\[
  \iota(p)\coloneqq(p,1-p)^\top,\; p\in[0,1],
\]
and, with a slight abuse of notation, write
$\varphi(p)\coloneqq\varphi(\iota(p))$. The equations below follow by
substituting this parametrisation into the simplex equation
\eqref{eq:markovian-simplex-pde} and its vertex relations.

\medskip

Let state $1$ be the high-growth state and state $2$ the low-growth
state.
Under the identification above, $p\in[0,1]$ denotes the first
coordinate of a generic belief vector $\iota(p)$, and hence the posterior
probability assigned to the high-growth state. Set
\[
  \Delta g\coloneqq g^{\smallertext{1}}-g^{\smallertext{2}}>0,\;
  g(p)\coloneqq\overline g(\iota(p))=g^{\smallertext{2}}+p\Delta g,
  \; p\in[0,1].
\]
We also write
\begin{gather*}
  \eta\coloneqq(1-\gamma)\Delta g,\;
  c(p)\coloneqq\kappa-(1-\gamma)g(p),\; p\in[0,1],\\
  A(p)\coloneqq-p(1-p)\Delta g,\;
  \alpha(p)\coloneqq\frac{A(p)^2}{2\sigma^2},\;
  \beta(p)\coloneqq
  \lambda^{\smallertext{2}\smallertext{1}}(1-p)-\lambda^{\smallertext{1}\smallertext{2}}p+\eta p(1-p),\; p\in[0,1].
\end{gather*}

Under this parametrisation, derivatives with respect to $p$ are ordinary
derivatives in the scalar coordinate and are therefore scalar-valued. They
should be distinguished from the $T\Delta^n$-valued simplex gradient used
earlier in the paper.

\begin{assumption}[Two-state shape conditions]
\label{assum:two-state-shape}
Assume that
\begin{equation*}
  c(p)>0,\; p\in[0,1].
\end{equation*}
\end{assumption}

Substitution into \eqref{eq:markovian-simplex-pde}, followed by collecting
the scalar first- and second-derivative terms, gives
\begin{equation}
\label{eq:two-state-varphi-ode}
  -\alpha(p)
  \bigg(
    \varphi^{\prime\prime}(p)
    +
    (\theta-1)\frac{(\varphi^{\prime}(p))^2}{\varphi(p)}
  \bigg)
  -
  \beta(p)\varphi^{\prime}(p)
  +
  \frac{c(p)}{\theta}\varphi(p)
  =
  1,
  \; p\in(0,1).
\end{equation}
Evaluating the same simplex equation at
$\iota(0)=e_{\smallertext{2}}$ and $\iota(1)=e_{\smallertext{1}}$, equivalently applying
\eqref{eq:simplex-vertex-bc}, gives the state-constraint conditions
\begin{equation}
\label{eq:two-state-varphi-boundary}
  -\lambda^{\smallertext{2}\smallertext{1}}\varphi^{\prime}(0)
  +
  \frac{c(0)}{\theta}\varphi(0)
  =
  1,
  \;
  \lambda^{\smallertext{1}\smallertext{2}}\varphi^{\prime}(1)
  +
  \frac{c(1)}{\theta}\varphi(1)
  =
  1.
\end{equation}
If both transition intensities vanish, the endpoints are absorbing and
these relations reduce to
\[
  \varphi(0)=\frac{\theta}{c(0)},\;
  \varphi(1)=\frac{\theta}{c(1)}.
\]
These are the price--dividend ratios of the corresponding constant-state
economies. When switching regimes is possible, the derivative terms account for
the possibility of leaving the current state.

\medskip

Finally, the transformation
\[
  \Phi\coloneqq\varphi^\theta,\;
  q\coloneqq1-\frac1\theta,
\]
turns \eqref{eq:two-state-varphi-ode} into
\begin{equation}
\label{eq:two-state-Phi-ode}
  -\alpha(p)\Phi^{\prime\prime}(p)
  -
  \beta(p)\Phi^{\prime}(p)
  +
  c(p)\Phi(p)
  =
  \theta\Phi(p)^q,
  \; p\in(0,1),
\end{equation}
with boundary conditions
\begin{equation}
\label{eq:two-state-Phi-boundary}
  -\lambda^{\smallertext{2}\smallertext{1}}\Phi^{\prime}(0)
  +
  c(0)\Phi(0)
  =
  \theta\Phi(0)^q,
  \;
  \lambda^{\smallertext{1}\smallertext{2}}\Phi^{\prime}(1)
  +
  c(1)\Phi(1)
  =
  \theta\Phi(1)^q.
\end{equation}
Here $q<1$ for every $\theta>0$, which is the sub-linearity used in the
existence and uniqueness analysis below.

\subsection{Existence and uniqueness}
\label{subsec:markovian-shape-existence}

Set
\[
 c_\smallertext{-}\coloneqq \min_{p\in[0,1]}c(p),
 \;
 c_\smallertext{+}\coloneqq \max_{p\in[0,1]}c(p).
\]

\medskip

The existence proof separates the linear and non-linear parts of the
boundary-value problem. First, \Cref{lem:markovian-linear-resolvent} shows
that the linear equation
\[
 -\alpha u^{\prime\prime}-\beta u^{\prime}+cu=f,
\]
defines a positive compact solution operator $R:f\longmapsto u$. The non-linear
equation is then written as the fixed-point problem
\[
 \Phi=R\big(\theta\Phi^q\big),
\]
and a bounded truncation permits an application of Schauder's fixed-point
theorem. The resulting \emph{a priori} bounds make the truncation inactive.
The separate regularity arguments in
\Cref{lem:linear-regularity-scale} and \Cref{thm:markovian-smoothness} then exploit
the inward boundary drift to bootstrap the solution up to the degenerate
endpoints.

\begin{theorem}[Existence of a positive Markovian solution]
\label{thm:markovian-existence}
Assume $\theta>0$, that {\rm\Cref{assum:two-state-shape}} holds, and that
 $\lambda^{\smallertext{1}\smallertext{2}}>0$ and $\lambda^{\smallertext{2}\smallertext{1}}>0$. Then
\eqref{eq:two-state-Phi-ode}--\eqref{eq:two-state-Phi-boundary} admits a
positive solution $\Phi\in C^1([0,1])\cap C^2((0,1))$ satisfying
\begin{equation}
\label{eq:Phi-existence-bounds}
  \bigg(\frac{\theta}{c_\smallertext{+}}\bigg)^\theta
  \le\Phi(p)\le
  \bigg(\frac{\theta}{c_\smallertext{-}}\bigg)^\theta,
  \; p\in[0,1].
\end{equation}
Consequently, $\varphi=\Phi^{1/\theta}$ is a positive solution of
\eqref{eq:two-state-varphi-ode}--\eqref{eq:two-state-varphi-boundary} and
\[
 \frac{\theta}{c_\smallertext{+}}
 \le\varphi(p)\le
 \frac{\theta}{c_\smallertext{-}},
 \; p\in[0,1].
\]
\end{theorem}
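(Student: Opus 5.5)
We follow the route announced before the statement: treat the linear part through a resolvent, then close the argument with Schauder's theorem applied to a truncated nonlinearity.

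\textbf{Step 1: the linear operator.} Define $L u\coloneqq-\alpha u''-\beta u'+cu$ on $(0,1)$, with boundary operators $-\lambda^{21}u'(0)+c(0)u(0)$ and $\lambda^{12}u'(1)+c(1)u(1)$. These are exactly $Lu$ evaluated at the endpoints, because $\alpha(0)=\alpha(1)=0$, $\beta(0)=\lambda^{21}$ and $\beta(1)=-\lambda^{12}$. We take \Cref{lem:markovian-linear-resolvent} as given. It states that for every $f\in C([0,1])$ there is a unique $u=Rf\in C^1([0,1])\cap C^2((0,1))$ with $Lu=f$ on $[0,1]$, that $R$ is positive and compact on $C([0,1])$, and that $R$ satisfies the comparison bound $\min f/c_+\le Rf\le \max f/c_-$ when $f\ge0$.

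If that lemma had to be proved rather than assumed, the key point is the strong inward drift at the endpoints. Since $\beta(0)=\lambda^{21}>0$ and $\beta(1)<0$, and $\alpha$ vanishes quadratically, the ODE near $0$ reads $u'\approx (c u-f)/\lambda^{21}$. This gives a one-sided regular solution. The comparison bound then follows from a maximum principle: at an interior maximum we have $u'=0$ and $u''\le0$, so $cu\le f$. At a boundary maximum, say at $0$, we have $u'(0)\le0$, and the boundary relation again yields $c(0)u(0)\le f(0)$. Compactness comes from uniform $C^1$ bounds obtained by integrating the equation, in the same spirit as the endpoint analysis.

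\textbf{Step 2: truncation and the fixed point.} Set $m\coloneqq(\theta/c_+)^\theta$ and $M\coloneqq(\theta/c_-)^\theta$, and let $\tau(x)\coloneqq\min\{\max\{x,m\},M\}$. Define $T\Phi\coloneqq R(\theta\,\tau(\Phi)^q)$. The map $\Phi\mapsto\theta\tau(\Phi)^q$ is continuous and bounded from $C([0,1])$ into $C([0,1])$, with values between $\theta\min(m^q,M^q)$ and $\theta\max(m^q,M^q)$. Since $R$ is compact, $T$ is compact, and its image lies in a bounded convex closed set. Schauder's theorem therefore gives a fixed point $\Phi=T\Phi$.

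\textbf{Step 3: removing the truncation.} We show that $m\le\Phi\le M$ directly from the maximum principle. Let $p^*$ be a maximum point of $\Phi$. Arguing as in Step 1, at the endpoints through the boundary relations, we get $c(p^*)\Phi(p^*)\le\theta\tau(\Phi(p^*))^q$.

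Suppose $\Phi(p^*)>M$. Then $\tau(\Phi(p^*))=M$, so $c_-\Phi(p^*)\le\theta M^q$. The identity $M^{1-q}=M^{1/\theta}=\theta/c_-$ gives $\theta M^q=c_-M$, hence $\Phi(p^*)\le M$, a contradiction. The case $m\le\Phi(p^*)\le M$ needs no argument. Note that we only use $q<1$ implicitly, through the identity $x^{1-q}=x^{1/\theta}$.

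The minimum is handled symmetrically with $c_+$, since $\theta m^q=c_+m$. So the truncation is inactive, and $\Phi$ solves \eqref{eq:two-state-Phi-ode}--\eqref{eq:two-state-Phi-boundary} and satisfies \eqref{eq:Phi-existence-bounds}.

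\textbf{Step 4: back to $\varphi$.} Set $\varphi=\Phi^{1/\theta}$. Positivity of $\Phi$ makes the chain rule valid. A direct computation, $\Phi''=\theta\varphi^{\theta-1}\bigl(\varphi''+(\theta-1)(\varphi')^2/\varphi\bigr)$, then recovers \eqref{eq:two-state-varphi-ode} after division by $\theta\varphi^{\theta-1}$. The boundary relations transform in the same way, and the bounds follow by taking $\theta$-th roots.

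\textbf{Main obstacle.} We expect the hardest part to be Step 1, namely the regularity, uniqueness and compactness of $R$ up to the degenerate endpoints, together with justifying the maximum principle at a boundary extremum where only $C^1$ regularity holds. For that boundary case, we would first try the one-sided sign of $u'$ at the maximum point.
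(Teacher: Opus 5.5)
Your proposal is correct and follows essentially the paper's proof. Both use the resolvent $R$ of \Cref{lem:markovian-linear-resolvent} and Schauder's theorem for the truncated map $\Phi\mapsto R(\theta\,\tau(\Phi)^q)$. Both then use the interior and endpoint maximum principle to show the truncation is inactive, and finish with the change of variables $\varphi=\Phi^{1/\theta}$.

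The only difference is that you truncate exactly at the critical levels $m=(\theta/c_+)^\theta$ and $M=(\theta/c_-)^\theta$, where $\theta m^q=c_+m$ and $\theta M^q=c_-M$, so the sharp bounds come out in a single pass. The paper instead truncates at strictly wider levels and then repeats the extremum argument on the untruncated equation. Note also that the lower bound $\min f/c_+\le Rf$ you attribute to the lemma is not part of its statement, though your argument never uses it.
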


\begin{theorem}[Uniqueness]
\label{thm:markovian-uniqueness}
Assume $\theta>0$, that {\rm\Cref{assum:two-state-shape}} holds, and that
 $\lambda^{\smallertext{1}\smallertext{2}}\geq0$ and $\lambda^{\smallertext{2}\smallertext{1}}\geq0$. Then
\eqref{eq:two-state-varphi-ode}--\eqref{eq:two-state-varphi-boundary} has at
most one positive classical solution
\[
 \varphi\in C^1([0,1])\cap C^2((0,1)).
\]
\end{theorem}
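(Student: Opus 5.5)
We work with the transformed unknown rather than with $\varphi$ directly. If $\varphi_1,\varphi_2\in C^1([0,1])\cap C^2((0,1))$ are two positive classical solutions of \eqref{eq:two-state-varphi-ode}--\eqref{eq:two-state-varphi-boundary}, then $\Phi_j\coloneqq\varphi_j^\theta$ are positive, lie in the same regularity class, and solve \eqref{eq:two-state-Phi-ode}--\eqref{eq:two-state-Phi-boundary}. The map $\varphi\longmapsto\varphi^\theta$ is a bijection on positive functions, so it suffices to show $\Phi_1=\Phi_2$.

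The key structure is sub-homogeneity of the right-hand side. Since $q=1-1/\theta<1$ for every $\theta>0$, we have $k^q<k$ for all $k>1$, whether $q$ is positive, zero or negative. We compare by scaling. Set
\[
k\coloneqq\max_{p\in[0,1]}\frac{\Phi_1(p)}{\Phi_2(p)},
\]
which is attained by continuity and positivity on the compact interval. After possibly swapping the roles of $\Phi_1$ and $\Phi_2$, it suffices to show $k\le1$. Assume instead $k>1$, and put $w\coloneqq k\Phi_2-\Phi_1$. Then $w\ge0$ on $[0,1]$, $w\in C^1([0,1])\cap C^2((0,1))$, and $w(p_0)=0$ at some $p_0\in[0,1]$.

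Writing $\mathcal L u\coloneqq-\alpha u''-\beta u'+cu$, linearity gives on $(0,1)$
\[
\mathcal L w=\theta\big(k\Phi_2^q-\Phi_1^q\big).
\]
At $p_0$ we have $\Phi_1(p_0)=k\Phi_2(p_0)$, so the right-hand side equals
\[
\theta\,\Phi_2(p_0)^q\,(k-k^q)>0.
\]
We then split according to the location of $p_0$.

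\emph{Interior minimum.} If $p_0\in(0,1)$, then $w'(p_0)=0$ and $w''(p_0)\ge0$ because $p_0$ is an interior minimum. Using $\alpha\ge0$, this gives $\mathcal Lw(p_0)=-\alpha(p_0)w''(p_0)\le0$, a contradiction.

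\emph{Left endpoint.} If $p_0=0$, we use the boundary relation in \eqref{eq:two-state-Phi-boundary}, which is linear in $\Phi'(0)$ and $\Phi(0)$ on the left-hand side. Subtracting the relation for $\Phi_1$ from $k$ times the relation for $\Phi_2$ yields
\[
-\lambda^{\smallertext{2}\smallertext{1}}w'(0)+c(0)w(0)=\theta\Phi_2(0)^q(k-k^q)>0.
\]
Since $w(0)=0$, this forces $\lambda^{\smallertext{2}\smallertext{1}}w'(0)<0$. But a minimum at the left endpoint gives $w'(0)\ge0$, and $\lambda^{\smallertext{2}\smallertext{1}}\ge0$, so $\lambda^{\smallertext{2}\smallertext{1}}w'(0)\ge0$. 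This is a contradiction, and it persists when $\lambda^{\smallertext{2}\smallertext{1}}=0$, since then we would need $0>0$.

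\emph{Right endpoint.} If $p_0=1$, the same subtraction gives
\[
\lambda^{\smallertext{1}\smallertext{2}}w'(1)>0.
\]
A minimum at the right endpoint gives $w'(1)\le0$, again a contradiction.

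Hence $k\le1$, that is $\Phi_1\le\Phi_2$, and by symmetry $\Phi_1=\Phi_2$.

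The step needing most care is the endpoint case, where the equation degenerates because $\alpha(0)=\alpha(1)=0$. There the state-constraint relations \eqref{eq:two-state-Phi-boundary} replace the missing second-order information. The argument only requires $C^1$ regularity up to the boundary, which is exactly the class in the statement, together with the sign of the one-sided derivative at a boundary minimum. Notably, \Cref{assum:two-state-shape} is not needed in this comparison. Positivity of the solutions, via $\Phi_j>0$, is essential: it makes the ratio $k$ well defined and gives $\Phi_2(p_0)^q>0$.
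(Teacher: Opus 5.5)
Your proof is correct and follows essentially the same route as the paper: you pass to $\Phi=\varphi^\theta$, take the maximal ratio of $\Phi_1/\Phi_2$, and use the sub-homogeneity $k^q<k$ for $k>1$ to contradict either the interior equation or the endpoint state-constraint relations at the touching point. Your $w=k\Phi_2-\Phi_1$ is just $-U$ in the paper's notation, and your remark that the positivity of $c$ is never used is consistent with the paper's argument, since the zero-order term vanishes where $w(p_0)=0$.
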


\begin{corollary}[Existence and uniqueness]
\label{cor:markovian-existence-uniqueness}
Assume $\theta>0$, that {\rm\Cref{assum:two-state-shape}} holds, and that
 $\lambda^{\smallertext{1}\smallertext{2}}>0$ and $\lambda^{\smallertext{2}\smallertext{1}}>0$. Then the Markovian boundary-value problem
has exactly one positive classical solution.
\end{corollary}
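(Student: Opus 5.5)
The statement is essentially the conjunction of \Cref{thm:markovian-existence} and \Cref{thm:markovian-uniqueness}. The plan is to verify that their hypotheses are met and that the notions of solution they use coincide.

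For existence, the hypotheses of the corollary ($\theta>0$, \Cref{assum:two-state-shape}, and $\lambda^{12},\lambda^{21}>0$) are exactly those of \Cref{thm:markovian-existence}. That theorem produces a positive $\Phi\in C^1([0,1])\cap C^2((0,1))$ solving \eqref{eq:two-state-Phi-ode}--\eqref{eq:two-state-Phi-boundary}, with the bounds \eqref{eq:Phi-existence-bounds}. I would then set $\varphi\coloneqq\Phi^{1/\theta}$. Because $\Phi$ is bounded away from zero, the map $x\mapsto x^{1/\theta}$ is smooth on its range, so $\varphi$ inherits the regularity $C^1([0,1])\cap C^2((0,1))$. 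By the chain-rule computation that defines the transformation $\Phi=\varphi^\theta$, $\varphi$ solves \eqref{eq:two-state-varphi-ode}--\eqref{eq:two-state-varphi-boundary}. Concretely, I would substitute $\Phi'=\theta\varphi^{\theta-1}\varphi'$ and $\Phi''=\theta\varphi^{\theta-1}\bigl(\varphi''+(\theta-1)(\varphi')^2/\varphi\bigr)$ and divide by $\theta\varphi^{\theta-1}>0$. At the endpoints $\alpha(0)=\alpha(1)=0$ and $\beta(0)=\lambda^{21}$, $\beta(1)=-\lambda^{12}$, so the boundary relations transform the same way.

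For uniqueness, the hypotheses $\lambda^{12},\lambda^{21}>0$ imply the weaker requirement $\lambda^{12},\lambda^{21}\geq0$ of \Cref{thm:markovian-uniqueness}. Hence at most one positive classical solution exists in $C^1([0,1])\cap C^2((0,1))$, and combined with existence there is exactly one.

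The only point requiring care, and the main (mild) obstacle, is making sure that ``positive classical solution'' in the corollary refers to the same solution class in both theorems. I would read it as positive functions in $C^1([0,1])\cap C^2((0,1))$ satisfying \eqref{eq:two-state-varphi-ode} in the interior together with the endpoint relations \eqref{eq:two-state-varphi-boundary}, which is exactly the class used in \Cref{thm:markovian-uniqueness}. The solution produced by existence lies in this class. If instead one insists on the stronger $C^2(\Delta^2)$ notion from \Cref{def:markovian-pricing-equation}, uniqueness still holds, since that class is smaller. Existence in that class would then rest on the later endpoint-regularity result (\Cref{thm:markovian-smoothness}), so I would state the corollary in the $C^1([0,1])\cap C^2((0,1))$ sense.
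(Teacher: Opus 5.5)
Your proposal is correct and matches the paper's proof, which simply combines \Cref{thm:markovian-existence} (existence) with \Cref{thm:markovian-uniqueness} (uniqueness in the class $C^1([0,1])\cap C^2((0,1))$). Your re-derivation of the passage from $\Phi$ to $\varphi=\Phi^{1/\theta}$, including the endpoint relations, is harmless but redundant, since the existence theorem already states this conclusion.
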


The solution is in fact smooth up to the endpoints. The proof in
\Cref{app:proof-two-state-analysis} uses the regularity scale in
\Cref{lem:linear-regularity-scale}: the linear resolvent gains one
derivative on the closed interval, while the nonlinear forcing
$\Psi(\Phi)=\theta\Phi^q$ inherits the full regularity of the positive
solution. This gain can therefore be iterated.

\begin{theorem}[Smoothness of the Markovian solution]
\label{thm:markovian-smoothness}
Assume $\theta>0$, {\rm\Cref{assum:two-state-shape}}, 
$\lambda^{\smallertext{1}\smallertext{2}}>0$, and $\lambda^{\smallertext{2}\smallertext{1}}>0$. Then the unique positive solution
$\Phi$ of
\eqref{eq:two-state-Phi-ode}--\eqref{eq:two-state-Phi-boundary} belongs
to $C^\infty([0,1])$ and is real analytic on $(0,1)$.
Consequently,
$\varphi=\Phi^{1/\theta}\in C^\infty([0,1])$ and is real analytic on
$(0,1)$.

\medskip

More precisely, set $\mathcal E_{\smallertext{0}}\coloneqq c\Phi-\Psi(\Phi)$, and, for every
integer $j\in\mathbb N^\star$, define
\begin{equation}
\label{eq:hj-def}
  \mathcal E_j
  \coloneqq
  \mathcal E_{\smallertext{0}}^{(j)}
  -\sum_{i=2}^{j}\binom{j}{i}
    \alpha^{(i)}\Phi^{(j+2-i)}
  -\sum_{i=1}^{j}\binom{j}{i}
    \beta^{(i)}\Phi^{(j+1-i)}.
\end{equation}
The one-sided derivatives at the endpoints satisfy
\begin{equation}
\label{eq:endpoint-hierarchy}
  \lambda^{\smallertext{2}\smallertext{1}}\Phi^{(j+1)}(0)=\mathcal E_j(0),
  \;
  \lambda^{\smallertext{1}\smallertext{2}}\Phi^{(j+1)}(1)=-\mathcal E_j(1),
  \; j\in\mathbb N^\star.
\end{equation}
The order-zero relations are precisely
\eqref{eq:two-state-Phi-boundary}.
\end{theorem}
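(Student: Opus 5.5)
We prove smoothness by a bootstrap on the closed interval that exploits the inward boundary drift, and prove analyticity separately on the interior.

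\textbf{Regularity scale.} Write the equation as $-\alpha\Phi''-\beta\Phi'=f$ with $f\coloneqq\Psi(\Phi)-c\Phi$. Since $\Phi\ge(\theta/c_\smallertext{+})^\theta>0$ by \Cref{thm:markovian-existence} and $x\mapsto x^q$ is smooth on $(0,+\infty)$, $f\in C^k([0,1])$ whenever $\Phi\in C^k([0,1])$. The key tool is the statement of \Cref{lem:linear-regularity-scale}: if $u\in C^1([0,1])\cap C^2((0,1))$ solves $-\alpha u''-\beta u'=f$ with the state-constraint relations, and $f\in C^k([0,1])$, then $u\in C^{k+1}([0,1])$.

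To prove it, set $w\coloneqq u'$. Near $p=0$, $w$ solves the first-order singular equation $\alpha w'+\beta w=-f$, with $\alpha(p)=\tfrac{(\Delta g)^2}{2\sigma^2}p^2(1-p)^2$ and $\beta(0)=\lambda^{\smallertext{2}\smallertext{1}}>0$. The bounded solution is
\[
w(p)=-\int_0^p \frac{f(s)}{\alpha(s)}\exp\Big(-\int_s^p\frac{\beta}{\alpha}\Big)\mathrm ds .
\]
Because $\beta/\alpha\sim\lambda^{\smallertext{2}\smallertext{1}}/(a s^2)$, this is a Laplace-type integral with an exponentially concentrated kernel. The other solutions of the homogeneous equation behave like $\exp(+\lambda^{\smallertext{2}\smallertext{1}}/(ap))$, which blows up, so the $C^1([0,1])$ hypothesis excludes them. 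Differentiating $\alpha w'+\beta w=-f$ repeatedly gives
\[
\alpha w^{(m+1)}+\big(\beta+m\alpha'\big)w^{(m)}=-f^{(m)}-(\text{lower-order terms in } w).
\]
The coefficient $\beta+m\alpha'$ still equals $\lambda^{\smallertext{2}\smallertext{1}}>0$ at $0$, because $\alpha'(0)=0$. Each derivative therefore satisfies an equation of the same singular type, and by induction $w^{(m)}$ extends continuously to $0$ for $m\le k$. The endpoint $p=1$ is handled symmetrically using $\beta(1)=-\lambda^{\smallertext{1}\smallertext{2}}<0$.

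\textbf{Main obstacle.} The delicate point is showing that the explicit integral formula gives continuity of $w^{(m)}$ up to the endpoint with the correct limit. I would first try the substitution $s=p-\alpha(p)t/\beta(p)$, or a direct estimate: for a continuous right-hand side $F$, the solution equals $-F(0)/\lambda^{\smallertext{2}\smallertext{1}}+o(1)$ as $p\downarrow0$.

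\textbf{Smoothness and the hierarchy.} Iterating from $\Phi\in C^1$ gives $\Phi\in C^k([0,1])$ for every $k$, so $\Phi\in C^\infty([0,1])$, and hence $\varphi=\Phi^{1/\theta}\in C^\infty([0,1])$. For \eqref{eq:endpoint-hierarchy}, differentiate the ODE $j$ times via Leibniz's rule:
\[
\alpha\Phi^{(j+2)}+j\alpha'\Phi^{(j+1)}+\sum_{i\ge2}\binom{j}{i}\alpha^{(i)}\Phi^{(j+2-i)}+\beta\Phi^{(j+1)}+\sum_{i\ge1}\binom{j}{i}\beta^{(i)}\Phi^{(j+1-i)}=\mathcal E_0^{(j)}.
\]
Now let $p\to0$, using continuity of all derivatives up to the endpoint. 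Since $\alpha(0)=\alpha'(0)=0$ and $\beta(0)=\lambda^{\smallertext{2}\smallertext{1}}$, this gives $\lambda^{\smallertext{2}\smallertext{1}}\Phi^{(j+1)}(0)=\mathcal E_j(0)$. At $p=1$ we have $\alpha(1)=\alpha'(1)=0$ and $\beta(1)=-\lambda^{\smallertext{1}\smallertext{2}}$, which gives $\lambda^{\smallertext{1}\smallertext{2}}\Phi^{(j+1)}(1)=-\mathcal E_j(1)$. The case $j=0$ is exactly \eqref{eq:two-state-Phi-boundary}.

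\textbf{Interior analyticity.} On $(0,1)$ we have $\alpha>0$, so the equation can be written as
\[
\Phi''=\alpha^{-1}\big(-\beta\Phi'+c\Phi-\theta\Phi^q\big).
\]
The right-hand side is real analytic in $(p,\Phi,\Phi')$ on $(0,1)\times(0,+\infty)\times\mathbb R$, since $\alpha,\beta,c$ are polynomials and $\Phi>0$. By the Cauchy--Kovalevskaya theorem, equivalently analytic dependence for analytic ODEs, the analytic local solution with data $(\Phi(p_0),\Phi'(p_0))$ exists near each $p_0\in(0,1)$. By $C^2$ uniqueness for Lipschitz ODEs it coincides with $\Phi$, so $\Phi$ is real analytic on $(0,1)$, and so is $\varphi=\Phi^{1/\theta}$ since $\Phi>0$.
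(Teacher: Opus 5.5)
Your proposal is correct and follows essentially the paper's route. The paper also writes $\Phi=R(\theta\Phi^q)$ and bootstraps through \Cref{lem:linear-regularity-scale}, whose proof differentiates $\alpha u''+\beta u'=h_f$ and uses that $\beta_j=j\alpha'+\beta$ keeps the inward values $\lambda^{21}$ and $-\lambda^{12}$ at the endpoints; it likewise gets interior analyticity from analytic ODE theory. Your derivation of \eqref{eq:endpoint-hierarchy} by letting $p\to0$ in the $j$-times differentiated equation, once smoothness is known, is a harmless and slightly simpler variant of the paper's identification of the traces inside the induction.

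The endpoint limit you flag as the main obstacle is settled in the paper by l'H\^opital's rule applied to $v'=\rho_j^{-1}\int_0^p\rho_j h_{f,j}/\alpha$, using $\rho_j'=\rho_j\beta_j/\alpha$ and $\rho_j(0+)=0$. One point to make explicit: for $m\geq1$ you do not have an a priori bound on $w^{(m)}$. The blowing-up homogeneous mode must instead be excluded through the continuity of $w^{(m-1)}$ at the endpoint, since that mode would give $w^{(m)}$ a nonintegrable fixed-sign singularity.
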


Under the conditions of the next proposition, the unique solution generates
a $\mathfrak C$-equilibrium whose price--dividend ratio and reciprocal are
uniformly bounded.

\begin{proposition}[Two-state Markovian $\mathfrak C$-equilibrium]
\label{thm:two-state-markovian-equilibrium}
Assume $0<\theta\leq1$, {\rm\Cref{assum:two-state-shape}}, 
$\lambda^{\smallertext{1}\smallertext{2}}>0$, and $\lambda^{\smallertext{2}\smallertext{1}}>0$. Let $\varphi$ be the unique positive
solution in {\rm\Cref{cor:markovian-existence-uniqueness}}, and set
\[
  d(p)\coloneqq A(p)\frac{\varphi^{\prime}(p)}{\varphi(p)},
  \; p\in[0,1].
\]
Suppose the Markovian short rate $r:[0,1]\longrightarrow\mathbb R$ is
continuous and satisfies
\[
  r(p)
  \leq
  \frac{g(p)}{\psi}
  +
  (\theta-1)
  \bigg(
    \frac{d(p)^2}{2\sigma^2}-d(p)
  \bigg)
  +
  \frac{\kappa}{\theta}
  -
  \gamma\sigma^2,
  \; p\in[0,1].
\]
For an initial belief $p\in[0,1]$, write $p_t^p\coloneqq P_t^{\iota(p),\smallertext{1}},
  \; t\geq0.$ Then the processes defined by
\[
  Q_t^{\iota(p)}=\varphi(p_t^p),
  \;
  S_t^{\iota(p)}=D_t^{\iota(p)}Q_t^{\iota(p)},
  \;
  r_t^{\iota(p)}=r(p_t^p),
\]
constitute an $\iota(p)$-equilibrium.
Hence the resulting specification is a $\mathfrak C$-equilibrium. Its
price--dividend ratio and reciprocal are bounded uniformly over time. Its
clearing controls at $x=S_{\smallertext{0}}^{\iota(p)}$ are
\[
\pi^\star=1, \;X^\star=S^{\iota(p)},\; {\rm and}\; c^{\iota(p),\star}=D^{\iota(p)}.
\]
\end{proposition}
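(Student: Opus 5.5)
The plan is to reduce the statement to the converse half of \Cref{thm:main}. Let $\varphi$ be the unique positive solution from \Cref{cor:markovian-existence-uniqueness}. By \Cref{thm:markovian-smoothness} it lies in $C^\infty([0,1])$, so its lift $\varphi\circ\iota^{-1}$ to $\Delta^2$ belongs to $C^2(\Delta^2;(0,+\infty))$. We take the specification $(\varphi,h,\mathcal H,\mathcal R)$ with $h\equiv1$, $\mathcal H\equiv0$ and $\mathcal R(\iota(p),u)\coloneqq r(p)$.

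The first step is to check that this specification belongs to $\mathfrak C$.
\begin{itemize}
\item The equation $H_t=1+\int_0^t0\,\mathrm ds$ has the unique global positive solution $H\equiv1$.
\item $\mathcal R$ is Borel measurable, since $r$ is continuous.
\item $\int_0^T|r(p_s^p)|\mathrm ds\le T\max_{[0,1]}|r|<+\infty$.
\end{itemize}

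The second step is to verify the hypotheses of the converse. Take $h_0=1$. The identity $h_0\varphi\zeta=1$ on $\Delta^2$ holds because, in two-state coordinates, the definition \eqref{eq:zeta-definition} of $\zeta$ times $\varphi$ is exactly the left-hand side of \eqref{eq:two-state-varphi-ode} divided appropriately. This includes the endpoints, through the boundary relations \eqref{eq:two-state-varphi-boundary}, since \eqref{eq:markovian-simplex-pde} holds on the closed simplex. I will carry out the translation explicitly:
\begin{itemize}
\item $A(\iota(p))^\top\partial_p\varphi=A(p)\varphi'(p)$ with the scalar $A(p)=-p(1-p)\Delta g$, since the tangent direction is $(1,-1)$;
\item $A^\top\partial_{pp}\varphi A=A(p)^2\varphi''(p)$;
\item $(b-(1-\gamma)A)^\top\partial_p\varphi=\beta(p)\varphi'(p)$;
\item $\kappa/\theta-(1-1/\psi)\overline g=c(p)/\theta$, using $(1-\gamma)/\theta=1-1/\psi$.
\end{itemize}
This also gives $\min\zeta=\min 1/\varphi>0$. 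The remaining conditions in \eqref{eq:converse-conditions} are $h=1$, $\mathcal H(\cdot,1)=0$, and $\mathcal R(\iota(p),1)=r(p)\le\overline r(\iota(p))$. The last one is precisely the assumed bound: with $d(p)=A(p)\varphi'(p)/\varphi(p)$ and the same $A$-identification, the assumed bound becomes the displayed formula for $\overline r$.

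With these checks, \Cref{thm:main} (converse direction, under \Cref{assum:pref} with $0<\theta\le1$) yields a $\mathfrak C$-equilibrium with $H\equiv1$, $Q=\varphi(p_t^p)$, $S=DQ$ and $r_t=r(p_t^p)$. The clearing controls $\pi^\star=1$, $X^\star=S$, $c^\star=D$ follow from \eqref{eq:controls-short} at $x=S_0$. Uniform boundedness of $Q$ and $Q^{-1}$ follows from \Cref{cor:C-equilibrium-stability}, or directly from the bounds $\theta/c_+\le\varphi\le\theta/c_-$ in \Cref{thm:markovian-existence}.

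The main obstacle is the coordinate bookkeeping in the second step. One must confirm that the simplex gradient and Hessian pair with $A$ and $b$ to give the scalar $\alpha$ and $\beta$ with the correct signs. In particular, $b(\iota(p))$ has first coordinate $\lambda^{21}(1-p)-\lambda^{12}p$, and $-(1-\gamma)A$ contributes $+\eta p(1-p)$ in the first coordinate. One must also confirm that the closed-simplex equation at the vertices matches \eqref{eq:two-state-varphi-boundary}, so that $\varphi\zeta=1$ holds on all of $\Delta^2$ and not just on its interior. If the converse in \Cref{thm:main} implicitly needs $\mathcal R$ only along $u=h_0$, the extension of $\mathcal R$ to $u\ne1$ is harmless. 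I would still define it independently of $u$ so that no further condition is needed.
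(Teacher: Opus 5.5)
Your proposal is correct and is essentially the paper's proof. You lift $\varphi$ to $\Delta^2$, read $\varphi\zeta=1$ on the closed simplex (hence $\min\zeta>0$) off the two-state equation and its endpoint relations, and apply the converse part of \Cref{thm:main} with $h\equiv h_{0}=1$ and $\mathcal R(y,u)=r(y^1)$.

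The only difference is that you choose $\mathcal H\equiv0$, where the paper takes $\mathcal H(y,u)=u\zeta(y)-\widehat\varphi(y)^{-1}$. Both vanish at $u=1$, which is all the converse uses, so your choice works and makes membership in $\mathfrak C$ immediate. The paper's choice additionally makes the candidate $w$ solve the HJB equation at every $u>0$ (as in \Cref{rem:formal-unstable-candidates}), but the statement does not require this.
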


\begin{proof}[Proof of \Cref{thm:two-state-markovian-equilibrium}]
By \Cref{thm:markovian-smoothness}, $\varphi$ is positive and smooth on
$[0,1]$. Set $h(y)\equiv1$ and define the corresponding
function on the simplex by
\[
  \widehat\varphi(y)=\varphi(y^1),
  \; y\in\Delta^2.
\]
The two-state equation \eqref{eq:two-state-varphi-ode}, together with its
endpoint conditions, is precisely the restriction of
\eqref{eq:markovian-simplex-pde} to this parametrisation. By the
equivalence preceding \eqref{eq:markovian-simplex-pde}, it therefore gives
\[
  \widehat\varphi(\iota(p))\zeta(\iota(p))=1,
  \; p\in[0,1].
\]
Since $\widehat\varphi$ is positive and continuous, this identity also gives
$\min_{y\in\Delta^2}\zeta(y)>0$.
Choose
\[
  \mathcal H(y,u)=u\zeta(y)-\widehat\varphi(y)^{-1},
  \;
  \mathcal R(y,u)=r(y^1),
  \; (y,u)\in\Delta^2\times(0,+\infty).
\]
The preceding identity gives $\mathcal H(y,1)=0$. Hence the conditions in
\eqref{eq:converse-conditions} hold with $h_{\smallertext{0}}=1$, so the converse part of
\Cref{thm:main} applies. The displayed condition on $r$ is exactly the
required short-rate condition. The equilibrium and clearing claims follow.

\medskip

Since $\varphi$ is positive and continuous on the compact interval
$[0,1]$, both $\varphi$ and $\varphi^{-1}$ are bounded. Uniqueness
of the price--dividend ratio follows from \Cref{cor:markovian-existence-uniqueness}.
\end{proof}

\begin{remark}[The marginal short rate]
\label{rem:short-rate-multiplicity}
The portfolio constraint $\pi\in[0,1]$ places the clearing portfolio $\pi=1$
at the boundary of the feasible set. The upper bound
$\overline r(\iota(p))$ in the proposition is the marginal short rate:
at that rate, the portfolio Hamiltonian has zero derivative at $\pi=1$.
Lower rates remain compatible with the investor holding no risk-free asset.
The inequality is therefore a Kuhn--Tucker condition. For option pricing in
{\rm\Cref{sec:option-pricing}}, we select the marginal rate.
\end{remark}

\begin{remark}[Vanishing transition intensities]
\label{rem:absorbing-endpoint-caution}
The strict positivity of the transition intensities is used only in the
existence proof. The ratio-based uniqueness proof remains valid whenever
a positive classical solution exists.
\end{remark}

\subsection{Monotonicity and curvature}
\label{subsec:markovian-shape-monotonicity}
\label{subsec:markovian-shape-curvature}

We now study the monotonicity and curvature of the price--dividend ratio.

\begin{remark}[Scope when $\theta>1$]
For $\theta>1$, the statements below concern the unique positive solution
of the Markovian pricing equation. They become equilibrium statements only
under an appropriate selection of the recursive-utility solution.
\end{remark}

\begin{theorem}[Monotonicity]
\label{thm:markovian-monotonicity}
Let $\theta>0$, let {\rm\Cref{assum:two-state-shape}} hold, assume
$\lambda^{\smallertext{1}\smallertext{2}}>0$, $\lambda^{\smallertext{2}\smallertext{1}}>0$, and let $\varphi$ be the unique positive
solution from {\rm\Cref{cor:markovian-existence-uniqueness}}. Then
\begin{enumerate}
  \item[$(i)$] if $\eta>0$, $\varphi$ is strictly increasing;
  \item[$(ii)$] if $\eta<0$, $\varphi$ is strictly decreasing.
\end{enumerate}
\end{theorem}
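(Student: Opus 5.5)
We will work with $\Phi=\varphi^\theta$, which solves the semilinear equation \eqref{eq:two-state-Phi-ode}--\eqref{eq:two-state-Phi-boundary}. Since $\theta>0$, $\varphi$ is strictly increasing (decreasing) exactly when $\Phi$ is. By \Cref{thm:markovian-smoothness}, $\Phi\in C^\infty([0,1])$, so we may differentiate the equation. Set $w\coloneqq\Phi'$ and differentiate \eqref{eq:two-state-Phi-ode} once:
\[
-\alpha w''-(\alpha'+\beta)w'+\big(c-\beta'-\theta q\Phi^{q-1}\big)w=-c'\Phi=\eta\Phi,
\]
using $c'(p)=-(1-\gamma)\Delta g=-\eta$. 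The forcing $\eta\Phi$ then has the strict sign of $\eta$ on $[0,1]$. The claim reduces to a strict maximum principle for this linear degenerate operator: if $\eta>0$ then $w>0$, and if $\eta<0$ then $w<0$. The second case is symmetric (replace $w$ by $-w$), so we treat $\eta>0$.

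The obstacle is that the zeroth-order coefficient $c-\beta'-\theta q\Phi^{q-1}$ need not be positive, so a direct maximum principle fails. We first try to rewrite it. From the undifferentiated equation, $\theta\Phi^{q-1}=c-(\alpha\Phi''+\beta\Phi')/\Phi$, and we substitute this to absorb the nonlinear term. The cleaner route is a ratio/comparison argument, in the spirit of the uniqueness proof in \Cref{thm:markovian-uniqueness}.

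Suppose $w$ attains a nonpositive minimum $m\leq0$ at some $p_0\in[0,1]$. If $p_0\in(0,1)$, then $w(p_0)=m$, $w'(p_0)=0$ and $w''(p_0)\geq0$. Evaluating the differentiated equation gives
\[
\big(c-\beta'-\theta q\Phi^{q-1}\big)(p_0)\,m\geq\eta\Phi(p_0)>0.
\]
This is a contradiction whenever the coefficient is nonnegative, and in particular when $m=0$. To handle a negative coefficient, we replace $w$ by $w/\Phi^{\rho}$ (or $\Phi'/\Phi$) for a suitable $\rho$, chosen so that the transformed zeroth-order coefficient is controlled by the original equation. In fact $(\Phi'/\Phi)$-type quantities turn $\theta q\Phi^{q-1}$ into $\theta\Phi^{q-1}(q-1)$-type terms whose sign is fixed, since $q-1=-1/\theta<0$. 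Concretely, we plan to write $w=\Phi z$ and compute the equation for $z$. The term $-\theta q\Phi^{q-1}+(c-(\alpha\Phi''+\beta\Phi')/\Phi)$ then collapses to $(1-q)\theta\Phi^{q-1}>0$, plus terms involving $z'$ and the degenerate coefficients $\alpha$, $\beta'$. A residual term $-\beta'$, and $\alpha'$-dependent terms, would remain; we would bound these using the additional structure, or reorganise so that they multiply $z'$, which vanishes at an extremum.

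At the endpoints we use the hierarchy \eqref{eq:endpoint-hierarchy} with $j=1$. At $p_0=0$, $\alpha(0)=\alpha'(0)=0$, so $\lambda^{21}w'(0)=\mathcal E_1(0)$ expresses $w'(0)$ through $w(0)$ and $\eta\Phi(0)$. If $w(0)\leq0$ is the minimum, then $w'(0)\geq0$. The relation, with the inward-drift sign $\beta(0)=\lambda^{21}>0$, then produces the same contradiction. The endpoint $p_0=1$ is symmetric, using $\beta(1)=-\lambda^{12}<0$.

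We conclude $w>0$ on $[0,1]$, hence strict monotonicity. The step we expect to be hardest is finding the multiplier $z=w/\Phi^\rho$ that makes the zeroth-order coefficient definitely positive uniformly in $\theta>0$. The fallback is a sweeping argument: use the comparison principle from the uniqueness proof on the translates $\Phi(\cdot+s)$ versus $\Phi$, exploiting that $c$ is decreasing when $\eta>0$.
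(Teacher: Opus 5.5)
\paragraph{The gap.} Your reduction to a maximum principle for $w=\Phi'$ leaves the decisive step open, and for $\theta>1$ it does not close. Write $k\coloneqq c-\beta'-(\theta-1)/\varphi$ for the zero-order coefficient of your differentiated equation.

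At an interior minimum $m=w(p_0)<0$ you only get $k(p_0)m\ge\eta\Phi(p_0)>0$. At $p=0$ the $j=1$ relation reads $-\lambda^{21}w'(0)+k(0)w(0)=\eta\Phi(0)$, and similarly at $p=1$. So both the interior and the endpoint cases need $k\ge0$.

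For $0<\theta\le1$ this holds, and your direct argument is then complete: $k\ge c-\beta'=c(0)-\eta(1-p)+\lambda^{12}+\lambda^{21}\ge c_-+\lambda^{12}+\lambda^{21}>0$.

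For $\theta>1$, however, the only control on $(\theta-1)/\varphi$ is $\varphi\ge\theta/c_+$. For $\eta>0$ this yields merely $k(0)\ge c(0)/\theta+\lambda^{12}+\lambda^{21}-\eta$. That bound is negative whenever $\eta>c(0)/\theta+\lambda^{12}+\lambda^{21}$, a regime compatible with $c(1)=c(0)-\eta>0$. The same obstruction is why the curvature theorem, which does run a maximum principle on a differentiated equation, needs the extra hypothesis $m_\theta>0$.

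Your proposed repairs do not remove it. With $z=\Phi'/\Phi$ the differentiated equation becomes
\[
-\alpha z''-(\alpha'+\beta+2\alpha z)z'+\big(\varphi^{-1}-\beta'-\alpha'z\big)z=\eta .
\]
The residual $-\beta'-\alpha'z$ still sits in the zero-order coefficient with indefinite sign; it does not multiply $z'$. The sweeping fallback also fails, because $\alpha$, $\beta$, $c$ depend on $p$. A translate $\Phi(\cdot+s)$ is neither a sub- nor a supersolution of the original problem, and it has no matching endpoint condition at $1-s$.

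\paragraph{The missing idea.} You need to extract information only at critical points, where the uncontrolled coefficient is multiplied by zero. Your own remark that the case $m=0$ is contradictory points this way.

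The paper uses the undifferentiated equation. At an interior local maximum of $\varphi$, $\Phi'=0$ and $\Phi''\le0$ give $c\Phi\le\theta\Phi^q$, i.e.\ $\varphi\le\varphi_{\mathrm{cmp}}\coloneqq\theta/c$. At a local minimum, $\varphi\ge\varphi_{\mathrm{cmp}}$. The boundary conditions \eqref{eq:two-state-Phi-boundary} give the same inequalities at one-sided extrema at $0$ and $1$.

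For $\eta>0$ the barrier is strictly increasing, since $\varphi_{\mathrm{cmp}}'=\theta\eta/c^2>0$. Suppose $\varphi'(p_1)<0$ somewhere. Take a maximiser $x$ of $\varphi$ on $[0,p_1]$ and a minimiser $y$ on $[p_1,1]$. These are genuine local extrema (one-sided only at an endpoint), with $x<p_1<y$ and $\varphi(x)>\varphi(p_1)>\varphi(y)$. Yet $\varphi(x)\le\varphi_{\mathrm{cmp}}(x)<\varphi_{\mathrm{cmp}}(y)\le\varphi(y)$, a contradiction. Hence $\varphi$ is non-decreasing.

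A flat interval would force $\varphi=\varphi_{\mathrm{cmp}}$ there, which is impossible since $\varphi_{\mathrm{cmp}}$ is strictly increasing. The case $\eta<0$ is symmetric. No sign condition on $k$ is ever needed, which is why this argument works uniformly in $\theta>0$.
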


Thus the sign of $\eta=(1-\gamma)(g^{\smallertext{1}}-g^{\smallertext{2}})$ determines the direction in
which beliefs affect the price--dividend ratio. When $\eta>0$, assigning
greater probability to the high-growth state raises the value of future
dividends relative to the current dividend; when $\eta<0$, the
continuation-value effect is reversed.

\medskip

This result is related to, but distinct from, the analysis of
\citeauthor*{veronesi1999overreaction}
\cite[Proposition~2, see also pages~976--978]
{veronesi1999overreaction}.
In that CARA model with additive dividends, the stock price is decomposed
into an affine present-value component and a negative, convex uncertainty
discount, and the resulting equilibrium price is increasing and convex in
the posterior belief. Here the object is the equilibrium price--dividend ratio
under Epstein--Zin preferences, and \Cref{thm:markovian-monotonicity}
provides an analytical characterisation for both possible signs of $\eta$.

\begin{proposition}[Closed-form solution for $\theta=1$]
\label{prop:theta-one-affine-solution}
Suppose $\theta=1$ and let {\rm\Cref{assum:two-state-shape}} hold. Then
the unique positive Markovian solution is
\[
 \varphi(p)=a+bp,
\]
where
\[
  b
  =
  \frac{c(0)-c(1)}
  {c(0)c(1)+c(0)\lambda^{\smallertext{1}\smallertext{2}}+c(1)\lambda^{\smallertext{2}\smallertext{1}}}
  =
  \frac{\eta}
  {c(0)c(1)+c(0)\lambda^{\smallertext{1}\smallertext{2}}+c(1)\lambda^{\smallertext{2}\smallertext{1}}},
  \;
  a=\frac{1+\lambda^{\smallertext{2}\smallertext{1}}b}{c(0)}.
\]
Consequently
\[
  \operatorname{sign}(b)=\operatorname{sign}(\eta).
\]
\end{proposition}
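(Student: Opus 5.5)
For $\theta=1$ the nonlinear term in \eqref{eq:two-state-varphi-ode} disappears, since $(\theta-1)(\varphi')^2/\varphi=0$. The problem \eqref{eq:two-state-varphi-ode}--\eqref{eq:two-state-varphi-boundary} then becomes the linear degenerate equation
\[
-\alpha\varphi''-\beta\varphi'+c\varphi=1,
\]
with the two endpoint relations. My plan is to insert the ansatz $\varphi(p)=a+bp$, match coefficients, check positivity, and then obtain uniqueness from \Cref{thm:markovian-uniqueness}. That theorem only needs $\lambda^{12},\lambda^{21}\geq0$, so no existence theorem is required.

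\textbf{Step 1 (matching coefficients).} Since $\varphi''=0$, the $\alpha$-term vanishes. I will use
\[
c(p)=\kappa-(1-\gamma)(g^2+p\Delta g)=c(0)-\eta p,\qquad \beta(p)=\lambda^{21}-(\lambda^{21}+\lambda^{12})p+\eta p-\eta p^2 .
\]
Then $-\beta b+c(a+bp)$ is a polynomial of degree at most two in $p$. Its $p^2$-coefficient is $\eta b-\eta b=0$, so it is identically zero. Matching the remaining coefficients against the constant $1$ gives two conditions:
\begin{align*}
&\text{constant term:} && c(0)a-\lambda^{21}b=1,\\
&\text{linear term:} && c(0)b-\eta a+(\lambda^{21}+\lambda^{12})b-\eta b=0 .
\end{align*}
The constant-term condition gives $a=(1+\lambda^{21}b)/c(0)$. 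Substituting this into the linear-term condition and multiplying by $c(0)$ yields
\[
b\big(c(0)^2-\eta c(0)+c(0)(\lambda^{12}+\lambda^{21})-\eta\lambda^{21}\big)=\eta .
\]
Using $c(1)=c(0)-\eta$, the bracket equals $D\coloneqq c(0)c(1)+c(0)\lambda^{12}+c(1)\lambda^{21}$. This gives the stated $b=\eta/D$, and also $\eta=c(0)-c(1)$. By \Cref{assum:two-state-shape}, $c>0$ on $[0,1]$, so $D>0$ and $\operatorname{sign}(b)=\operatorname{sign}(\eta)$.

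\textbf{Step 2 (endpoint conditions).} At $p=0$ we have $\alpha(0)=0$ and $\beta(0)=\lambda^{21}$. So the ODE evaluated at $p=0$ reads $-\lambda^{21}\varphi'(0)+c(0)\varphi(0)=1$, which is exactly the first relation in \eqref{eq:two-state-varphi-boundary}. Similarly $\alpha(1)=0$ and $\beta(1)=-\lambda^{12}$ give the second relation. Since the polynomial identity from Step 1 holds on all of $[0,1]$, both boundary conditions hold automatically.

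\textbf{Step 3 (positivity).} A direct computation gives
\[
a=\frac{D+\lambda^{21}\eta}{c(0)D}=\frac{c(1)+\lambda^{12}+\lambda^{21}}{D}>0,
\qquad
a+b=\frac{c(0)+\lambda^{12}+\lambda^{21}}{D}>0 .
\]
Because $\varphi$ is affine and positive at both endpoints, it is positive on $[0,1]$. It is also smooth on $[0,1]$, so it is a positive classical solution in $C^1([0,1])\cap C^2((0,1))$. \Cref{thm:markovian-uniqueness} then shows it is the unique such solution.

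The only delicate point is the algebra identifying the bracket in Step 1 with $D$ and the positivity of $a$ and $a+b$. Both rely on $c(1)=c(0)-\eta$, which I will verify explicitly.
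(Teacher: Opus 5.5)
Your proposal is correct and follows essentially the same route as the paper. The paper substitutes the affine ansatz, obtains a $2\times2$ linear system for $(a,b)$ by imposing the endpoint relations, notes that the $p^2$-terms cancel so the interior residual is affine and equal to $1$ at both endpoints, and takes uniqueness from \Cref{thm:markovian-uniqueness}; you instead match coefficients first and recover the endpoint relations by evaluation, which is the same computation in a different order. Your explicit check that $a=(c(1)+\lambda^{12}+\lambda^{21})/D>0$ and $a+b=(c(0)+\lambda^{12}+\lambda^{21})/D>0$ is a worthwhile addition: \Cref{thm:markovian-uniqueness} only concerns positive solutions, and the paper's proof leaves positivity implicit.
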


The affine form can be compared directly with
\citeauthor*{david2002option} \cite[Proposition~1(a)]{david2002option}. In their time-additive model, the
price--dividend ratio is the posterior-weighted average of the present values
obtained by conditioning on each possible current hidden state. It is therefore
affine in posterior beliefs.
\Cref{prop:theta-one-affine-solution} recovers this belief dependence in the
two-state, time-separable case.

\medskip

For the curvature result, define
\[
  m_\theta(p)
  \coloneqq
  c(p)-\alpha^{\prime\prime}(p)-2\beta^{\prime}(p)
  -\frac{\theta-1}{\varphi(p)},\; p\in[0,1].
\]
This coefficient is the zero-order term in the differential equation
satisfied by $\Phi^{\prime\prime}$, where
$\Phi=\varphi^\theta$. We now move on to study the curvature of the price-dividend ratio.

\begin{theorem}[Curvature]
\label{thm:markovian-curvature}
Let $\theta>0$, let {\rm\Cref{assum:two-state-shape}} hold, and assume
$\lambda^{\smallertext{1}\smallertext{2}}>0$, $\lambda^{\smallertext{2}\smallertext{1}}>0$, and let $\varphi$ be the unique
positive solution from
{\rm\Cref{cor:markovian-existence-uniqueness}}. Then
\begin{enumerate}
  \item[$(i)$] if $0<\theta<1$ and $m_\theta>0$ on $[0,1]$,
  then $\varphi$ is convex;
  \item[$(ii)$] if $\theta=1$, then $\varphi$ is affine;
  \item[$(iii)$] if $\theta>1$ and $m_\theta>0$ on $[0,1]$,
  then $\varphi$ is concave.
\end{enumerate}
\end{theorem}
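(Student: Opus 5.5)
The plan is to work with $\Phi=\varphi^\theta$, which solves the semilinear equation \eqref{eq:two-state-Phi-ode}, rather than with $\varphi$ directly. We derive a linear second-order equation for $w\coloneqq\Phi''$ with a sign-definite right-hand side, apply a maximum principle that uses the inward boundary drift at the degenerate endpoints, and finally transfer convexity or concavity from $\Phi$ back to $\varphi=\Phi^{1/\theta}$. Case $(ii)$ needs no argument: it is \Cref{prop:theta-one-affine-solution}, and by the uniqueness in \Cref{cor:markovian-existence-uniqueness} the unique solution is $\varphi(p)=a+bp$.

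\medskip
\emph{Step 1: the equation for $w$.} By \Cref{thm:markovian-smoothness}, $\Phi\in C^\infty([0,1])$, so we may differentiate \eqref{eq:two-state-Phi-ode} twice on $(0,1)$ and pass to the endpoints by continuity. Since $c$ is affine with $c'=-\eta$, and $\beta''=-2\eta$, the first-order terms $-\beta''\Phi'$ and $2c'\Phi'$ cancel. The remaining terms give
\[
-\alpha w''-(2\alpha'+\beta)w'+\big(c-\alpha''-2\beta'\big)w-\theta q\,\Phi^{q-1}w
=\theta q(q-1)\Phi^{q-2}(\Phi')^2 .
\]
Using $\theta q\Phi^{q-1}=(\theta-1)\Phi^{-1/\theta}=(\theta-1)/\varphi$, the zero-order coefficient is exactly $m_\theta$. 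Using $\theta q(q-1)=(1-\theta)/\theta$, the right-hand side equals $\frac{1-\theta}{\theta}\Phi^{q-2}(\Phi')^2$. This is $\geq0$ when $0<\theta<1$ and $\leq0$ when $\theta>1$.

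\medskip
\emph{Step 2: maximum principle up to the endpoints.} Take $0<\theta<1$ and $m_\theta>0$ on $[0,1]$, and suppose $\min_{[0,1]}w<0$, attained at some $p_0$.
\begin{itemize}[leftmargin=*]
\item If $p_0\in(0,1)$, then $w'(p_0)=0$ and $w''(p_0)\geq0$. Since $\alpha\geq0$, the left-hand side is at most $m_\theta(p_0)w(p_0)<0$, contradicting the nonnegative right-hand side.
\item If $p_0=0$, note that $\alpha(p)=\frac{(\Delta g)^2}{2\sigma^2}p^2(1-p)^2$, so $\alpha(0)=\alpha'(0)=0$. The equation at $0$ therefore reduces to $-\lambda^{\smallertext{2}\smallertext{1}}w'(0)+m_\theta(0)w(0)\geq0$. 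At a boundary minimum $w'(0)\geq0$, and $\lambda^{\smallertext{2}\smallertext{1}}>0$, so the left-hand side is $<0$, a contradiction.
\item If $p_0=1$, we have $\beta(1)=-\lambda^{\smallertext{1}\smallertext{2}}$ and $w'(1)\leq0$. The equation at $1$ reads $\lambda^{\smallertext{1}\smallertext{2}}w'(1)+m_\theta(1)w(1)\geq0$, and again the left-hand side is negative.
\end{itemize}
Hence $w\geq0$, so $\Phi$ is convex. For $\theta>1$ the same argument applied to $-w$, using a maximum instead of a minimum, gives $w\leq0$, so $\Phi$ is concave.

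\medskip
\emph{Step 3: back to $\varphi$.} We have $\varphi=\Phi^{1/\theta}$ with $\Phi>0$.
\begin{itemize}[leftmargin=*]
\item For $0<\theta<1$, the map $x\mapsto x^{1/\theta}$ is increasing and convex on $(0,+\infty)$, so its composition with the convex function $\Phi$ is convex. Concretely, $\varphi''=\frac1\theta\Phi^{1/\theta-1}\Phi''+\frac1\theta\big(\frac1\theta-1\big)\Phi^{1/\theta-2}(\Phi')^2\geq0$.
\item For $\theta>1$, the map $x\mapsto x^{1/\theta}$ is increasing and concave, so composition with the concave $\Phi$ gives a concave $\varphi$.
\end{itemize}

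\medskip
The main obstacle is Step 2 at the degenerate endpoints. There $\alpha$ vanishes, so the usual interior argument fails. It is essential that $\alpha'$ also vanishes there, so the coefficient of $w'$ reduces to the inward drift $\beta$, and that $w$ is continuous up to the boundary with one-sided derivatives satisfying the differentiated equation. Both facts follow from the $C^\infty([0,1])$ regularity in \Cref{thm:markovian-smoothness} and from the strict positivity of $\lambda^{\smallertext{1}\smallertext{2}}$ and $\lambda^{\smallertext{2}\smallertext{1}}$.
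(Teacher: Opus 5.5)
Your proposal is correct and follows essentially the same route as the paper. You differentiate the $\Phi$-equation twice, using $c''=0$ and $\beta''=2c'$ to cancel the first-order terms, and identify $m_\theta$ as the zero-order coefficient of the equation for $\Phi''$; you then run the maximum principle with $\alpha=\alpha'=0$ and inward drift $\beta(0)=\lambda^{\smallertext{2}\smallertext{1}}$, $\beta(1)=-\lambda^{\smallertext{1}\smallertext{2}}$ at the endpoints, transfer the sign of $\Phi''$ to $\varphi=\Phi^{1/\theta}$, and take case $(ii)$ from the affine closed form — all as the paper does, with your explicit appeal to $\Phi\in C^\infty([0,1])$ justifying the endpoint forms that the paper uses.
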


Monotonicity determines the direction of the valuation response, while
curvature determines how its magnitude varies with beliefs. In the
increasing and convex regime, $\varphi^{\prime}$ is larger at favourable
beliefs, so bad news in good times has a larger valuation effect than equally
sized good news in bad times. This is the asymmetry highlighted by
\citeauthor*{veronesi1999overreaction}
\cite[pages~976--978]{veronesi1999overreaction}. The present result proves
the corresponding shape for the Epstein--Zin price--dividend ratio under
the stated sufficient condition and also identifies the affine and concave
regimes generated by $\theta$.

\medskip

The condition $m_\theta>0$ depends on the unknown solution through
$1/\varphi$. The following stronger conditions involve only primitive
parameters.

\begin{proposition}[Primitive sufficient curvature conditions]
\label{prop:primitive-curvature-conditions}
Let {\rm\Cref{assum:two-state-shape}} hold, assume $\eta>0$, 
$\lambda^{\smallertext{1}\smallertext{2}}>0$, and $\lambda^{\smallertext{2}\smallertext{1}}>0$. Let $\varphi$ be the unique positive
Markovian solution from
{\rm\Cref{cor:markovian-existence-uniqueness}}. If
$0<\theta<1$ and
\[
 \frac{c(1)}{\theta}+2(\lambda^{\smallertext{1}\smallertext{2}}+\lambda^{\smallertext{2}\smallertext{1}})
 >
 (1-\gamma)(g^{\smallertext{1}}-g^{\smallertext{2}})+\frac{(g^{\smallertext{1}}-g^{\smallertext{2}})^2}{\sigma^2},
\]
then $\varphi$ is convex. If $\theta>1$ and
\[
 \frac{c(0)}{\theta}+2(\lambda^{\smallertext{1}\smallertext{2}}+\lambda^{\smallertext{2}\smallertext{1}})
 >
 2(1-\gamma)(g^{\smallertext{1}}-g^{\smallertext{2}})+\frac{(g^{\smallertext{1}}-g^{\smallertext{2}})^2}{\sigma^2},
\]
then $\varphi$ is concave.
\end{proposition}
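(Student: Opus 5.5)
The plan is to verify the hypothesis $m_\theta>0$ on $[0,1]$ of \Cref{thm:markovian-curvature} directly, using only primitive quantities together with the a priori bounds of \Cref{thm:markovian-existence}. Convexity for $0<\theta<1$ and concavity for $\theta>1$ then follow from parts $(i)$ and $(iii)$ of that theorem. Throughout I write $\Lambda_\Sigma\coloneqq\lambda^{12}+\lambda^{21}$.

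\medskip
\textbf{Step 1: explicit coefficient bounds.} Since $\alpha(p)=\frac{(\Delta g)^2}{2\sigma^2}p^2(1-p)^2$, we get
\[
\alpha''(p)=\frac{(\Delta g)^2}{\sigma^2}\big(1-6p+6p^2\big)\le\frac{(\Delta g)^2}{\sigma^2},\; p\in[0,1],
\]
because $1-6p+6p^2$ attains its maximum $1$ on $[0,1]$ at the endpoints. Moreover
\[
\beta'(p)=-\Lambda_\Sigma+\eta(1-2p),
\quad\text{so}\quad
-2\beta'(p)=2\Lambda_\Sigma-2\eta(1-2p).
\]
Since $\eta>0$, the function $c(p)=c(1)+\eta(1-p)$ is decreasing in $p$. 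Hence $c_-=c(1)$ and $c_+=c(0)$, and the bounds of \Cref{thm:markovian-existence} read
\[
\frac{\theta}{c(0)}\le\varphi\le\frac{\theta}{c(1)}.
\]

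\medskip
\textbf{Step 2: the case $0<\theta<1$.} Here $-(\theta-1)/\varphi=(1-\theta)/\varphi$. Using the upper bound on $\varphi$, this is at least $(1-\theta)c(1)/\theta$. Combining the $c$ and $\beta'$ terms gives
\[
c(p)-2\eta(1-2p)=c(1)-\eta+3\eta p\ge c(1)-\eta.
\]
Adding the three pieces,
\[
m_\theta(p)\ge c(1)-\eta+\frac{(1-\theta)c(1)}{\theta}+2\Lambda_\Sigma-\frac{(\Delta g)^2}{\sigma^2}
=\frac{c(1)}{\theta}+2\Lambda_\Sigma-\eta-\frac{(\Delta g)^2}{\sigma^2}.
\]
The first hypothesis states exactly that this quantity is positive, since $\eta=(1-\gamma)(g^1-g^2)$. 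So $m_\theta>0$ on $[0,1]$, and part $(i)$ of \Cref{thm:markovian-curvature} gives convexity.

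\medskip
\textbf{Step 3: the case $\theta>1$.} Now the term $-(\theta-1)/\varphi$ is negative. Using the lower bound $\varphi\ge\theta/c(0)$, it is at least $-(\theta-1)c(0)/\theta$. Then
\[
c(p)-\frac{(\theta-1)c(0)}{\theta}=\frac{c(0)}{\theta}-\eta p,
\]
and adding the $\beta'$ contribution yields
\[
\frac{c(0)}{\theta}-\eta p-2\eta(1-2p)=\frac{c(0)}{\theta}-2\eta+3\eta p\ge\frac{c(0)}{\theta}-2\eta.
\]
Therefore
\[
m_\theta(p)\ge\frac{c(0)}{\theta}+2\Lambda_\Sigma-2\eta-\frac{(\Delta g)^2}{\sigma^2}>0
\]
by the second hypothesis, and part $(iii)$ of \Cref{thm:markovian-curvature} gives concavity.

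\medskip
\textbf{Main obstacle.} The only delicate point is the bookkeeping of the $p$-dependent terms. Bounding $c$ and $-2\beta'$ separately by their worst cases would cost an extra $\eta$ and give a weaker condition than the one stated. Instead, $c(p)$ must be written relative to the appropriate endpoint value before being combined with $-2\eta(1-2p)$, so that the residual is the nonnegative term $3\eta p$. Taking the crude maximum of $\alpha''$ separately is harmless, since the hypotheses already use $(\Delta g)^2/\sigma^2$.
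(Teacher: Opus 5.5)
Your proof is correct and follows essentially the same route as the paper. You bound $c-\alpha^{\prime\prime}-2\beta^{\prime}$ from below by its value at $p=0$, namely $c(0)-2\eta+2(\lambda^{12}+\lambda^{21})-(\Delta g)^2/\sigma^2$, control $(\theta-1)/\varphi$ through $\theta/c(0)\le\varphi\le\theta/c(1)$, and conclude with \Cref{thm:markovian-curvature}. The only cosmetic difference is that the paper treats $c-\alpha^{\prime\prime}-2\beta^{\prime}$ as one concave quadratic minimised at $p=0$, whereas you bound $\alpha^{\prime\prime}$ separately by its endpoint maximum and merge $c$ with $-2\beta^{\prime}$ into the affine term $c(1)-\eta+3\eta p$; both give the same lower bound.
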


\subsection{Numerical illustration}
\label{subsec:markovian-shape-numerics}

We illustrate the three curvature regimes at the level of the Markovian pricing equation using the primitive calibration
\[
  g^{\smallertext{1}} = 0.25,\; g^{\smallertext{2}} = 0,\; \gamma = 0.8,\; \sigma = 1.25,
  \; \lambda^{\smallertext{1}\smallertext{2}} = \lambda^{\smallertext{2}\smallertext{1}} = 0.04,\; \delta=0.025,
\]
and the preference values $\theta\in\{0.25,1,10\}$. Consequently
\[
  \kappa(\theta)=0.125+0.025\theta,
  \;
  c_\theta(p)=0.125+0.025\theta-0.05p.
\]
The corresponding elasticity parameter is
$\psi(\theta)=(1-0.2/\theta)^{-1}>1$.
In particular, $c_\theta(p)>0$ for every
$(p,\theta)\in[0,1]\times[0.25,10]$, and the primitive sufficient
curvature conditions in
\Cref{prop:primitive-curvature-conditions} hold throughout the displayed
parameter range. The values $\theta\leq1$ fall within the equilibrium theory developed above, while the value $\theta=10$ illustrates only the positive solution of the pricing equation.

\medskip
The primitive curvature conditions are sufficient but need not be
necessary. \Cref{fig:theta-both-clean-shape} uses a separate parameter
specification for which both primitive inequalities fail, whereas the
computed coefficient $m_\theta$ remains positive and the predicted convex
and concave profiles persist.

\medskip

\begin{figure}[H]
  \centering
  \includegraphics[width=0.78\textwidth]{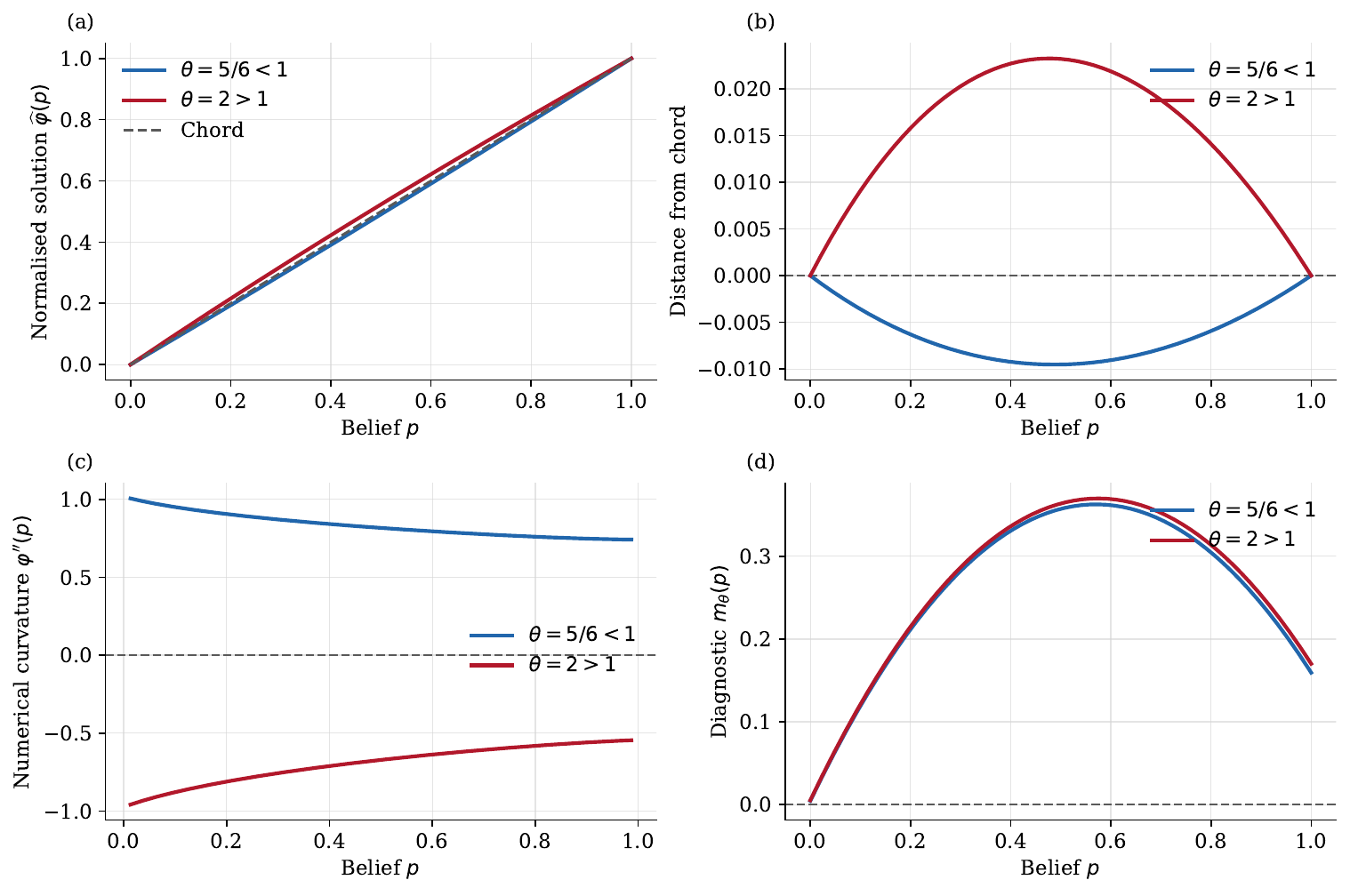}
  \caption{\small
  Finite-difference solutions of
  \eqref{eq:two-state-Phi-ode}--\eqref{eq:two-state-Phi-boundary} for
  $g^{\smallertext{1}}=0.21$, $g^{\smallertext{2}}=0$, $\gamma=0.75$, $\sigma=0.49$,
  $\lambda^{\smallertext{1}\smallertext{2}}=0.052$, $\lambda^{\smallertext{2}\smallertext{1}}=0.059$, and $\delta=0.049$.
  The primitive sufficient inequalities in
  \Cref{prop:primitive-curvature-conditions} fail for both displayed values
  of $\theta$, whereas the computed diagnostic $m_\theta$ remains
  positive. The resulting convex and concave profiles are therefore
  numerical evidence that the primitive conditions are sufficient but not
  necessary.}
  \label{fig:theta-both-clean-shape}
\end{figure}

\medskip

Returning to the baseline specification, each curve below is obtained from
a finite-difference solution of
\Cref{eq:two-state-Phi-ode,eq:two-state-Phi-boundary}. To highlight the shape, we plot the normalised solution
\[
 \varphi_{\mathrm{norm}}(p)
 \coloneqq \frac{\varphi(p)-\varphi(0)}{\varphi(1)-\varphi(0)},
\]
together with the deviation from its chord,
\[
 \Delta_{\mathrm{ch}}(p)
 \coloneqq \frac{\varphi(p)-\big[(1-p)\varphi(0)+p\varphi(1)\big]}
 {\varphi(1)-\varphi(0)}.
\]
Convexity corresponds to $\Delta_{\mathrm{ch}}(p)<0$ (below the chord),
affinity to $\Delta_{\mathrm{ch}}(p)\equiv 0$, and concavity to
$\Delta_{\mathrm{ch}}(p)>0$.

\begin{figure}[H]
  \centering
  \includegraphics[width=\textwidth]
    {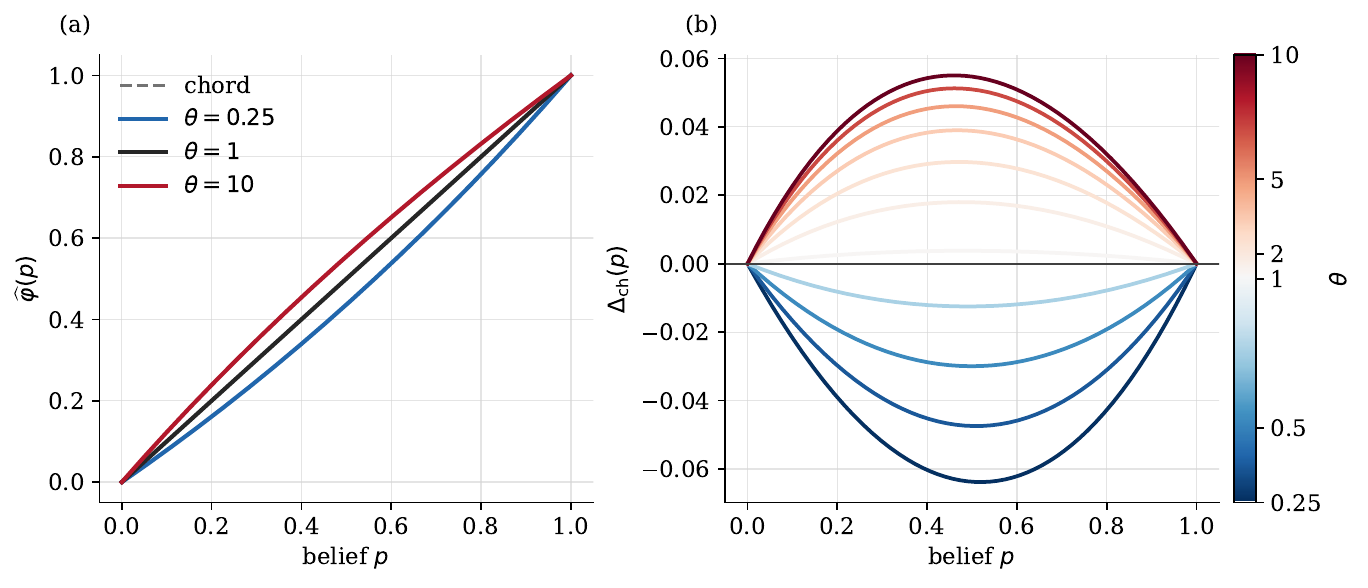}
  \caption{\small Shape of the Markovian price--dividend ratio under the
    calibration of \Cref{subsec:markovian-shape-numerics}.
    Panel~(a) plots the normalised solution
    $\varphi_{\mathrm{norm}}(p) = (\varphi(p)-\varphi(0))/(\varphi(1)-\varphi(0))$
    for the three preference regimes
    $\theta \in \{0.25,1,10\}$, together with the chord
    $p \longmapsto p$ (dashed). Panel~(b) plots the chord deviation
    $\Delta_{\mathrm{ch}}(p) = (\varphi(p) - (1-p)\varphi(0) - p\varphi(1))
    / (\varphi(1) - \varphi(0))$
    for a log-spaced sweep $\theta \in [0.25,10]$, with curves
    coloured by~$\theta$ on a diverging scale centred at the affine
    case $\theta = 1$. Across the full range of preferences, the
    family interpolates smoothly between the convex and concave
    regimes predicted by \Cref{thm:markovian-curvature}.}
  \label{fig:markovian-shape-sweep}
\end{figure}

\section{Option prices and the implied-volatility surface}
\label{sec:option-pricing}

We use the two-state notation introduced in
\Cref{subsec:markovian-shape-setup}. Throughout this section, we maintain the parameter and shape hypotheses of \Cref{thm:two-state-markovian-equilibrium} and additionally assume $\eta>0$, or equivalently $\gamma<1$. Hence $0<\theta\leq1$, $\psi>1$, and $1/\psi\leq\gamma<1$; the option analysis does not cover the high-risk-aversion region $\gamma>1$, $\psi>1$, for which $\theta<0$. We fix an initial belief $p_{\smallertext{0}}\in[0,1]$ and work on the $\iota(p_{\smallertext{0}})$-model copy. To simplify the option-pricing formulas, only in this section we suppress the fixed superscript
$\iota(p_{\smallertext{0}})$ from $D$, $P$, $\overline W$, $Q$, and $S$. We also write
$\mathbb P$, $\mathbb E$, and $\mathbb F^\smallertext{D}$ for the physical
law, expectation, and dividend filtration on that model copy. Along a
filtering path, we write
$p_t=P_t^{\iota(p_\smalltext{0}),\smallertext{1}}$, $t\geq 0$.

\medskip
Throughout this section and
\Cref{app:characteristic-functions}, we set the short rate equal to the
upper bound in \Cref{thm:two-state-markovian-equilibrium}:
\begin{equation}
\label{eq:marginal-short-rate-option}
  r_f(p)\coloneqq\overline r(\iota(p))
  =\frac{g(p)}{\psi}
  +(\theta-1)\bigg(\frac{d(p)^2}{2\sigma^2}-d(p)\bigg)
  +\frac{\kappa}{\theta}-\gamma\sigma^2,
  \; p\in[0,1],
\end{equation}
where $d(p)=A(p)\varphi^{\prime}(p)/\varphi(p)$.
As explained in \Cref{rem:short-rate-multiplicity}, this is the rate at
which the investor is marginally indifferent between the stock and the
risk-free account at the clearing portfolio.

\medskip

We price European options written on the \emph{ex-dividend} stock
\[
 Q_t=\varphi(p_t),
 \;
  S_t=D_tQ_t,
 \; t\geq 0,
\]
where $p_t$ is the posterior probability of the high-growth state.
Here $\varphi$ denotes the unique positive solution of the two-state
Markovian boundary-value problem in
\Cref{cor:markovian-existence-uniqueness}. For $0<\theta\leq1$,
\Cref{thm:two-state-markovian-equilibrium} identifies it with the equilibrium
price--dividend ratio, which is bounded above and away from zero.
Because dividend news also changes beliefs, stock volatility depends on the
current belief even though dividend volatility is constant.

\medskip
This is the same economic channel emphasised by
\citeauthor*{david2002option} \cite{david2002option}: beliefs about hidden
fundamental growth move both prices and future volatility, and this changes
the risk-neutral distribution priced by options.
Option valuation with a hidden Markov state is also studied by
\citeauthor*{guo2001information} \cite{guo2001information}, while
\citeauthor*{buraschi2006model} \cite{buraschi2006model} connect model
uncertainty and heterogeneous beliefs to option prices and trading volume.
In our representative-agent economy, the relevant state is the filtered
growth belief and the stock's state-dependent volatility is generated
endogenously by equilibrium valuation.

\medskip

The equilibrium characterisation and the two-state boundary-value problem
determine the Markovian price--dividend ratio $\varphi$, while preferences
determine its level and belief sensitivity. These effects enter option
prices through $\varphi^{\prime}/\varphi$, which determines the learning
component of stock volatility.

\medskip

Under the physical measure, the scalar belief process satisfies
\begin{equation}
\label{eq:filtering-p-option}
 \mathrm{d}p_t
 =
 B(p_t)\mathrm{d}t
 +
 \chi(p_t)\mathrm{d}\overline W_t,
\end{equation}
with
\[
 B(p)\coloneqq \lambda^{\smallertext{2}\smallertext{1}}(1-p)-\lambda^{\smallertext{1}\smallertext{2}}p,
 \;
 \chi(p)\coloneqq \frac{p(1-p)\Delta g}{\sigma},\; p\in[0,1].
\]
Hence $\chi(p)>0$ for interior beliefs: a positive dividend innovation
raises the posterior probability of the high-growth state. The observed
dividend dynamics are
\begin{equation}
\label{eq:dividend-dynamics-option}
 \frac{\mathrm{d}D_t}{D_t}
 =
 g(p_t)\mathrm{d}t+\sigma\mathrm{d}\overline W_t.
\end{equation}
Equations \eqref{eq:filtering-p-option} and
\eqref{eq:dividend-dynamics-option} are the key inputs for option pricing:
one Brownian innovation moves both cash flows and beliefs.

\subsection{Stock dynamics}

Let $\ell(p)\coloneqq \log(\varphi(p))$. Since $\log S_t=\log D_t+\ell(p_t)$, Itô's formula gives the
following stock dynamics.

\begin{lemma}[Physical stock dynamics]
\label{lem:physical-stock-dynamics}
Under the physical measure, the stock price satisfies
\[
 \frac{\mathrm{d}S_t}{S_t}
 =
 \mu_\smallertext{S}(p_t)\mathrm{d}t
 +\sigma_\smallertext{S}(p_t)\mathrm{d}\overline W_t,
\]
where
\begin{equation}
\label{eq:stock-volatility-option}
 \sigma_\smallertext{S}(p)
 =
 \sigma+\ell^{\prime}(p)\chi(p),
\end{equation}
and
\begin{equation}
\label{eq:stock-drift-option}
  \mu_\smallertext{S}(p)
  =
  g(p)
  +
  \ell^{\prime}(p)B(p)
  +
  \frac12\ell^{\prime\prime}(p)\chi^2(p)
  +
  \frac12\big(\sigma_\smallertext{S}^2(p)-\sigma^2\big).
\end{equation}
%Equivalently,
%\[
% \mu_\smallertext{S}(p)
% =
% g(p)
% +
% \frac{\varphi^{\prime}(p)}{\varphi(p)}B(p)
% +
% \frac12\frac{\varphi^{\prime\prime}(p)}{\varphi(p)}\chi^2(p)
% +
% \sigma\frac{\varphi^{\prime}(p)}{\varphi(p)}\chi(p).
%\]
\end{lemma}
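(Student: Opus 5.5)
We apply It\^o's formula to $\log S_t=\log D_t+\ell(p_t)$ and then convert back to the multiplicative form. This requires $\ell\in C^2([0,1])$. That holds because, by \Cref{thm:markovian-smoothness}, $\varphi$ is smooth and positive on the closed interval $[0,1]$, so $\ell=\log\varphi$ is smooth there. Smoothness up to the endpoints matters, since $p_t$ may start at (or, for the boundary case, sit at) $0$ or $1$. It\^o's formula can then be applied to the semimartingale $p$ taking values in $[0,1]$, for example via any $C^2$ extension of $\ell$ to a neighbourhood of $[0,1]$.

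First, from \eqref{eq:dividend-dynamics-option},
\[
\mathrm d\log D_t=\Big(g(p_t)-\tfrac{\sigma^2}{2}\Big)\mathrm dt+\sigma\,\mathrm d\overline W_t .
\]
From \eqref{eq:filtering-p-option},
\[
\mathrm d\ell(p_t)=\Big(\ell'(p_t)B(p_t)+\tfrac12\ell''(p_t)\chi^2(p_t)\Big)\mathrm dt+\ell'(p_t)\chi(p_t)\,\mathrm d\overline W_t .
\]
Summing the two, $\log S$ has diffusion coefficient $\sigma+\ell'\chi=\sigma_S$, which gives \eqref{eq:stock-volatility-option}. Its drift is $g-\sigma^2/2+\ell'B+\tfrac12\ell''\chi^2$.

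Second, we apply It\^o's formula to $S=\exp(\log S)$. This gives
\[
\frac{\mathrm dS_t}{S_t}=\Big(\text{drift of }\log S+\tfrac12\sigma_S^2(p_t)\Big)\mathrm dt+\sigma_S(p_t)\,\mathrm d\overline W_t .
\]
Collecting $-\tfrac12\sigma^2+\tfrac12\sigma_S^2$ yields exactly \eqref{eq:stock-drift-option}.

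As a consistency check, expanding $\ell'=\varphi'/\varphi$ and $\ell''=\varphi''/\varphi-(\varphi'/\varphi)^2$ recovers the general stock dynamics stated for $\mathfrak C$ with $\mathcal H=0$. This uses $A(\iota(p))^\top\partial_p\varphi=A(p)\varphi'(p)$ and $\chi=-A/\sigma$.

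There is no real obstacle. The only point needing care is the regularity of $\ell$ up to the degenerate endpoints, which \Cref{thm:markovian-smoothness} supplies.
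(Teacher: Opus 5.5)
Your proof is correct and follows the paper's own argument: It\^o's formula applied to $\log D_t+\ell(p_t)$ identifies the log-volatility $\sigma+\ell'\chi=\sigma_S$, and passing back to level dynamics adds $\tfrac12\sigma_S^2$, which gives the stated drift. Your extra remarks are accurate but go beyond what the paper's proof includes. These are the $C^2$ regularity of $\ell$ up to the endpoints via \Cref{thm:markovian-smoothness}, and the check against the general $\mathfrak C$ stock dynamics using $A(\iota(p))^\top\partial_p\varphi=A(p)\varphi'(p)$ and $\chi=-A/\sigma$.
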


\begin{proof}
From \eqref{eq:dividend-dynamics-option}, we have
\[
 \mathrm{d}(\log D_t)
 =
 \bigg(g(p_t)-\frac12\sigma^2\bigg)\mathrm{d}t
 +
 \sigma\mathrm{d}\overline W_t.
\]
Also
\[
 \mathrm{d}\ell(p_t)
 =
 \bigg(
 \ell^{\prime}(p_t)B(p_t)
 +
 \frac12\ell^{\prime\prime}(p_t)\chi^2(p_t)
 \bigg)\mathrm{d}t
 +
 \ell^{\prime}(p_t)\chi(p_t)\mathrm{d}\overline W_t.
\]
Therefore
\[
 \mathrm{d}(\log S_t)
 =
 \bigg(
 g(p_t)-\frac12\sigma^2
 +
 \ell^{\prime}(p_t)B(p_t)
 +
 \frac12\ell^{\prime\prime}(p_t)\chi^2(p_t)
 \bigg)\mathrm{d}t
 +
 \big(
 \sigma+\ell^{\prime}(p_t)\chi(p_t)
 \big)\mathrm{d}\overline W_t.
\]
Thus the log-stock volatility is $\sigma_\smallertext{S}(p)$ as in
\eqref{eq:stock-volatility-option}. Passing from log dynamics to level
dynamics adds the correction $\frac12\sigma_\smallertext{S}^2(p)$, giving
\eqref{eq:stock-drift-option}.
\end{proof}

By \Cref{subsec:markovian-shape-monotonicity},
$\varphi^{\prime}(p)\geq0$. Hence a positive dividend innovation raises
both current dividends and the price--dividend ratio, and
$\sigma_\smallertext{S}(p)\geq\sigma$, with strict inequality wherever
$\varphi^{\prime}(p)>0$.

\medskip
This is the volatility-amplification mechanism studied by
\citeauthor*{veronesi1999overreaction}
\cite[Section~3, especially pages~988--991]{veronesi1999overreaction} and
\citeauthor*{david2002option}
\cite[equation~(10) and pages~10--14]{david2002option}: dividend news changes
both cash flows and posterior beliefs, and the two effects are driven by the
same innovation. In the present model, the belief-revision component of
stock volatility is
$\chi(p)\varphi^{\prime}(p)/\varphi(p)$, whose magnitude is determined by
the equilibrium Epstein--Zin price--dividend ratio.

\medskip
The learning premium is belief-dependent. Since
$\chi(0)=\chi(1)=0$, it vanishes at the degenerate beliefs, and therefore
$\sigma_\smallertext{S}(0)=\sigma_\smallertext{S}(1)=\sigma$. In the
interior, its magnitude is determined jointly by
$\chi(p)$ and $\varphi^{\prime}(p)/\varphi(p)$.

\begin{figure}[ht!]
  \centering
  \includegraphics[width=0.68\textwidth]{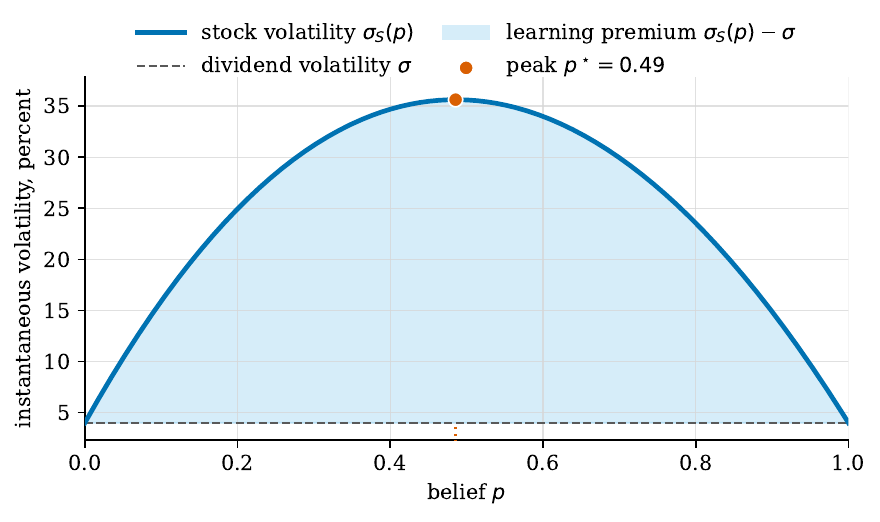}
  \caption{
  Instantaneous stock return volatility as a function of
  the posterior belief under the equilibrium calibration of
  \Cref{subsec:option-numerics}. The dashed line is the dividend volatility
  $\sigma$, and the shaded area is the learning premium
  $\sigma_\smallertext{S}(p)-\sigma$.
  }
  \label{fig:stock-volatility-belief}
\end{figure}

\subsection{Risk-neutral dynamics}

Because $S$ is the \emph{ex-dividend} stock price, the stockholder receives the
dividend flow $D_t\mathrm{d}t$. The instantaneous dividend yield is
\begin{equation*}
 q_\smallertext{S}(p_t)
 \coloneqq 
 Q_t^{-1}
 =\varphi(p_t)^{-1},\; t\geq 0.
\end{equation*}
With $r_f$ fixed by
\eqref{eq:marginal-short-rate-option}, the total return is
\[
 \frac{\mathrm{d}S_t+D_t\mathrm{d}t}{S_t}
 =
 \big(
 \mu_\smallertext{S}(p_t)+q_\smallertext{S}(p_t)
 \big)\mathrm{d}t
 +
 \sigma_\smallertext{S}(p_t)\mathrm{d}\overline W_t.
\]
Since $\sigma_\smallertext{S}(p)\geq\sigma>0$, we can define
\begin{equation}
\label{eq:market-price-risk-option}
 \lambda(p)
 \coloneqq
 \frac{
 \mu_\smallertext{S}(p)+q_\smallertext{S}(p)-r_f(p)
 }{\sigma_\smallertext{S}(p)}
 =\gamma\sigma+(1-\theta)\chi(p)\ell^{\prime}(p),
 \; p\in[0,1].
\end{equation}
The last equality follows by substituting
\eqref{eq:marginal-short-rate-option} and using
\eqref{eq:two-state-varphi-ode}, with $d(p)=-\sigma\chi(p)\ell^{\prime}(p)$. Define the density process
\[
 L_t
 \coloneqq
 \mathcal E\bigg(
 -\int_{\smallertext{0}}^t \lambda(p_s)\mathrm{d}\overline W_s
 \bigg),
 \; t\geq0.
\]

\begin{assumption}[Finite-horizon measure change]
\label{assum:risk-neutral-martingale}
For every $T>0$, $(L_t)_{0\leq t\leq \smallertext{T}}$ is a true martingale.
\end{assumption}

This condition holds, for example, if $\lambda$ is bounded on $[0,1]$:
Novikov's condition then holds on every finite horizon. Under
\Cref{assum:risk-neutral-martingale}, for each $T>0$ we define the
corresponding risk-neutral measure on
$\mathcal F_\smallertext{T}^\smallertext{D}$ by
\[
\frac{\mathrm{d}\mathbb Q^\smallertext{T}}{\mathrm{d}\mathbb P}
 \bigg|_{\mathcal F_\smalltext{T}^\smalltext{D}}
 =
 L_\smallertext{T},
\]
and the risk-neutral Brownian motion by
$
 \mathrm{d}\widetilde W_t
 \coloneqq
 \mathrm{d}\overline W_t+\lambda(p_t)\mathrm{d}t.
$
For a fixed option maturity, we suppress the superscript $T$ and write
$\mathbb Q$.

\medskip

By \Cref{thm:markovian-smoothness}, $\varphi$ is positive
and smooth on $[0,1]$. The selected short rate in
\eqref{eq:marginal-short-rate-option} is therefore smooth and bounded,
and \eqref{eq:market-price-risk-option} makes $\lambda$ bounded as well.
Thus \Cref{assum:risk-neutral-martingale} follows from Novikov's condition
on every finite horizon.

\medskip

Then, by the standard martingale-pricing representation (see for instance \citeauthor*{shreve2004stochastic}
\cite[Section~5.5.1]{shreve2004stochastic}),
$\mathbb P$ remains the physical measure used for filtering and for the
equilibrium construction, while $\mathbb Q$ is the equivalent martingale measure defined above through $\lambda$. In
particular, the discounted cumulative-dividend gain process
\[
 \frac{S_t}{S_t^o}
 +
 \int_{\smallertext{0}}^t \frac{D_s}{S_s^o}\mathrm{d}s,
\]
is a $\mathbb Q$-martingale; equivalently, under $\mathbb Q$ the total
expected return of the stock is $r_f(p_t)$.
 
\begin{proposition}[Risk-neutral dynamics]
\label{prop:risk-neutral-dynamics-option}
Under {\rm\Cref{assum:risk-neutral-martingale}} and the corresponding
risk-neutral measure $\mathbb Q$, the state variables
$(S,p)$ have dynamics
\begin{gather}
\label{eq:risk-neutral-stock-level}
 \frac{\mathrm{d}S_t}{S_t}
 =
 \big(
 r_f(p_t)-q_\smallertext{S}(p_t)
 \big)\mathrm{d}t
 +
 \sigma_\smallertext{S}(p_t)\mathrm{d}\widetilde W_t,\\
\label{eq:risk-neutral-belief}
 \mathrm{d}p_t
 =
 \widetilde B(p_t)\mathrm{d}t
 +
 \chi(p_t)\mathrm{d}\widetilde W_t,
\end{gather}
where
\[
  \widetilde B(p)\coloneqq B(p)-\chi(p)\lambda(p)
  =B(p)-\gamma\sigma\chi(p)-(1-\theta)\chi(p)^2\ell^{\prime}(p),
  \; p\in[0,1].
\]
\end{proposition}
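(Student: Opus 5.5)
The plan is to apply Girsanov's theorem and then substitute the definition of $\lambda$. By \Cref{assum:risk-neutral-martingale}, $L$ is a true martingale on $[0,T]$. Girsanov's theorem then implies that $\widetilde W_t=\overline W_t+\int_0^t\lambda(p_s)\mathrm ds$ is a $(\mathbb F^{\smallertext{D}},\mathbb Q)$--Brownian motion on $[0,T]$. Since both the stock equation of \Cref{lem:physical-stock-dynamics} and the filter \eqref{eq:filtering-p-option} are written with respect to $\overline W$, we can rewrite them pathwise by inserting $\mathrm d\overline W_t=\mathrm d\widetilde W_t-\lambda(p_t)\mathrm dt$. Equivalence of $\mathbb Q$ and $\mathbb P$ on $\mathcal F_T^{\smallertext D}$ ensures that $\mathbb P$-a.s. identities remain valid $\mathbb Q$-a.s. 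Stochastic integrals against $\overline W$ are unchanged as semimartingale integrals, because the integrands $\sigma_\smallertext{S}(p_t)S_t$ and $\chi(p_t)$ are continuous and adapted.

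For the stock, the drift becomes $\mu_\smallertext{S}(p_t)-\sigma_\smallertext{S}(p_t)\lambda(p_t)$. The first equality in \eqref{eq:market-price-risk-option}, which is legitimate because $\sigma_\smallertext{S}\geq\sigma>0$, turns this into $r_f(p_t)-q_\smallertext{S}(p_t)$. This gives \eqref{eq:risk-neutral-stock-level}, and the diffusion coefficient $\sigma_\smallertext{S}$ is unchanged. For the belief, the drift becomes $B(p_t)-\chi(p_t)\lambda(p_t)=\widetilde B(p_t)$, which gives \eqref{eq:risk-neutral-belief}. Substituting the second expression for $\lambda$ in \eqref{eq:market-price-risk-option} then produces the explicit form $B-\gamma\sigma\chi-(1-\theta)\chi^2\ell'$.

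The only step that is not purely mechanical is justifying the closed form $\lambda=\gamma\sigma+(1-\theta)\chi\ell'$, which the paper states just before the proposition. If it has to be verified, I would do the following. Write $\varphi'/\varphi=\ell'$, $\varphi''/\varphi=\ell''+\ell'^2$, $A=-\sigma\chi$, $\alpha=\chi^2/2$, and $d=-\sigma\chi\ell'$. Divide \eqref{eq:two-state-varphi-ode} by $\varphi$ to express $q_\smallertext{S}=1/\varphi$ in terms of $\ell',\ell''$. Substitute this into $\mu_\smallertext{S}+q_\smallertext{S}-r_f$, using $\beta=B+\eta p(1-p)=B+(1-\gamma)\sigma\chi$. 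The $\ell''$ and $B\ell'$ terms cancel, and the $g$ terms cancel through $c/\theta-g/\psi+g=\kappa/\theta$. The residual is then $\sigma_\smallertext{S}\bigl(\gamma\sigma+(1-\theta)\chi\ell'\bigr)$.

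The main obstacle is exactly this algebraic bookkeeping. Everything else in the proof is a direct application of Girsanov's theorem.
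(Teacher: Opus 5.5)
Your proposal is correct and takes the paper's approach: Girsanov's theorem, then the substitution $\mathrm d\overline W_t=\mathrm d\widetilde W_t-\lambda(p_t)\,\mathrm dt$ into the physical stock and filter equations, then the identity $\mu_S+q_S-\sigma_S\lambda=r_f$, which follows from the definition of $\lambda$. Your additional check of the closed form $\lambda=\gamma\sigma+(1-\theta)\chi\ell'$, which the paper only asserts, is also correct, including the cancellation $c/\theta+g-g/\psi=\kappa/\theta$; at $p\in\{0,1\}$ the identity holds either by continuity of the smooth $\varphi$ or directly from the boundary conditions \eqref{eq:two-state-varphi-boundary}.
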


\begin{proof}
Using the definition of $\lambda(p)$,
\[
  \mu_\smallertext{S}(p)+q_\smallertext{S}(p)-\sigma_\smallertext{S}(p)\lambda(p)
  =
  r_f(p), \; p\in[0,1].
\]
Therefore the \emph{ex-dividend} stock drift under $\mathbb Q$ is
\[
 \mu_\smallertext{S}(p)-\sigma_\smallertext{S}(p)\lambda(p)
 =
 r_f(p)-q_\smallertext{S}(p),\; p\in[0,1].
\]
This gives \eqref{eq:risk-neutral-stock-level}. Similarly, substituting
$\mathrm{d}\overline W_t=\mathrm{d}\widetilde W_t-\lambda(p_t)\mathrm{d}t$ into
\eqref{eq:filtering-p-option} gives
\[
 \mathrm{d}p_t
 =
 \big(
 B(p_t)-\chi(p_t)\lambda(p_t)
 \big)\mathrm{d}t
 +
 \chi(p_t)\mathrm{d}\widetilde W_t.
\]
This is \eqref{eq:risk-neutral-belief}.
\end{proof}

\medskip

The measure change changes the drift of the belief process, but not its
diffusion coefficient. This is important for the option-pricing PDE:
under $\mathbb Q$, stock returns and belief revisions are still driven
by the same Brownian innovation. Writing $X_t\coloneqq \log S_t$, the
risk-neutral log-stock dynamics are
\begin{equation*}
 \mathrm{d}X_t
 =
 b_\smallertext{X}(p_t)\mathrm{d}t
 +
 \sigma_\smallertext{S}(p_t)\mathrm{d}\widetilde W_t,
\end{equation*}
where
\begin{equation*}
 b_\smallertext{X}(p)
 \coloneqq 
 r_f(p)-q_\smallertext{S}(p)-\frac12\sigma_\smallertext{S}^2(p), \; p\in[0,1].
\end{equation*}

\subsection{European option prices}

Let $C^{\mathrm{lev}}(S,p,\tau;K)$ be the time-$t$ price of a European call option
with time to maturity $\tau=T-t$, strike $K$, and payoff
$(S_T-K)^{\smallertext{+}}$. In the remainder of this section, $C$ denotes the same surface
in log-stock coordinates:
\[
 C(X,p,\tau;K)\coloneqq
 C^{\mathrm{lev}}(\mathrm{e}^\smallertext{X},p,\tau;K).
\]
Under \Cref{assum:risk-neutral-martingale}, the arbitrage-free price under
the corresponding measure $\mathbb Q$ is
\begin{equation}
\label{eq:call-price-expectation}
 C(X,p,\tau;K)
 =
 \mathbb E^{\mathbb Q}_{X,p}
 \Bigg[
 \exp\bigg(
 -\int_{\smallertext{0}}^\tau r_f(p_s)\mathrm{d}s
 \bigg)
 \big(\mathrm{e}^{\smallertext{X}_\smalltext{\tau}}-K\big)^{\smallertext{+}}
 \Bigg].
\end{equation}
Here $\mathbb E^{\mathbb Q}_{X,p}$ denotes expectation for the
risk-neutral Markov process $(X_t,p_t)$ started from $(X,p)$ at time zero. Equivalently, in terms of the physical measure and the stochastic discount
factor
\[
 M_t
 \coloneqq
 \exp\bigg(-\int_{\smallertext{0}}^t r_f(p_s)\mathrm{d}s\bigg)L_t,\; t\in[0,T],
\]
the same price is
\[
 C_t\coloneqq C(X_t,p_t,T-t;K)
 =
 \frac{1}{M_t}
 \mathbb E
 \big[
 M_\smallertext{T}(S_\smallertext{T}-K)^{\smallertext{+}}
 \big|\mathcal F_t^\smallertext{D}
 \big],\; t\in[0,T].
\]
Hence the measure change does not remove discounting: the discount factor is
explicit in \eqref{eq:call-price-expectation}, and the dividend yield enters
through the \emph{ex-dividend} stock drift $r_f-q_\smallertext{S}$.

\begin{theorem}[Option-pricing PDE]
\label{thm:option-PDE}
Under {\rm\Cref{assum:risk-neutral-martingale}}, let $C$ be given by
\eqref{eq:call-price-expectation}. If, for each $K$, the map
\[
(0,+\infty)\times\mathbb R\times(0,1)\ni  (\tau,X,p)\longmapsto C(X,p,\tau;K)\in\mathbb R,
\]
belongs to
$C^{1,2}\big((0,+\infty)\times\mathbb R\times(0,1)\big)$, then it satisfies, for
$X\in\mathbb R$,
$p\in(0,1)$, and $\tau>0$
\begin{equation}
\label{eq:option-PDE}
\begin{aligned}
  \partial_\tau C
  &=
  \frac12\sigma_\smallertext{S}^2(p)\partial_{\smallertext{X}\smallertext{X}}C
  +
  \sigma_\smallertext{S}(p)\chi(p)\partial_{Xp}C
  +
  \frac12\chi^2(p)\partial_{pp}C
  +
  \bigg(
    r_f(p)-q_\smallertext{S}(p)-\frac12\sigma_\smallertext{S}^2(p)
  \bigg)\partial_\smallertext{X} C
  +
  \widetilde B(p)\partial_p C
  -
  r_f(p)C,
\end{aligned}
\end{equation}
with boundary condition
\[
  C(X,p,0;K)=(\mathrm{e}^\smallertext{X}-K)^{\smallertext{+}}.
\]

If additionally
\[
  \chi(p)\partial_{\smallertext{X}p}C(X,p,\tau;K)\longrightarrow0,\;
  \chi^2(p)\partial_{pp}C(X,p,\tau;K)\longrightarrow0,\; \text{\rm as $p\downarrow0$ and as $p\uparrow1$,}
\]  
then we have
\begin{align*}
  \partial_\tau C(X,0,\tau;K)
  &=
  \frac12\sigma_\smallertext{S}^2(0)\partial_{\smallertext{X}\smallertext{X}}C(X,0,\tau;K)
  +
  \bigg(
    r_f(0)-q_\smallertext{S}(0)-\frac12\sigma_\smallertext{S}^2(0)
  \bigg)\partial_\smallertext{X} C(X,0,\tau;K)
  +
  \lambda^{\smallertext{2}\smallertext{1}}\partial_p C(X,0,\tau;K)\\
  &\quad
  -
  r_f(0)C(X,0,\tau;K),
  \\
  \partial_\tau C(X,1,\tau;K)
  &=
  \frac12\sigma_\smallertext{S}^2(1)\partial_{\smallertext{X}\smallertext{X}}C(X,1,\tau;K)
  +
  \bigg(
    r_f(1)-q_\smallertext{S}(1)-\frac12\sigma_\smallertext{S}^2(1)
  \bigg)\partial_\smallertext{X} C(X,1,\tau;K)
  -
  \lambda^{\smallertext{1}\smallertext{2}}\partial_p C(X,1,\tau;K)\\
&\quad  -
  r_f(1)C(X,1,\tau;K).
\end{align*}
\end{theorem}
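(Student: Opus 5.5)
The plan is the standard Feynman--Kac route: exhibit a martingale built from the candidate price, then read off the generator. Fix $K$ and a maturity $T$, and work under $\mathbb Q=\mathbb Q^T$ from \Cref{assum:risk-neutral-martingale}. By \Cref{prop:risk-neutral-dynamics-option}, the pair $(X_t,p_t)$ is a time-homogeneous Markov diffusion under $\mathbb Q$, driven by the single Brownian motion $\widetilde W$. Its coefficients are $(b_X,\sigma_S)$ and $(\widetilde B,\chi)$. All coefficients are continuous and bounded on $[0,1]$, because $\varphi$ is smooth and positive by \Cref{thm:markovian-smoothness}, so $\ell'$, $r_f$, $q_S$ and $\lambda$ are smooth and bounded.

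Define the discounted price process
\[
N_t\coloneqq \exp\bigg(-\int_0^t r_f(p_s)\,\mathrm ds\bigg)\,C(X_t,p_t,T-t;K), \quad t\in[0,T].
\]
By the Markov property and the tower property, $N_t=\mathbb E^{\mathbb Q}\big[N_T\,\big|\,\mathcal F_t^D\big]$, so $N$ is a true $\mathbb Q$-martingale on $[0,T]$.

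1. \textbf{Itô step in the interior.} The process $p$ started in $(0,1)$ stays in $(0,1)$, since the boundary drift is inward and $\chi$ vanishes there. So on $[0,T)$ we may apply Itô's formula to $N$ using the assumed $C^{1,2}$ regularity, localising with exit times from compact sets $[\epsilon,1-\epsilon]\times[-M,M]$ and $t\le T-\epsilon$.
   - The cross-variation $\mathrm d\langle X,p\rangle_t=\sigma_S\chi\,\mathrm dt$ produces the mixed term $\sigma_S\chi\,\partial_{Xp}C$.
   - Because $\mathrm d(T-t)=-\mathrm dt$, the $-\partial_\tau C$ term enters the drift.
   - The drift of $N$ is therefore $\mathrm e^{-\int r_f}$ times (right side of \eqref{eq:option-PDE} minus $\partial_\tau C$), evaluated along the path.

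2. \textbf{Drift vanishes.} The stopped process $N$ is a continuous martingale, and the stopped local martingale part is a genuine martingale because its integrand is bounded on the compact set. Hence the stopped finite-variation part is a continuous martingale of finite variation and vanishes. Consequently $\int_0^{t\wedge\rho}(\cdots)\,\mathrm ds=0$ almost surely for the localising time $\rho$.

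3. \textbf{From the path to every point.} Dividing by $t$ and letting $t\downarrow0$, continuity of the integrand gives the identity at the starting point $(X,p,T)$. Since $T$, $X$ and $p\in(0,1)$ are arbitrary, \eqref{eq:option-PDE} holds everywhere on the stated domain.

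4. \textbf{Initial condition.} The condition $C(X,p,0;K)=(\mathrm e^X-K)^+$ is immediate from \eqref{eq:call-price-expectation} with $\tau=0$.

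5. \textbf{Endpoint equations.} Let $p\downarrow0$ in \eqref{eq:option-PDE}.
   - The terms $\sigma_S\chi\,\partial_{Xp}C$ and $\tfrac12\chi^2\partial_{pp}C$ vanish by hypothesis, using that $\sigma_S$ is bounded.
   - The coefficients $\sigma_S$, $r_f$, $q_S$ converge to their values at $0$. We also have $\widetilde B(0)=B(0)=\lambda^{21}$, since $\chi(0)=0$ and $\lambda$ is bounded.
   - To pass to the limit I will interpret $C(\cdot,0,\cdot)$, $\partial_\tau C$, $\partial_X C$, $\partial_{XX}C$ and $\partial_pC$ at $p=0$ as the limits of the interior quantities. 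This requires continuity of $C$ and of these derivatives up to $p=0$, which I read as implicit in the statement (one-sided derivatives at the endpoints).
   - The case $p\uparrow1$ is symmetric, with $\widetilde B(1)=B(1)=-\lambda^{12}$.

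The main obstacle is step 1: justifying Itô's formula and the martingale identification when $C$ is only known to be $C^{1,2}$ in the open region. Localisation handles this, but the degeneracy at the endpoints of $(0,1)$ means one must confirm that the stopping times tend to $T$ almost surely. This uses non-attainment of $\{0,1\}$ from the interior: near $0$, the drift $\widetilde B\approx\lambda^{21}>0$ while the diffusion is $O(p)$, so the Feller test classifies $0$ as an entrance or non-attracting boundary. Step 3 needs only local behaviour, so even this point can be avoided by stopping at a small deterministic time together with the exit time from a neighbourhood of the starting point.
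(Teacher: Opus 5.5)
Your proposal is correct and follows the paper's route: the paper writes down the generator $\mathcal L^{\mathbb Q}$ of $(X,p)$ and invokes ``the Feynman--Kac calculation'' to obtain $\partial_\tau C=\mathcal L^{\mathbb Q}C-r_fC$. It then uses $\chi(0)=\chi(1)=0$, $\widetilde B(0)=\lambda^{21}$, $\widetilde B(1)=-\lambda^{12}$ and the trace hypotheses for the endpoint equations, exactly as in your Steps 1--5. Your localised martingale argument only makes that Feynman--Kac step explicit, and you are right that the endpoint equations tacitly require continuous traces of $C$ and its derivatives at $p\in\{0,1\}$, which the paper imposes explicitly only in the characteristic-function theorem.
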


\begin{proof}
The generator of the two-dimensional Markov process $(X,p)$ under
$\mathbb Q$ is
\[
\begin{aligned}
  \mathcal L^{\mathbb Q}F
  &=
  b_\smallertext{X}(p)\partial_X F
  +
  \widetilde B(p)\partial_p F
  +
  \frac12\sigma_\smallertext{S}^2(p)\partial_{XX}F
  +
  \sigma_\smallertext{S}(p)\chi(p)\partial_{Xp}F
  +
  \frac12\chi^2(p)\partial_{pp}F.
\end{aligned}
\]
Under the regularity stated in the theorem, the Feynman--Kac calculation
applied to \eqref{eq:call-price-expectation} gives
\[
  \partial_\tau C
  =
  \mathcal L^{\mathbb Q}C-r_f(p)C,
\]
which yields \eqref{eq:option-PDE}. At the endpoints
\[
  \chi(0)=\chi(1)=0,
  \;
  \widetilde B(0)=\lambda^{\smallertext{2}\smallertext{1}},
  \;
  \widetilde B(1)=-\lambda^{\smallertext{1}\smallertext{2}}.
\]
The additional trace conditions in the theorem remove the second-order
belief and mixed terms and give the two endpoint equations stated above.
\end{proof}

\Cref{eq:option-PDE} has the same filtering-induced covariance structure as
the pricing equations of \citeauthor*{david2002option}
\cite[Equations~(9)--(11), pages~9--11]{david2002option}. In both models,
the stock price and posterior belief are state variables, and the common
innovation generates the mixed derivative. In their time-additive model,
the price--dividend ratio is affine in beliefs
\cite[Proposition~1(a)]{david2002option}; here it is generally nonlinear, and
the equilibrium pricing equation determines the function $\varphi$ that
enters $\sigma_\smallertext{S}$, $q_\smallertext{S}$, and $\widetilde B$.

\medskip
\Cref{app:characteristic-functions} gives the characteristic-function PDE and the Fourier representation of call prices. The remainder of the main text derives a conditional short-maturity return-skewness result and then examines its implied-volatility counterpart numerically using the equilibrium solution and marginal short rate.

\subsection{Short-maturity return skewness and the implied-volatility mechanism}

One channel shaping the option surface is the state dependence of
\[
 \sigma_\smallertext{S}(p)
 =
 \sigma+\ell^{\prime}(p)\chi(p),\; p\in[0,1].
\]
A positive dividend innovation raises both current dividends and the
posterior probability $p$. Under the standing assumptions of this section, the
price--dividend ratio is non-decreasing in $p$, so the two channels
reinforce one another and learning weakly raises stock volatility above
dividend volatility.

\medskip
The skew mechanism is suggested by the fact that the innovation also moves
future volatility. Fix $t\geq0$ and evaluate the conditional expansion at a
current state $p_t=p\in(0,1)$. Write
$\sigma_{\smallertext{0}}\coloneqq\sigma_\smallertext{S}(p)$, $\sigma_{\smallertext{0}}^\prime\coloneqq\sigma_\smallertext{S}^\prime(p)$, and
$\chi_{\smallertext{0}}\coloneqq\chi(p)$. For small maturity $\tau$, the local
Itô--Taylor expansion takes the form
\begin{equation}
\label{eq:ito-taylor-short-time}
 X_{t+\tau}-X_t
 =
 b_\smallertext{X}(p)\tau
 +
 \sigma_{\smallertext{0}}\Delta\widetilde W
 +
 \frac12\sigma_{\smallertext{0}}^{\prime}\chi_{\smallertext{0}}
 \big(
 (\Delta\widetilde W)^2-\tau
 \big)
 +
 R_\tau,
\end{equation}
where $\Delta\widetilde W\coloneqq \widetilde W_{t+\tau}-\widetilde W_t.$ Convergence in probability of the remainder is not enough to expand a third
moment. The proposition below therefore imposes the required
control explicitly.

\begin{proposition}[Leading short-maturity conditional return skewness]
\label{prop:short-maturity-skew-sign}
At a current state $p_t=p\in(0,1)$, under
{\rm\Cref{assum:risk-neutral-martingale}}, suppose that the remainder in
{\rm\Cref{eq:ito-taylor-short-time}} satisfies
\[
 \mathbb E^{\mathbb Q}\big[|R_\tau|^3\big|\Fc_t^\smallertext{D}\big]^{1/3}=o(\tau).
\]
Then
\[
  \mathbb E^{\mathbb Q}\bigg[
    \bigg(
      X_{t+\tau}-X_t
      -\mathbb E^{\mathbb Q}[X_{t+\tau}-X_t|\Fc_t^\smallertext{D}]
    \bigg)^3\bigg|\Fc_t^\smallertext{D}
  \bigg]
  =
  3\sigma_\smallertext{S}^2(p)\chi(p)\sigma_\smallertext{S}^\prime(p)\tau^2+o(\tau^2).
\]
Moreover, the conditional variance is
$\sigma_\smallertext{S}^2(p)\tau+o(\tau)$, and hence the standardised
conditional skewness satisfies
\[
  \operatorname{Skew}^{\mathbb Q}_t(\tau)
  =
  \frac{3\chi(p)\sigma_\smallertext{S}^{\prime}(p)}
  {\sigma_\smallertext{S}(p)}
  \sqrt{\tau}
  +
  o(\sqrt{\tau}).
\]
Thus $\chi(p)\sigma_\smallertext{S}^{\prime}(p)$ determines the leading
sign of short-maturity conditional return skewness.

\medskip
If, in addition,
$\sigma_\smallertext{S}$ is increasing on $(0,p^\star)$ and
decreasing on $(p^\star,1)$, then the leading skewness is positive below
$p^\star$, zero at $p^\star$, and negative above $p^\star$.
\end{proposition}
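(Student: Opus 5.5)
We work conditionally on $\mathcal F_t^{D}$ with $p_t=p$. By the Markov property, the increment $\Delta\widetilde W$ is independent of $\mathcal F_t^{D}$ under $\mathbb Q$ and distributed as $N(0,\tau)$. Write the Itô--Taylor expansion \eqref{eq:ito-taylor-short-time} as $Y\coloneqq X_{t+\tau}-X_t=b_X(p)\tau+U+V+R_\tau$, where $U\coloneqq\sigma_0\Delta\widetilde W$ and $V\coloneqq\frac12\sigma_0'\chi_0((\Delta\widetilde W)^2-\tau)$.

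Step 1 computes the moments of the Gaussian part. We have $\mathbb E[U]=\mathbb E[V]=0$, $\mathbb E[U^2]=\sigma_0^2\tau$ and $\mathbb E[V^2]=\frac12(\sigma_0'\chi_0)^2\tau^2$. For the odd moments, $\mathbb E[U^3]=0$ and $\mathbb E[UV]=0$. The key mixed moment is
\[
3\mathbb E[U^2V]=\tfrac32\sigma_0^2\sigma_0'\chi_0\big(\mathbb E[(\Delta\widetilde W)^4]-\tau\,\mathbb E[(\Delta\widetilde W)^2]\big)=3\sigma_0^2\sigma_0'\chi_0\tau^2.
\]
The remaining terms are $3\mathbb E[UV^2]=0$ by odd symmetry and $\mathbb E[V^3]=O(\tau^3)$. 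Hence $\mathbb E[(U+V)^3]=3\sigma_0^2\chi_0\sigma_0'\tau^2+O(\tau^3)$. The deterministic drift $b_X(p)\tau$ cancels after centring.

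Step 2 controls the remainder. Set $G\coloneqq U+V$ and let $R'\coloneqq R_\tau-\mathbb E[R_\tau|\mathcal F_t^D]$. Jensen's inequality gives $\|R'\|_3\le2\|R_\tau\|_3=o(\tau)$, where $\|\cdot\|_3$ is the conditional $\mathbb L^3$ norm. The centred return is then $G+R'$. Expand $(G+R')^3-G^3=3G^2R'+3GR'^2+R'^3$ and bound each term with Hölder's inequality in the form $\mathbb E|G^2R'|\le\|G\|_3^2\|R'\|_3$, and similarly for the others. Since $\|G\|_3=O(\sqrt\tau)$, these terms are $O(\tau)o(\tau)$, $O(\sqrt\tau)o(\tau^2)$ and $o(\tau^3)$, so all are $o(\tau^2)$. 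This yields the third-moment formula. The same argument with exponent $2$ gives the variance: $\mathbb E[(G+R')^2]=\sigma_0^2\tau+O(\tau^2)+2\mathbb E[GR']+\mathbb E[R'^2]$, where $\mathbb E|GR'|\le\|G\|_2\|R'\|_2=O(\sqrt\tau)o(\tau)$. The variance is therefore $\sigma_0^2\tau+o(\tau)$. Dividing by $(\sigma_0^2\tau)^{3/2}(1+o(1))$ gives the standardised skewness $3\chi_0\sigma_0'\sigma_0^{-1}\sqrt\tau+o(\sqrt\tau)$.

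Step 3 handles the sign statement. On $(0,1)$ we have $\chi(p)>0$, and $\sigma_S\ge\sigma>0$ by the monotonicity results of \Cref{subsec:markovian-shape-monotonicity}, since $\eta>0$. The sign of the leading coefficient is therefore the sign of $\sigma_S'(p)$. Under unimodality with a smooth maximum at $p^\star$, $\sigma_S$ is smooth by \Cref{thm:markovian-smoothness}, so $\sigma_S'(p^\star)=0$. For the claimed strict signs on each side, I would read "increasing on $(0,p^\star)$" as $\sigma_S'>0$ there, and likewise $\sigma_S'<0$ on $(p^\star,1)$. If only weak monotonicity is given, I would state the signs as nonnegative and nonpositive, since strict monotonicity alone does not force $\sigma_S'\neq0$ pointwise.

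The main obstacle is Step 2, the remainder bookkeeping. Its cleanest resolution is the uniform Hölder bound in conditional $\mathbb L^3$; the assumed $o(\tau)$ rate for $R_\tau$ is exactly what makes every cross term $o(\tau^2)$. The sign claim at points where $\sigma_S'=0$ also needs the interpretive care noted in Step 3.
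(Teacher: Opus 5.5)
Your proof is correct and follows the paper's argument: the same decomposition into the Gaussian part $\sigma_0\Delta\widetilde W+\tfrac12\sigma_0'\chi_0\big((\Delta\widetilde W)^2-\tau\big)$, the same key moment $\mathbb E[\xi^2(\xi^2-1)]=2$ producing the $3\sigma_0^2\sigma_0'\chi_0\tau^2$ term, and Hölder's inequality to make every term involving the centred remainder $o(\tau^2)$, with the variance handled the same way. Your Step 3 spells out the unimodal sign claim, which the paper's proof leaves implicit, and your caveat is a fair refinement: strict monotonicity only gives $\sigma_S'\geq0$ (resp.\ $\leq0$) pointwise, so the strict signs require $\sigma_S'\neq0$ away from $p^\star$.
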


\medskip

For the marginal short rate
\eqref{eq:marginal-short-rate-option}, the remainder estimate follows
from the standing assumptions. Indeed,
\Cref{thm:markovian-smoothness} gives
$\varphi\in C^\infty([0,1])$, hence $r_f\in C^\infty([0,1])$,
while $\sigma_\smallertext{S}\geq\sigma>0$. Consequently,
$b_\smallertext{X}$, $\widetilde B$,
$\sigma_\smallertext{S}$, and $\chi$ are twice continuously
differentiable with bounded derivatives on $[0,1]$. The order-one
Itô--Taylor expansion then gives
\[
  \mathbb E^{\mathbb Q}
  \big[|R_\tau|^3\big|\Fc_t^\smallertext{D}\big]^{1/3}
  =
  O(\tau^{3/2}),
\]
see \citeauthor*{kloeden1992numerical}
\cite[Chapter~5]{kloeden1992numerical}.

\begin{proof}
Let $\Delta\widetilde W=\sqrt{\tau}\xi$, where $\xi\sim N(0,1)$. Then
after centring,
\[
  X_{t+\tau}-X_t
  -
  \mathbb E^{\mathbb Q}
  [X_{t+\tau}-X_t|\Fc_t^\smallertext{D}]
  =
  \sigma_{\smallertext{0}}\sqrt{\tau}\xi
  +
  \frac12\sigma_{\smallertext{0}}^\prime\chi_{\smallertext{0}}\tau(\xi^2-1)
  +
  \overline R_\tau,
\]
where
$\mathbb E^{\mathbb Q}
[|\overline R_\tau|^3|\Fc_t^\smallertext{D}]^{1/3}=o(\tau)$.
Hölder's inequality makes every term containing $\overline R_\tau$
negligible at order $\tau^2$. The only nonzero term of order
$\tau^2$ is
\[
  3\sigma_{\smallertext{0}}^2\tau
  \bigg(
    \frac12\sigma_{\smallertext{0}}^{\prime}\chi_{\smallertext{0}}\tau
  \bigg)
  \mathbb E[\xi^2(\xi^2-1)]
  =
  3\sigma_{\smallertext{0}}^2\sigma_{\smallertext{0}}^\prime\chi_{\smallertext{0}}\tau^2,
\]
because $\mathbb E[\xi^2(\xi^2-1)]=3-1=2$. The same expansion gives
$\operatorname{Var}^{\mathbb Q}
(X_{t+\tau}-X_t|\Fc_t^\smallertext{D})
=\sigma_{\smallertext{0}}^2\tau+o(\tau)$, proving the standardised formula.
\end{proof}

\citeauthor*{david2002option}
\cite[Section~5.2, pages~18--19]{david2002option} describe the same economic
sign mechanism: where stock volatility decreases with the posterior belief,
bad news lowers the posterior and raises volatility, producing negative
return--volatility dependence and an asymmetric smile. The proposition
above isolates the corresponding infinitesimal log-return effect through
$\chi\sigma_\smallertext{S}^{\prime}$. It does not by itself establish
the slope of implied volatility, which is examined numerically below.

\medskip
For the interpretation of the numerical results, a positive
implied-volatility slope means that out-of-the-money calls are expensive
relative to symmetric out-of-the-money puts, whereas a negative slope means
that the puts are expensive. Under the unimodality condition in
\Cref{prop:short-maturity-skew-sign}, the leading return-skewness coefficient
suggests the first pattern
below $p^\star$ and the second above $p^\star$. Whether this prediction
persists at a given finite maturity depends on the full option-pricing
dynamics. The numerical demonstration below examines this
mechanism under the equilibrium calibration.

\medskip
This price statement can be made explicit with a symmetric risk reversal.
For maturity $T$, define the zero-coupon bond price and forward stock price by
\[
 \mathcal B_{t\smallertext{T}}
 \coloneqq\mathbb E^{\mathbb Q}\bigg[
   \mathrm e^{-\int_t^\smallertext{T} r_f(p_s)\mathrm ds}
   \bigg|\mathcal F_t^\smallertext{D}\bigg],
 \;
 F_{t\smallertext{T}}
 \coloneqq\frac{\mathbb E^{\mathbb Q}\bigg[
   \mathrm e^{-\int_t^\smallertext{T} r_f(p_s)\mathrm ds}S_\smallertext{T}
   \bigg|\mathcal F_t^\smallertext{D}\bigg]}{\mathcal B_{t\smallertext{T}}}.
\]
For $k>0$, define
\[
\mathrm{RR}_k(p,\tau)
\coloneqq
\sigma_{\rm imp}(k,p,\tau)-\sigma_{\rm imp}(-k,p,\tau),
\]
where $\sigma_{\rm imp}(k,p,\tau)$ denotes the Black implied volatility at
current belief $p$, maturity $\tau$, and forward log-moneyness
$k=\log(K/F)$, with $F=F_{t\smallertext{T}}$. Since option prices are
non-decreasing in implied volatility, ${\rm RR}_k>0$ means that the
out-of-the-money call with $K=F\mathrm e^k$ is rich relative to the
symmetric out-of-the-money put with $K=F\mathrm e^{-k}$. Conversely,
${\rm RR}_k<0$ is the usual equity-index configuration: downside puts
carry the larger volatility and price premium.

\begin{figure}[ht!]
 \centering
 \includegraphics[width=0.98\textwidth]{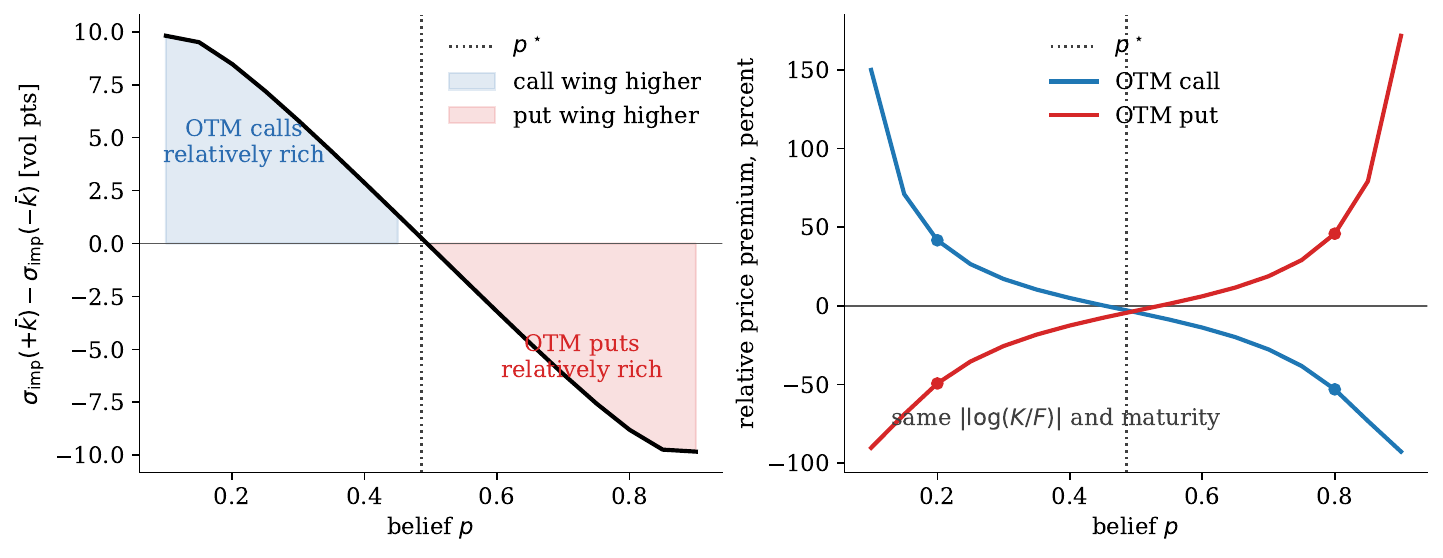}
 \caption{\small Equilibrium risk reversals and option premia at $\tau=0.10$ and
 $\bar k=0.07$. The left panel shows ${\rm RR}_{\bar k}$ computed from
 discounted option prices, in volatility percentage points. The right panel
 shows relative wing-price premia, in percent, over a Black price
 using the same bond price, forward and at-the-money implied volatility.
 The vertical line marks the volatility maximum.
 }
 \label{fig:risk-reversal-call-put}
\end{figure}

\medskip

The point $p^\star$ should be interpreted as the belief at which
instantaneous stock volatility is maximal. It is not, in general, the
mean-reversion point of the risk-neutral belief process. The latter is
determined by the zero of the risk-neutral drift
\[
 \widetilde B(p)=B(p)-\chi(p)\lambda(p),\; p\in(0,1).
\]
Thus the leading short-maturity return skewness is controlled by the
slope of $\sigma_\smallertext{S}(p)$, whereas the finite- and long-maturity
option surfaces also depend on the full risk-neutral dynamics of $p$.

\medskip
This interpretation is consistent with the empirical option-pricing
literature. \citeauthor*{dumas1998implied} \cite{dumas1998implied} document
that implied volatilities vary systematically with strike and maturity,
contrary to the flat Black--Scholes benchmark. For broad equity indices,
the dominant empirical pattern after the 1987 crash is a negative skew:
out-of-the-money puts are expensive because option-implied distributions
assign large probability mass to downside tail events; see
\citeauthor*{jackwerth1996recovering} \cite{jackwerth1996recovering} and \citeauthor*{bates2000post} \cite{bates2000post}. This corresponds in our
model to the region $p>p^\star$, where bad news raises future
volatility. At the same time, order-flow evidence shows that the wing
that becomes expensive depends on the market: index option markets are
especially sensitive to put buying pressure, while single-stock option
markets can be more affected by call buying pressure; see
\citeauthor*{bollen2004net} \cite{bollen2004net}. Consistently, \citeauthor*{bakshi2003stock}
\cite{bakshi2003stock} find that individual equity option-implied
distributions are far less negatively skewed than the market-index
distribution. The sign-changing skew in the model should therefore be
read as a state-dependent mechanism: the usual index put skew appears in
states where downside news increases future volatility, while call-rich
surfaces can arise in recovery states where upside news moves beliefs
towards the high-volatility region.

\subsection{Numerical illustration}
\label{subsec:option-numerics}
For the option-pricing illustration, we set
\[
g^{\smallertext{1}}=0.05,\;
g^{\smallertext{2}}=-0.05,\;
\sigma=0.04,\;
\delta=0.06,\;
\lambda^{\smallertext{1}\smallertext{2}}=
\lambda^{\smallertext{2}\smallertext{1}}=0.05,\;
\gamma=0.8,\; \theta=0.25,\; \psi=5,
\]
and use the maturity $\tau=0.10$.
These parameters satisfy $\min_{p\in[0,1]}c(p)=0.005128>0$.
The price--dividend ratio is the numerical solution
of \eqref{eq:two-state-varphi-ode}--\eqref{eq:two-state-varphi-boundary}.
We compute $r_f$ from \eqref{eq:marginal-short-rate-option}, use the
dividend yield $q_\smallertext{S}=1/\varphi$, and obtain the risk-neutral
belief drift from \eqref{eq:market-price-risk-option} and
\eqref{eq:risk-neutral-belief}. The volatility maximum is attained near
$p^\star=0.486$. Stock volatility ranges from $4\%$ to approximately
$35.6\%$, while the selected short rate ranges from $1.17\%$ to $6.92\%$.
The relatively large growth spread compared with dividend volatility
makes beliefs responsive to news and the learning contribution to stock
volatility substantial. This is a stylised illustration of the mechanism,
not an empirical calibration.

\medskip

We simulate the stock and beliefs jointly under
$\mathbb Q$, using log-Euler steps for the stock and Euler steps for
$\log(p/(1-p))$ to keep beliefs in $(0,1)$. Each path uses 512 time
steps, with antithetic Brownian increments. The smile and return-density
plots use 1,048,576 paths per initial belief; the risk-reversal and
at-the-money-skew plots use 262,144. Discounting is evaluated along each
path using $\exp(-\int_0^\tau r_f(p_s)\mathrm ds)$.

\medskip

Bond and forward prices are computed from the same
simulation, and strikes are expressed in forward log-moneyness
$\log(K/F_{0T})$, with $S_{\smallertext{0}}=1$. A lognormal stock with
coefficients frozen at the initial belief and driven by the same Brownian
increments provides Black-price control variates for option prices and
the prepaid forward, with control coefficients estimated on separate halves
of the antithetic sample. Implied volatilities are obtained by inverting the
Black formula with the simulated bond and forward prices.

\medskip
\Cref{fig:option-smiles-beliefs,fig:option-skew-belief} show the same
mechanism directly in option-implied volatilities. The curves have opposite slopes at low and high beliefs,
with visible curvature. At $p^\star$, the implied-volatility curve is
hump-shaped rather than flat.
In \Cref{fig:option-skew-belief}, the solid curve is the finite-difference
at-the-money skew obtained from the simulated option prices. The dashed curve
is the short-maturity proxy
$\chi(p)\sigma_\smallertext{S}^{\prime}(p)/\sigma_\smallertext{S}(p)$,
rescaled to the same vertical
range.

\begin{figure}[ht!]
 \centering
 \begin{minipage}[t]{0.49\textwidth}
 \centering
 \includegraphics[width=\textwidth]{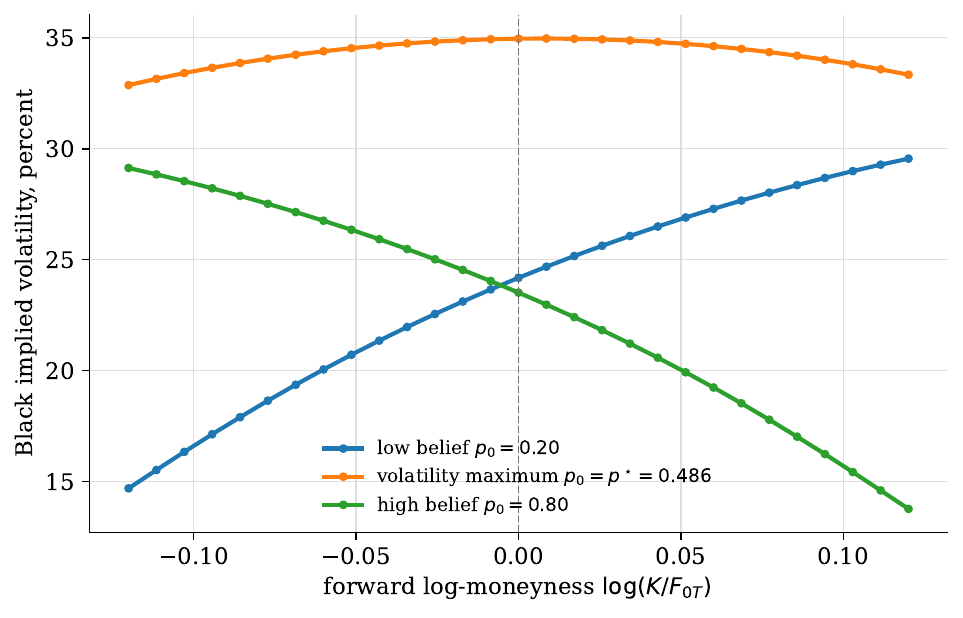}
 \caption{\small Equilibrium implied volatilities at initial beliefs $0.20$, $p^\star$
 and $0.80$, with the marginal short rate and dividend yield included.
 The low-belief smile slopes upwards and the high-belief smile slopes
 downwards; the curve at $p^\star$ is hump-shaped.
 }
 \label{fig:option-smiles-beliefs}
 \end{minipage}
 \hfill
 \begin{minipage}[t]{0.49\textwidth}
 \centering
 \includegraphics[width=\textwidth]{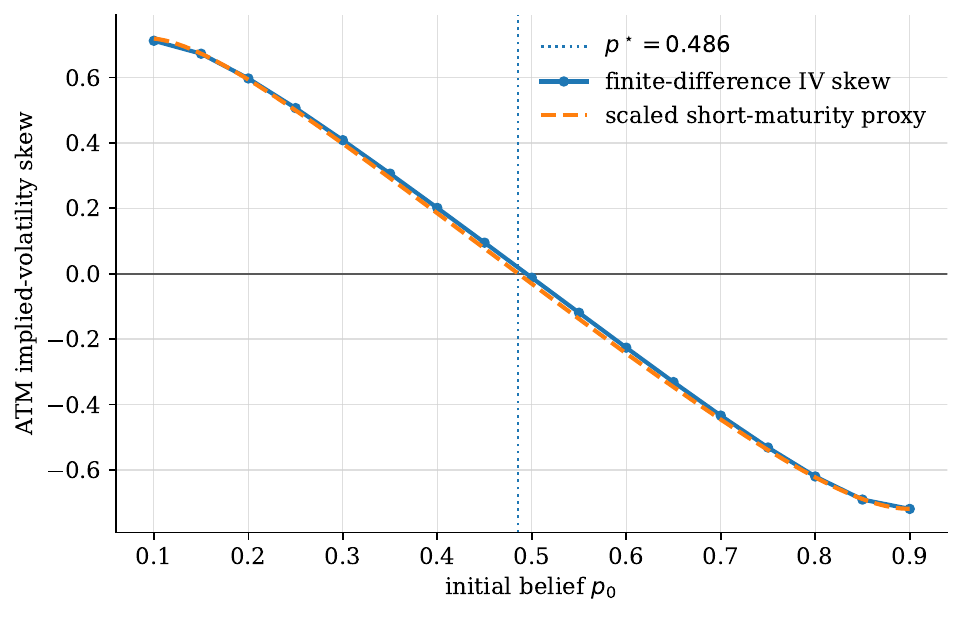}
 \caption{\small At-the-money implied-volatility skew as a function of the initial
 belief. The finite-difference skew from simulated option prices
 tracks the scaled short-maturity proxy
 $\chi(p)\sigma_\smallertext{S}^{\prime}(p)/\sigma_\smallertext{S}(p)$
 and changes sign near the belief $p^\star$ at which
 $\sigma_\smallertext{S}(p)$ is maximised.
 }
 \label{fig:option-skew-belief}
 \end{minipage}
\end{figure}

\medskip

\Cref{fig:shock-decomposition} gives the pathwise interpretation. The
same Brownian innovation has a direct stock-return component and an
information component through beliefs. When belief is low, a positive shock
moves the posterior towards the high-volatility region. When belief is high,
a negative shock does the same.
The numerical implied-volatility skew follows the same state-dependent
pattern; it is not imposed through an exogenous leverage effect.

\begin{figure}[H]
  \centering
  \includegraphics[width=0.70\textwidth]{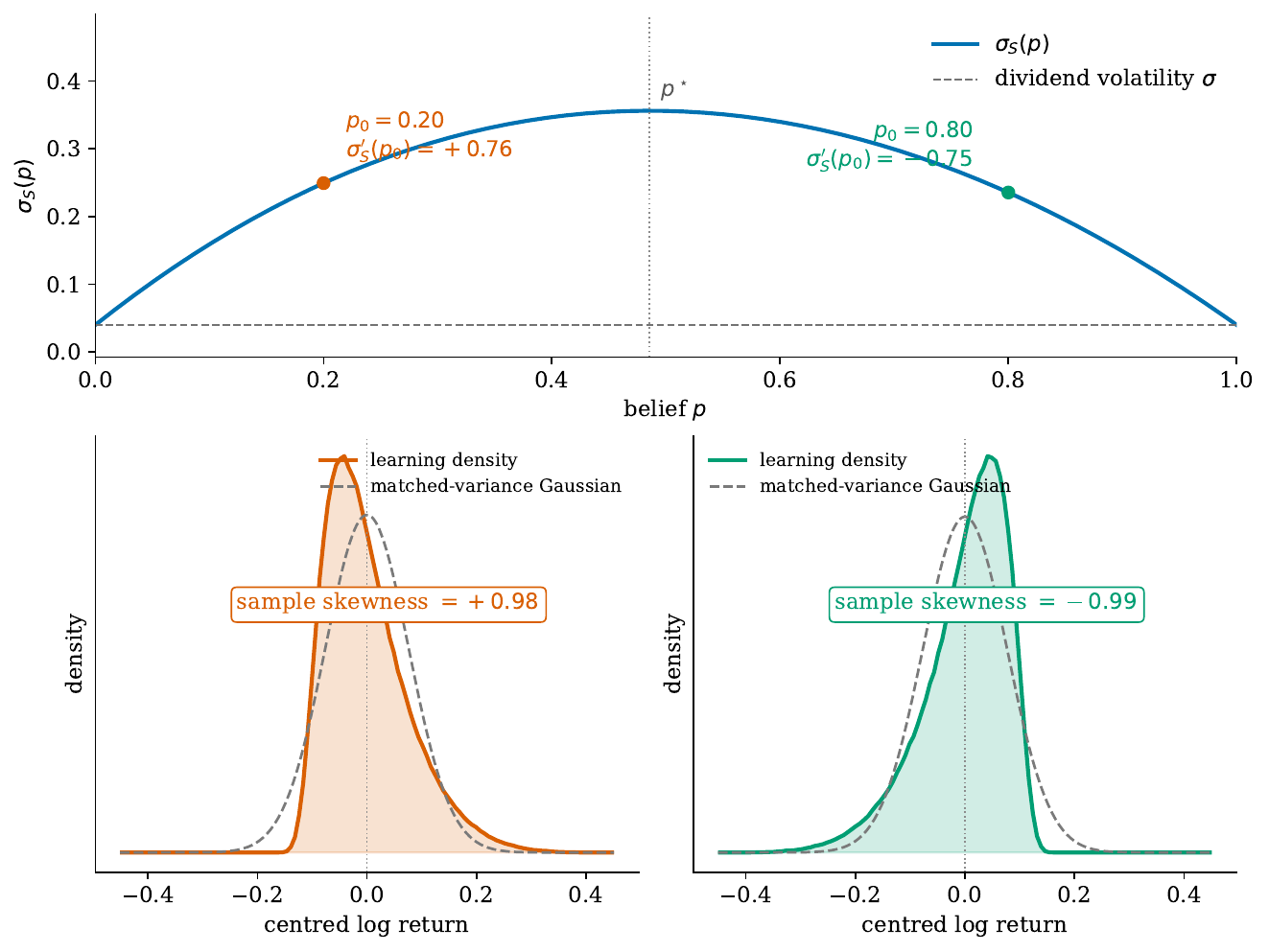}
  \caption{\small  Equilibrium volatility and risk-neutral return distributions.
  The upper panel shows $\sigma_\smallertext{S}(p)$. The lower panels
  show simulated log returns over $\tau=0.10$, centred at their sample
  means, with their sample skewness and a Gaussian benchmark with the
  same variance.
  }
  \label{fig:shock-decomposition}
\end{figure}

\appendix

\section{Proofs for the main equilibrium characterisation}
\label{app:proof-main-equilibrium}

\subsection{Auxiliary equilibrium lemmas}
\label{app:comparison-closure-proofs}

Fix an initial belief $p$. The first result converts the sign of the HJB residual into the comparison
needed for verification. The second passes from zero-residual finite-horizon
dynamics to the infinite-horizon BSDE. Both results use the admissibility conditions in
\Cref{def:admissibility}.

\begin{lemma}[One-sided comparison]
\label{lem:one-sided-comparison}
Suppose that $0<\theta\leq1$.
Let $c\in\mathcal C_a^p$ and let
$(Y^{p,c},Z^{p,c})$ be its utility-BSDE solution. Let $\overline Y$ be positive and
continuous and $\mathbb F^{\smallertext{D},p}$--adapted, and suppose that, on every finite time interval
\begin{equation*}
  \mathrm d\overline Y_t
  =-\big(F_t(c_t,\overline Y_t)+R_t\big)\mathrm dt
   +\overline Z_t \mathrm d\overline W_t^p,
\end{equation*}
where $R$ is $\mathbb F^{\smallertext{D},p}$--progressively
measurable and pathwise locally Lebesgue integrable under $\mathbb P^p$.
If $\rho\in\{-1,1\}$ and
\begin{equation*}
  \rho R_t\geq0,\;\mathrm dt\otimes\mathbb P^p\text{\rm--a.e.},
  \;
  \overline Z\in\mathbb H^1(\mathbb P^p),
  \;
  \liminf_{t\to\smallertext{+}\infty}\overline Y_t=0,
  \;\mathbb P^p\text{\rm--a.s.},
\end{equation*}
then
\begin{equation}
\label{eq:comparison-conclusion}
  \rho\big(Y_t^{p,c}-\overline Y_t\big)\leq0,
  \; t\geq0,
  \; \mathbb P^p\text{\rm--a.s.}
\end{equation}
\end{lemma}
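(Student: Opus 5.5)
We compare the utility process $Y\coloneqq Y^{p,c}$ with the supersolution/subsolution $\overline Y$ through the difference $\Delta_t\coloneqq Y_t-\overline Y_t$. On every finite interval,
\[
\mathrm d\Delta_t=-\big(F_t(c_t,Y_t)-F_t(c_t,\overline Y_t)-R_t\big)\mathrm dt+(Z_t-\overline Z_t)\mathrm d\overline W_t^p .
\]
The plan has three steps: linearise the generator, solve the resulting linear equation, and pass to the infinite horizon.

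\emph{Linearisation.} Since $q=1-1/\theta\leq0$ when $0<\theta\leq1$, the map $y\longmapsto y^q$ is non-increasing on $(0,+\infty)$. Set
\[
a_t\coloneqq\frac{F_t(c_t,Y_t)-F_t(c_t,\overline Y_t)}{Y_t-\overline Y_t}\quad\text{when }Y_t\neq\overline Y_t,\qquad a_t\coloneqq0\ \text{otherwise}.
\]
Then $a_t\leq0$, and $a$ is locally integrable on $[0,T]$ because $Y$ and $\overline Y$ are continuous and positive, hence pathwise bounded away from $0$ on compacts. With $\Gamma_t\coloneqq\exp(\int_0^t a_s\mathrm ds)\in(0,1]$, Itô's formula gives
\[
\Gamma_t\Delta_t=\Gamma_T\Delta_T-\int_t^T\Gamma_sR_s\,\mathrm ds-\int_t^T\Gamma_s(Z_s-\overline Z_s)\mathrm d\overline W_s^p .
\]
Multiplying by $\rho$ and using $\rho R\geq0$ yields
\[
\rho\Gamma_t\Delta_t\leq\rho\Gamma_T\Delta_T-\int_t^T\rho\Gamma_s(Z_s-\overline Z_s)\mathrm d\overline W_s^p .
\]

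\emph{Passage to the infinite horizon.} The integrand $\Gamma(Z-\overline Z)$ lies in $\mathbb H^1(\mathbb P^p)$, because $|\Gamma|\leq1$ and $Z,\overline Z\in\mathbb H^1$. By Burkholder--Davis--Gundy, the stochastic integral is therefore a uniformly integrable martingale on $[0,+\infty]$ with $\sup_t$ in $\mathbb L^1$. I would take conditional expectations given $\mathcal F_t^{\smallertext{D},p}$ along a sequence $T_k\uparrow+\infty$ and obtain
\[
\rho\Gamma_t\Delta_t\leq\liminf_k\mathbb E^p\big[\rho\Gamma_{T_k}\Delta_{T_k}\,\big|\,\mathcal F_t^{\smallertext{D},p}\big].
\]
The right-hand side must now be shown to be at most $0$. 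The terminal behaviour enters here.

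\emph{Main obstacle: the terminal terms.} The positive parts of the terminal terms must be controlled, and two facts are needed.
\begin{enumerate}
\item $Y_T\to0$ in $\mathbb L^1$. This holds because $Y_T=\mathbb E[\int_T^\infty F\,|\,\mathcal F_T]$ with $\int_0^\infty F\,\mathrm ds<+\infty$. Indeed, $Y_0=\mathbb E[\int_0^\infty F]$ once the local martingale is seen to be a true martingale, which follows from $Z\in\mathbb H^1$, and $F\geq0$ since $\theta>0$. The tail then vanishes by dominated convergence.
\item For $\overline Y$, only $\liminf_t\overline Y_t=0$ a.s. is known. Writing $\overline Y_T=\overline Y_0-\int_0^T(F+R)+\int_0^T\overline Z\,\mathrm d\overline W$, the martingale part converges a.s. and in $\mathbb L^1$ because $\overline Z\in\mathbb H^1$.
\end{enumerate}

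The case $\rho=1$ requires $\mathbb E[\Gamma_T(Y_T-\overline Y_T)\,|\,\mathcal F_t]\to$ something at most $0$. Here $\Gamma_TY_T\leq Y_T\to0$ in $\mathbb L^1$ and $-\Gamma_T\overline Y_T\leq0$, so this case is straightforward.

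The case $\rho=-1$ is the delicate one. In that case $-R\geq0$, so the drift $-(F+R)$ of $\overline Y$ has no sign, and I need $\mathbb E[\Gamma_T\overline Y_T\,|\,\mathcal F_t]\to0$ along a subsequence. I would first try to show that $\overline Y$ is bounded by a uniformly integrable martingale plus a monotone process, so that it converges a.s. Combined with $\liminf\overline Y=0$, this gives $\overline Y_T\to0$ a.s. Convergence in $\mathbb L^1$ would then follow from Fatou applied to $\overline Y=\mathbb E[\cdots]$ after rearranging the decomposition of $\overline Y$.

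Finally, since $\Gamma_t>0$, the inequality $\rho\Gamma_t\Delta_t\leq0$ gives \eqref{eq:comparison-conclusion} for each $t$. Continuity of $Y$ and $\overline Y$ upgrades this to a statement holding for all $t\geq0$, $\mathbb P^p$--a.s.
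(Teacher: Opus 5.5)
Your finite-horizon step is sound. Since $y\mapsto y^{1-1/\theta}$ is non-increasing, $a\leq0$ and $\Gamma\in(0,1]$. The inequality $\rho\Gamma_t\Delta_t\leq\rho\Gamma_T\Delta_T-\int_t^T\rho\Gamma_s(Z_s-\overline Z_s)\,\mathrm d\overline W_s^p$ is therefore a legitimate replacement for the It\^o--Tanaka inequality on $(\rho\Delta)^+$ used in the paper. One small addition: local integrability of $a$ also needs $\int_0^T\mathrm e^{-\delta s}c_s^{1-1/\psi}\,\mathrm ds<+\infty$. This follows from finiteness of $\int_0^TF_s(c_s,Y_s)\,\mathrm ds$, from $1-1/\theta\leq0$, and from boundedness of $Y$ on $[0,T]$. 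Your case $\rho=1$ goes through.

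The genuine gap is the case $\rho=-1$. You take conditional expectations at the times $T_k$ before letting $k\to+\infty$. You therefore need the conditional expectations of $\Gamma_{T_k}(\overline Y_{T_k}-Y_{T_k})$ to have a non-positive limit. That requires a form of uniform integrability of $\overline Y$ along a sequence, and the hypotheses do not provide it. The route you sketch does not provide it either:
\begin{itemize}
\item For $\rho=-1$ you only have $\overline Y_t=\overline Y_0-\int_0^tF_s(c_s,\overline Y_s)\,\mathrm ds+\int_0^t|R_s|\,\mathrm ds+\int_0^t\overline Z_s\,\mathrm d\overline W_s^p$. This is a uniformly integrable martingale plus a non-decreasing process minus a non-decreasing process. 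It gives neither almost-sure convergence nor an integrable dominating variable.
\item Fatou's lemma bounds $\liminf_k\mathbb E^p[\overline Y_{T_k}]$ from below, which is the wrong direction.
\item It is true a posteriori that $\overline Y_T\to0$ in $\mathbb L^1$, but only as a consequence of the comparison $\overline Y\leq Y$ you are trying to prove.
\end{itemize}

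The missing idea is to take the limit pathwise before any expectation. For continuous versions, your inequality holds for all $T\geq t$ simultaneously on one event of full probability. On that event, choose path by path $T_k\uparrow+\infty$ with $\overline Y_{T_k}\to0$. These times need not be stopping times, precisely because nothing has been integrated yet.

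You also need $Y_T\to0$ almost surely, not only in $\mathbb L^1$. This follows by writing $Y_t=(A_\infty-A_t)-(M_\infty-M_t)$, with $A_t=\int_0^tF_s(c_s,Y_s)\,\mathrm ds$ and $M$ the uniformly integrable martingale part. Together with $0<\Gamma\leq1$, this gives $\rho\Gamma_{T_k}\Delta_{T_k}\to0$. The stochastic integral converges almost surely because its integrand lies in $\mathbb H^1(\mathbb P^p)$.

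Hence $\rho\Gamma_t\Delta_t\leq-\int_t^{+\infty}\rho\Gamma_s(Z_s-\overline Z_s)\,\mathrm d\overline W_s^p$. Now multiply by $\mathbf 1_{\{\rho\Delta_t>0\}}$, which is $\mathcal F_t^{D,p}$-measurable, and only then take expectations. This gives $\rho\Delta_t\leq0$ almost surely, for both signs of $\rho$ at once.

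This is exactly the device in the paper's proof. The paper works with $(\rho\Delta)^+$ and drops the non-negative drift and local-time terms instead of using your integrating factor. With this repair, your linearised version is an equally valid proof.
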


\begin{proof}
Write
\[
  A_t^{p,c}=\int_{\smallertext{0}}^tF_s(c_s,Y_s^{p,c}) \mathrm ds,
  \;
  M_t^{p,c}=\int_{\smallertext{0}}^tZ_s^{p,c} \mathrm d\overline W_s^p.
\]
At time zero, the utility BSDE gives
$A_\infty^{p,c}=Y_{\smallertext{0}}^{p,c}+M_\infty^{p,c}$. By
\Cref{rem:verification-integrand}, $\mathbb F^{\smallertext{D},p}$ is a
completed Brownian filtration, so $\mathcal F_{\smallertext{0}}^{\smallertext{D},p}$ is
trivial up to null sets, and hence
$Y_{\smallertext{0}}^{p,c}$ is deterministic. Since
$Z^{p,c}\in\mathbb H^1(\mathbb P^p)$,
$M^{p,c}$ is uniformly integrable and has a terminal value
$M_\infty^{p,c}\in\mathbb L^1(\mathbb P^p)$. Hence
$A_\infty^{p,c}\in\mathbb L^1(\mathbb P^p)$.
Since $F_s(c_s,Y_s^{p,c})\geq0$, $s\geq 0$, $\mathbb{P}^p\text{\rm--a.s.}$, one has
$A_t^{p,c}\uparrow A_\infty^{p,c}$ and
$A_\infty^{p,c}-A_t^{p,c}\longrightarrow0$, $\mathbb P^p$--almost surely and in
$\mathbb L^1(\mathbb P^p)$. Moreover, the utility BSDE gives
\[
  Y_t^{p,c}
  =
  \big(A_\infty^{p,c}-A_t^{p,c}\big)
  -
  \big(M_\infty^{p,c}-M_t^{p,c}\big),\; t\geq 0,\; \mathbb{P}^p\text{\rm--a.s.}
\]
Because $M_t^{p,c}\longrightarrow M_\infty^{p,c}$,  $\mathbb{P}^p$--almost surely and in
$\mathbb L^1(\mathbb P^p)$, it follows that
$Y_t^{p,c}\longrightarrow0$, $\mathbb{P}^p$--almost surely and in $\mathbb L^1(\mathbb P^p)$.

\medskip

Let
\[
  D_t^\rho=\rho\big(Y_t^{p,c}-\overline Y_t\big),
  \;
  \Xi_t=\rho\big(Z_t^{p,c}-\overline Z_t\big),
  \;
  b_t^\rho
  =\rho\big(
    -F_t\big(c_t,Y_t^{p,c}\big)+F_t\big(c_t,\overline Y_t\big)+R_t
   \big),\; t\geq 0,\; \mathbb{P}^p\text{\rm--a.s.}
\]

Then
\[
  \mathrm dD_t^\rho=b_t^\rho \mathrm dt+\Xi_t \mathrm d\overline W_t^p.
\]
Since $0<\theta\leq1$, the exponent $1-1/\theta$ is non-positive, so
$y\longmapsto F_t(c,y)$ is non-increasing. Fix some $t\geq 0$; on $\{D_t^\rho>0\}$, this
property and the condition $\rho R_t\geq0$ imply $b_t^\rho\geq0$. Indeed,
when $\rho=1$ one has $Y_t^{p,c}>\overline Y_t$, whereas when $\rho=-1$
one has $Y_t^{p,c}<\overline Y_t$; the two cases give the same sign.

\medskip
Let $L^\smallertext{0}(D^\rho)$ denote the local time of $D^\rho$ at zero. The
finite-horizon It\^o--Tanaka formula gives for any $T>0$
\[
  (D_\smallertext{T}^\rho)^{\smallertext{+}}
  =
  (D_t^\rho)^{\smallertext{+}}
  +\int_t^{\smallertext{T}}\mathbf1_{\{D_s^\rho>0\}}b_s^\rho\mathrm ds
  +\int_t^{\smallertext{T}}\mathbf1_{\{D_s^\rho>0\}}\Xi_s\mathrm d\overline W_s^p
  +\frac12\big(L_\smallertext{T}^\smallertext{0}(D^\rho)-L_t^\smallertext{0}(D^\rho)\big),\; t\in[0,T],\; \mathbb{P}^p\text{\rm--a.s.}
\]
The drift integral and the local-time increment are non-negative.
Consequently, for $0\leq t\leq T$
\begin{equation}
\label{eq:pathwise-tanaka}
  (D_t^\rho)^{\smallertext{+}}
  \leq
  (D_\smallertext{T}^\rho)^{\smallertext{+}}
  -\int_t^\smallertext{T}\mathbf1_{\{\smallertext{D}_\smalltext{s}^\smalltext{\rho}>0\}}\Xi_s \mathrm d\overline W_s^p,\; \mathbb{P}^p\text{\rm--a.s.}
\end{equation}
The integrand in the stochastic integral belongs to
$\mathbb H^1(\mathbb P^p)$ because
both $Z^{p,c}$ and $\overline Z$ do. The integral therefore has a $\mathbb P^p$--almost sure and $\mathbb L^1(\mathbb P^p)$ terminal value.

\medskip
Choose continuous versions for which \eqref{eq:pathwise-tanaka} holds for
all real $T$ on one event of probability one. On that event, choose path by
path a sequence $(T_k)_{k\in\mathbb N}$ increasing to $+\infty$ such that $\overline Y_{\smallertext{T}_\smalltext{k}}\longrightarrow0$. The times
$(T_k)_{k\in\mathbb N}$ need not be stopping times because no expectation is taken before the
pathwise limit. Since $Y_{\smallertext{T}_\smalltext{k}}^{p,c}\longrightarrow0$ as well,
$|D_{\smallertext{T}_\smalltext{k}}^\rho|\longrightarrow0$.
Applying \eqref{eq:pathwise-tanaka} with $T=T_k$ and then
letting $k\longrightarrow+\infty$ gives
\[
  (D_t^\rho)^{\smallertext{+}}
  \leq
  -\int_t^{\smallertext{+}\infty}
    \mathbf1_{\{\smallertext{D}_\smalltext{s}^\smalltext{\rho}>0\}}\Xi_s \mathrm d\overline W_s^p,\; t\geq 0,\; \mathbb{P}^p\text{\rm--a.s.}
\]
The right-hand side is an integrable terminal increment of a uniformly
integrable martingale. Since it dominates the non-negative random variable
$(D_t^\rho)^{\smallertext{+}}$, the latter is integrable. Only now do we take expectations,
obtaining $\mathbb E^p[(D_t^\rho)^{\smallertext{+}}]\leq0$. Thus $(D_t^\rho)^{\smallertext{+}}=0$. Applying
the argument first at rational times and then using continuity proves
\eqref{eq:comparison-conclusion} simultaneously for every $t\geq0$.
\end{proof}

\begin{lemma}
\label{lem:zero-residual}
Let $\theta>0$, let $c\in\mathcal C^p$, and suppose that a positive continuous
process $\overline Y$ is
$\mathbb F^{\smallertext{D},p}$--adapted and satisfies, on every finite interval
\begin{equation*}
  \mathrm d\overline Y_t
  =-F_t\big(c_t,\overline Y_t\big) \mathrm dt
   +\overline Z_t \mathrm d\overline W_t^p,
  \;
  \overline Z\in\mathbb H^1(\mathbb P^p),
  \;
  \liminf_{t\to\smallertext{+}\infty}\overline Y_t=0,
  \;\mathbb P^p\text{\rm--a.s.}
\end{equation*}
If $\overline Y_{\smallertext{0}}\in\mathbb L^1(\mathbb P^p)$, then
\[
  \mathbb E^p\bigg[\int_{\smallertext{0}}^{\smallertext{+}\infty}
    F_t\big(c_t,\overline Y_t\big) \mathrm dt\bigg]<+\infty,
  \;
  \overline Y_t\longrightarrow0,
  \;\mathbb{P}^p\text{\rm--almost surely and in }
  \mathbb L^1(\mathbb P^p),
\]
and
\begin{equation}
\label{eq:closed-bsde}
  \overline Y_t
  =\int_t^{\smallertext{+}\infty} F_s(c_s,\overline Y_s) \mathrm ds
   -\int_t^{\smallertext{+}\infty}\overline Z_s \mathrm d\overline W_s^p,\; t\geq0,\; \mathbb P^p\text{\rm--a.s.}
\end{equation}
Thus $(\overline Y,\overline Z)$ is a positive solution of
the utility BSDE associated with $c$ whose martingale
integrand belongs to $\mathbb H^1(\mathbb P^p)$.
If $0<\theta\leq1$, this solution is unique among positive solutions of
\eqref{eq:closed-bsde} whose martingale integrands belong to
$\mathbb H^1(\mathbb P^p)$.
\end{lemma}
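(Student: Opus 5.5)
We plan to mirror the first paragraph of the proof of \Cref{lem:one-sided-comparison}, with the roles of the utility BSDE and of $\overline Y$ interchanged. Set
\[
A_t\coloneqq\int_0^tF_s(c_s,\overline Y_s)\,\mathrm ds,\qquad M_t\coloneqq\int_0^t\overline Z_s\,\mathrm d\overline W_s^p .
\]
Then, for every $t\geq0$, $\overline Y_t=\overline Y_0-A_t+M_t$. Since $\overline Z\in\mathbb H^1(\mathbb P^p)$, the Burkholder--Davis--Gundy inequality shows that $M$ is a uniformly integrable martingale with $\mathbb E^p[\sup_t|M_t|]<+\infty$. Hence $M_t\to M_\infty$ almost surely and in $\mathbb L^1(\mathbb P^p)$.

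\textbf{Integrability of $A_\infty$ and the limit of $\overline Y$.} Since $A$ is non-decreasing, $A_t\uparrow A_\infty\in[0,+\infty]$. Positivity of $\overline Y$ gives
\[
A_t\leq \overline Y_0+M_t\leq \overline Y_0+\sup_s|M_s|
\]
for every $t$. Therefore $A_\infty\leq\overline Y_0+\sup_s|M_s|\in\mathbb L^1(\mathbb P^p)$, which is the first claim. It follows that $\overline Y_t=\overline Y_0-A_t+M_t$ converges almost surely to $\overline Y_0-A_\infty+M_\infty$. The hypothesis $\liminf_t\overline Y_t=0$ forces this limit to be zero, so $\overline Y_0=A_\infty-M_\infty$.

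Subtracting the identity at time $t$ yields
\[
\overline Y_t=(A_\infty-A_t)-(M_\infty-M_t),
\]
which is exactly \eqref{eq:closed-bsde}. Both brackets tend to zero in $\mathbb L^1(\mathbb P^p)$: the first by monotone or dominated convergence, since $A_\infty$ is integrable, and the second by uniform integrability. This gives $\overline Y_t\to0$ almost surely and in $\mathbb L^1(\mathbb P^p)$. Note that measurability of $c$ and positivity of $F$ are the only properties of the generator used here; monotonicity in $y$ is not needed, which is why this part holds for every $\theta>0$.

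\textbf{Uniqueness for $0<\theta\leq1$.} Take two positive solutions $(Y^1,Z^1)$ and $(Y^2,Z^2)$ of \eqref{eq:closed-bsde} with $Z^i\in\mathbb H^1(\mathbb P^p)$. By the first part applied to each, both tend to zero almost surely, so $\liminf_t Y^i_t=0$.

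We then rerun the Tanaka argument of \Cref{lem:one-sided-comparison} with $R\equiv0$ and with $Y^1$ in place of $Y^{p,c}$, for $\rho=\pm1$. That argument used only the following ingredients:
\begin{itemize}
\item[(a)] the non-increasing property of $y\mapsto F_t(c,y)$ when $\theta\leq1$;
\item[(b)] the $\mathbb H^1$ property of both integrands;
\item[(c)] the pathwise vanishing of both processes along a sequence $T_k\to\infty$.
\end{itemize}
It did not use that the first process is the defining solution in $\mathcal C_a^p$. We therefore obtain $Y^1\leq Y^2$ and $Y^2\leq Y^1$, so $Y^1=Y^2$. Equality of the integrands then follows from uniqueness of the martingale part in the semimartingale decomposition, which gives $Z^1=Z^2$ $\mathrm dt\otimes\mathbb P^p$-a.e.

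The only point I expect to need care is the uniform bound on $A_t$ via $\sup|M|$ before $A_\infty$ is known to be finite; the BDG step handles it. I also note that at $\theta=1$ the generator does not depend on $y$, so the comparison in the uniqueness step is immediate.
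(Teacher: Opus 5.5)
Your proof is correct and follows the paper's argument: the same decomposition $\overline Y=\overline Y_0-A+M$, the same passage to \eqref{eq:closed-bsde}, and the same rerun of the It\^o--Tanaka comparison of \Cref{lem:one-sided-comparison} for uniqueness. The only variation is that you get integrability of $A_\infty$ from positivity of $\overline Y$ together with $\sup_s|M_s|\in\mathbb L^1(\mathbb P^p)$ (BDG), whereas the paper identifies $A_\infty=\overline Y_0+M_\infty$ directly along a pathwise sequence $T_k\uparrow+\infty$ with $\overline Y_{T_k}\to0$.

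There is one small slip in the uniqueness step. You cannot invoke ``the first part'' to get $Y^i_t\to0$, because it assumes $\liminf_t Y^i_t=0$. This is harmless, however: the convergence is immediate from \eqref{eq:closed-bsde} itself, since both tail integrals $\int_t^{+\infty}F_s\,\mathrm ds$ and $\int_t^{+\infty}Z^i_s\,\mathrm d\overline W^p_s$ vanish as $t\to+\infty$.
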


\begin{proof}
Let
\[
  M_t=\int_{\smallertext{0}}^t\overline Z_s \mathrm d\overline W_s^p,
  \;
  B_t=\int_{\smallertext{0}}^tF_s(c_s,\overline Y_s) \mathrm ds,\;  t\geq0,\; \mathbb P^p\text{\rm--a.s.}
\]
Since $\theta>0$, the generator is non-negative and $B$ is non-decreasing. The
$\mathbb H^1(\mathbb P^p)$ condition makes $M$ a uniformly integrable martingale with an
almost-sure and $\mathbb L^1(\mathbb P^p)$ limit $M_\infty$. On every finite interval,
$\overline Y=\overline Y_{\smallertext{0}}-B+M$.

\medskip

Choose, path by path, a sequence $(T_k)_{k\in\mathbb N}$ increasing to $+\infty$ with
$\overline Y_{T_\smalltext{k}}\longrightarrow0$. Then
$B_{T_\smalltext{k}}\uparrow B_\infty=\overline Y_{\smallertext{0}}+M_\infty$. Consequently
$B_\infty$ is finite and integrable. Moreover
\[
  \overline Y_t
  =(B_\infty-B_t)-(M_\infty-M_t),\;  t\geq0,\; \mathbb P^p\text{\rm--a.s.},
\]
so $\overline Y_t\longrightarrow 0$, $\mathbb P^p$--almost surely and in
$\mathbb L^1(\mathbb P^p)$. Passing to the limit in
the finite-horizon identity gives \eqref{eq:closed-bsde}.

\medskip

If $0<\theta\leq1$, applying the same It\^o--Tanaka argument to the
difference of two positive solutions, first in one direction and then in the
other, shows that their $Y$ components coincide. Subtracting their dynamics
then shows that their $Z$ components agree
$\mathrm dt\otimes\mathbb P^p$--almost everywhere.
\end{proof}

We now prove the equilibrium-reduction lemma by deriving the realised auxiliary
drift and short-rate restrictions from optimality and clearing.

\begin{proof}[Proof of \Cref{lem:equilibrium-reduction}]
Fix an initial belief $p$ and work on the corresponding model copy.
Define the clearing wealth process by
\[
  X^p
  \coloneqq
  X^{p,\smallertext{S}_{\smalltext{0}}^\smalltext{p},\smallertext{1},\smallertext{D}^\smalltext{p}}.
\]
At the clearing controls $(\pi,c)=(1,D^p)$, the stock process $S^p$
satisfies the same linear wealth equation as $X^p$ and has the same initial
value. By uniqueness
\[
  X_t^p=S_t^p=D_t^pQ_t^p,
  \; t\geq0,\; \mathbb P^p\text{\rm--a.s.}
\]

Let $v$ denote the function representing the agent's value in the
classical $C^2$ state representation. The clearing controls
attain the HJB Hamiltonian. For any $a>0$, linearity of the wealth equation
gives
$X^{p,ax,\pi,ac}=aX^{p,x,\pi,c}$, and the admissibility conditions are
preserved by this scaling. Moreover, since
$1-1/\psi+(1-\gamma)(1-1/\theta)=1-\gamma$, the Epstein--Zin recursion gives
$V^{p,ac}=a^{1-\gamma}V^{p,c}$. Applying the same argument with $a^{-1}$
and taking suprema therefore yields
$v(q,ax,u)=a^{1-\gamma}v(q,x,u)$. Hence
\[
  k(q,u)\coloneqq(1-\gamma)v(q,1,u)>0,
  \;
  v(q,x,u)
  =
  \frac{x^{1-\gamma}}{1-\gamma}k(q,u).
\]

The interior consumption first-order condition at the clearing allocation
therefore becomes
\[
  (X_t^p)^{-\gamma}k(P_t^p,H_t^p)
  =
  \delta(D_t^p)^{-1/\psi}
  \bigl((X_t^p)^{1-\gamma}k(P_t^p,H_t^p)\bigr)^{1-1/\theta},\; \mathrm dt\otimes\mathbb P^p\text{\rm--a.e.}
\]

Using
\[
  \frac{1-\gamma}{\theta}=1-\frac1\psi,
\]
this is equivalent to
\[
  \frac{D_t^p}{X_t^p}
  =
  \delta^\psi
  k(P_t^p,H_t^p)^{-\psi/\theta},\; \mathrm dt\otimes\mathbb P^p\text{\rm--a.e.}
\]
Since $X^p=D^pQ^p$, it follows that
\[
  k(P_t^p,H_t^p)
  =
  \delta^\theta(Q_t^p)^{\theta/\psi},\; \mathrm dt\otimes\mathbb P^p\text{\rm--a.e.}
\]
The equality extends by continuity from
$\mathrm dt\otimes\mathbb P^p$--almost everywhere to every $t\geq0$,
outside one $\mathbb P^p$--null set.

\medskip
Next, write
\[
  (Y^p,Z^p)
  \coloneqq
  (Y^{p,\smallertext{D}^\smalltext{p}},Z^{p,\smallertext{D}^\smalltext{p}}),
\]
for the utility-BSDE solution associated with the clearing consumption.
The state representation and
$1-\gamma+\theta/\psi=\theta$ now yield
\begin{equation}
\label{eq:clearing-utility-reduction}
  Y_t^p
  =
  (1-\gamma)\mathrm e^{-\delta\theta t}
  v(P_t^p,X_t^p,H_t^p)=
  \mathrm e^{-\delta\theta t}
  (X_t^p)^{1-\gamma}k(P_t^p,H_t^p)=
  \delta^\theta\mathrm e^{-\delta\theta t}
  (D_t^p)^{1-\gamma}(Q_t^p)^\theta,\; t\geq 0,\; \mathbb{P}^p\text{\rm--a.s.}
\end{equation}

Let
\[
  \ell(q)
  \coloneqq
  \frac{\partial_p\varphi(q)}{\varphi(q)},
  \;
  d(q)\coloneqq A(q)^\top \ell(q),\; q\in\Delta^n.
\]
It\^o's formula applied to \eqref{eq:clearing-utility-reduction} gives
\begin{equation}
\label{eq:clearing-utility-ito}
  \frac{\mathrm dY_t^p}{Y_t^p}
  =
  \theta
  \bigg(
    \frac{\mathcal H(P_t^p,H_t^p)}{H_t^p}
    -\zeta(P_t^p)
  \bigg)\mathrm dt
  +
  \lambda_\smallertext{U}(P_t^p) \mathrm d\overline W_t^p,\;  t\geq0,\; \mathbb P^p\text{\rm--a.s.},
\end{equation}
where
\begin{equation}
\label{eq:utility-tilt-lambda}
  \lambda_\smallertext{U}(q)
  \coloneqq
  (1-\gamma)\sigma
  -\frac{\theta}{\sigma}d(q),\; q\in\Delta^n.
\end{equation}
On the other hand, \eqref{eq:clearing-utility-reduction} is the utility-BSDE
solution associated with consumption $D^p$. Direct substitution into the
generator gives
\[
  \frac{F_t(D_t^p,Y_t^p)}{Y_t^p}
  =
  \frac{\theta}{Q_t^p},\;  t\geq0,\; \mathbb P^p\text{\rm--a.s.}
\]
The BSDE therefore implies
\begin{equation}
\label{eq:clearing-utility-bsde}
  \frac{\mathrm dY_t^p}{Y_t^p}
  =
  -\frac{\theta}{Q_t^p} \mathrm dt
  +\frac{Z_t^p}{Y_t^p} \mathrm d\overline W_t^p,\;  t\geq0,\; \mathbb P^p\text{\rm--a.s.}
\end{equation}
Uniqueness of the semi-martingale decomposition in
\eqref{eq:clearing-utility-ito} and
\eqref{eq:clearing-utility-bsde} yields
\begin{equation}
\label{eq:reduction-drift-and-volatility}
  \mathcal H(P_t^p,H_t^p)
  =H_t^p\zeta(P_t^p)-\varphi(P_t^p)^{-1},
  \;
  \frac{Z_t^p}{Y_t^p}=\lambda_\smallertext{U}(P_t^p),\; \mathrm dt\otimes\mathbb P^p\text{\rm--almost everywhere.}
\end{equation}

It remains to derive the short-rate restriction. Let $\mu_\smallertext{\rm TR}^p$ and
$\sigma_\smallertext{\rm TR}^p$ denote the
drift and volatility of the stock's total price-plus-dividend return
$(\mathrm dS_t^p+D_t^p\mathrm dt)/S_t^p$. The candidate dynamics give
\begin{gather*}
  \mu_{\smallertext{\rm TR},t}^p
  =
  \overline g(P_t^p)
  +\frac{\mathcal H(P_t^p,H_t^p)+\varphi(P_t^p)^{-1}}{H_t^p}
  +\ell(P_t^p)^\top \big(b(P_t^p)-A(P_t^p)\big)+
  \frac{1}{2\sigma^2}
  \frac{A(P_t^p)^\top\partial_{pp}\varphi(P_t^p)A(P_t^p)}{\varphi(P_t^p)},   \;
  \mathrm dt\otimes\mathbb P^p\text{\rm--a.e.},\\
  \sigma_{\smallertext{\rm TR},t}^p
  =
  \sigma-\frac{d(P_t^p)}{\sigma},   \;
  \mathrm dt\otimes\mathbb P^p\text{\rm--a.e.}
\end{gather*}
At equilibrium, $\pi=1$, and the constraint $\pi\in[0,1]$ means that a small
feasible change has the form
$\pi^\varepsilon=1-\varepsilon\eta$, where $\varepsilon>0$ is small and
$\eta\geq0$. At the current state, the expression for $v$ above gives
$X_t^pv_{xx}/v_x=-\gamma$. Matching the Brownian terms in
$k(P^p,H^p)=\delta^\theta(Q^p)^{\theta/\psi}$ also gives
$A^\top\partial_{xp}v/v_x=(\theta/\psi)d(P_t^p)$.
The one-sided portfolio first-order condition therefore gives

\[
  \mu_{\smallertext{\rm TR},t}^p-r_t^p
  -\frac{\theta}{\psi\sigma}d(P_t^p)
   \sigma_{\smallertext{\rm TR},t}^p
  -\gamma\big(\sigma_{\smallertext{\rm TR},t}^p\big)^2
  \geq0,\;
  \mathrm dt\otimes\mathbb P^p\text{\rm--a.e.}
\]
Substitution of the first identity in
\eqref{eq:reduction-drift-and-volatility} and direct expansion give
\[
  \mu_{\smallertext{\rm TR},t}^p
  -\frac{\theta}{\psi\sigma}d(P_t^p)
   \sigma_{\smallertext{\rm TR},t}^p
  -\gamma\big(\sigma_{\smallertext{\rm TR},t}^p\big)^2
  =\overline r(P_t^p),\;
  \mathrm dt\otimes\mathbb P^p\text{\rm--a.e.}
\]
Putting both expressions together
\[
  \mathcal R(P_t^p,H_t^p)=r_t^p\leq\overline r(P_t^p),\;
  \mathrm dt\otimes\mathbb P^p\text{\rm--a.e.}
\]
Since $p$ was arbitrary, the result holds on every model copy.
\end{proof}

\subsection{Proof of the main theorem}
\label{app:proof-main-theorem}

We are now ready to prove the main result of the paper.

\begin{proof}[{Proof of \Cref{thm:main}}]
\noindent\textbf{Part $1$: necessity.} We organise the argument into five
steps.

\medskip
\noindent\emph{Step $1$: upgrading the clearing-utility terminal condition.} Fix a $\mathfrak C$-equilibrium whose
representative agent's value function admits a classical $C^2$ state
representation and fix an
initial belief $p$.
The proof of \Cref{lem:equilibrium-reduction} identifies the
transformed clearing-utility pair, denoted by
$(\widehat Y^p,\widehat Z^p)$, through
\eqref{eq:clearing-utility-reduction} and
$\widehat Z^p=\lambda_\smallertext{U}(P^p)\widehat Y^p$, where
$\lambda_\smallertext{U}$ is defined in
\eqref{eq:utility-tilt-lambda}. It therefore satisfies
\begin{equation}
\label{eq:clearing-Y-dynamics-revised}
  \mathrm d\widehat Y_t^p
  =-\frac{\theta}{Q_t^p}\widehat Y_t^p \mathrm dt
   +\lambda_\smallertext{U}(P_t^p)\widehat Y_t^p \mathrm d\overline W_t^p.
\end{equation}

Equilibrium admissibility gives
$\widehat Z^p\in\mathbb H^1(\mathbb P^p)$ and only the condition
$\liminf_{t\to\smallertext{+}\infty}\widehat Y_t^p=0$,
$\mathbb P^p$--almost surely. Since $\widehat Y_{\smallertext{0}}^p$ is deterministic and
hence integrable, \Cref{lem:zero-residual} upgrades this to
\begin{equation}
\label{eq:Y-clearing-zero-revised}
  \widehat Y_t^p\longrightarrow0,
  \;\mathbb P^p\text{\rm--almost surely and in }
  \mathbb L^1(\mathbb P^p).
\end{equation}

Thus, the admissibility terminal condition is strengthened to
full almost-sure and $\mathbb L^1$ convergence.

\medskip

\noindent\emph{Step $2$: finding the critical initial value.} Define the density process
\[
  \widehat L_t^p
  \coloneqq
  \mathcal E\bigg(
    \int_{\smallertext{0}}^\cdot\lambda_\smallertext{U}(P_s^p)
    \mathrm d\overline W_s^p
  \bigg)_t,
  \; t\geq0.
\]
Because $\lambda_\smallertext{U}$ is bounded, $\widehat L^p$ is a true
martingale on every finite horizon. For $T>0$, define
$\widehat{\mathbb P}^{p,\smallertext{T}}$ on $\mathcal F_\smallertext{T}^{\smallertext{D},p}$ by
\[
  \frac{\mathrm d\widehat{\mathbb P}^{p,\smallertext{T}}}
       {\mathrm d\mathbb P^p}
  \bigg|_{\smallertext{\mathcal F}_\smalltext{T}^{\smalltext{D}\smalltext{,}\smalltext{p}}}
  =\widehat L_\smallertext{T}^p.
\]
Under $\widehat{\mathbb P}^{p,\smallertext{T}}$
\[
  \widehat W_t^p
  \coloneqq
  \overline W_t^p-
  \int_{\smallertext{0}}^t\lambda_\smallertext{U}(P_s^p)\mathrm ds,
  \; 0\leq t\leq T,
\]
is a Brownian motion, and
\begin{equation}
\label{eq:tilted-filter-revised}
  \mathrm dP_t^p
  =\widehat b(P_t^p) \mathrm dt
   -\frac{A(P_t^p)}{\sigma} \mathrm d\widehat W_t^p,
  \; \text{where}\; 
  \widehat b(p)
  \coloneqq\Lambda^\top p-\frac{\lambda_\smallertext{U}(p)}{\sigma}A(p).
\end{equation}
The coefficients in \eqref{eq:tilted-filter-revised} are Lipschitz-continuous,
so its finite-horizon solution laws are the restrictions of a unique global
solution law on canonical path space. We denote this law and its expectation
by $\widehat{\mathbb P}^p$ and $\widehat{\mathbb E}^p$, and continue to write
$(P^p,\widehat W^p)$ for the coordinate solution and its driving Brownian
motion. Thus global $\widehat{\mathbb P}^p$--statements below refer to this
canonical law; finite-horizon identities are transferred from the original
model through the equivalent measures $\widehat{\mathbb P}^{p,\smallertext{T}}$.

\medskip

Solving
\eqref{eq:clearing-Y-dynamics-revised} gives
\[
  \widehat Y_t^p
  =\widehat Y_{\smallertext{0}}^p\widehat L_t^p
   \exp\bigg(-\theta\int_{\smallertext{0}}^t(Q_s^p)^{-1} \mathrm ds\bigg),\; t\geq 0,\; \mathbb P^p\text{\rm--a.s.}
\]
Consequently
\[
  \frac{\mathbb E^p[\widehat Y_t^p]}{\widehat Y_{\smallertext{0}}^p}
  =\widehat{\mathbb E}^{p}\bigg[
    \exp\bigg(-\theta\int_{\smallertext{0}}^t(Q_s^p)^{-1} \mathrm ds\bigg)
   \bigg], \; t\geq 0.
\]
The left-hand side tends to zero by
\eqref{eq:Y-clearing-zero-revised}. Indeed, with
$K^p\coloneqq\theta\int_{\smallertext{0}}^\cdot(Q_s^p)^{-1}\mathrm ds$,
$\mathrm e^{-\smallertext{K}_\smalltext{t}^\smalltext{p}}\downarrow\mathrm e^{-\smallertext{K}_\smalltext{\infty}^\smalltext{p}}$, so dominated
convergence gives
$\widehat{\mathbb E}^p[\mathrm e^{-\smallertext{K}_\smalltext{\infty}^\smalltext{p}}]=0$ and hence
$K_\infty^p=+\infty$, $\widehat{\mathbb P}^p$--almost surely. Equivalently
\begin{equation}
\label{eq:inverse-Q-diverges}
  \int_{\smallertext{0}}^{\smallertext{+}\infty}(Q_s^p)^{-1} \mathrm ds=+\infty,
  \;
  \widehat{\mathbb P}^{p}\text{\rm--a.s.}
\end{equation}

For $t\geq0$, define
\[
  \Gamma_t^p=\int_{\smallertext{0}}^t\zeta(P_s^p) \mathrm ds,
  \;
  J_t^p
  =\int_{\smallertext{0}}^t
    \mathrm e^{-\smallertext{\Gamma}_\smalltext{s}^\smalltext{p}}\varphi(P_s^p)^{-1} \mathrm ds.
\]
The first identity in
\eqref{eq:on-path-equilibrium-reduction} gives
\[
  \mathrm dH_t^p
  =\big(\zeta(P_t^p)H_t^p-\varphi(P_t^p)^{-1}\big)\mathrm dt.
\]
Variation of constants yields
\begin{equation}
\label{eq:H-explicit-revised}
  H_t^p
  =\mathrm e^{\smallertext{\Gamma}_\smalltext{t}^\smalltext{p}}\big(h(p)-J_t^p\big), \; t\geq 0,\; \widehat{\mathbb P}^p\text{\rm--a.s.}
\end{equation}
Since $H_t^p>0$,
$h(p)-J_t^p=\mathrm e^{-\smallertext{\Gamma}_\smalltext{t}^\smalltext{p}}H_t^p>0$
at every finite time. Moreover, for Lebesgue--almost every $t$
\[
  \frac1{Q_t^p}
  =\frac{\dot J_t^p}{h(p)-J_t^p}
  =-\frac{\mathrm d}{\mathrm dt}\mathrm{log}\big(h(p)-J_t^p\big).
\]
Thus
\[
  \int_{\smallertext{0}}^t(Q_s^p)^{-1}\mathrm ds
  =\mathrm{log}\bigg(\frac{h(p)}{h(p)-J_t^p}\bigg).
\]
By \eqref{eq:inverse-Q-diverges}, the left-hand side tends to infinity;
since $J^p$ is increasing, this forces
\begin{equation}
\label{eq:critical-anchor-revised}
  h(p)=J_\infty^p,
  \;
  \widehat{\mathbb P}^{p}\text{\rm--a.s.}
\end{equation}

Thus, the infinite-horizon condition determines the initial
value $h(p)$ through \eqref{eq:critical-anchor-revised}.

\medskip

\noindent\emph{Step $3$: continuity of $h$.} We next show that this forces the Borel map $h$ to be continuous. The
coefficients $\widehat b$ and $-A/\sigma$ in
\eqref{eq:tilted-filter-revised} are Lipschitz-continuous on the compact simplex.
Extend them to Lipschitz-continuous maps on $\mathbb R^n$. On an auxiliary
filtered probability space
$(\Omega^{\mathrm c},\mathcal F^{\mathrm c},
(\mathcal F_t^{\mathrm c})_{t\geq0},\mathbb P^{\mathrm c})$ carrying a Brownian
motion $W^{\mathrm c}$, let $\mathsf P^x$, for $x\in\Delta^n$, be the unique
strong solution of
\[
  \mathrm d\mathsf P_t^x
  =\widehat b(\mathsf P_t^x)\mathrm dt
   -\frac{A(\mathsf P_t^x)}{\sigma}\mathrm dW_t^{\mathrm c},
  \; \mathsf P_{\smallertext{0}}^x=x.
\]
The law of $\mathsf P^x$ is the law of $P^x$ under
$\widehat{\mathbb P}^x$, so $\mathsf P^x$ remains in $\Delta^n$. Write
$\mathbb E^{\mathrm c}$ for expectation under $\mathbb P^{\mathrm c}$.
Standard BDG and Gr\"onwall estimates give, for every
$T<+\infty$, a finite constant $C_{\smallertext{T}}$, independent of
$(p,q)\in\Delta^n\times\Delta^n$, such that
\begin{equation}
\label{eq:tilted-SDE-stability}
  \mathbb E^{\mathrm c}\bigg[
    \sup_{0\leq s\leq {\smallertext{T}}}
    |\mathsf P_s^p-\mathsf P_s^q|^2
  \bigg]
  \leq C_{\smallertext{T}} |p-q|^2.
\end{equation}

For a continuous function $v$ on $\Delta^n$, write
\[
  \omega_v(\varepsilon)
  =\sup\big\{|v(x)-v(y)|:(x,y)\in\Delta^n\times\Delta^n,\; |x-y|\leq\varepsilon\big\},
  \; \varepsilon>0.
\]
For $x\in\Delta^n$, define the coupled functionals
\[
  \mathsf\Gamma_t^x
  \coloneqq\int_{\smallertext{0}}^t\zeta(\mathsf P_s^x)\mathrm ds,
  \;
  \mathsf J_t^x
  \coloneqq\int_{\smallertext{0}}^t
    \mathrm e^{-\mathsf\Gamma_s^x}\varphi(\mathsf P_s^x)^{-1}\mathrm ds,\; t\geq0.
\]
If
$R_\smallertext{T}^{p,q}\coloneqq
\sup_{0\leq s\leq \smallertext{T}}|\mathsf P_s^p-\mathsf P_s^q|$, then
\begin{align*}
  |\mathsf J_\smallertext{T}^p-\mathsf J_\smallertext{T}^q|
  &\leq T\omega_{1/\varphi}(R_\smallertext{T}^{p,q})
  +\frac{\|1/\varphi\|_\infty T^2}{2}
   \omega_\zeta(R_\smallertext{T}^{p,q}),
\end{align*}
where $1/\varphi$ denotes the map $y\longmapsto\varphi(y)^{-1}$. Since $\varphi$ is continuous and strictly positive on the
compact simplex and $\min_{q\in\Delta^\smalltext{n}}\zeta(q)>0$ by
assumption, set
$\underline\varphi\coloneqq\min_{q\in\Delta^\smalltext{n}}\varphi(q)>0$
and
$\underline\zeta\coloneqq\min_{q\in\Delta^\smalltext{n}}\zeta(q)>0$. Moreover,
uniformly in $p$
\begin{equation*}
  0\leq \mathsf J_\infty^p-\mathsf J_\smallertext{T}^p
  \leq
  \frac{\mathrm e^{-\underline\zeta \smallertext{T}}}
     {\underline\varphi \underline\zeta}.
\end{equation*}

The equality in law in the preceding paragraph and
\eqref{eq:critical-anchor-revised} give
$h(p)=\mathbb E^{\mathrm c}[\mathsf J_\infty^p]$. The stability estimate
\eqref{eq:tilted-SDE-stability} and the uniform continuity of $\zeta$ and
$1/\varphi$ show that
$p\longmapsto\mathbb E^{\mathrm c}[\mathsf J_\smallertext{T}^p]$ is
continuous for every finite $T$. The uniform tail bound therefore makes $h$
the uniform limit of continuous functions, and hence $h\in C(\Delta^n)$.

\medskip

Split now $J_\infty^p$ at a deterministic time $t\ge 0$ and write
\begin{equation}
\label{eq:J-restart-decomposition}
  J_\infty^p
  =J_t^p+\mathrm e^{-\smallertext{\Gamma}_\smalltext{t}^\smalltext{p}}\mathcal J_{t,\infty}^p,
\end{equation}
where
\[
  \mathcal J_{t,\infty}^p
  \coloneqq
  \int_{\smallertext{0}}^{\smallertext{+}\infty}
    \exp\bigg(-\int_{\smallertext{0}}^s\zeta(P_{t+r}^p) \mathrm dr\bigg)
    \varphi(P_{t+s}^p)^{-1} \mathrm ds.
\]
Conditionally on the state $P_t^p$, the Markov property identifies the law
of $\mathcal J_{t,\infty}^p$ with that of $J_\infty^{\smallertext{P}_\smalltext{t}^\smalltext{p}}$. More
precisely, for the Markov kernel the map
$q\longmapsto\widehat{\mathbb P}^{q}[J_\infty^q=h(q)]$ is Borel-measurable and equals
one for every $q\in\Delta^n$. Hence the identity
\eqref{eq:critical-anchor-revised} gives
$\mathcal J_{t,\infty}^p=h(P_t^p)$, $  \widehat{\mathbb P}^{p}\text{\rm--a.s.}$ Substitution in
\eqref{eq:J-restart-decomposition} yields
\begin{equation}
\label{eq:critical-anchor-restart}
  h(p)
  =J_t^p+\mathrm e^{-\smallertext{\Gamma}_\smalltext{t}^\smalltext{p}}h(P_t^p),
  \;
  \widehat{\mathbb P}^{p}\text{\rm--a.s.}
\end{equation}
Taking a common full-probability event for rational $t$ and using continuity
of $J$, $\Gamma$, $P$, and $h$ extends this identity to every time. Comparing
\eqref{eq:critical-anchor-restart} with \eqref{eq:H-explicit-revised}
gives
\begin{equation}
\label{eq:invariant-graph-revised}
  H_t^p=h(P_t^p),
  \; t\geq0,\;   \widehat{\mathbb P}^{p}\text{\rm--a.s.}
\end{equation}
On each finite horizon, the canonical law agrees with the law
under $\widehat{\mathbb P}^{p,\smallertext{T}}$; equivalence then transfers the identity
back to $\mathbb P^p$. Hence $h$ is continuous and the auxiliary process remains on
the invariant graph $H_t^p=h(P_t^p)$.

\medskip

\noindent\emph{Step $4$: $h$ is constant.} It remains to show that $h$ is constant. Fix an arbitrary $p\in\Delta_\circ^n$ and $s>0$. Because $h$, $\zeta$, and $\varphi^{-1}$ are continuous on
the compact simplex, the following constant is finite: 
\[
  C_{\smallertext{0}}
  \coloneqq
  \sup_{q\in\Delta^\smalltext{n}}
  \big|\zeta(q)h(q)-\varphi(q)^{-1}\big|<+\infty.
\]

The first identity in \eqref{eq:on-path-equilibrium-reduction} and
\eqref{eq:invariant-graph-revised} give
\[
  h(P_t^p)-h(p)
  =
  \int_{\smallertext{0}}^t
    \big(
     h(P_s^p)\zeta(P_s^p)-\varphi(P_s^p)^{-1}
    \big)\mathrm ds,\; t\geq 0,\; \widehat{\mathbb P}^p\text{\rm--a.s.}
\]

Consequently
\begin{equation}
\label{eq:h-short-time-revised}
  |h(P_t^p)-h(p)|\leq C_{\smallertext{0}}t, \; t\geq 0,\; \widehat{\mathbb P}^p\text{\rm--a.s.}
\end{equation}
Now we let $\bar p$ denote the unique
solution of
\begin{equation}
\label{eq:anchor-flow-ode}
  \dot{\bar p}(r)=A(\bar p(r)),
  \; \bar p(0)=p.
\end{equation}
For $N\in\mathbb{N}^\star$, let $T_\smallertext{N}\coloneqq s/N$ and define on
$\mathcal F_{\smallertext{T}_\smalltext{N}}^{\smallertext{D},p}$
\begin{equation*}
  \frac{\mathrm d\mathbb Q_\smallertext{N}}{\mathrm d\widehat{\mathbb P}^{p}}
  =\exp\bigg(
    -\sigma N\widehat W_{\smallertext{T}_\smalltext{N}}^p
    -\frac12\sigma^2N^2T_\smallertext{N}
   \bigg).
\end{equation*}
Under $\mathbb Q_\smallertext{N}$,
$W_t^\smallertext{N}=\widehat W_t^p+\sigma Nt$,
$0\leq t\leq T_\smallertext{N}$, is a Brownian motion. For
\[
  \widetilde P_r^\smallertext{N}\coloneqq P_{r/\smallertext{N}}^p,
  \;
  B_r^\smallertext{N}\coloneqq \sqrt N W_{r/\smallertext{N}}^\smallertext{N},
  \; 0\leq r\leq s,
\]
we obtain
\begin{equation*}
  \mathrm d\widetilde P_r^\smallertext{N}
  =\bigg(A\big(\widetilde P_r^\smallertext{N}\big)
      +\frac1N\widehat b\big(\widetilde P_r^\smallertext{N}\big)\bigg)\mathrm dr
   -\frac{A\big(\widetilde P_r^\smallertext{N}\big)}{\sigma\sqrt N} \mathrm dB_r^\smallertext{N}.
\end{equation*}

Write
$E_r^\smallertext{N}\coloneqq\widetilde P_r^\smallertext{N}-\bar p(r)$, choose $L_A<+\infty$ such that
$|A(x)-A(y)|\leq L_A|x-y|$ for every $(x,y)\in\Delta^n\times\Delta^n$, 
and set
$M_A\coloneqq\|A\|_\infty$ and
$M_b\coloneqq\|\widehat b\|_\infty$. Cauchy--Schwarz and Doob's
$L^2$ maximal inequality give, for $0\leq r\leq s$
\begin{align*}
  \mathbb E^{\mathbb Q_\smalltext{N}}\bigg[
    \sup_{0\leq u\leq r}|E_u^\smallertext{N}|^2
  \bigg]
  &\leq 3sL_A^2\int_{\smallertext{0}}^r
    \mathbb E^{\mathbb Q_\smalltext{N}}\bigg[
      \sup_{0\leq v\leq u}|E_v^\smallertext{N}|^2
    \bigg]\mathrm du
    +\frac{3s^2M_b^2}{N^2}
    +\frac{12sM_A^2}{\sigma^2N}.
\end{align*}
Hence Gr\"onwall's inequality yields
\begin{equation}
\label{eq:fast-convergence-revised}
  \mathbb E^{\mathbb Q_\smalltext{N}}\bigg[
    \sup_{0\leq r\leq s}\big|\widetilde P_r^\smallertext{N}-\bar p(r)\big|^2
  \bigg]
  \leq\frac{C_s}{N}.
\end{equation}
Here one may take
$C_s\coloneqq(3s^2M_b^2+12sM_A^2/\sigma^2)
\exp(3s^2L_A^2)$.
In particular, for every $\varepsilon>0$
\[
  \mathbb Q_\smallertext{N}\big[
    |\widetilde P_s^\smallertext{N}-\bar p(s)|>\varepsilon
  \big]
  \leq\frac{C_s}{N\varepsilon^2}\longrightarrow0
\]

Since $\mathbb Q_\smallertext{N}\ll\widehat{\mathbb P}^p$ on
$\mathcal F_{\smallertext{T}_\smalltext{N}}^{\smallertext{D},p}$, applying
\eqref{eq:h-short-time-revised} at $t=T_\smallertext{N}$ and using
$\widetilde P_s^\smallertext{N}=P_{\smallertext{T}_\smalltext{N}}^p$ gives
\begin{equation}
\label{eq:rescaled-h-bound}
  \big|h\big(\widetilde P_s^\smallertext{N}\big)-h(p)\big|
  \leq\frac{C_{\smallertext{0}}s}{N},
  \; \mathbb Q_\smallertext{N}\text{\rm--a.s.}
\end{equation}

Since $h$ is continuous on the compact simplex, it is bounded
and uniformly continuous. Fix $\varepsilon>0$ and choose $\delta>0$ such
that $|h(q)-h(q')|\leq\varepsilon$ whenever $|q-q'|\leq\delta$. Splitting
according to whether $|\widetilde P_s^\smallertext{N}-\bar p(s)|$ exceeds
$\delta$, and then applying Markov's inequality and
\eqref{eq:fast-convergence-revised}, gives
\[
  \mathbb E^{\mathbb Q_\smalltext{N}}\big[
   \big |h\big(\widetilde P_s^\smallertext{N}\big)-h(\bar p(s))\big|
  \big]
  \leq
  \varepsilon
  +
  2\|h\|_\infty
  \mathbb Q_\smallertext{N}\big[
    |\widetilde P_s^\smallertext{N}-\bar p(s)|>\delta
  \big]
  \leq
  \varepsilon
  +
  \frac{2\|h\|_\infty C_s}{N\delta^2}.
\]
For fixed $\varepsilon$ and the corresponding $\delta$, letting
$N\longrightarrow+\infty$ shows that the limit superior of the expectation
is at most $\varepsilon$. Since $\varepsilon$ is arbitrary, this proves
\[
  \mathbb E^{\mathbb Q_\smalltext{N}}\big[
    |h(\widetilde P_s^\smallertext{N})-h(\bar p(s))|
  \big]
  \underset{N\to+\infty}{\longrightarrow}0.
\]

Now we put everything together. For every outcome, the
triangle inequality through the intermediate value
$h(\widetilde P_s^\smallertext{N})$ gives
\[
  |h(\bar p(s))-h(p)|
  \leq
  |h(\bar p(s))-h(\widetilde P_s^\smallertext{N})|
  +
  |h(\widetilde P_s^\smallertext{N})-h(p)|.
\]
The left-hand side is deterministic. Taking
$\mathbb Q_\smallertext{N}$-expectations and using
\eqref{eq:rescaled-h-bound} for the second term gives
\[
  |h(\bar p(s))-h(p)|
  \leq
  \mathbb E^{\mathbb Q_\smalltext{N}}\big[
    |h(\bar p(s))-h(\widetilde P_s^\smallertext{N})|
  \big]
  +\frac{C_{\smallertext{0}}s}{N}.
\]
As $N\longrightarrow+\infty$, the first term on the right converges to zero
as shown above, and the second trivially does as well. The left-hand side does not depend on $N$, so it must be zero.
Because $p\in\Delta_\circ^n$ and $s>0$ were arbitrary, we have
$h(\bar p(s))=h(p)$ for every such $p$ and $s$.

\medskip
Finally, we use the deterministic flow to compare every
interior point $p$ with the same vertex. The solution of \eqref{eq:anchor-flow-ode} is explicit
\begin{equation*}
  \bar p^i(s)
  =\frac{p^i\mathrm e^{-g^\smalltext{i}s}}
      {\sum_{j=1}^np^j\mathrm e^{-g^\smalltext{j}s}},\; s\geq 0,
  \; i\in\{1,\ldots,n\}.
\end{equation*}

Let $i_\star$ be the unique index for which
$g^{i_\smalltext{\star}}=\min_{i\in\{1,\dots,n\}} g^i$. For every
$p\in\Delta_\circ^n$ and
$j\in\{1,\dots,n\}\setminus\{ i_\star\}$, we have
\[
  \frac{\bar p^j(s)}{\bar p^{i_\smalltext{\star}}(s)}
  =\frac{p^j}{p^{i_\smalltext{\star}}}
   \mathrm e^{-(g^\smalltext{j}-g^{\smalltext{i}_\smalltext{\star}})s}
  \underset{s\to\smallertext{+}\infty}{\longrightarrow}0.
\]
Thus $\bar p(s)\longrightarrow e_{i_\smalltext{\star}}$ as $s\longrightarrow+\infty$.
Since $h(\bar p(s))=h(p)$ for every $s>0$,
continuity gives
\[
  h(p)=h(e_{i_\smalltext{\star}}),
  \; p\in\Delta_\circ^n.
\]
Since $\Delta_\circ^n$ is dense in the closed simplex, another use of
continuity gives $h\equiv h_{\smallertext{0}}$ on $\Delta^n$, where
$h_{\smallertext{0}}=h(e_{i_\smalltext{\star}})>0$. \Cref{eq:invariant-graph-revised} then gives
$H^p\equiv h_{\smallertext{0}}$ for every $p\in\Delta^n$. Thus, the initial-value map $h$ and every auxiliary process
$H^p$ are constant.

\medskip

\noindent\emph{Step $5$: the pointwise equilibrium identity.} To identify the remaining equilibrium restriction, fix an arbitrary
$p\in\Delta^n$. Since $H^p\equiv h_{\smallertext{0}}$, the first identity in
\eqref{eq:on-path-equilibrium-reduction} gives
\[
  0
  =\frac{H_t^p-H_{\smallertext{0}}^p}{t}
  =\frac1t\int_{\smallertext{0}}^t
    \big(
     h_{\smallertext{0}}\zeta(P_s^p)-
     \varphi(P_s^p)^{-1}
    \big)\mathrm ds,\; t>0,\; \mathbb P^p\text{\rm--a.s.}
\]
The integrand is continuous at zero and
$P_{\smallertext{0}}^p=p$. Letting $t\downarrow0$
therefore gives
\[
  0=h_{\smallertext{0}}\zeta(p)-\varphi(p)^{-1},
  \; p\in\Delta^n,
\]
which is \eqref{eq:equilibrium-identity-short}. This proves necessity.

\medskip

\noindent\textbf{Part $2$: sufficiency.} We organise the argument into three
steps.

\medskip

\noindent\emph{Step $1$: the candidate value function and feedback controls.}

\medskip

Conversely, suppose that \Cref{eq:equilibrium-identity-short} and all the
conditions in \eqref{eq:converse-conditions} hold. Since
$\mathcal H(p,h_{\smallertext{0}})=0$, the constant process
$H^p\equiv h_{\smallertext{0}}$ on every model copy solves the
auxiliary equation. The uniqueness required in
\Cref{def:class-C} makes it the generated solution.
It remains to verify the candidate value and controls when $u=h_{\smallertext{0}}$.

\medskip
Let
\[
  a_\psi\coloneqq\frac{\theta}{\psi}=\theta+\gamma-1,
  \;
  \ell(p)\coloneqq
  \frac{\partial_p\varphi(p)}{\varphi(p)},
  \;
  d(p)\coloneqq A(p)^\top \ell(p),\; p\in\Delta^n.
\]
For $(p,x,u)\in\Delta^n\times(0,+\infty)^2$, define
\begin{equation*}
  w(p,x,u)
  \coloneqq
  \delta^\theta\frac{x^{1-\gamma}}{1-\gamma}
  \varphi(p)^{a_\psi}u^{a_\psi}.
\end{equation*}
Its derivatives are
\begin{gather*}
  w_x=(1-\gamma)\frac wx,\; 
  w_{xx}=-\gamma(1-\gamma)\frac w{x^2},\;
  w_u=a_\psi\frac wu,\; 
  \partial_pw=a_\psi w\ell,\\
  \partial_{pp}w=a_\psi w
    \bigg(
     \frac{\partial_{pp}\varphi}{\varphi}
     +(a_\psi-1)\ell\ell^\top
    \bigg),\;
  \partial_{xp}w=a_\psi(1-\gamma)\frac wx\ell.
\end{gather*}

At $u=h_{\smallertext{0}}$, direct substitution in \eqref{eq:hjb}, using
\eqref{eq:equilibrium-identity-short} and
\eqref{eq:converse-conditions}, shows that $w$ solves the HJB equation at $u=h_{\smallertext{0}}$ and
that the feedback controls are
\begin{equation*}
  c^\star(p,x,h_{\smallertext{0}})=\frac{x}{h_{\smallertext{0}}\varphi(p)},
  \;
  \pi^\star(p,x,h_{\smallertext{0}})=1,\; (p,x)\in\Delta^n\times(0,+\infty).
\end{equation*}

Indeed, the consumption maximiser follows from the first-order condition.
After division by $xw_x>0$, the portfolio-dependent expression, up to a
term independent of $\pi$, is
\[
  \pi\bigg(
    \mu(p,h_{\smallertext{0}})-\mathcal R(p,h_{\smallertext{0}})
    -\frac{a_\psi d(p)\nu(p)}{\sigma}
  \bigg)
  -\frac{\gamma}{2}\pi^2\nu(p)^2,
\]
where for $p\in\Delta^n$
\[
  \nu(p)\coloneqq \sigma-\frac{d(p)}{\sigma},
  \;
  \mu(p,h_{\smallertext{0}})\coloneqq\overline g(p)
   +\frac{\varphi(p)^{-1}}{h_{\smallertext{0}}}
   +\ell(p)^\top \big(b(p)-A(p)\big)
   +\frac{A(p)^\top\partial_{pp}\varphi(p)A(p)}
               {2\sigma^2\varphi(p)}.
\]
It is concave, and $\pi=1$ maximises it on $[0,1]$ precisely when
\[
  \mathcal R(p,h_{\smallertext{0}})
  \leq
  \mu(p,h_{\smallertext{0}})
  -\frac{a_\psi d(p)\nu(p)}{\sigma}
  -\gamma\nu(p)^2.
\]
Using \eqref{eq:equilibrium-identity-short}, expansion of
$\nu$, and the definition of $\zeta$, the right-hand side is exactly
$\overline r(p)$.

\medskip
Finally, substituting the feedback controls and dividing the remaining HJB
terms by $w$ gives
\begin{align*}
 &(1-\gamma)\overline g
 +\theta\zeta
 +\theta\ell^\top \big(b-(1-\gamma)A\big)
 +\frac{\theta}{2\sigma^2}
   \bigg(
    \frac{A^\top\partial_{pp}\varphi A}{\varphi}
    +(\theta-1)d^2
   \bigg)
 -\frac{\gamma(1-\gamma)\sigma^2}{2}
 =\delta\theta,
\end{align*}
where the equality is the definition of $\zeta$. This proves the HJB
equation and both feedback controls at $u=h_{\smallertext{0}}$. Thus, $w$ solves the HJB equation at $u=h_{\smallertext{0}}$ and yields the candidate
feedback controls $(\pi^\star,c^\star)$.

\medskip

\noindent\emph{Step $2$: verification estimates and admissibility.} For an arbitrary strategy $(\pi,c)\in\mathcal A(x;S^p,r^p)$, abbreviate
$\widehat Y^{p,x,\pi,c}$ and $\widehat Z^{p,x,\pi,c}$ to
$\widehat Y$ and $\widehat Z$, respectively. By definition,
\begin{equation*}
  \widehat Y_t
  \coloneqq
  (1-\gamma)\mathrm e^{-\delta\theta t}
  w(P_t^p,X_t^{p,x,\pi,c},h_{\smallertext{0}}),\; t\geq0,\; \mathbb P^p\text{\rm--a.s.}
\end{equation*}

It\^o's formula gives, on every finite interval,
\begin{equation*}
  \mathrm d\widehat Y_t
  =-\big(F_t(c_t,\widehat Y_t)+\widehat R_t\big) \mathrm dt
   +\widehat Z_t \mathrm d\overline W_t^p,
\end{equation*}
where the martingale integrand is
\begin{equation}
\label{eq:exact-Z}
  \widehat Z_t
  =(1-\gamma)\widehat Y_t
  \bigg(
    \pi_t\sigma
    -\frac{d(P_t^p)}{\sigma}
    \bigg(\pi_t+\frac1{\psi-1}\bigg)
  \bigg),\; \mathrm{d}t\otimes\mathbb P^p\text{\rm--a.e.}
\end{equation}

Define now
\begin{align}
\label{eq:HJB-residual}
  \widehat R_t
  \coloneqq(1-\gamma)\mathrm e^{-\delta\theta t}
  \bigg(
    \delta\theta w-\mathcal L^{\pi_t,c_t}w
    -\frac{\delta (c_t)^{1-1/\psi}}{1-1/\psi}
    \big((1-\gamma)w\big)^{1-1/\theta}
  \bigg)
  \big(P_t^p,X_t^{p,x,\pi,c},h_{\smallertext{0}}\big),\; t\geq0,\; \mathbb P^p\text{\rm--a.s.}
\end{align}

The HJB inequality implies
\begin{equation}
\label{eq:HJB-residual-sign}
  \operatorname{sgn}(1-\gamma)\widehat R_t\geq0,
  \; \mathrm dt\otimes\mathbb P^p\text{\rm--a.e.},
\end{equation}
and the residual vanishes at the feedback controls. The price--dividend ratio is
\[
  Q_t^p=h_{\smallertext{0}}\varphi(P_t^p),\; t\geq0,\; \mathbb P^p\text{\rm--a.s.}
\]
It is bounded above and away from zero. Let $\mathcal M^p$ be the process defined by
\begin{equation*}
  \mathcal M_t^p
  \coloneqq
  \mathrm e^{-\delta\theta t}(D_t^p)^{1-\gamma}(Q_t^p)^\theta,\; t\geq0,\; \mathbb P^p\text{\rm--a.s.}
\end{equation*}

It satisfies
\begin{equation}
\label{eq:M-dynamics-revised}
  \mathrm d\mathcal M_t^p
  =-\frac{\theta}{Q_t^p}\mathcal M_t^p \mathrm dt
   +\lambda_\smallertext{U}(P_t^p)\mathcal M_t^p \mathrm d\overline W_t^p.
\end{equation}

\medskip

Let
\[
  \overline Q
  \coloneqq
  h_{\smallertext{0}}\max_{q\in\Delta^\smalltext{n}}\varphi(q)<+\infty,
  \;
  m\coloneqq\frac{\theta}{\overline Q}>0,
  \;
  K_\smallertext{U}\coloneqq\|\lambda_\smallertext{U}\|_\infty.
\]
The explicit solution of \eqref{eq:M-dynamics-revised} is
\[
  \mathcal M_t^p
  =\mathcal M_{\smallertext{0}}^p
   \exp\bigg(-\int_{\smallertext{0}}^t\frac{\theta}{Q_s^p} \mathrm ds\bigg)
   \mathcal E\bigg(
    \int_{\smallertext{0}}^\cdot\lambda_\smallertext{U}(P_s^p) 
    \mathrm d\overline W_s^p
   \bigg)_t,\; t\geq0,\; \mathbb P^p\text{\rm--a.s.}
\]
The stochastic exponential is a true martingale because its integrand is
bounded. Since $\theta/Q_s^p\geq m$, taking expectations gives
\[
  \mathbb E^p[\mathcal M_t^p]
  \leq\mathcal M_{\smallertext{0}}^p\mathrm e^{-mt}.
\]

For $k\in\mathbb N$ and $t\in[k,k+1]$, set
\[
  \mathcal E_{k,t}^p
  \coloneqq
  \exp\bigg(
    \int_k^t\lambda_\smallertext{U}(P_s^p)\mathrm d\overline W_s^p
    -\frac12\int_k^t\lambda_\smallertext{U}(P_s^p)^2\mathrm ds
  \bigg).
\]
The solution restarted at time $k$ satisfies
$\mathcal M_t^p\leq\mathcal M_k^p\mathcal E_{k,t}^p$, while conditional
Cauchy--Schwarz and Doob's $L^2$ maximal inequality give
\[
  \mathbb E^p\bigg[
    \sup_{k\leq t\leq k+1}\mathcal E_{k,t}^p
    \bigg|\mathcal F_k^{\smallertext{D},p}
  \bigg]
  \leq
  2\mathbb E^p\bigg[
    \big(\mathcal E_{k,k+1}^p\big)^2
    \bigg|\mathcal F_k^{\smallertext{D},p}
  \bigg]^{1/2},
\]
hence
\[
  \mathbb E^p\bigg[
    \sup_{k\leq t\leq k+1}\mathcal M_t^p
     \bigg| \mathcal F_k^{\smallertext{D},p}
  \bigg]
  \leq C_{\smallertext{K}_\smalltext{U}}\mathcal M_k^p,
\]
where one may take
$C_{\smallertext{K}_\smalltext{U}}\coloneqq 2\mathrm e^{\smallertext{K}_\smalltext{U}^\smalltext{2}/2}$.
Indeed, the conditional
second moment of the stochastic exponential over an interval of length one
is bounded by $\mathrm e^{\smallertext{K}_\smalltext{U}^\smalltext{2}}$. Consequently
\begin{align*}
  \mathbb E^p\bigg[
    \bigg(\int_{\smallertext{0}}^{\smallertext{+}\infty}(\mathcal M_t^p)^2 \mathrm dt\bigg)^{1/2}
  \bigg]
  &\leq
  \sum_{k=0}^{\smallertext{+}\infty}
    \mathbb E^p\bigg[\sup_{k\leq t\leq k+1}\mathcal M_t^p\bigg]\leq
  C_{\smallertext{K}_\smalltext{U}}\sum_{k=0}^{\smallertext{+}\infty}\mathbb E^p[\mathcal M_k^p]
  \leq
  \frac{C_{\smallertext{K}_\smalltext{U}}}{1-\mathrm e^{-m}}\mathcal M_{\smallertext{0}}^p.
\end{align*}
Thus there is a constant $C>0$ such that for any $t\geq 0$
\begin{equation}
\label{eq:M-estimates-revised}
  \mathbb E^p[\mathcal M_t^p]
  \leq\mathcal M_{\smallertext{0}}^p\mathrm e^{-mt},
  \;
  \mathbb E^p\bigg[
    \bigg(\int_{\smallertext{0}}^{\smallertext{+}\infty}(\mathcal M_t^p)^2 \mathrm dt\bigg)^{1/2}
  \bigg]
  \leq C\mathcal M_{\smallertext{0}}^p.
\end{equation}

The process $\mathcal M^p$ is a positive super-martingale and therefore has
an almost-sure limit. The first estimate implies that this limit is zero.

\medskip
Let $\lambda_x\coloneqq x/S_{\smallertext{0}}^p$, $x>0$. Define the feedback controls by
\begin{equation*}
  \pi_t^\star\coloneqq1,
  \;
  c_t^{p,\star}=\lambda_xD_t^p,\; t\geq0, \; \mathbb P^p\text{\rm--a.s.}
\end{equation*}
Direct substitution in the wealth equation gives
\begin{equation*}
  X_t^{p,x,\pi^\star,c^{p,\star}}=\lambda_xS_t^p,
  \; t\geq0, \; \mathbb P^p\text{\rm--a.s.}
\end{equation*}
The associated verification process and exact martingale integrand are
\begin{align*}
  \widehat Y_t^{p,x,\pi^\star,c^{p,\star}}
  &=\delta^\theta\lambda_x^{1-\gamma}\mathcal M_t^p,\;
  \widehat Z_t^{p,x,\pi^\star,c^{p,\star}}
  =\lambda_\smallertext{U}(P_t^p)
   \widehat Y_t^{p,x,\pi^\star,c^{p,\star}},\; t\geq0,
   \; \mathbb P^p\text{\rm--a.s.}
\end{align*}
The estimates in \eqref{eq:M-estimates-revised} and boundedness of
$\lambda_\smallertext{U}$ give
$\widehat Z^{p,x,\pi^\star,c^{p,\star}}\in\mathbb H^1(\mathbb P^p)$,
while supermartingale convergence gives
$\liminf_{t\to\smallertext{+}\infty}\widehat Y_t^{p,x,\pi^\star,c^{p,\star}}=0$.
The residual is zero and
$\widehat Y_{\smallertext{0}}^{p,x,\pi^\star,c^{p,\star}}$ is deterministic, so
\Cref{lem:zero-residual} shows that this pair solves the utility BSDE. Since
$0<\theta\leq1$, zero-residual comparison gives uniqueness. Hence
$c^{p,\star}\in\mathcal C_a^p$ and the feedback strategy satisfies all three
conditions in \Cref{def:admissibility}. Thus, the required verification estimates hold and the candidate
feedback controls are admissible for every initial wealth $x>0$.

\medskip
\noindent\emph{Step $3$: optimality and market clearing.} For an arbitrary $(\pi,c)\in\mathcal A(x;S^p,r^p)$, the exact integrand
in \eqref{eq:exact-Z} belongs to $\mathbb H^1(\mathbb P^p)$ and the terminal condition
holds by admissibility. To verify that the residual is pathwise locally Lebesgue-integrable, let $q=1-1/\theta\leq0$. On every
$[0,T]$, the ratio of the positive continuous processes $\widehat Y$ and
$Y^{p,c}$ is bounded above and away from zero, and hence
\[
  \int_{\smallertext{0}}^{\smallertext{T}} F_s\big(c_s,\widehat Y_s\big) \mathrm ds
  =
  \int_{\smallertext{0}}^{\smallertext{T}} F_s\big(c_s,Y_s^{p,c}\big)
    \bigg(\frac{\widehat Y_s}{Y_s^{p,c}}\bigg)^q\mathrm ds
  <+\infty,
  \; \mathbb P^p\text{\rm--a.s.}
\]
The remaining finite-variation coefficient in It\^o's formula is pathwise locally
Lebesgue-integrable, so the residual in \eqref{eq:HJB-residual} is pathwise locally Lebesgue-integrable. The HJB inequality at $u=h_{\smallertext{0}}$, expressed in
\eqref{eq:HJB-residual-sign}, then allows us to apply
\Cref{lem:one-sided-comparison} with
$\rho=\operatorname{sgn}(1-\gamma)$. It yields
\[
  V_t^{p,\pi,c}(x;S^p,r^p)
  \leq
  w(P_t^p,X_t^{p,x,\pi,c},h_{\smallertext{0}}),\; t\geq0, \; \mathbb P^p\text{\rm--a.s.}
\]
At the feedback controls the residual vanishes, so equality holds. This
proves optimality, \eqref{eq:value-short}, and \eqref{eq:controls-short}.
At $x=S_{\smallertext{0}}^p$, one has
$X^{p,S_{\smallertext{0}}^p,\pi^\star,c^{p,\star}}=S^p$, $\pi^\star=1$, and
$c^{p,\star}=D^p$;
both markets therefore clear. This proves sufficiency and completes the
proof.
\end{proof}

The next proposition shows that the strict lower bound on
$\zeta$ is redundant in the irreducible two-state model.

\begin{proposition}[Endogenous positivity in the irreducible two-state model]
\label{prop:two-state-endogenous-zeta}
Let {\rm\Cref{assum:pref}} hold. Suppose that $n=2$,
$\lambda^{\smallertext{1}\smallertext{2}}>0$, and
$\lambda^{\smallertext{2}\smallertext{1}}>0$. If a specification is a
$\mathfrak C$-equilibrium for which the agent's value function admits a
classical $C^2$ state representation, then
\[
  \min_{p\in\Delta^\smalltext{2}}\zeta(p)>0.
\]
Consequently, the necessity conclusion of {\rm\Cref{thm:main}} holds
without separately assuming this lower bound.
\end{proposition}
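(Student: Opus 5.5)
The plan is to rerun the necessity argument of \Cref{thm:main} and check where $\min\zeta>0$ was used. The idea is to replace the continuity and fast-flow arguments of Steps~3--4, which need the uniform tail bound, with a one-dimensional argument that works because the belief diffusion is nondegenerate in the interior of $[0,1]$.

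First, I would note that Steps~1--2 of the necessity proof never use the lower bound on $\zeta$. \Cref{lem:equilibrium-reduction} gives $\mathrm dH_t^p=(\zeta(P_t^p)H_t^p-\varphi(P_t^p)^{-1})\mathrm dt$. \Cref{lem:zero-residual} upgrades the admissibility terminal condition, giving \eqref{eq:inverse-Q-diverges}. Variation of constants together with positivity of $H^p$ then gives $h(p)=J_\infty^p$, $\widehat{\mathbb P}^p$--a.s., as in \eqref{eq:critical-anchor-revised}. In particular $J_\infty^p$ is a.s.\ finite and deterministic for every $p$, with no assumption on $\zeta$.

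Next, I would repeat the restart decomposition \eqref{eq:J-restart-decomposition}. The Markov property of the tilted filter \eqref{eq:tilted-filter-revised} identifies the conditional law of $\mathcal J^p_{t,\infty}$ given $P_t^p$ with that of $J_\infty^{P_t^p}$, and this law is the Dirac mass at $h(P^p_t)$. Only measurability of $q\mapsto\widehat{\mathbb P}^q[J^q_\infty=h(q)]$ is needed here, not continuity of $h$. This yields $H_t^p=h(P_t^p)$ for all rational $t$, a.s., and then for all $t$ by continuity of $H^p$. Hence the process $h(P^p)$ is continuous and of finite variation, with derivative $h(P_t^p)\zeta(P_t^p)-\varphi(P_t^p)^{-1}$.

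Now specialise to $n=2$ and write $p_t$ for the scalar belief, which has diffusion coefficient $-A(p)/\sigma\neq0$ on $(0,1)$. Since $h(p_\cdot)$ is a continuous semimartingale and $p$ is a regular one-dimensional diffusion on $(0,1)$, the Çinlar--Jacod--Protter--Sharpe characterisation applies. It shows that $h$ coincides on $(0,1)$ with a function that is locally a difference of convex functions. The Itô--Tanaka formula then gives the martingale part of $h(p_t)$ as $\int h'_-(p_s)(-A(p_s)/\sigma)\,\mathrm d\widehat W_s$. This must vanish, so $h'_-=0$ Lebesgue-a.e.\ on the region visited, and the local-time terms force the second-derivative measure of $h$ to vanish as well. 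Hence $h\equiv h_0$ on $(0,1)$.

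Because $\lambda^{12},\lambda^{21}>0$, the boundary drift is strictly inward, so from $p\in\{0,1\}$ the process enters $(0,1)$ immediately. Then $h(p)=H_0^p=\lim_{t\downarrow0}H_t^p=h_0$, and so $h\equiv h_0>0$ on $[0,1]$. With $H^p\equiv h_0$, Step~5 of the proof of \Cref{thm:main} goes through verbatim and gives $h_0\varphi\zeta=1$ along the paths. By continuity of $\varphi$ and $\zeta$ this holds on all of $[0,1]$, so $\zeta=1/(h_0\varphi)\geq 1/(h_0\max\varphi)>0$.

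I expect the main obstacle to be the third step: making rigorous that a merely Borel $h$ whose composition with the diffusion has finite variation must be constant. The first route I would try is to cite the semimartingale-function theorem and handle the measure change carefully, working on finite horizons under $\widehat{\mathbb P}^{p,T}$ so that the conclusion transfers back to $\mathbb P^p$. If that citation is awkward, my fallback is the rescaling argument of Step~4 localised away from the endpoints. On a compact subinterval, $\zeta$ is bounded, which supplies the bound \eqref{eq:h-short-time-revised}, and the continuity of $h$ needed there would be obtained locally from the finite-variation identity along the paths.
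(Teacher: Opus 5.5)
Your route is genuinely different from the paper's and is viable, but one step fails as written.

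\textbf{How the paper argues.} The paper does not try to prove constancy of $h$ without the lower bound. It proves $\min\zeta>0$ first, by a short contradiction that uses only the positivity of $H^{p_0}$ required in $\mathfrak C$ together with \eqref{eq:H-explicit-revised}, that is, $J^{p_0}_t<h(p_0)$ for all $t$.
\begin{itemize}
\item If $\zeta\le0$ somewhere, continuity gives a nonempty open $U_T\subset\Delta^2_\circ$ on which $\zeta\le1/T$.
\item By the support theorem for the interior-nondegenerate belief diffusion, the belief stays in $U_T$ throughout $[1,T+1]$ with positive probability.
\item On that event $\Gamma^{p_0}_{1+u}\le\|\zeta\|_\infty+u/T$, so $J^{p_0}_{T+1}\ge \mathrm e^{-\|\zeta\|_\infty}(1-\mathrm e^{-1})T/\|\varphi\|_\infty$, which exceeds $h(p_0)$ for large $T$.
\end{itemize}
The necessity proof of \Cref{thm:main} is then invoked unchanged. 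This route needs neither the terminal admissibility condition, nor the anchor identity $h=J_\infty$, nor any regularity of $h$.

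\textbf{How your route compares.} You use the full anchor identity together with one-dimensional semimartingale-function theory (or a localised Step~4) to get $h\equiv h_0$ directly. This bypasses Steps~3--4 of \Cref{thm:main}, and $\zeta>0$ appears as a by-product of $h_0\varphi\zeta=1$. It is heavier, but it proves the necessity conclusion in one pass.

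\textbf{The step that fails.} The inference ``$H^p_t=h(P^p_t)$ for rational $t$, hence for all $t$ by continuity of $H^p$'' does not hold. With $h$ merely Borel, continuity of $H^p$ gives no control of $h(P^p_t)$ at other times. In the paper, this extension is exactly where continuity of $h$ is used, and that continuity comes from Step~3, which relies on the uniform tail bound. Without it you only know that $h(P^p)$ is a \emph{modification} of $H^p$. That is not enough for the \c{C}inlar--Jacod--Protter--Sharpe theorem, which needs $h(P^p)$ itself to be a semimartingale. It is also not enough for your fallback, where local continuity of $h$ is read off from path ranges.

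\textbf{The repair.} Restart at bounded stopping times rather than fixed times.
\begin{itemize}
\item The tilted SDE \eqref{eq:tilted-filter-revised} has Lipschitz coefficients, so its strong Markov property gives $\mathcal J^p_{\tau,\infty}=h(P^p_\tau)$ for every bounded stopping time $\tau$.
\item Hence $H^p_\tau=h(P^p_\tau)$ a.s.\ for every such $\tau$.
\item Both processes are optional, so the optional section theorem makes them indistinguishable.
\end{itemize}
After this, the rest of your plan goes through: constancy on $(0,1)$, the entrance-boundary step at $\{0,1\}$ (which only needs countably many fixed times anyway), and Step~5. In the fallback, note that what \eqref{eq:h-short-time-revised} needs locally is boundedness of $h$ on the compact subinterval, not of $\zeta$, which is already bounded on all of $\Delta^2$.
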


\begin{proof}
The derivation of \eqref{eq:H-explicit-revised} does not use the lower
bound on $\zeta$. Fix $p_{\smallertext{0}}\in\Delta_\circ^2$. Since
$A(q)\neq0$ for every $q\in\Delta_\circ^2$, the two-state belief diffusion
is non-degenerate in the interior. Its law therefore has full support on
the continuous paths in $\Delta_\circ^2$ starting from
$p_{\smallertext{0}}$; see \citeauthor*{ikeda1989stochastic}
\cite[Theorem~VI.8.1]{ikeda1989stochastic}. In particular, for
every non-empty open set $U\subset\Delta_\circ^2$ and every
$T>0$
\[
  \widehat{\mathbb P}^{p_\smalltext{0}}
  \big[
    P_s^{p_\smalltext{0}}\in U,\; 
    \text{\rm for every}\; s\in[1,T+1]
  \big]
  >0.
\]

By \eqref{eq:H-explicit-revised} and the positivity of $H^{p_\smalltext{0}}$
required in {\rm\Cref{def:class-C}}
\begin{equation}
\label{eq:two-state-finite-J-bound}
  J_t^{p_\smalltext{0}}<h(p_{\smallertext{0}}),
  \; t\geq0,
  \; \widehat{\mathbb P}^{p_\smalltext{0}}\text{\rm--a.s.}
\end{equation}

\medskip

Suppose that $\min_{p\in\Delta^2}\zeta(p)\leq0$. By continuity, for every
$T>0$ there is a non-empty open set
$U_\smallertext{T}\subset\Delta_\circ^2$ on which
$\zeta\leq1/T$. Hence the event
\[
  E_\smallertext{T}
  \coloneqq
  \big\{P_s^{p_\smalltext{0}}\in U_\smallertext{T},\;
  \text{\rm for every}\; s\in[1,T+1]\big\},
\]
has positive $\widehat{\mathbb P}^{p_\smalltext{0}}$-probability. Set
$K\coloneqq\|\zeta\|_\infty$. On $E_\smallertext{T}$
\[
  \Gamma_{1+u}^{p_\smalltext{0}}
  \leq K+\frac{u}{T},
  \; 0\leq u\leq T,
\]
and consequently
\[
  J_{\smallertext{T}+1}^{p_\smalltext{0}}
  \geq
  \frac{\mathrm e^{-\smallertext{K}}}{\|\varphi\|_\infty}
  \int_{\smallertext{0}}^\smallertext{T}\mathrm e^{-u/\smallertext{T}}\mathrm du
  =
  \frac{\mathrm e^{-\smallertext{K}}(1-\mathrm e^{-1})}{\|\varphi\|_\infty}T.
\]
Choose a finite $T$ for which the right-hand side exceeds
$h(p_{\smallertext{0}})$. Since $E_\smallertext{T}$ has positive probability, this
contradicts the almost sure bound
\eqref{eq:two-state-finite-J-bound}. Thus $\zeta>0$ on $\Delta^2$, and
compactness gives $\min_{p\in\Delta^2}\zeta(p)>0$. The necessity proof of
\Cref{thm:main} now applies. In particular, $H$ is constant and its drift
vanishes through $h_{\smallertext{0}}\varphi\zeta=1$.
\end{proof}

\begin{remark}
This argument is specific to two states. There the simplex is
one-dimensional and the scalar belief diffusion is non-degenerate in its
interior, so its support contains paths that remain near any given belief.
For $n>2$, the belief diffusion has only one diffusion direction on the
$(n-1)$-dimensional simplex and need not remain near an arbitrary point.
\end{remark}

\section{Proofs for the two-state Markovian analysis}
\label{app:proof-two-state-analysis}

This appendix contains the proofs of the analytical results stated in
\Cref{sec:markovian-shape}.

\subsection{Existence and uniqueness}

The following lemma establishes positivity and compactness of the linear
resolvent $R$. Existence then follows from Schauder's theorem applied to a
bounded truncation of $\Phi\longmapsto R(\theta\Phi^q)$.

\medskip

\begin{lemma}[Linear degenerate resolvent]
\label{lem:markovian-linear-resolvent}
Assume that $\lambda^{\smallertext{1}\smallertext{2}}>0$ and $\lambda^{\smallertext{2}\smallertext{1}}>0$. For every
$f\in C([0,1])$, there exists a unique function $u_f\in C^1([0,1])\cap C^2((0,1))$ such that
\begin{equation}
\label{eq:linear-degenerate-resolvent}
  -\alpha(p)u_f^{\prime\prime}(p)-\beta(p)u_f^{\prime}(p)+c(p)u_f(p)=f(p),
  \; p\in(0,1),
\end{equation}
and
\begin{equation*}
\begin{aligned}
  -\lambda^{\smallertext{2}\smallertext{1}}u_f^{\prime}(0)+c(0)u_f(0)&=f(0),\\
  \lambda^{\smallertext{1}\smallertext{2}}u_f^{\prime}(1)+c(1)u_f(1)&=f(1).
\end{aligned}
\end{equation*}
If $Rf\coloneqq u_f$, then $R:C([0,1])\longrightarrow C([0,1])$ is linear,
positive, continuous, and compact. It satisfies
\begin{equation*}
 \|Rf\|_\infty
 \le
 \frac{\|f\|_\infty}{c_\smallertext{-}}.
\end{equation*}
Moreover, there exists a constant $C_\smallertext{R}>0$, depending only on the coefficients, such
that
\begin{equation*}
  \|(Rf)^{\prime}\|_\infty\le C_\smallertext{R}\|f\|_\infty.
\end{equation*}
\end{lemma}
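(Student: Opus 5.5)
The plan is to treat the degenerate two-point boundary-value problem as the stationary equation of the belief diffusion killed at rate $c$. The solution will then be written as a resolvent, and positivity, the sup bound and compactness will be read off from it.

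\emph{Step 1: uniqueness and the maximum principle.} Let $u\in C^1([0,1])\cap C^2((0,1))$ solve the homogeneous problem ($f=0$), and suppose $u$ has a positive maximum at some $p_0$.
\begin{itemize}
\item If $p_0\in(0,1)$, then $u'(p_0)=0$ and $u''(p_0)\le0$, so $c(p_0)u(p_0)\le0$. This contradicts $c>0$.
\item If $p_0=0$, then $u'(0)\le0$, and the boundary relation gives $c(0)u(0)=\lambda^{21}u'(0)\le0$, again a contradiction.
\item The endpoint $p_0=1$ is handled the same way, since there $u'(1)\ge0$.
\end{itemize}
Applying the argument to $-u$ gives $u\equiv0$. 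The same argument applied to a general $f$ gives $c_-\max u\le\|f\|_\infty$, hence $\|Rf\|_\infty\le\|f\|_\infty/c_-$, and it also gives positivity: $f\ge0$ implies $u_f\ge0$. Note that at the endpoints the equation is exactly the boundary relation, because $\alpha(0)=\alpha(1)=0$ and $\beta(0)=\lambda^{21}$, $\beta(1)=-\lambda^{12}$. So the boundary conditions amount to the equation holding on all of $[0,1]$, and the ODE argument extends to the endpoints provided $\alpha u''\to0$ there.

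\emph{Step 2: existence.} Take the Markov process $\mathrm dp=\beta\,\mathrm dt+\sqrt{2\alpha}\,\mathrm dW$ on $[0,1]$ and define
\[
u_f(p)=\mathbb E_p\Big[\int_0^\infty \mathrm e^{-\int_0^t c(p_s)\mathrm ds}f(p_t)\,\mathrm dt\Big].
\]
Since $\beta(0)>0$ and $\beta(1)<0$, the process does not exit $[0,1]$. Interior $C^2$ regularity follows by solving the uniformly elliptic Dirichlet problems on $[\varepsilon,1-\varepsilon]$ with boundary data $u_f$; this also gives continuity of $u_f$. As an alternative route I would solve directly: in the interior, near $0$ the equation reads $\alpha u''+\beta u'=cu-f$ with $\alpha\sim Kp^2$ and $\beta(0)>0$. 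Introduce the integrating factor $m(p)=\exp\int\beta/\alpha$. It behaves like $\exp(-\lambda^{21}/(Kp))$, so it vanishes extremely fast at $0$. Writing $(mu')'=(m/\alpha)(cu-f)$ gives
\[
u'(p)=m(p)^{-1}\int_0^p\frac{m}{\alpha}(cu-f).
\]
Because $m/\alpha$ is concentrated at the upper limit, this yields $u'(p)\to(c(0)u(0)-f(0))/\lambda^{21}$, which is exactly the boundary relation, together with a bound
\[
|u'|\le C\big(\|u\|_\infty+\|f\|_\infty\big)\le C_R\|f\|_\infty
\]
near $0$. The endpoint $1$ is symmetric, and interior bounds on a compact subinterval are standard. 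To construct the solution itself, I would use this integral form with a shooting or Fredholm argument, where uniqueness from Step 1 gives existence via the Fredholm alternative.

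\emph{Step 3: compactness.} The bound $\|(Rf)'\|_\infty\le C_R\|f\|_\infty$ makes $R$ map bounded sets into equicontinuous, bounded sets, so Arzel\`a--Ascoli gives compactness. Linearity and continuity are then immediate.

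I expect the main obstacle to be the endpoint analysis: proving that $u\in C^1([0,1])$ and that the derivative limit satisfies the boundary relation, with constants uniform in $f$. This is where the inward drift, i.e.\ $\lambda^{12},\lambda^{21}>0$, is essential. It is handled through the exponentially degenerate integrating-factor estimate, using that $\int_0^p (m/\alpha)\sim m(p)/\beta(0)$ up to lower-order terms.
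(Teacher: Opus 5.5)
Your proof is correct, but it obtains existence by a genuinely different route from the paper.

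\emph{Your route.} You build $u_f$ as the Feynman--Kac resolvent of the diffusion $\mathrm dp=\beta(p)\,\mathrm dt+\sqrt{2\alpha(p)}\,\mathrm dW$ killed at rate $c$. Since $\sqrt{2\alpha}=\chi$ and $\beta=B+(1-\gamma)\sigma\chi$, this is the Wonham filter after a constant Girsanov shift, so invariance of $[0,1]$ comes for free. You identify $u_f$ with local Dirichlet solutions in the interior. You then carry out the integrating-factor trace analysis once, directly on this bounded solution.

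\emph{The paper's route.} The paper stays within ODE theory. It regularises to $\alpha_\varepsilon=\alpha+\varepsilon$ and solves the nondegenerate Robin problems by shooting. It proves an $\varepsilon$-uniform $C^1$ bound using the same variation-of-constants mechanism near the endpoints. It then extracts a limit by Arzel\`a--Ascoli, and only afterwards identifies the traces with $\rho_0$ and l'H\^opital.

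\emph{What each buys.} Your approach gives existence, positivity and $\|Rf\|_\infty\le\|f\|_\infty/c_-$ at once, and it avoids uniform-in-$\varepsilon$ estimates. The price is importing SDE facts: invariance of $[0,1]$, and the strong Markov identification with the Dirichlet problems. The paper's argument is self-contained.

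Two steps you use tacitly should be made explicit.
\begin{itemize}
\item Writing $u'(p)=m(p)^{-1}\int_0^p (m/\alpha)(cu-f)$ requires $\lim_{p\downarrow0}m(p)u'(p)=0$. The limit exists because $(mu')'$ is integrable near $0$. It vanishes because otherwise $|u'|\gtrsim 1/m$, which is not integrable, contradicting boundedness of $u_f$. The paper states this step explicitly.
\item The Dirichlet identification gives continuity only on $(0,1)$. Continuity at the endpoints, which l'H\^opital needs, should be taken from the resulting bound on $u'$ near $0$ and $1$, or from Feller continuity of the flow.
\end{itemize}

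Your fallback ``shooting or Fredholm'' sketch would not stand alone as written, because shooting from a degenerate endpoint is not a standard initial-value problem. The Feynman--Kac construction does not need it.
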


\begin{proof}
Let
\[
  a_{\smallertext{0}}\coloneqq\frac{(\Delta g)^2}{2\sigma^2},
  \;
  \alpha(p)=a_{\smallertext{0}}p^2(1-p)^2.
\]
The proof proceeds in four steps.

\medskip

\noindent\emph{Step $1$: regularised problems.}
For $\varepsilon\in(0,1]$, set
$\alpha_\varepsilon=\alpha+\varepsilon$ and consider
\begin{equation*}
\begin{cases}
  -\alpha_\varepsilon u_\varepsilon^{\prime\prime}
  -\beta u_\varepsilon^{\prime}+cu_\varepsilon=f,
    &p\in(0,1),\\
  -\lambda^{\smallertext{2}\smallertext{1}}u_\varepsilon^{\prime}(0)
    +c(0)u_\varepsilon(0)=f(0),\\
  \lambda^{\smallertext{1}\smallertext{2}}u_\varepsilon^{\prime}(1)
    +c(1)u_\varepsilon(1)=f(1).
\end{cases}
\end{equation*}
Since $\alpha_\varepsilon$ is bounded away from zero and $f$ is
continuous, the associated first-order system has a unique $C^1$
solution for each initial pair; its first component is $C^2$.

\medskip

Uniqueness for the boundary-value problem follows from the maximum
principle. A nonzero homogeneous solution cannot have a positive maximum:
at an interior maximum,
$-\alpha_\varepsilon u^{\prime\prime}+cu>0$; at $0$,
$-\lambda^{\smallertext{2}\smallertext{1}}u^{\prime}(0)+c(0)u(0)>0$; and the argument at $1$ is
identical. Applying the same reasoning to $-u$ excludes a negative
minimum.

\medskip

Existence can also be shown directly by shooting. Given
$a=u_\varepsilon(0)$, the left endpoint equation fixes
\[
  u_\varepsilon^{\prime}(0)
  =
  \frac{c(0)a-f(0)}{\lambda^{\smallertext{2}\smallertext{1}}}.
\]
Let $u_{\varepsilon,a}$ be the corresponding initial-value solution and
define its right-boundary residual by
\[
  F_\varepsilon(a)
  \coloneqq
  \lambda^{\smallertext{1}\smallertext{2}}u_{\varepsilon,a}^{\prime}(1)
  +c(1)u_{\varepsilon,a}(1)-f(1).
\]
Linearity of the equation makes $F_\varepsilon$ affine. Its slope is the
right-boundary residual of the homogeneous solution with
$u(0)=1$ and $u^{\prime}(0)=c(0)/\lambda^{\smallertext{2}\smallertext{1}}$. If that slope were
zero, this nonzero solution would satisfy both homogeneous endpoint
relations, contradicting uniqueness. Thus $F_\varepsilon$ has exactly
one zero.

\medskip

The same maximum-principle argument gives
\begin{equation*}
  \|u_\varepsilon\|_\infty
  \leq
  \frac{\|f\|_\infty}{c_\smallertext{-}}.
\end{equation*}

\medskip

\noindent\emph{Step $2$: an $\varepsilon$-uniform derivative bound.}
Set $v_\varepsilon=u_\varepsilon^{\prime}$ and
$h_\varepsilon=cu_\varepsilon-f$. Then
\[
  \|h_\varepsilon\|_\infty
  \leq
  \bigg(1+\frac{c_\smallertext{+}}{c_\smallertext{-}}\bigg)\|f\|_\infty
  \eqqcolon H_f,
\]
and the endpoint equations imply
\[
  |v_\varepsilon(0)|
  \leq\frac{H_f}{\lambda^{\smallertext{2}\smallertext{1}}},
  \;
  |v_\varepsilon(1)|
  \leq\frac{H_f}{\lambda^{\smallertext{1}\smallertext{2}}}.
\]
Choose $r\in(0,1/4)$ and $b_{\smallertext{0}},b_{\smallertext{1}}>0$ such that
$\beta\geq b_{\smallertext{0}}$ on $[0,r]$ and
$-\beta\geq b_{\smallertext{1}}$ on $[1-r,1]$. On the left interval,
\[
  \alpha_\varepsilon v_\varepsilon^{\prime}
  +\beta v_\varepsilon=h_\varepsilon.
\]
Variation of constants, with
\[
  E_\varepsilon(p,s)
  \coloneqq
  \exp\bigg(
    -\int_s^p\frac{\beta(z)}{\alpha_\varepsilon(z)} \mathrm dz
  \bigg),
\]
gives
\[
  |v_\varepsilon(p)|
  \leq
  |v_\varepsilon(0)|
  +
  H_f\int_{\smallertext{0}}^p
    \frac{E_\varepsilon(p,s)}{\alpha_\varepsilon(s)} \mathrm ds
  \leq
  H_f\bigg(\frac1{\lambda^{\smallertext{2}\smallertext{1}}}+\frac1{b_{\smallertext{0}}}\bigg).
\]
Reflection about $p=1/2$ yields the analogous bound on $[1-r,1]$.
On $[r,1-r]$, $\alpha$ is bounded away from zero. The mean-value
theorem supplies one point at which $v_\varepsilon$ is bounded by a
multiple of $\|u_\varepsilon\|_\infty$, and variation of constants from
that point bounds $v_\varepsilon$ on the whole interior interval.
Consequently, for a coefficient-dependent constant $C_\smallertext{R}$,
\begin{equation}
\label{eq:uniform-C1-regularised}
  \|u_\varepsilon^{\prime}\|_\infty
  \leq
  C_\smallertext{R}\|f\|_\infty,
  \;
  \varepsilon\in(0,1].
\end{equation}

\medskip

\noindent\emph{Step $3$: the degenerate limit and endpoint traces.}
The bounds above make $(u_\varepsilon)$ uniformly bounded and
equi-Lipschitz. The Arzelà--Ascoli theorem therefore gives a sequence
$\varepsilon_n\downarrow0$ and $u\in C([0,1])$ such that
\[
  u_{\varepsilon_n}\longrightarrow u
  \;\text{uniformly on }[0,1].
\]
On every $J\Subset(0,1)$, the equation bounds
$u_{\varepsilon_n}^{\prime\prime}$ uniformly. Arzelà--Ascoli applied to
the derivatives gives, after a diagonal subsequence,
$u_{\varepsilon_n}\to u$ in $C^1(J)$; substituting into the equation
then gives convergence in $C^2(J)$. Hence
\[
  u\in C([0,1])\cap C^2((0,1))
\]
and $u$ solves \eqref{eq:linear-degenerate-resolvent} in the interior.
If $f\in C^k_{\mathrm{loc}}((0,1))$, differentiation of the uniformly
elliptic interior equation gives
$u\in C^{k+2}_{\mathrm{loc}}((0,1))$. No higher regularity is asserted
here when the forcing term $f$ is merely continuous.

\medskip

To identify the derivative trace at $0$, set $g_f=cu-f$ and
\[
  \rho_{\smallertext{0}}(p)
  \coloneqq
  \exp\bigg(
    -\int_p^r\frac{\beta(s)}{\alpha(s)} \mathrm ds
  \bigg).
\]
Because $\alpha(p)\sim a_{\smallertext{0}}p^2$ and
$\beta(p)\to\lambda^{\smallertext{2}\smallertext{1}}>0$,
there exist $C_\rho,K_\rho>0$ such that
$\rho_{\smallertext{0}}(p)\leq C_\rho\mathrm e^{-K_\rho/p}$ near zero. Rewriting the equation as
$\alpha u^{\prime\prime}+\beta u^{\prime}=g_f$ gives
\[
  (\rho_{\smallertext{0}}u^{\prime})^{\prime}
  =
  \rho_{\smallertext{0}}\frac{g_f}{\alpha}.
\]
Continuity of $u$ forces
$\lim_{p\downarrow0}\rho_{\smallertext{0}}(p)u^{\prime}(p)=0$; otherwise
$|u^{\prime}|$ would dominate the nonintegrable function
$1/\rho_{\smallertext{0}}$. Therefore
\[
  u^{\prime}(p)
  =
  \frac1{\rho_{\smallertext{0}}(p)}
  \int_{\smallertext{0}}^p
    \rho_{\smallertext{0}}(s)\frac{g_f(s)}{\alpha(s)} \mathrm ds.
\]
Since $\rho_{\smallertext{0}}^{\prime}=\rho_{\smallertext{0}}\beta/\alpha$, l'Hôpital's rule yields
\[
  u^{\prime}(0+)
  =
  \frac{c(0)u(0)-f(0)}{\lambda^{\smallertext{2}\smallertext{1}}}.
\]
The reflected argument at $1$ gives
\[
  u^{\prime}(1-)
  =
  \frac{f(1)-c(1)u(1)}{\lambda^{\smallertext{1}\smallertext{2}}}.
\]
Thus $u\in C^1([0,1])\cap C^2((0,1))$ and satisfies both endpoint
relations.

\medskip

\noindent\emph{Step $4$: properties of the resolvent.}
First, the maximum-principle argument from Step $1$, now using the
degenerate endpoint relations, proves uniqueness. Hence every convergent
subsequence of $u_\varepsilon$ has the same limit $u=Rf$, and the whole
family converges.

\medskip

Second, if $f\geq0$, a negative minimum of $u$, whether interior or at
an endpoint, contradicts the corresponding equation. Thus $R$ is
positive. Applying the maximum principle to $u$ and $-u$ gives
\[
  \|Rf\|_\infty\leq\frac{\|f\|_\infty}{c_\smallertext{-}}.
\]
Linearity follows from uniqueness, and this estimate proves continuity of
$R$ in the supremum norm.

\medskip

Finally, the uniform estimate
\eqref{eq:uniform-C1-regularised} passes to the limit:
\[
  \|(Rf)^{\prime}\|_\infty
  \leq C_\smallertext{R}\|f\|_\infty.
\]
The image of the unit ball of $C([0,1])$ is therefore uniformly bounded
and equi-Lipschitz. A final application of the Arzelà--Ascoli theorem
shows that this image is relatively compact in $C([0,1])$, so $R$ is
compact.
\end{proof}

\begin{lemma}[Regularity scale for the linear degenerate resolvent]
\label{lem:linear-regularity-scale}
Assume $\lambda^{\smallertext{1}\smallertext{2}}>0$ and $\lambda^{\smallertext{2}\smallertext{1}}>0$, let $k\geq1$ be an
integer, and, for $f\in C([0,1])$, let $u_f=Rf$ be the solution from
{\rm\Cref{lem:markovian-linear-resolvent}}.
\begin{enumerate}
  \item[$(i)$] If $f\in C^k((0,1))$, then
  $u_f\in C^{k+2}((0,1))$. In particular, $u_f$ is smooth on
  $(0,1)$ whenever $f$ is smooth there, and $u_f$ is real
  analytic on $(0,1)$ whenever $f$ is real analytic there.
  \item[$(ii)$] If $f\in C^k([0,1])$, then
  $u_f\in C^{k+1}([0,1])$. Moreover, set
  \[
    h_f\coloneqq cu_f-f,\;
    h_{f,0}\coloneqq h_f,\;
    \beta_j\coloneqq j\alpha^{\prime}+\beta,
  \]
  and, for $j\geq1$, define
  \begin{equation}
  \label{eq:hfj-def}
    h_{f,j}
    \coloneqq
    h_f^{(j)}
    -\sum_{i=2}^{j}\binom{j}{i}
      \alpha^{(i)}u_f^{(j+2-i)}
    -\sum_{i=1}^{j}\binom{j}{i}
      \beta^{(i)}u_f^{(j+1-i)}.
  \end{equation}
  Then
  \[
    \lambda^{\smallertext{2}\smallertext{1}}u_f^{(j+1)}(0)=h_{f,j}(0),
    \;
    \lambda^{\smallertext{1}\smallertext{2}}u_f^{(j+1)}(1)=-h_{f,j}(1),
    \; j\in\{0,\dots,k\}.
  \]
\end{enumerate}
\end{lemma}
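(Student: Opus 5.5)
For part $(i)$, I will argue locally in the interior. On any compact $J\Subset(0,1)$ the coefficient $\alpha$ is bounded below by a positive constant, so the equation can be rewritten as
\[
u_f''=\frac{cu_f-f-\beta u_f'}{\alpha}.
\]
The right-hand side has one derivative less than $u_f$, plus the regularity of $f$. Starting from $u_f\in C^2$ given by \Cref{lem:markovian-linear-resolvent}, induction on the number of derivatives gives $u_f\in C^{m+2}(J)$ whenever the right side lies in $C^m(J)$, up to $m=k$. For analyticity I would invoke the classical Cauchy--Kovalevskaya / analytic linear ODE theorem. The coefficients $\alpha,\beta,c$ are polynomials and $1/\alpha$ is analytic on $(0,1)$, so solutions of a linear ODE with analytic coefficients and analytic forcing are analytic on the whole interval of regularity.

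For part $(ii)$, I will proceed by induction on $j$, working near $0$; the endpoint $1$ is symmetric by reflection, exchanging the roles of $\lambda^{21}$ and $\lambda^{12}$, which explains the sign change. The base case $j=0$ is the trace identity already proved in Step 3 of \Cref{lem:markovian-linear-resolvent}.

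For the induction step I differentiate the interior equation $\alpha u''+\beta u'=h_f$ exactly $j$ times using Leibniz. Setting $w\coloneqq u^{(j+1)}$ gives
\[
\alpha w'+\beta_j w=h_{f,j},
\]
since the terms with $i=1$ in $\alpha$ and $i=0$ in $\beta$ combine into $\beta_j=j\alpha'+\beta$, and all remaining terms are collected in $h_{f,j}$. This is a first-order equation of the same degenerate type as before. Near $0$ we have $\beta_j(0)=\beta(0)=\lambda^{21}>0$, because $\alpha'(0)=0$, and $\alpha\sim a_0p^2$.

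The key observation is that $h_{f,j}$ involves only $h_f^{(j)}$, which needs $f\in C^j$ and $u\in C^j$, together with derivatives $u^{(m)}$ for $m\le j$. These are continuous on $[0,1]$ by the induction hypothesis $u\in C^{j}([0,1])$, with $j\le k$. Hence $h_{f,j}\in C([0,1])$.

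I then repeat the integrating-factor argument with
\[
\rho_j(p)=\exp\Big(-\int_p^r\beta_j/\alpha\Big),
\]
which decays like $\mathrm e^{-K/p}$ at $0$. The bound $|w|\le$ local boundedness plus integrability forces $\rho_jw\to0$; this requires $w$ to be at least not too singular. The representation
\[
w(p)=\rho_j(p)^{-1}\int_0^p\rho_j\,h_{f,j}/\alpha
\]
then follows, and l'Hôpital gives $w(0+)=h_{f,j}(0)/\lambda^{21}$. So $u^{(j+1)}$ extends continuously to $[0,1]$, which yields both $u\in C^{j+1}([0,1])$ and the stated endpoint relation.

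The main obstacle is justifying $\rho_j w\to 0$ at the endpoint. Continuity of $u^{(j)}$ on $[0,1]$, from the induction hypothesis, handles it. If $\rho_jw$ tended to a nonzero limit $L$, then $|w|\gtrsim|L|/\rho_j$, and $1/\rho_j$ is not integrable near $0$. That would contradict the boundedness of $u^{(j)}=u^{(j)}(r)-\int_p^r w$. Once this step is in place, every remaining piece is routine bookkeeping of the Leibniz indices.
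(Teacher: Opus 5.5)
Your proposal is correct and follows the paper's proof essentially step for step. For $(i)$, you bootstrap $u_f''=(cu_f-\beta u_f'-f)/\alpha$ in the interior and apply analytic ODE theory. For $(ii)$, you induct on $j$: the $j$-fold differentiated equation $\alpha w'+\beta_j w=h_{f,j}$, with $w=u_f^{(j+1)}$ and $\beta_j(0)=\lambda^{21}$ because $\alpha'(0)=0$, is handled by the integrating factor $\rho_j$, the vanishing of $\lim_{p\downarrow0}\rho_j w$, l'H\^opital's rule, and reflection at $p=1$. The one point worth stating explicitly is why $\lim_{p\downarrow0}\rho_j w$ exists at all: $(\rho_j w)'=\rho_j h_{f,j}/\alpha$ is integrable near $0$. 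Once the limit exists, your argument that $1/\rho_j$ is not integrable rules out a nonzero value.
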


\begin{proof}
\emph{Interior regularity.}
On $(0,1)$, $\alpha>0$, and
\eqref{eq:linear-degenerate-resolvent} can be rearranged as
\begin{equation}
\label{eq:interior-rearranged}
  u_f^{\prime\prime}
  =
  \frac{cu_f-\beta u_f^{\prime}-f}{\alpha}.
\end{equation}
Suppose $u_f\in C^j((0,1))$ for some
$j\in\{2,\dots,k+1\}$. The right-hand side of
\eqref{eq:interior-rearranged} then belongs to
$C^{j-1}((0,1))$, and hence $u_f\in C^{j+1}((0,1))$.
Starting from $u_f\in C^2((0,1))$ and iterating proves
$u_f\in C^{k+2}((0,1))$.

\medskip

If $f$ is real analytic, the right-hand side of
\eqref{eq:interior-rearranged}, viewed as a function of
$(p,u_f,u_f^{\prime})$, is real analytic near every interior point.
The local analytic-solvability theorem for ordinary differential
equations, together with uniqueness of the Cauchy problem, therefore
implies that $u_f$ is real analytic on $(0,1)$; see, for example,
\cite[Chapter~1]{coddington1955theory}.

\medskip

\emph{Regularity at the endpoints.}
We prove by induction over $j\in\{0,\dots,k\}$ that
$u_f^{(j+1)}$ extends continuously to $[0,1]$ and satisfies the
displayed endpoint identities. The case $j=0$ is Step~$3$ of the
proof of \Cref{lem:markovian-linear-resolvent}.

\medskip

Let $j\geq1$, suppose the assertion holds up to order $j-1$, and set
$v\coloneqq u_f^{(j)}$. Differentiating
\[
  \alpha u_f^{\prime\prime}+\beta u_f^{\prime}=h_f
\]
exactly $j$ times and collecting the two highest derivatives gives
\begin{equation}
\label{eq:leibniz-reduced}
  \alpha v^{\prime\prime}+\beta_jv^{\prime}=h_{f,j},
  \; p\in(0,1).
\end{equation}
All derivatives of $u_f$ appearing in $h_{f,j}$ have order at most
$j$. The induction hypothesis and $f\in C^k([0,1])$ therefore give
$h_{f,j}\in C([0,1])$. Since
$\alpha^{\prime}(0)=\alpha^{\prime}(1)=0$,
\[
  \beta_j(0)=\lambda^{\smallertext{2}\smallertext{1}}>0,
  \;
  \beta_j(1)=-\lambda^{\smallertext{1}\smallertext{2}}<0.
\]
Thus \eqref{eq:leibniz-reduced} has the same boundary structure as the
equation used to identify the first derivative traces.

\medskip

To make this explicit at $0$, choose $r_j>0$ such that
$\beta_j>0$ on $[0,r_j]$, and set
\[
  \rho_j(p)
  \coloneqq
  \exp\bigg(
    -\int_p^{r_j}\frac{\beta_j(s)}{\alpha(s)}\mathrm ds
  \bigg),
  \; p\in(0,r_j].
\]
Because $\alpha(p)$ vanishes quadratically and
$\beta_j(0)=\lambda^{\smallertext{2}\smallertext{1}}>0$, there exist
constants $C_j,K_j>0$ such that
$\rho_j(p)\leq C_j\mathrm e^{-K_j/p}$ near $0$. Hence
$\rho_j/\alpha$ is integrable there, whereas $1/\rho_j$ is not.
Equation \eqref{eq:leibniz-reduced} yields
\[
  (\rho_jv^{\prime})^{\prime}
  =
  \rho_j\frac{h_{f,j}}{\alpha}.
\]
The limit of $\rho_jv^{\prime}$ at $0$ must be zero; otherwise
$v^{\prime}$ would have a nonintegrable fixed-sign singularity,
contradicting continuity of $v$ at the endpoint. Therefore
\[
  v^{\prime}(p)
  =
  \frac1{\rho_j(p)}
  \int_{\smallertext{0}}^p
    \rho_j(s)\frac{h_{f,j}(s)}{\alpha(s)}\mathrm ds.
\]
Since $\rho_j^{\prime}=\rho_j\beta_j/\alpha$, l'Hôpital's rule gives
\[
  v^{\prime}(0+)
  =
  \frac{h_{f,j}(0)}{\beta_j(0)}
  =
  \frac{h_{f,j}(0)}{\lambda^{\smallertext{2}\smallertext{1}}}.
\]
Thus $u_f^{(j+1)}$ extends continuously to $0$ and satisfies the
claimed identity there. Applying the same argument to $u_f(1-\cdot)$ gives
\[
  \lambda^{\smallertext{1}\smallertext{2}}u_f^{(j+1)}(1)=-h_{f,j}(1).
\]
This closes the induction and proves the result.
\end{proof}

\begin{proof}[Proof of \Cref{thm:markovian-existence}]
Since
\[
 q-1=-\frac1\theta<0,
\]
we may choose constants $0<m<M$ such that
\[
 \theta m^{q-1}>c_\smallertext{+},
 \;
 \theta M^{q-1}<c_\smallertext{-}.
\]
Let
\[
  \mathcal T(y)\coloneqq \min\{M,\max\{m,y\}\},
  \;
  N(y)\coloneqq \theta\mathcal T(y)^q.
\]
The truncation $\mathcal T$ is Lipschitz continuous with values in
$[m,M]$, and $y\longmapsto\theta y^q$ is Lipschitz continuous on that
interval. Hence $N$ induces a continuous map on $C([0,1])$ that
maps bounded sets into bounded sets. Since $R$ is continuous and
compact, define the continuous compact map
\begin{align*}
  T:C([0,1])&\longrightarrow C([0,1]),\\
  v&\longmapsto R(N(v)),
\end{align*}
where $R$ is the linear resolvent of \Cref{lem:markovian-linear-resolvent}.
If $N_\smallertext{+}\coloneqq \sup_{y\in\mathbb R}N(y)$, then positivity of $R$ and the
resolvent bound give
\[
 0\le T v\le \frac{N_\smallertext{+}}{c_\smallertext{-}}.
\]
Hence $T$ maps the closed convex set
\[
  \mathcal K
  \coloneqq 
  \bigg\{
    v\in C([0,1]):
    0\le v\le \frac{N_\smallertext{+}}{c_\smallertext{-}}
  \bigg\},
\]
into itself. Schauder's fixed-point theorem gives a function
$\Phi\in\mathcal K$ such that $T\Phi=\Phi$.

\medskip

We show that the truncation is inactive. Suppose first that
$\min_{[0,1]}\Phi<m$, and let $p_{\smallertext{0}}$ be a point where the minimum is
attained. If $p_{\smallertext{0}}\in(0,1)$, then
$\Phi^{\prime}(p_{\smallertext{0}})=0$ and $\Phi^{\prime\prime}(p_{\smallertext{0}})\ge0$. Thus the truncated equation gives
\[
 N(\Phi(p_{\smallertext{0}}))
 =
 -\alpha(p_{\smallertext{0}})\Phi^{\prime\prime}(p_{\smallertext{0}})+c(p_{\smallertext{0}})\Phi(p_{\smallertext{0}})
 \le
 c_\smallertext{+}\Phi(p_{\smallertext{0}})
 <
 c_\smallertext{+}m.
\]
If $p_{\smallertext{0}}=0$, then $\Phi^{\prime}(0)\ge0$, and the left boundary trace gives
\[
 N(\Phi(0))
 =
 -\lambda^{\smallertext{2}\smallertext{1}}\Phi^{\prime}(0)+c(0)\Phi(0)
 \le
 c_\smallertext{+}m.
\]
The case $p_{\smallertext{0}}=1$ is identical, since a minimum at the right endpoint has
$\Phi^{\prime}(1)\le0$. In all cases we obtain
\[
 N(\Phi(p_{\smallertext{0}}))\le c_\smallertext{+}m.
\]
But $\Phi(p_{\smallertext{0}})<m$ implies $N(\Phi(p_{\smallertext{0}}))=\theta m^q$, contradicting
$\theta m^q>c_\smallertext{+}m$.

\medskip

Similarly, suppose that $\max_{[0,1]}\Phi>M$, and let $p_{\smallertext{1}}$ be a point
where the maximum is attained. At an interior maximum,
$\Phi^{\prime}(p_{\smallertext{1}})=0$ and $\Phi^{\prime\prime}(p_{\smallertext{1}})\le0$, and hence
\[
 N(\Phi(p_{\smallertext{1}}))
 =
 -\alpha(p_{\smallertext{1}})\Phi^{\prime\prime}(p_{\smallertext{1}})+c(p_{\smallertext{1}})\Phi(p_{\smallertext{1}})
 \ge
 c_\smallertext{-}\Phi(p_{\smallertext{1}})
 >
 c_\smallertext{-}M.
\]
At $p_{\smallertext{1}}=0$ one has $\Phi^{\prime}(0)\le0$, while at $p_{\smallertext{1}}=1$ one has
$\Phi^{\prime}(1)\ge0$, so the same inequality follows from the corresponding
boundary trace. Since $\Phi(p_{\smallertext{1}})>M$, however,
\[
 N(\Phi(p_{\smallertext{1}}))=\theta M^q<c_\smallertext{-}M,
\]
a contradiction. Therefore $m\le\Phi\le M$, and $N(\Phi)=\theta\Phi^q$.
Thus $\Phi$ solves
\eqref{eq:two-state-Phi-ode}--\eqref{eq:two-state-Phi-boundary}.

\medskip

It remains to record the sharper bounds. Let $p_*$ be a minimum point of
$\Phi$. Repeating the preceding minimum argument, but now with the
untruncated equation, gives
\[
 \theta\Phi(p_*)^q
 \le
 c_\smallertext{+}\Phi(p_*).
\]
Since $q-1=-1/\theta<0$, this implies
\[
 \Phi(p_*)\ge \bigg(\frac{\theta}{c_\smallertext{+}}\bigg)^\theta.
\]
The maximum argument gives
\[
  \theta\Phi(p^*)^q
  \ge
  c_\smallertext{-}\Phi(p^*),
\]
at a maximum point $p^*$, and therefore
\[
 \Phi(p^*)\le \bigg(\frac{\theta}{c_\smallertext{-}}\bigg)^\theta.
\]
Consequently, $\Phi\in C^1([0,1])\cap C^2((0,1))$ is a positive solution satisfying
\[
  \bigg(\frac{\theta}{c_\smallertext{+}}\bigg)^\theta
  \le
  \Phi(p)
  \le
  \bigg(\frac{\theta}{c_\smallertext{-}}\bigg)^\theta,
  \; p\in[0,1].
\]

\medskip

Finally, because $\Phi>0$ and $\theta>0$, the transformation $\varphi=\Phi^{1/\theta}$ is well defined and reversible. Moreover
\[
  \Phi^{\prime}
  =
  \theta\varphi^{\theta-1}\varphi^{\prime},
  \;
  \Phi^{\prime\prime}
  =
  \theta\varphi^{\theta-1}
  \bigg(
    \varphi^{\prime\prime}
    +(\theta-1)\frac{(\varphi^{\prime})^2}{\varphi}
  \bigg).
\]
Dividing the semilinear equation and its endpoint relations by
$\theta\varphi^{\theta-1}$ yields
\eqref{eq:two-state-varphi-ode}--\eqref{eq:two-state-varphi-boundary}.
The bounds for $\varphi$ follow by taking the $1/\theta$-power in
\eqref{eq:Phi-existence-bounds}.
\end{proof}

\begin{proof}[Proof of \Cref{cor:markovian-existence-uniqueness}]
Existence follows from \Cref{thm:markovian-existence}, while uniqueness
follows from \Cref{thm:markovian-uniqueness}.
\end{proof}

\begin{proof}[Proof of \Cref{thm:markovian-uniqueness}]
Suppose that $\varphi_{\smallertext{1}}$ and $\varphi_{\smallertext{2}}$ are two positive solutions of
\eqref{eq:two-state-varphi-ode}--\eqref{eq:two-state-varphi-boundary}. Set
\[
 q\coloneqq 1-\frac1\theta,
 \;
 \Phi_i\coloneqq \varphi_i^\theta,\; i\in\{1,2\}.
\]
Then both $\Phi_{\smallertext{1}}$ and $\Phi_{\smallertext{2}}$ solve
\eqref{eq:two-state-Phi-ode}--\eqref{eq:two-state-Phi-boundary}. It is
therefore enough to prove uniqueness for the transformed equation. Define
\[
 \rho(p)\coloneqq \frac{\Phi_{\smallertext{1}}(p)}{\Phi_{\smallertext{2}}(p)},
 \;
 M\coloneqq \max_{p\in[0,1]}\rho(p).
\]
Since the solutions are continuous and positive, $M$ is finite.
We claim that $M\le 1$. Suppose, to the contrary, that $M>1$, and let
$p_{\smallertext{0}}\in[0,1]$ be a maximiser of $\rho$. Set
\[
 U(p)\coloneqq \Phi_{\smallertext{1}}(p)-M\Phi_{\smallertext{2}}(p).
\]
Then $U\le0$ on $[0,1]$ and $U(p_{\smallertext{0}})=0$. Write
\[
 \mathcal L[\Phi]
 \coloneqq
 -\alpha\Phi^{\prime\prime}
 -
 \beta\Phi^{\prime}
 +
 c\Phi.
\]
Since $q<1$ and $M>1$, the quantity
$\theta\Phi_{\smallertext{2}}(p_{\smallertext{0}})^q\big(M^q-M\big)$ is negative. At each possible
location of $p_{\smallertext{0}}$, the relevant interior or endpoint equation forces this
expression to be nonnegative.

\medskip

\emph{Interior maximiser.}
If $p_{\smallertext{0}}\in(0,1)$, then $U^{\prime}(p_{\smallertext{0}})=0$ and $U^{\prime\prime}(p_{\smallertext{0}})\le 0$, so
\[
 \mathcal L[\Phi_{\smallertext{1}}](p_{\smallertext{0}}) - M\mathcal L[\Phi_{\smallertext{2}}](p_{\smallertext{0}})
 =
 -\alpha(p_{\smallertext{0}})U^{\prime\prime}(p_{\smallertext{0}}) \ge 0.
\]
By the transformed equation, the left-hand side is also the
negative quantity displayed above, a contradiction.

\medskip
\emph{Boundary maximiser at $p=0$.}
If $p_{\smallertext{0}}=0$, then $U(0)=0$, $U\le 0$ on $[0,1]$, and hence
$U^{\prime}(0)\le 0$. Subtracting $M$ times the left endpoint condition for
$\Phi_{\smallertext{2}}$ from that for $\Phi_{\smallertext{1}}$ yields
\[
 -\lambda^{\smallertext{2}\smallertext{1}}U^{\prime}(0)
 = \theta\Phi_{\smallertext{2}}(0)^{q}\big(M^{q}-M\big).
\]
The left-hand side is nonnegative because $\lambda^{\smallertext{2}\smallertext{1}}\ge0$ and
$U^{\prime}(0)\le0$, while the right-hand side is negative; a
contradiction.

\medskip
\emph{Boundary maximiser at $p=1$.}
If $p_{\smallertext{0}}=1$, then $U(1)=0$, $U\le0$ on $[0,1]$, and the left derivative
satisfies $U^{\prime}(1)\ge0$. The right endpoint condition gives
\[
 \lambda^{\smallertext{1}\smallertext{2}}U^{\prime}(1)
 =
 \theta\Phi_{\smallertext{2}}(1)^{q}\big(M^{q}-M\big).
\]
Again the left-hand side is nonnegative because $\lambda^{\smallertext{1}\smallertext{2}}\ge0$ and
$U^{\prime}(1)\ge0$, while the right-hand side is negative; a
contradiction.

\medskip
Therefore $M\le 1$, and so $\Phi_{\smallertext{1}}\le \Phi_{\smallertext{2}}$. Reversing the roles
of the two solutions yields $\Phi_{\smallertext{2}}\le\Phi_{\smallertext{1}}$, hence $\Phi_{\smallertext{1}}\equiv
\Phi_{\smallertext{2}}$. Since $\varphi=\Phi^{1/\theta}$, the positive solution
$\varphi$ is also unique.
\end{proof}

\begin{proof}[Proof of \Cref{thm:markovian-smoothness}]
Let $\Phi$ be the unique positive solution from
\Cref{cor:markovian-existence-uniqueness}. By
\eqref{eq:Phi-existence-bounds},
\[
  \Phi\geq
  \bigg(\frac{\theta}{c_\smallertext{+}}\bigg)^\theta>0.
\]
Set $f_\Phi\coloneqq\Psi(\Phi)=\theta\Phi^q$. The semilinear equation
and uniqueness in \Cref{lem:markovian-linear-resolvent} imply
\[
  \Phi=Rf_\Phi.
\]
Because $x\longmapsto\theta x^q$ is smooth and real analytic on
$(0,+\infty)$, the forcing term $f_\Phi$ has the same regularity as
$\Phi$.

\medskip

In the interior, the equation can be written as
\[
  \Phi^{\prime\prime}
  =
  \frac{c\Phi-\beta\Phi^{\prime}-\theta\Phi^q}{\alpha}.
\]
Its right-hand side is real analytic as a function of
$(p,\Phi,\Phi^{\prime})$ on
$(0,1)\times(0,+\infty)\times\mathbb R$. The standard analytic
regularity theorem for ordinary differential equations therefore shows
that $\Phi$ is real analytic on $(0,1)$.

\medskip

For regularity at the boundary, the existence theorem first gives
$\Phi\in C^1([0,1])$, and therefore
$f_\Phi\in C^1([0,1])$.
\Cref{lem:linear-regularity-scale} then yields
$\Phi=Rf_\Phi\in C^2([0,1])$. Consequently
$f_\Phi\in C^2([0,1])$, and another application of the lemma gives
$\Phi\in C^3([0,1])$. Iterating proves
$\Phi\in C^\infty([0,1])$.

\medskip

For the data $f=f_\Phi$, the functions $h_{f,j}$ from
\eqref{eq:hfj-def} coincide with the functions $\mathcal E_j$ in
\eqref{eq:hj-def}. The endpoint identities in
\Cref{lem:linear-regularity-scale} therefore give
\eqref{eq:endpoint-hierarchy} at every order. Finally, positivity of
$\Phi$ and smoothness of $x\longmapsto x^{1/\theta}$ on
$(0,+\infty)$ imply the stated properties of
$\varphi=\Phi^{1/\theta}$.
\end{proof}

\begin{remark}[Endpoint compatibility hierarchy]
\label{rem:ventcel-hierarchy}
The relations \eqref{eq:endpoint-hierarchy} form a Ventcel-type
hierarchy generated by the quadratic vanishing of $\alpha$ at
$\{0,1\}$. In particular,
$\alpha^{\prime}(0)=\alpha^{\prime}(1)=0$, so the drift coefficient
of every differentiated equation retains the inward endpoint values
\[
  \beta_j(0)=\lambda^{\smallertext{2}\smallertext{1}},
  \;
  \beta_j(1)=-\lambda^{\smallertext{1}\smallertext{2}}.
\]
The derivative-trace argument can therefore be repeated at every order.
The order-two relations are exactly the endpoint identities used in the
curvature proof. No estimates on higher derivatives of the regularised
solutions, uniform in the regularisation parameter, are needed: all
higher regularity is obtained a posteriori from the degenerate equation.
\end{remark}

\subsection{Monotonicity and curvature}

\begin{proof}[Proof of \Cref{thm:markovian-monotonicity}]
Define the comparison function
\[
 \varphi_{\mathrm{cmp}}(p) \coloneqq \frac{\theta}{c(p)}.
\]
Suppose first that $\eta>0$. Since $c^{\prime}(p)=-\eta<0$, the function
$\varphi_{\mathrm{cmp}}$ is
strictly increasing on $[0,1]$.

\medskip
\emph{Step $1$: every interior local extremum lies on a
specific side of $\varphi_{\mathrm{cmp}}$.}
Suppose $p_{\smallertext{0}}\in(0,1)$ is a local maximum of~$\varphi$. Because
$\Phi=\varphi^{\theta}$ is increasing in~$\varphi$, $p_{\smallertext{0}}$ is
also a local maximum of~$\Phi$, so $\Phi^{\prime}(p_{\smallertext{0}})=0$ and
$\Phi^{\prime\prime}(p_{\smallertext{0}})\le 0$. Evaluating
\eqref{eq:two-state-Phi-ode} at~$p_{\smallertext{0}}$ gives
\[
 c(p_{\smallertext{0}})\Phi(p_{\smallertext{0}})
 \le -\alpha(p_{\smallertext{0}})\Phi^{\prime\prime}(p_{\smallertext{0}}) + c(p_{\smallertext{0}})\Phi(p_{\smallertext{0}})
 = \theta\Phi(p_{\smallertext{0}})^{q},
\]
which, using $q=1-1/\theta$ and $\Phi=\varphi^{\theta}$, is equivalent
to $\varphi(p_{\smallertext{0}})\le \varphi_{\mathrm{cmp}}(p_{\smallertext{0}})$. The dual inequality
$\varphi(p_{\smallertext{0}})\ge \varphi_{\mathrm{cmp}}(p_{\smallertext{0}})$ holds at any interior local minimum.

\medskip
The same one-sided arguments at the endpoints, using the boundary
conditions, give $\varphi(0)\le \varphi_{\mathrm{cmp}}(0)$ if $p=0$ is a one-sided local
maximum, and the analogous statements at $p=1$ and for one-sided minima.

\medskip
\emph{Step $2$: the inequalities and the monotonicity
of $\varphi_{\mathrm{cmp}}$ rule out non-monotonicity.}
Suppose, for contradiction, that $\varphi$ is not non-decreasing.
Then there exists a one-sided local maximum at some
$x\in[0,1)$ followed by a one-sided local minimum at some
$y\in(x,1]$ with $\varphi(x)>\varphi(y)$. By Step~$1$,
\[
 \varphi(x) \le \varphi_{\mathrm{cmp}}(x)
 < \varphi_{\mathrm{cmp}}(y) \le \varphi(y),
\]
contradicting $\varphi(x)>\varphi(y)$. Hence $\varphi$ is
non-decreasing.

\medskip
\emph{Step $3$: ruling out flat segments.}
If $\varphi$ were constant on a nontrivial interval, the same would
hold for~$\Phi$, and \eqref{eq:two-state-Phi-ode} would reduce to
$c(p)\Phi=\theta\Phi^{q}$, \emph{i.e.}\
$\varphi=\theta/c(p)=\varphi_{\mathrm{cmp}}(p)$. But
$\varphi_{\mathrm{cmp}}$ is strictly increasing, so $\varphi$ cannot be constant on any
nontrivial subinterval. Therefore $\varphi$ is
strictly increasing.

\medskip

If $\eta<0$, then $\varphi_{\mathrm{cmp}}$ is strictly decreasing, while the local-extremum
inequalities in Step~1 remain unchanged. If $\varphi$ were not
non-increasing, there would be a one-sided local minimum at
$x\in[0,1)$, followed by a one-sided local maximum at
$y\in(x,1]$, with $\varphi(x)<\varphi(y)$. Step~1 would then give
\[
  \varphi(x)\ge\varphi_{\mathrm{cmp}}(x)
  >\varphi_{\mathrm{cmp}}(y)\ge\varphi(y),
\]
a contradiction. Flat intervals are ruled out exactly as above because
$\varphi=\varphi_{\mathrm{cmp}}$ on any such interval and
$\varphi_{\mathrm{cmp}}$ is strictly decreasing.
Hence $\varphi$ is strictly decreasing.
\end{proof}

\medskip

\begin{proof}[Proof of \Cref{prop:theta-one-affine-solution}]
For $\theta=1$, equation \eqref{eq:two-state-varphi-ode} becomes
$-\alpha(p)\varphi^{\prime\prime}-\beta(p)\varphi^{\prime}+c(p)\varphi=1$. We seek an
affine solution $\varphi(p)=a+bp$. Since $\varphi^{\prime\prime}=0$, the equation
reduces to $-\beta(p)b + c(p)(a+bp) = 1$. Imposing the two endpoint
conditions \eqref{eq:two-state-varphi-boundary} yields the linear system
\[
 \begin{pmatrix}
 c(0) & -\lambda^{\smallertext{2}\smallertext{1}} \\
 c(1) & c(1)+\lambda^{\smallertext{1}\smallertext{2}}
 \end{pmatrix}
 \begin{pmatrix} a \\ b \end{pmatrix}
 = \begin{pmatrix} 1 \\ 1 \end{pmatrix},
\]
whose solution is the pair $(a,b)$.

\medskip
It remains to verify the interior equation. Writing
$c(p)=c(0)-\eta p$ and
$\beta(p)=\lambda^{\smallertext{2}\smallertext{1}}(1-p)-\lambda^{\smallertext{1}\smallertext{2}}p+\eta p(1-p)$ and expanding,
the quadratic terms in $p$ cancel, so
$-\beta(p)b+c(p)(a+bp)$ is an affine function of $p$. Since this
affine function equals $1$ at $p=0$ and at $p=1$, it equals $1$ for
every $p\in[0,1]$. Uniqueness then follows from \Cref{thm:markovian-uniqueness}.
\end{proof}

\medskip

For the curvature argument, differentiate \eqref{eq:two-state-Phi-ode}
twice. Recall that $\Psi(x)=\theta x^q$. Its derivatives are
\[
 \Psi^{\prime}(x) = (\theta-1)x^{-1/\theta},
 \;
 \Psi^{\prime\prime}(x) = \frac{1-\theta}{\theta}x^{-1-1/\theta}.
\]
Note that $\operatorname{sgn}(\Psi^{\prime\prime})
= \operatorname{sgn}(1-\theta)$,
and that
$\Psi^{\prime}(\Phi) = (\theta-1)/\varphi$
since $\Phi=\varphi^{\theta}$. Differentiating \eqref{eq:two-state-Phi-ode} once produces
\[
 -\alpha\Phi^{\prime\prime\prime} - (\alpha^{\prime}+\beta)\Phi^{\prime\prime}
 + \big(c - \beta^{\prime} - \Psi^{\prime}(\Phi)\big)\Phi^{\prime}
 = -c^{\prime}\Phi.
\]
Differentiating once more, and using the identities $c^{\prime\prime}\equiv 0$ and
$\beta^{\prime\prime}=2c^{\prime}$ (both immediate from the explicit forms of $c$ and
$\beta$), the equation for $V\coloneqq \Phi^{\prime\prime}$ reads
\begin{equation}
\label{eq:V-equation}
 -\alpha V^{\prime\prime}
 -(2\alpha^{\prime}+\beta)V^{\prime}
 + \big(c-\alpha^{\prime\prime}-2\beta^{\prime}-\Psi^{\prime}(\Phi)\big)V
 = \Psi^{\prime\prime}(\Phi)(\Phi^{\prime})^2.
\end{equation}
The zero-order coefficient is the \emph{curvature condition}
\[
 m_{\theta}(p)
 \coloneqq c(p) - \alpha^{\prime\prime}(p) - 2\beta^{\prime}(p) - \frac{\theta-1}{\varphi(p)}.
\]
At the endpoints, $\alpha(0)=\alpha(1)=0$ and $\alpha^{\prime}(0)=\alpha^{\prime}(1)=0$,
while $\beta(0)=\lambda^{\smallertext{2}\smallertext{1}}$ and $\beta(1)=-\lambda^{\smallertext{1}\smallertext{2}}$, so the
endpoint forms of \eqref{eq:V-equation} are
\begin{equation}
\label{eq:V-equation-endpoints}
 -\lambda^{\smallertext{2}\smallertext{1}}V^{\prime}(0) + m_{\theta}(0)V(0)
 = \Psi^{\prime\prime}(\Phi(0))(\Phi^{\prime}(0))^{2},
 \;
 \lambda^{\smallertext{1}\smallertext{2}}V^{\prime}(1) + m_{\theta}(1)V(1)
 = \Psi^{\prime\prime}(\Phi(1))(\Phi^{\prime}(1))^{2}.
\end{equation}

\begin{proof}[Proof of \Cref{thm:markovian-curvature}]
We treat the three cases in turn.

\medskip
\emph{Case $0<\theta<1$.}
Then $\Psi^{\prime\prime}(\Phi)>0$, so the right-hand side of
\eqref{eq:V-equation} is nonnegative. We claim $V\ge 0$ on
$[0,1]$. Suppose, for contradiction, that $V$ has a negative minimum
at some $p_{\smallertext{0}}\in[0,1]$.

\medskip
If $p_{\smallertext{0}}\in(0,1)$, then $V^{\prime}(p_{\smallertext{0}})=0$ and $V^{\prime\prime}(p_{\smallertext{0}})\ge 0$, so the
left-hand side of \eqref{eq:V-equation} at $p_{\smallertext{0}}$ equals
\[
 -\alpha(p_{\smallertext{0}})V^{\prime\prime}(p_{\smallertext{0}}) + m_{\theta}(p_{\smallertext{0}})V(p_{\smallertext{0}}) \;<\; 0,
\]
since $m_{\theta}(p_{\smallertext{0}})>0$ and $V(p_{\smallertext{0}})<0$, while the right-hand side is
nonnegative; contradiction. At a boundary minimum
$p_{\smallertext{0}}=0$ we have $V^{\prime}(0)\ge 0$, so the first endpoint identity
in \eqref{eq:V-equation-endpoints} has a negative left-hand
side and a nonnegative right-hand side; contradiction. The case
$p_{\smallertext{0}}=1$ is symmetric, with $V^{\prime}(1)\le 0$.

\medskip
Hence $V=\Phi^{\prime\prime}\ge 0$ on $[0,1]$. From
$\varphi=\Phi^{1/\theta}$ we obtain
\begin{equation}
\label{eq:phi-second-from-Phi}
 \varphi^{\prime\prime}
 = \frac{1}{\theta}\Phi^{1/\theta-1}\Phi^{\prime\prime}
 + \frac{1}{\theta}\bigg(\frac{1}{\theta}-1\bigg)
 \Phi^{1/\theta-2}(\Phi^{\prime})^{2}.
\end{equation}

\medskip

For $0<\theta<1$ we have $1/\theta-1>0$ and $\Phi^{\prime\prime}\ge 0$, so both
terms in \eqref{eq:phi-second-from-Phi} are nonnegative; thus
$\varphi^{\prime\prime}\ge 0$ and $\varphi$ is convex.

\medskip
\emph{Case $\theta>1$.}
Now $\Psi^{\prime\prime}(\Phi)<0$, so the right-hand side of
\eqref{eq:V-equation} is nonpositive. By the symmetric maximum-principle
argument (positive maximum of $V$ at an interior or boundary
point), one obtains $V=\Phi^{\prime\prime}\le 0$ on $[0,1]$. Since
$1/\theta-1<0$, both terms in \eqref{eq:phi-second-from-Phi} are
nonpositive, so $\varphi^{\prime\prime}\le 0$ and $\varphi$ is concave.

\medskip
\emph{Case $\theta=1$.} This has been treated separately in \Cref{prop:theta-one-affine-solution}.
\end{proof}

\begin{proof}[Proof of \Cref{prop:primitive-curvature-conditions}]
The condition $m_\theta>0$ in \Cref{thm:markovian-curvature} depends on
the solution $\varphi$ through the term $(\theta-1)/\varphi$. The
following estimates eliminate this dependence for the unique positive
Markovian solution.
Set
\[
 \lambda_\Sigma \coloneqq
 \lambda^{\smallertext{1}\smallertext{2}}+\lambda^{\smallertext{2}\smallertext{1}},
 \;
 k \coloneqq \frac{(\Delta g)^2}{2\sigma^2},
 \;
 c_{\smallertext{0}} \coloneqq c(0),
 \;
 c_{\smallertext{1}} \coloneqq c(1) = c_{\smallertext{0}} - \eta.
\]
A direct calculation gives
\[
 \alpha^{\prime\prime}(p) = 2k\big(1-6p+6p^{2}\big),
 \;
 \beta^{\prime}(p) = -\lambda_\Sigma + \eta(1-2p),
\]
so
\[
 \mathcal D(p) \coloneqq c(p) - \alpha^{\prime\prime}(p) - 2\beta^{\prime}(p)
 = c_{\smallertext{0}} - 2\eta + 2\lambda_\Sigma - 2k
 + (3\eta + 12k)p - 12kp^{2}.
\]
The function $\mathcal D$ is a concave quadratic in $p$, and under
$\eta>0$ its minimum on $[0,1]$ is attained at $p=0$, with value
$\mathcal D(0)=c_{\smallertext{0}} - 2\eta + 2\lambda_\Sigma - 2k$.

\medskip
Since $\varphi$ is the positive Markovian solution,
\Cref{thm:markovian-monotonicity} applies. Step~1 of its proof yields the
maximum-principle bounds
\begin{equation}
\label{eq:phi-mp-bounds}
 \frac{\theta}{c_{\smallertext{0}}} \le \varphi(p) \le \frac{\theta}{c_{\smallertext{1}}},
 \; p\in[0,1].
\end{equation}

\medskip

Combining $\mathcal D(p)\ge \mathcal D(0)$ with \eqref{eq:phi-mp-bounds} we obtain

\medskip
\emph{Convex regime $(0<\theta<1)$.}
Then $(\theta-1)/\varphi \le 0$ and, by
\eqref{eq:phi-mp-bounds}, $1/\varphi\ge c_{\smallertext{1}}/\theta$. Therefore
\[
 m_\theta(p)
 \ge \mathcal D(0) + \frac{(1-\theta)c_{\smallertext{1}}}{\theta}
 = \frac{c_{\smallertext{1}}}{\theta} + 2\lambda_\Sigma - \eta - 2k.
\]

\medskip

Hence a sufficient primitive condition for convexity is
\[
 \frac{c(1)}{\theta} + 2(\lambda^{\smallertext{1}\smallertext{2}}+\lambda^{\smallertext{2}\smallertext{1}})
 > (1-\gamma)(g^{\smallertext{1}}-g^{\smallertext{2}}) + \frac{(g^{\smallertext{1}}-g^{\smallertext{2}})^{2}}{\sigma^{2}}.
\]
Under this condition $m_\theta(p)>0$ along the Markovian solution on
$[0,1]$, and \Cref{thm:markovian-curvature} implies that $\varphi$ is
convex.

\medskip

\emph{Concave regime $(\theta>1)$.}
Now $(\theta-1)/\varphi \le (\theta-1)c_{\smallertext{0}}/\theta$, so
\[
 m_\theta(p)
 \ge \mathcal D(0) - \frac{(\theta-1)c_{\smallertext{0}}}{\theta}
 = \frac{c_{\smallertext{0}}}{\theta} - 2\eta + 2\lambda_\Sigma - 2k.
\]
The corresponding sufficient primitive condition for concavity is
\[
 \frac{c(0)}{\theta} + 2(\lambda^{\smallertext{1}\smallertext{2}}+\lambda^{\smallertext{2}\smallertext{1}})
 > 2(1-\gamma)(g^{\smallertext{1}}-g^{\smallertext{2}}) + \frac{(g^{\smallertext{1}}-g^{\smallertext{2}})^{2}}{\sigma^{2}}.
\]
Under this condition $m_\theta(p)>0$ along the Markovian solution on
$[0,1]$, and \Cref{thm:markovian-curvature} implies that $\varphi$ is
concave.

\medskip

The same maximum-principle mechanism therefore yields convexity for
$\theta<1$, affinity for $\theta=1$, and concavity for $\theta>1$,
under conditions that involve only the structural parameters
$(\gamma,g^{\smallertext{1}},g^{\smallertext{2}},\sigma,\lambda^{\smallertext{1}\smallertext{2}},\lambda^{\smallertext{2}\smallertext{1}},\kappa,\theta)$.
\end{proof}

\section{Discounted characteristic functions and Fourier pricing}
\label{app:characteristic-functions}

The risk-neutral measure in this appendix is the finite-horizon measure
defined under \Cref{assum:risk-neutral-martingale},
with $r_f$ equal to the marginal short rate
\eqref{eq:marginal-short-rate-option}. We retain the notation
and suppression convention of \Cref{sec:option-pricing} for the fixed
$\iota(p_{\smallertext{0}})$-model copy.

\medskip

Since the coefficients in \eqref{eq:option-PDE} depend on $p$ but not
on $X$, the pricing problem is translation-invariant in log stock price.
Let $\mathrm i\coloneqq\sqrt{-1}$. For $u\in\mathbb C$, define the
discounted characteristic function of the
log return by
\begin{equation*}
 \Upsilon(u,p,\tau)
 \coloneqq 
 \mathbb E^{\mathbb Q}_{0,p}
 \Bigg[
 \exp\bigg(
 -\int_{\smallertext{0}}^\tau r_f(p_s)\mathrm{d}s
 \bigg)
 \exp\big(
 \mathrm i uX_\tau
 \big)
 \Bigg].
\end{equation*}
The normalisation $X_{\smallertext{0}}=0$ is without loss of generality because only log
returns enter the transform. Under the standing conditions of
\Cref{sec:option-pricing}, the functions $r_f$, $q_\smallertext{S}$,
$\sigma_\smallertext{S}$, $\chi$, and $\widetilde B$ are bounded on
$[0,1]$. The log return therefore has exponential moments of every order on
each finite horizon, so this expectation is finite for every
$u\in\mathbb C$ and finite $\tau$.

\medskip

\begin{theorem}[Characteristic-function PDE]
\label{thm:characteristic-PDE}
For each $u\in\mathbb C$, suppose the map
$(\tau,p)\longmapsto\Upsilon(u,p,\tau)$ belongs to
$C^{1,2}((0,+\infty)\times(0,1))$. Then it satisfies
\begin{equation}
\label{eq:characteristic-PDE}
\begin{aligned}
 \Upsilon_\tau
 &=
 \frac12\chi^2(p)\Upsilon_{pp}
 +
 \big(
 \widetilde B(p)
 +
 \mathrm i u\sigma_\smallertext{S}(p)\chi(p)
 \big)\Upsilon_p
 +
 \bigg(
 \mathrm i u\big(r_f(p)-q_\smallertext{S}(p)\big)
 -
 \frac12(u^2+\mathrm i u)\sigma_\smallertext{S}^2(p)
 -
 r_f(p)
 \bigg)\Upsilon,
\end{aligned}
\end{equation}
with initial condition
\begin{equation*}
 \Upsilon(u,p,0)=1.
\end{equation*}
If $\Upsilon$,
$\Upsilon_\tau$, and
$\Upsilon_p$ admit continuous one-sided traces and
\[
 \chi^2(p)\Upsilon_{pp}(u,p,\tau)\longrightarrow0
\]
as $p\downarrow0$ and as $p\uparrow1$, then the endpoint traces satisfy
\begin{align*}
 \Upsilon_\tau(u,0,\tau)
 &=\lambda^{\smallertext{2}\smallertext{1}}\Upsilon_p(u,0,\tau)
   +\mathcal K(u,0)\Upsilon(u,0,\tau),\\
 \Upsilon_\tau(u,1,\tau)
 &=-\lambda^{\smallertext{1}\smallertext{2}}\Upsilon_p(u,1,\tau)
   +\mathcal K(u,1)\Upsilon(u,1,\tau),
\end{align*}
where
\[
 \mathcal K(u,p)
 \coloneqq
 \mathrm i u\big(r_f(p)-q_\smallertext{S}(p)\big)
 -\frac12(u^2+\mathrm i u)\sigma_\smallertext{S}^2(p)-r_f(p).
\]
\end{theorem}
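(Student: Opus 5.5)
The plan is the same Feynman--Kac argument used for \Cref{thm:option-PDE}, now with the exponential payoff $\mathrm e^{\mathrm iuX_\tau}$ in place of the call payoff. Translation invariance in $X$ is the key structural fact. Set
\[
F(X,p,\tau)\coloneqq\mathbb E^{\mathbb Q}_{X,p}\Big[\mathrm e^{-\int_0^\tau r_f(p_s)\mathrm ds}\mathrm e^{\mathrm iuX_\tau}\Big].
\]
Since the coefficients of $(X,p)$ under $\mathbb Q$ do not depend on $X$, the process started from $(X,p)$ equals $X$ plus the process started from $(0,p)$. Hence
\[
F(X,p,\tau)=\mathrm e^{\mathrm iuX}\Upsilon(u,p,\tau).
\]
Finiteness of all these expectations is already recorded before the statement, via bounded coefficients and exponential moments.

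\emph{Step 1: the martingale.} Fix $\tau>0$. Consider
\[
N_t\coloneqq \mathrm e^{-\int_0^t r_f(p_s)\mathrm ds}F(X_t,p_t,\tau-t),\quad 0\le t<\tau.
\]
By the Markov property, $N_t$ equals the conditional expectation of the terminal discounted payoff given $\mathcal F_t^{\smallertext D}$, so $N$ is a $\mathbb Q$-martingale. Using the assumed $C^{1,2}$ regularity of $\Upsilon$ on the open set $(0,\infty)\times(0,1)$, I will apply It\^o's formula to $N$ up to the exit time of $p$ from compact subintervals $[\varepsilon,1-\varepsilon]$. On those subintervals the derivatives are bounded, so the stochastic integrals are true martingales. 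The drift of $N$ must then vanish $\mathrm dt\otimes\mathbb Q$-a.e., which gives
\[
\partial_\tau F=\mathcal L^{\mathbb Q}F-r_fF.
\]
By continuity and the full support of the nondegenerate interior diffusion, this identity holds pointwise on $(0,\infty)\times\mathbb R\times(0,1)$. Alternatively, I can simply quote the computation already made in the proof of \Cref{thm:option-PDE}.

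\emph{Step 2: substitution.} Insert $F=\mathrm e^{\mathrm iuX}\Upsilon$, so that
\[
\partial_XF=\mathrm iuF,\quad \partial_{XX}F=-u^2F,\quad \partial_{Xp}F=\mathrm iu\,\mathrm e^{\mathrm iuX}\Upsilon_p.
\]
Using $b_X=r_f-q_S-\tfrac12\sigma_S^2$, the first-order terms in $X$ and $p$ combine to give the $\Upsilon_p$ coefficient $\widetilde B+\mathrm iu\sigma_S\chi$. The zero-order terms give
\[
\mathrm iu(r_f-q_S)-\tfrac12\mathrm iu\sigma_S^2-\tfrac12u^2\sigma_S^2-r_f=\mathcal K(u,p).
\]
Dividing by $\mathrm e^{\mathrm iuX}$ yields \eqref{eq:characteristic-PDE}. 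The initial condition $\Upsilon(u,p,0)=1$ is immediate from $X_0=0$. I would also note continuity of $\Upsilon$ as $\tau\downarrow0$, which follows by dominated convergence using the moment bounds.

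\emph{Step 3: endpoint traces.} Let $p\downarrow0$ in \eqref{eq:characteristic-PDE}. By assumption, $\chi^2\Upsilon_{pp}\to0$ and $\Upsilon,\Upsilon_\tau,\Upsilon_p$ have continuous traces. Moreover $\chi(0)=0$, $\widetilde B(0)=B(0)=\lambda^{21}$, and all coefficients are continuous on $[0,1]$ (by \Cref{thm:markovian-smoothness}). Since $\chi(p)\to0$, the term $\mathrm iu\sigma_S\chi\Upsilon_p$ vanishes in the limit, leaving the stated relation at $0$. The endpoint $1$ is handled identically, with $\widetilde B(1)=-\lambda^{12}$.

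\emph{Main obstacle.} The only delicate point is Step 1: justifying the It\^o/martingale identification when $\Upsilon$ is only known to be $C^{1,2}$ on the open interior, since the belief diffusion degenerates at the endpoints. I will handle this by localization with exit times from $[\varepsilon,1-\varepsilon]$ and time-intervals $[\delta,\tau-\delta]$, then pass pointwise to the limit. This is exactly the route used (implicitly) in \Cref{thm:option-PDE}, so I expect it to go through without new ideas.
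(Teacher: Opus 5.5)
Your proposal is correct and follows essentially the same route as the paper: apply the $\mathbb Q$-generator of $(X,p)$ from \Cref{thm:option-PDE} to $\mathrm e^{\mathrm iuX}\Upsilon(u,p,\tau)$, collect the $\Upsilon_p$ and $\Upsilon$ coefficients, and pass to one-sided traces using $\chi(0)=\chi(1)=0$, $\widetilde B(0)=\lambda^{\smallertext{2}\smallertext{1}}$ and $\widetilde B(1)=-\lambda^{\smallertext{1}\smallertext{2}}$. Two of your steps are not in the paper's proof: the translation-invariance identity $F=\mathrm e^{\mathrm iuX}\Upsilon$ and the localised martingale argument. They spell out the Feynman--Kac step that the paper takes for granted; they do not change the argument.
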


\begin{proof}
Apply the generator in \Cref{thm:option-PDE} to functions of the
form
\[
 \mathfrak f_{u,v}(X,p)=\mathrm{e}^{\mathrm i u\smallertext{X}}v(p).
\]
Then $\partial_X\mathfrak f_{u,v}=\mathrm i u\mathfrak f_{u,v}$,
$\partial_{XX}\mathfrak f_{u,v}=-u^2\mathfrak f_{u,v}$, and
$\partial_{Xp}\mathfrak f_{u,v}
=\mathrm i u\mathrm{e}^{\mathrm i u\smallertext{X}}v^{\prime}(p)$. Using
\[
  b_\smallertext{X}(p)=r_f(p)-q_\smallertext{S}(p)-\frac12\sigma_\smallertext{S}^2(p),
\]
the coefficient multiplying $v(p)$ is
\[
  \mathrm i u b_\smallertext{X}(p)
  -
  \frac12u^2\sigma_\smallertext{S}^2(p)
  -
  r_f(p)
  =
  \mathrm i u\big(r_f(p)-q_\smallertext{S}(p)\big)
  -
  \frac12(u^2+\mathrm i u)\sigma_\smallertext{S}^2(p)
  -
  r_f(p).
\]
The coefficient multiplying $v^{\prime}(p)$ is
$\widetilde B(p)+\mathrm i u\sigma_\smallertext{S}(p)\chi(p)$, which gives
\eqref{eq:characteristic-PDE}. At the endpoints, $\chi=0$ and
$\widetilde B(0)=\lambda^{\smallertext{2}\smallertext{1}}$,
$\widetilde B(1)=-\lambda^{\smallertext{1}\smallertext{2}}$. Taking the one-sided traces gives the two
endpoint equations.
\end{proof}

Two special values will be used in the call-price formula. At the current
state $(p_t,T-t)$, the zero-coupon bond price is
\[
 \Upsilon(0,p_t,T-t)
 =
 \mathbb E^{\mathbb Q}
 \Bigg[
 \exp\bigg(
 -\int_t^{\smallertext{T}} r_f(p_s)\mathrm{d}s
 \bigg)
 \Bigg|\Fc_t^\smallertext{D}
 \Bigg],
\]
while
\[
 \Upsilon(-\mathrm i,p_t,T-t)
 =
 \frac{1}{S_t}
 \mathbb E^{\mathbb Q}
 \Bigg[
 \exp\bigg(
 -\int_t^{\smallertext{T}} r_f(p_s)\mathrm{d}s
 \bigg)
 S_T\Bigg|\Fc_t^\smallertext{D}
 \Bigg]
\]
is the discounted value of one terminal stock share, normalised by the
current stock price. Let $m\coloneqq \log(K/S)$. Define the normalised
characteristic functions
\begin{equation*}
 \Upsilon_{\smallertext{1}}(u,p,\tau)
 \coloneqq 
 \frac{\Upsilon(u-\mathrm i,p,\tau)}{\Upsilon(-\mathrm i,p,\tau)},
 \;
 \Upsilon_{\smallertext{2}}(u,p,\tau)
 \coloneqq 
 \frac{\Upsilon(u,p,\tau)}{\Upsilon(0,p,\tau)}.
\end{equation*}
Define
\begin{equation}
\label{eq:Pi-j-option}
 \Pi_j(m,p,\tau)
 \coloneqq 
 \frac12
 +
 \frac1\pi
 \int_{\smallertext{0}}^{\smallertext{+}\infty}
 \mathrm{Re}
 \bigg[
 \frac{
 \mathrm{e}^{-\mathrm i u m}\Upsilon_j(u,p,\tau)
 }{\mathrm i u}
 \bigg]\mathrm{d}u,
 \; j\in\{1,2\}.
\end{equation}
The integral is understood as an improper Fourier integral.

\begin{proposition}[Fourier representation of call prices]
\label{prop:fourier-call-price}
At continuity points of the two tilted return distributions, and whenever
the improper integrals in \eqref{eq:Pi-j-option} converge, the European call
price is
\begin{equation}
\label{eq:fourier-call-price}
 C^{\mathrm{lev}}(S,p,\tau;K)
 =
 S\Upsilon(-\mathrm i,p,\tau)\Pi_{\smallertext{1}}(m,p,\tau)
 -
 K\Upsilon(0,p,\tau)\Pi_{\smallertext{2}}(m,p,\tau).
\end{equation}
\end{proposition}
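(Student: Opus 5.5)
We would follow the Gil-Pelaez / Heston route. First, we write the call price as a difference of two discounted expectations:
\[
C^{\mathrm{lev}}(S,p,\tau;K)=\mathbb E^{\mathbb Q}_{\log S,p}\big[\mathrm e^{-\int_0^\tau r_f(p_s)\mathrm ds}\mathrm e^{X_\tau}\mathbf 1_{\{X_\tau>\log K\}}\big]-K\,\mathbb E^{\mathbb Q}_{\log S,p}\big[\mathrm e^{-\int_0^\tau r_f(p_s)\mathrm ds}\mathbf 1_{\{X_\tau>\log K\}}\big].
\]
Both terms are finite because $r_f$ is bounded and the log return has exponential moments on finite horizons. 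By translation invariance in $X$ (the coefficients do not depend on $X$), we set $X_0=0$ and replace the event by $\{X_\tau>m\}$ with $m=\log(K/S)$. The first term then becomes $S$ times an expectation with $X_0=0$.

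Next, we introduce two probability measures on the law of the log return, each obtained by normalising a positive weight. The first uses the weight $\mathrm e^{-\int_0^\tau r_f}\mathrm e^{X_\tau}$, normalised by $\Upsilon(-\mathrm i,p,\tau)>0$. The second uses the weight $\mathrm e^{-\int_0^\tau r_f}$, normalised by $\Upsilon(0,p,\tau)>0$. Both normalisers are strictly positive and finite, so these tilted measures are well defined, and the call price equals
\[
S\,\Upsilon(-\mathrm i,p,\tau)\,\mathbb Q_1[X_\tau>m]-K\,\Upsilon(0,p,\tau)\,\mathbb Q_2[X_\tau>m].
\]
We then compute the characteristic functions of $X_\tau$ under the tilted measures directly from the definition of $\Upsilon$. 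Under $\mathbb Q_1$, $\mathbb E^{\mathbb Q_1}[\mathrm e^{\mathrm iuX_\tau}]=\Upsilon(u-\mathrm i,p,\tau)/\Upsilon(-\mathrm i,p,\tau)=\Upsilon_1$, because $\mathrm e^{X_\tau}\mathrm e^{\mathrm iuX_\tau}=\mathrm e^{\mathrm i(u-\mathrm i)X_\tau}$. Under $\mathbb Q_2$ the characteristic function is $\Upsilon_2$.

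Finally, we apply the Gil-Pelaez inversion formula. For a real random variable with characteristic function $\phi$, at continuity points $m$ of its distribution,
\[
\mathbb P[X>m]=\frac12+\frac1\pi\int_0^{\infty}\mathrm{Re}\big[\mathrm e^{-\mathrm ium}\phi(u)/(\mathrm iu)\big]\mathrm du,
\]
with the integral taken as an improper limit $\lim_{\varepsilon\downarrow0,\,U\uparrow\infty}\int_\varepsilon^U$. Applying this to each tilted law gives $\mathbb Q_j[X_\tau>m]=\Pi_j(m,p,\tau)$, and substituting into the decomposition yields \eqref{eq:fourier-call-price}.

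The main obstacle is the inversion step. The formula must be justified for general laws, where the integrand has a singularity at $u=0$ and need not be absolutely integrable at infinity. We would handle this through the standard Lévy-type truncation argument: Fubini on $[\varepsilon,U]$, combined with the bounded convergence of $\int_\varepsilon^U \sin(u(x-m))/u\,\mathrm du$. The stated hypotheses, namely continuity points and convergence of the improper integrals, are exactly what this argument requires, so we would invoke them there. The argument does not use the PDE of \Cref{thm:characteristic-PDE}.
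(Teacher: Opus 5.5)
Your proposal is correct and follows essentially the same route as the paper. Both arguments split the payoff on $\{X_\tau>m\}$ and normalise the two discounted expectations by $\Upsilon(-\mathrm i,p,\tau)$ and $\Upsilon(0,p,\tau)$, giving the stock-tilted and forward measures with characteristic functions $\Upsilon_1,\Upsilon_2$, then invert via Gil-Pelaez at continuity points. The only difference is that you sketch the truncation proof of the inversion formula, whereas the paper simply cites Gil-Pelaez. That sketch in fact shows the limit $\lim_{\varepsilon\downarrow0,\,U\uparrow\infty}\int_\varepsilon^U$ always exists, so the convergence hypothesis is redundant rather than something the argument needs.
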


\begin{proof}
Fix $0\leq t<T$, set $\tau\coloneqq T-t$, and let
\[
 \Delta X_{t,T}\coloneqq X_T-X_t=\log(S_T/S_t).
\]
Then
\[
 (S_\smallertext{T}-K)^{\smallertext{+}}
 =
 S_t \mathrm{e}^{\smallertext{\Delta} \smallertext{X}_{\smalltext{t}\smalltext{,}\smalltext{T}}}
 \mathbf 1_{\{\smallertext{\Delta} \smallertext{X}_{\smalltext{t}\smalltext{,}\smalltext{T}}>m\}}
 -
 K\mathbf 1_{\{\smallertext{\Delta} \smallertext{X}_{\smalltext{t}\smalltext{,}\smalltext{T}}>m\}},
 \;
 m=\log(K/S_t).
\]
Therefore
\[
\begin{aligned}
 C_t
 &=
 S_t
 \mathbb E^{\mathbb Q}
 \Big[
 \mathrm{e}^{-\int_\smalltext{t}^\smalltext{T} r_\smalltext{f}(p_\smalltext{s})\mathrm{d}s}
 \mathrm{e}^{\smallertext{\Delta} \smallertext{X}_{\smalltext{t}\smalltext{,}\smalltext{T}}}
 \mathbf 1_{\{\smallertext{\Delta} \smallertext{X}_{\smalltext{t}\smalltext{,}\smalltext{T}}>m\}}
 \Big|\Fc_t^\smallertext{D}\Big]
 -
 K
 \mathbb E^{\mathbb Q}
 \Big[
 \mathrm{e}^{-\int_\smalltext{t}^\smalltext{T} r_\smalltext{f}(p_\smalltext{s})\mathrm{d}s}
 \mathbf 1_{\{\Delta X_{t,T}>m\}}
 \Big|\Fc_t^\smallertext{D}\Big].
\end{aligned}
\]
The normalising constants in the two terms are respectively
\[
 \Upsilon(-\mathrm i,p_t,T-t)
 =
 \mathbb E^{\mathbb Q}
 \Big[
 \mathrm{e}^{-\int_\smalltext{t}^\smalltext{T} r_\smalltext{f}(p_\smalltext{s})\mathrm{d}s}
 \mathrm{e}^{\Delta X_{t,T}}
 \Big|\Fc_t^\smallertext{D}\Big],\; \text{
and}\;
 \Upsilon(0,p_t,T-t)
 =
 \mathbb E^{\mathbb Q}
 \Big[
 \mathrm{e}^{-\int_\smalltext{t}^\smalltext{T} r_\smalltext{f}(p_\smalltext{s})\mathrm{d}s}
 \Big|\Fc_t^\smallertext{D}\Big].
\]
After normalisation, $\Upsilon_{\smallertext{1}}$ and
$\Upsilon_{\smallertext{2}}$ are characteristic
functions under the stock-tilted and $T$-forward measures,
respectively. At continuity points, the standard Fourier
inversion formula for recovering a distribution function from its
characteristic function
\cite[pages~481--482]{gilpelaez1951note} gives
\eqref{eq:Pi-j-option} for the event
$\{\Delta X_{t,\smallertext{T}}>m\}$. The same
inversion-based option decomposition is used by
\citeauthor*{david2002option}
\cite[Proposition~4]{david2002option}. Substitution yields
\eqref{eq:fourier-call-price}.
\end{proof}

\bibliography{bibliography}
\end{document}